\theoremstyle{plain}  % Italic body text
\newtheorem{theorem}{Theorem}[section]
\newtheorem{lemma}{Lemma}[section]
\newtheorem{corollary}{Corollary}[section]
\theoremstyle{definition}  % Upright body text
\newtheorem{definition}{Definition}[section]
\newtheorem{remark}{Remark}[section]
\newtheorem{proposition}{Proposition}[section]
\newcolumntype{s}{>{\columncolor{yellow!20}}c}
\journal{Journal of the Mechanics and Physics of Solids}
\begin{document}

\begin{frontmatter}

\title{Revisiting the cofactor conditions: Elimination of transition layers in compound domains}

\author[inst1]{Mohd Tahseen}

\affiliation[inst1]{organization={Department of Aerospace Engineering},%Department and Organization
            addressline={Indian Institute of Science},
            city={Bangalore},
            postcode={560012}, 
            country={India}}

\author[inst1]{Vivekanand Dabade}

\begin{abstract}
%% Text of abstract

This paper investigates the conditions necessary for the elimination of transition layers at interfaces involving compound domains, extending the classical framework of cofactor conditions. Although cofactor conditions enable stress-free phase boundaries between Type I/II domains and austenite, their applicability to compound domains has remained limited. Here, we present a comprehensive theoretical framework to characterize all compatible interfaces, highlighting the fundamental importance of the commutation property among martensitic variants. By establishing necessary and sufficient algebraic conditions, referred to as \textit{extreme compatibility conditions}, we demonstrate the simultaneous elimination of transition layers at phase interfaces for both Type I/II and compound laminates, across all volume fractions of the martensitic variants. We also investigate the possibility of achieving supercompatibility in non-conventional twins, recently observed in the NiMnGa system \cite{hanus}. The focus of our work is on cubic-to-orthorhombic and cubic-to-monoclinic~II phase transformations, for which the extreme compatibility conditions are explicitly derived and systematically analyzed. The theory predicts novel zero-elastic-energy microstructures, including an increased number of triple clusters, spearhead-shaped martensitic nuclei, stress-free inclusions of austenite within martensite, and distinctive four-fold martensitic clusters. This significantly expands the possible modes of forming stress-free interfaces between phases and reveals new energy-minimizing microstructures that can facilitate the nucleation of martensite within austenite and vice versa. These configurations highlight significant enhancements in transformation reversibility and material durability, guiding the rational design of next-generation shape memory alloys with optimized functional properties.

\end{abstract}

\begin{keyword}
shape memory alloys \sep micromechanics \sep microstructure of martensite

\MSC[2020] 74B20 \sep 74N05 \sep 74N15 \sep 74N30
\end{keyword}

\end{frontmatter}

\section{Introduction}
\label{sec:Intro}

The \textit{cofactor conditions} are conditions of enhanced geometric compatibility between the austenite and martensite phases of a material undergoing a reversible martensitic phase transformation. Such transformations are essential for realizing functional properties like the shape memory effect, superelasticity, and multiferroic behavior \cite{origami_inspired}, \cite{kumar2019shape}, \cite{srivastava2010direct}, \cite{haga2005medical}, \cite{mohdjani2014review}. A key challenge in leveraging the full potential of these properties is ensuring that the material can undergo highly reversible phase transformations over a large number of cycles without functional degradation. This is especially critical in demanding applications such as artificial heart valves, where components must withstand millions of cycles under physiological conditions; microactuators in precision devices, which rely on repeatable motion at small scales; and energy conversion systems, where repeated thermal or mechanical cycling is required to maintain efficiency over long operating lifetimes. The reversibility of these transformations is governed by the degree of compatibility between the two phases, i.e. the ease with which one phase can transform back and forth into the other.\\[5 pt]
An important factor determining the extent of this ease is the geometric compatibility between the two phases. This compatibility dictates how easily the two lattice structures can transition into one another, without inducing significant strain or plastic deformation. By manipulating and engineering this compatibility, the material behavior can be fine-tuned to enhance its reversibility and improve its durability.\\[5 pt]
A systematic way to engineer this compatibility is offered by the \textit{geometrically nonlinear theory of martensite}, which provides a rigorous framework for understanding structural phase transformations \cite{BALL200461}, \cite{JAMES2000197}, \cite{James2019}. According to this theory, when the middle eigenvalue ($\lambda_2$) of the transformation stretch tensor equals one, it is possible to form a perfectly compatible interface between austenite and a single variant of martensite, which carries zero elastic energy. Such interfaces are believed to move through the material with minimal resistance, enabling phase transformations to occur with low energetic cost. This mechanism is expected to enhance the transformation efficiency of the material.  \\[5 pt]
The condition $\lambda_2=1$ marked the beginning of what is now known as phase engineering, where the material composition is tuned to achieve a specified level of geometric compatibility  \cite{James2019}, \cite{gu2018phaseengineering}. Several materials have been successfully engineered to satisfy the condition $\lambda_2=1$, resulting in near-zero thermal hysteresis and improved reversibility under repeated thermal cycling \cite{Cui2006},\cite{Zhang2009}, \cite{Zarnetta2010}. Experiments have revealed a strong correlation between the closeness of $\lambda_2$ to $1$ and the improved functional performance of these materials, thereby supporting the hypothesis of zero elastic energy interfaces. This success not only lends credibility to the theoretical framework but also paves the way for the systematic discovery of new materials with superior functional properties. \\[5pt]
The cofactor conditions represent a more advanced approach of phase engineering as compared $\lambda_2 = 1$ criterion. These conditions were introduced by James and Zhang \cite{JamesZhang2005}, following the success of tuning $\lambda_2$ to $1$ to achieve low hysteresis materials. The key idea behind the cofactor conditions is to enable a martensitic laminate composed of two variants to be compatible with the austenite phase for all volume fractions, $\mu\in[0,1]$, including the limiting cases of $0$ and $1$. This generalization subsumes the special case where a single martensitic variant is compatible with austenite, making $\lambda_2 = 1$ a subset of the cofactor conditions.\\[5 pt]
In this sense, cofactor conditions extend the concept of compatibility beyond individual martensitic variants to the entire set of deformation gradients attainable through homogeneous mixtures of two variants. This enhanced compatibility between the austenite and martensite phases is known as supercompatibility. From a theoretical point of view, cofactor conditions emerge as the conditions of degeneracy for the equations of the \textit{crystallographic theory of martensite}, see equation \eqref{eq:17}. A consequence of this degeneracy is that the phase interface can form in infinitely many distinct ways, providing greater geometric flexibility for the transformation to proceed.\\[5pt]
A seminal contribution to the study of cofactor conditions is the work of Chen et al. \cite{CHEN20132566}, which provided an understanding of how these conditions lead to the formation of zero elastic energy interfaces in some cases. The study demonstrated that when the cofactor conditions hold for Type I or Type II domains (\textit{see \autoref{sec:kin_comp} for definition}), the transition layer between their laminates and austenite can be eliminated entirely by the formation of triple junctions and parallel domain walls, respectively, at the phase interface, see \autoref{fig:triple_1_2}. The implications of this result are significant. It means that when the cofactor conditions are satisfied for Type I or Type II domains, there exist an infinite number of deformation gradients formed by these domains, which form a zero elastic energy phase interface with austenite. This flexibility allows the advancing phase interface to accommodate defects, precipitates, and non-transforming inclusions, thereby mitigating the formation of regions with high localized strain.\\[5 pt]
Prompted by the promising theoretical insights, a select few materials have been engineered to satisfy the cofactor conditions, aiming to enhance their functional properties. The first such material was Zn$_{45}$Au$_{30}$Cu$_{25}$, which satisfies the cofactor conditions for both Type-I and Type-II twin systems \cite{Song2013}. This alloy exhibited reduced thermal hysteresis and demonstrated exceptional cyclic stability, exceeding $16,000$ transformation cycles. Interestingly, repeated thermal cycling in this alloy led to the emergence of a non-reproducible microstructure, the origin of which remains not fully understood to this date. Another important example is that of Ti$_{54}$Ni$_{34}$Cu$_{12}$, a thin-film alloy which closely satisfies the cofactor conditions for Type-I twins \cite{chluba}. This alloy demonstrated exceptional repeatability of superelastic behavior, even after $10^7$ loading cycles. These examples highlight the vital role of cofactor conditions in material design via phase engineering.\\[5pt]
\begin{figure}[hbt!]
    \centering
    \includegraphics[scale=0.55]{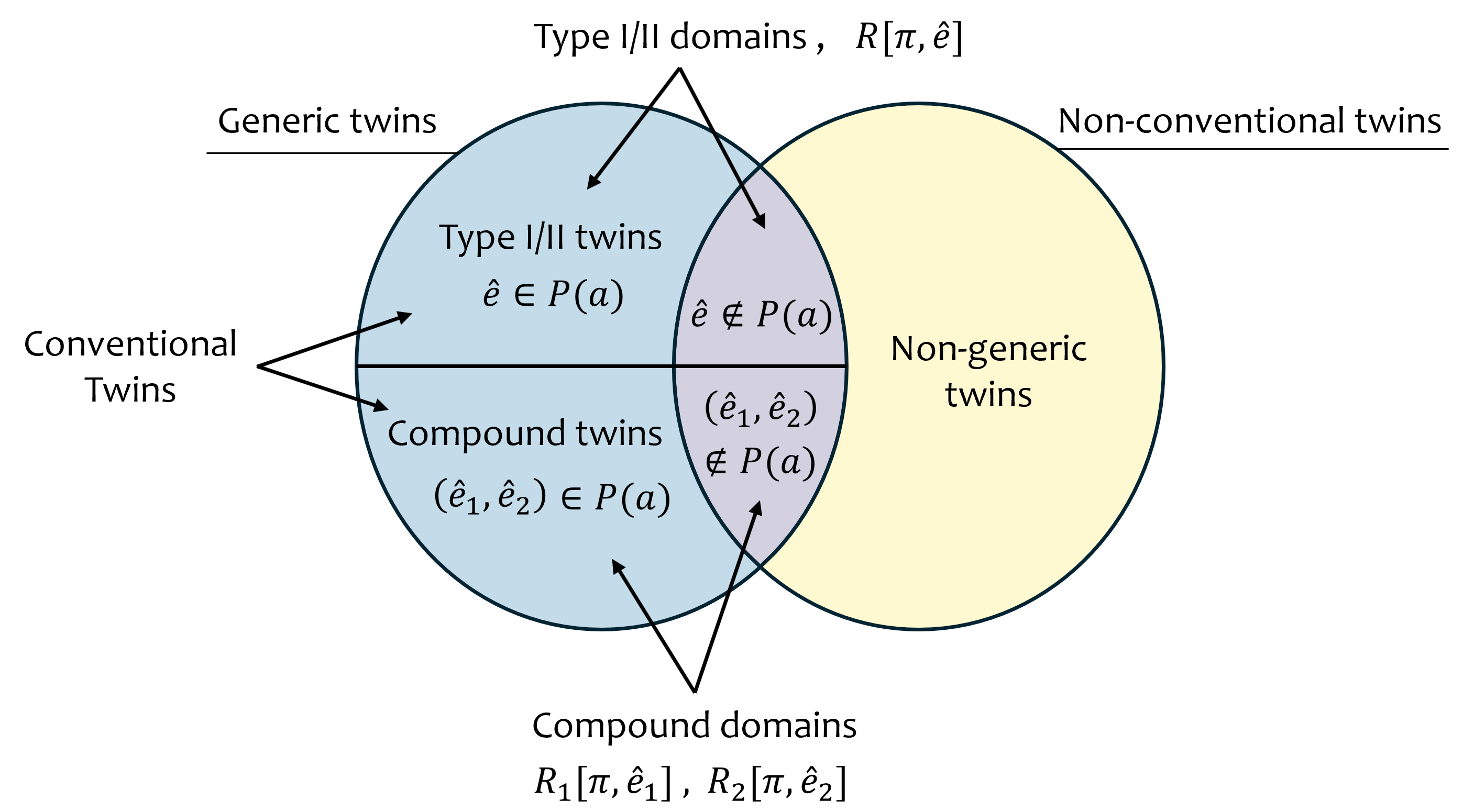}
    \caption{Classification of all transformation twins arising in martensitic phase transformations. }
    \label{fig:venn}
\end{figure}

\begin{figure}[hbt!]
    \centering
    \begin{minipage}[b]{0.50\textwidth}
    \centering
    \includegraphics[scale=0.39]{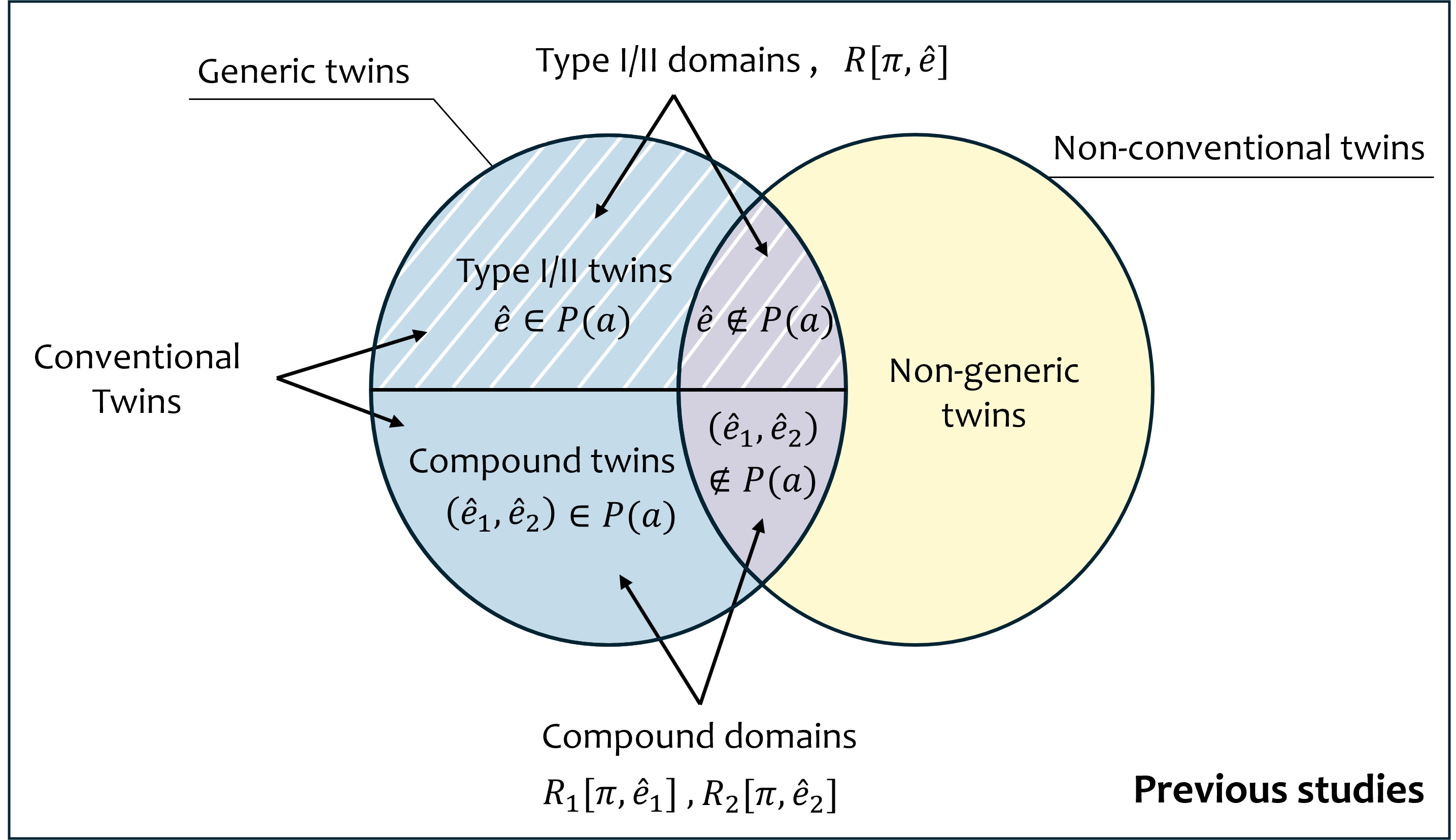}
    \subcaption{}
    \label{fig:venn_scope_a}
    \end{minipage}%
    \begin{minipage}[b]{0.50\textwidth}
    \centering
    \includegraphics[scale=0.39]{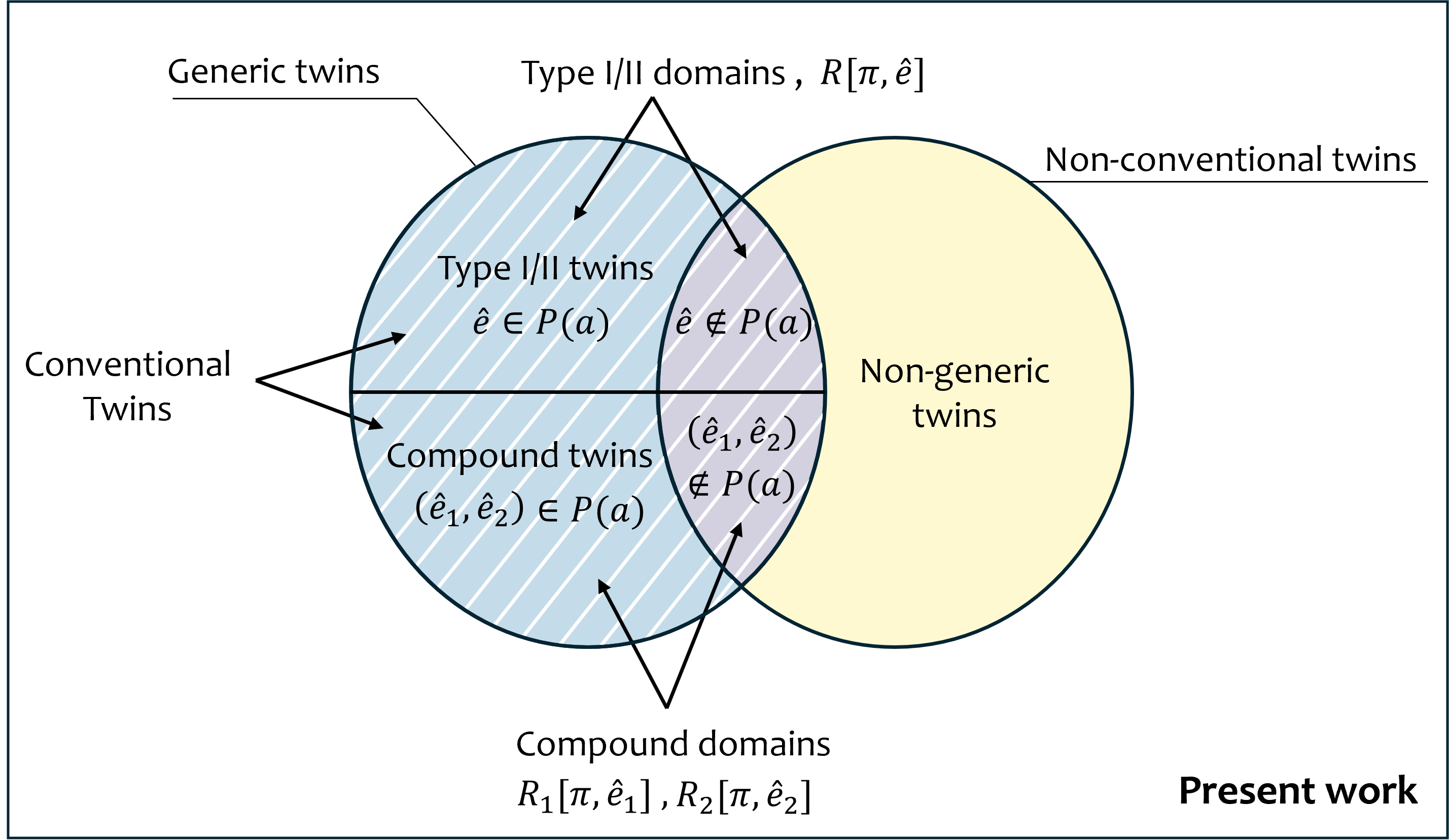}
    \subcaption{}
    \label{fig:venn_scope_b}
    \end{minipage}
    \caption{(a) Shaded region is where the elimination of transition layer is possible, when the cofactor conditions are satisfied (b) Shaded region shows the extension of this elimination to the remaining cases of generic twins. This highlights the scope of present work.}
    \label{fig:venn_scope}
\end{figure}

In contrast to Type I/II domains, cofactor conditions satisfied for compound domains do not lead to the elimination of the transition layer between compound laminates and austenite, see \Cref{fig:venn_scope_a}. The classification shown in \Cref{fig:venn_scope_a} is based on the categorization of solutions to the twinning equation employed by Chen et al.\cite{CHEN20132566} and illustrated in \Cref{fig:venn}. This limitation of cofactor conditions prevents the material from achieving a state in which all twinning systems can form phase interfaces without requiring intermediate transition layers. Achieving this level of crystallographic compatibility would broaden the admissible set of stress-free interfaces available in the material, thereby offering increased flexibility in accommodating microstructural heterogeneities and potentially leading to enhanced functional performance of the material.\\[5 pt]
The cofactor conditions have been further analyzed by Della Porta \cite{DELLAPORTA201927} to group the variant pairs based on their ability to simultaneously satisfy these conditions, see \Cref{tab:DP_CO,tab:DP_CM}. A detailed discussion on this is presented in \autoref{sec:extreme_comp}. Della Porta et al. have also put forth another idea of supercompatibility within the martensitic phase, known as \textit{triplet condition} \cite{DellaPorta2022Triplet}. Triplet conditions are conceptually similar to the cofactor conditions. They represent the conditions of degeneracy of the equations of compatibility between a martensitic laminate and another variant. In a real material, these conditions are easier to achieve than cofactor conditions and lead to the formation of martensitic triplets, which are three-fold martensitic microstructures, see \autoref{fig:triplets_27}. These triplets combine with the usual four-fold crossing twins \cite{Bhattacharya1997} to form intricate colonies of variants (see Figure 10, \cite{DellaPorta2022Triplet}) which is believed to suppress the plastic deformation and improve the mechanical properties of the martensitic phase.\\[5pt]
The primary goal of this paper is to extend the notion of zero elastic-energy phase interfaces to the laminates of compound domains, see \autoref{fig:venn_scope_b}. The ability to accommodate the laminates of compound domains within the zero-energy framework expands the set of deformation gradients which can make exact interfaces with austenite, without requiring any transition layer. More deformation gradients mean more ways in which the growing interface can accommodate defects and heterogeneities present inside the material. As a result, the reversibility of the material is expected to improve significantly. \\[5pt]
To achieve this goal, we adopt an inverse approach compared to that employed by Chen et al.\cite{CHEN20132566}: we first construct compatible \textit{triple clusters} involving a pair of compound domains and austenite, and then derive the sufficient conditions under which a laminate, formed by stacking these clusters, maintains compatibility across the phase interface for all volume fractions. Here, the term \textit{triple clusters} collectively refers to two types of configurations—triple junctions and parallel domain walls—that can lead to the elimination of the transition layer. Precise definitions are presented in \autoref{sec:commutation}. An important result of this paper is the discovery that compatible variants which commute necessarily form compound domains. This commutation property can be exploited to construct triple clusters between compound domains and austenite. Moreover, when two compatible variants commute, both solutions of the twinning equation for compound domains can independently form distinct triple clusters with austenite. This offers a distinct advantage over Type-I/II domains, where at most a single triple cluster can exist for each pair \cite{DELLAPORTA201927}. Thus, the commutation property not only reflects the underlying symmetry between martensitic variants but, under certain conditions, also enhances the compatibility across the two phases.\\[5pt]
 A notable observation in this context is that the non-conventional twins observed in the cubic to monoclinic II transformation also commute. This observation allows us to analyze non-conventional twin systems under the commutation framework. In particular, this framework enables us to (1) classify these twins under compound domains and (2) facilitate the formation of stress-free interfaces between their laminates and austenite. This way, the notion of zero-elastic energy interfaces can be extended to non-conventional twins.\\[5pt]
We also identify cases of compound domains that do not commute, see \autoref{rem:different_compound_tendancy}. For these non-commuting cases, we explicitly derive the general conditions under which any pair of compound domains forms triple clusters with austenite, see \autoref{thm:compound_triple}. In particular, for such clusters to occur, at least one of the \cref{eq:58,eq:59,eq:61,eq:63} must be satisfied. The results obtained previously by virtue of commutation, arise as a special cases under these conditions. This way, we complete the characterization of triple clusters for all twin systems. In doing so, we push the concept of stress-free interfaces to its full extent, where in all Type I/II as well as compound laminates are simultaneously compatible with austenite for all volume fractions without the need of any transition layer. For lack of a more precise term, we refer to this enhanced level of compatibility as \textit{extreme compatibility}. Finally, we report some novel microstructures which are possible under \textit{extreme compatibility conditions} in cubic to orthorhombic transformations. These include a flexible spearhead nucleus of martensite, austenite inclusions embedded in martensite, and a special type of four-fold martensitic microstructure. The presence of these nuclei is expected to unlock new transformation pathways in the material. \\[5 pt]
{\textit{Organization of the paper:}} In \autoref{sec:gnlm_CC}, we briefly summarize key results from the literature that are essential for understanding this work. In \autoref{sec:triple_beyond}, we investigate the commutation properties of martensitic variants and establish their connection to compound domains in \autoref{thm:commutation}. We then use this commutation property to construct triple clusters between compound domains and austenite in  \Cref{thm:firstkind,thm:secondkind}. The remaining cases of non-commuting compound domains are addressed in \autoref{thm:compound_triple}. In \autoref{sec:extreme_comp}, we formulate the extreme compatibility conditions for cubic to orthorhombic and cubic to monoclinic-II transformations. The novel microstructures enabled by the extreme compatibility in cubic to orthorhombic transformations are presented in \autoref{sec:new_micro}. Finally, we present our conclusions and suggest directions for future research in \autoref{sec:conclusion}.\\[5 pt]
{\textit{Notation:}} $\mathbb{R}^3$ is the three-dimensional real vector space. $\mathbb{R}^{3\times 3}$ is the set of all 3 $\times$ 3 real matrices. $\mathbb{R}^{3\times 3}_+$ is the subset of $\mathbb{R}^{3\times 3}$ with positive determinant. $\mathbb{R}^{3\times 3}_{+sym}$ is the set of all positive definite, symmetric, 3 $\times$ 3 matrices. $SO(3)$ is the special orthogonal group, i.e. the set of all 3 $\times$ 3 orthogonal matrices with unit determinant. $\mathbb{S}^2$ represents the 2-sphere. $\mathbb{C}^{n\times n}$ is the set of all square matrices of size $n$ with complex entries. $\hat{u}_i,\:\hat{e}_i$ and $\hat{\underline{e}}_i$ represent the eigenvectors, symmetry axes and the standard basis of $\mathbb{R}^3$ respectively. The symbol $cof\;U$ denotes the cofactor of the matrix U. $\mathscr{P}(a)$ and $\mathscr{P}(m)$ denote the point groups of austenite and martensite, respectively.

\section{Geometrically Non-Linear Theory of Martensite and Cofactor Conditions}
\label{sec:gnlm_CC}

In this section, we summarize the essential elements of the crystallographic theory of martensite. More detailed accounts can be found in \cite{bhattacharya2003microstructure},\cite{Ericksen1991},\cite{BALL200461},\cite{JAMES2000197}. We also import some useful results from Chen at al \cite{CHEN20132566}, which is the first formal study of the cofactor conditions and from Della Porta \cite{DELLAPORTA201927}, which further improved the understanding of supercompatibility.\\[4pt]
Let $\Omega \in \mathbb{R}^3 $ be a region occupied by a crystal in a reference configuration, when the crystal structure is austenite. As the crystal deforms upon phase transformation, a point $x \in \Omega$ is displaced to $y(x) \in \mathbb{R}^3$. The entire deformation of the crystal is given by a mapping $y: \Omega\rightarrow\mathbb{R}^3$.\\ We assume that the helmholtz free energy density of the material $\Phi$ depends on the gradient of the deformation, $\nabla y(x) \in \mathbb{R}^{3\times3}_{+}$ as well as the absolute temperature $\theta$. The total thermoelastic energy of the crystal is given by

\begin{equation}
\label{eq:1}
    \int_{\Omega} \Phi(\nabla y(x), \theta) \,dx.
\end{equation}

The free energy density $\Phi$ has two important invariance characteristics. First is that of the principle of frame indifference, i.e. for any rigid rotation $ Q \in SO(3)$, we have 

\begin{equation}
\label{eq:2}
\Phi(Q\nabla y(x), \theta) = \Phi(\nabla y(x), \theta).
\end{equation}

The second invariance comes from the symmetry of the crystalline material. We restrict ourselves to rotational symmetries only i.e. the symmetries which belong to the Ericksen-Pitteri neighborhood \cite{pitteri_zanzotto}. If $ R \in SO(3)$ is a symmetry operation for crystal, then

\begin{equation}
\label{eq:3}
\Phi(\nabla y(x) R, \theta) = \Phi(\nabla y(x), \theta).
\end{equation}

We can combine \eqref{eq:2} and \eqref{eq:3} to write

\begin{equation}
\label{eq:4}
\Phi(R^T\nabla y(x) R, \theta) = \Phi(\nabla y(x), \theta).
\end{equation}

The free energy density $\Phi$ defined on $\Omega$ can be related to the lattice level deformations using the Cauchy-Born rule \cite{ericksen1984cauchy}. Therefore, we can understand $\nabla y(x)$ as the gradient of the deformation which maps the lattice of austenite to that of the martensite. The symmetry of austenite and martensite lattices then requires that there is a set of transformation stretch matrices $U_1, U_2, ...., U_N$ $\in \mathbb{R}^{3\times3}_{+sym}$ that are related by symmetry $U_i = R_i^TU_1R_i,\;i=1,2,....N,$ where $R_i \in \mathscr{P}(a)$, the point group of austenite. $U_1, U_2, ...., U_N$ are called the variants of martensite and $N$ is given by the ratio of orders of the point groups of austenite and martensite $\mathscr{P}(a)/\mathscr{P}(m)$. We assume that the dependence of $U_i$ on temperature is small enough to neglected. $U_i$ are then the energy wells of the martensite i.e. if $\theta_c$ is the transformation temperature, we have

\begin{equation}
\label{eq:5}
\Phi(U_1, \theta) = \cdots = \Phi(U_N, \theta) \;\leq\; \Phi(F, \theta),\quad \theta \leq \theta_c\:,
\end{equation}

F is any element of $\mathbb{R}^{3\times3}_{+sym}$. At the transformation temperature, $\theta = \theta_c$, austenite and martensite equally minimize the energy i.e. $I$ is also a minimizer of free energy. For temperatures above the transformation temperature, only austenite minimizes the energy.

\begin{equation}
\label{eq:6}
\Phi(U_1, \theta) = \cdots = \Phi(U_N, \theta)=\Phi(I,\theta) \;\leq\; \Phi(F, \theta),\quad \theta = \theta_c\:,
\end{equation}

\begin{equation}
\label{eq:7}
\Phi(I, \theta)\;\leq\; \Phi(F, \theta),\quad \theta \geq \theta_c.
\end{equation}

Using the principle of frame indifference, we can construct the full set of minimizers of free energy $\Phi$ at the transformation temperature $\theta_c$ as

\begin{equation}
\label{eq:8}
SO(3)U_1\cup SO(3)U_2 \cdots \cup SO(3)U_N \cup SO(3)I.
\end{equation}

\subsection{Kinematic Compatibility}
\label{sec:kin_comp}

The general equation of compatibility between two distinct variants of martensite $U$ and $V$ is written as 
\begin{equation}
\label{eq:9}
    QV - U = a \otimes \hat{n}.
\end{equation}

In \eqref{eq:9}, $U,V \in \mathbb{R}^{3\times3}_{+\scriptstyle{sym}} $ are known and the equation needs to be solved for  $Q \in SO(3) $, $a \in \mathbb{R}^3$ and $\hat{n} \in S^2 $ that satisfy \eqref{eq:9}.
Broadly speaking, $U$ and $V$ as above, can be classified into generic and non-generic twins, see \autoref{fig:venn}. Generic twins are further classified into conventional and non-conventional twins. Conventional twins can be either Type I/II twins or compound twins. Type I/II twins are the ones that satisfy Mallard's law, i.e. they are related by a single $180^\circ$ rotation in $\mathscr{P}(a)$. Compound twins are related by two distinct $180^\circ$ rotations in $\mathscr{P}(a)$. The term \textit{domains} is used in a general sense, when the $180^\circ$ rotation(s), relating the two variants, do(es) not necessarily belong to $\mathscr{P}(a)$. All conventional twins are domains, but domains may be conventional twins or non-conventional but generic twins. Type I/II domains and compound domains are mutually exclusive sets and collectively exhaust all cases of generic twins. Non-generic twins exist only for special values of transformation strain and perish under arbitrary perturbations of the lattice parameters. All non-generic twins are also non-conventional by definition. The above definitions are adopted from \cite{CHEN20132566}. We avoid the discussion on non-generic twins altogether in this paper and treat them as incompatible. Henceforth, the term non-conventional twins will refer to non-conventional but generic twins. We refer the reader to \cite{CHEN20132566},\cite{pitteri_zanzotto} for more details on this classification.

\begin{theorem}
    \label{thm: main_solution}
    (Ball and James \cite{Ball1987}, Bhattacharya \cite{bhattacharya2003microstructure}) Given matrices $F$ and $G$ with positive determinants, such  that $C=G^{-T}F^{T}FG^{-1} \neq I,$ and $\lambda_1 \leq1, \;\lambda_2=1,\;\lambda_3\geq1$ holds, then there are exactly two solutions to the equation $QF-G = a \otimes \hat{n}$, given by 

\begin{equation}
    a = \rho\left(\sqrt{\frac{\lambda_3(1-\lambda_1)}{\lambda_3 - \lambda_1}}\hat{u}_1\; +\; \kappa \sqrt{\frac{\lambda_1(\lambda_3 -1)}{\lambda_3 - \lambda_1}}\hat{u}_3 \right),
    \nonumber
\end{equation}

\begin{equation}
\label{eq:10}
    \hat{n} = \frac{\sqrt{\lambda_3} - \sqrt{\lambda_1}}{\rho\sqrt{\lambda_3 - \lambda_1}}\left(-\sqrt{1-\lambda_1}\:G^T\hat{u}_1\; +\; \kappa \sqrt{\lambda_3 -1}\:G^T\hat{u}_3 \right).
\end{equation}

where $\kappa = \pm1$, $\rho\neq0$ is chosen to make $\left|{\hat{n}}\right| =1$  and $\hat{u}_i$ are the eigenvectors of $C$ corresponding to the eigenvalues $\lambda_i$.
\end{theorem}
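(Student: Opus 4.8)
The plan is to recast the matrix equation as a statement about a single symmetric positive-definite matrix and then use the hypothesis $\lambda_2=1$ to collapse the problem onto a two-dimensional invariant plane. First I would right-multiply $QF-G=a\otimes\hat n$ by $G^{-1}$ to get $B:=QFG^{-1}=I+a\otimes m$ with $m:=G^{-T}\hat n$; squaring gives $B^TB=G^{-T}F^TFG^{-1}=C$, and $B-I=a\otimes m$ is rank one. Conversely, any $B$ with $B^TB=C$, $\det B>0$ and $B-I$ of rank one yields a solution: put $Q:=BGF^{-1}$, which is orthogonal since $Q^TQ=F^{-T}G^TCGF^{-1}=I$ and has $\det Q=\det B\cdot\det G/\det F=1$, using $\det C=(\det F/\det G)^2$ and the positive-determinant hypotheses; then recover $\hat n$ by normalising $G^Tm$ to unit length and rescaling $a$ reciprocally (this is what the factor $\rho$ in \eqref{eq:10} absorbs). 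So everything reduces to classifying the rank-one matrices $a\otimes m$ with $(I+a\otimes m)^T(I+a\otimes m)=C$ and $1+a\cdot m>0$.

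Next I would bring in $\lambda_2=1$. Expanding $B^TB=C$ gives $C-I=a\otimes m+m\otimes a+|a|^2\,m\otimes m$, a symmetric matrix whose range lies in $\operatorname{span}\{a,m\}$. Since $\hat u_2$ lies in the kernel of $C-I$, taking the $\hat u_2$-component of $(C-I)\hat u_2=0$ (together with $C\neq I$) forces $a\perp\hat u_2$ and $m\perp\hat u_2$. Hence $a,m\in P:=\operatorname{span}\{\hat u_1,\hat u_3\}$, the matrix $B=I+a\otimes m$ fixes $\hat u_2$ and maps $P$ into $P$, so $B$ is block-diagonal with respect to the orthogonal splitting $P\oplus\mathbb{R}\hat u_2$, and the equation collapses to the plane $P$, on which $C$ acts as $U_2^2$ with $U_2=\operatorname{diag}(\sqrt{\lambda_1},\sqrt{\lambda_3})$ in the basis $\{\hat u_1,\hat u_3\}$. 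This is exactly where the middle-eigenvalue condition is used: without it $\hat u_2$ need not drop out and the problem does not reduce to $P$.

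On $P$, I would use the right polar decomposition $B_2:=I_2+a\otimes m=RU_2$ and keep the branch $\det B_2>0$ (forced by $\det Q=1$), which puts $R\in SO(2)$. The rank-one requirement $\det(B_2-I_2)=0$ becomes $\det(R-U_2^{-1})=0$, and parametrising $R$ by a rotation angle $\phi$ turns this into the single scalar equation $\cos\phi=(1+\sqrt{\lambda_1\lambda_3})/(\sqrt{\lambda_1}+\sqrt{\lambda_3})$. The inequality $(\sqrt{\lambda_1}-1)(\sqrt{\lambda_3}-1)\le0$, immediate from $\lambda_1\le1\le\lambda_3$, shows the right-hand side lies in $(0,1]$; it equals $1$ precisely when $\lambda_1=1$ or $\lambda_3=1$, where the two solutions degenerate, and otherwise there are exactly two admissible angles $\phi=\pm\phi_0$. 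Each gives one rank-one matrix $a\otimes m=R(\pm\phi_0)U_2-I_2$, hence—after lifting to $\mathbb{R}^3$ with a zero $\hat u_2$-component, fixing the reciprocal scaling by $|G^Tm|=1$ and setting $Q=BGF^{-1}$—exactly two solution triples $(Q,a,\hat n)$, labelled by $\kappa=\pm1$. Matching the components of $R(\pm\phi_0)U_2-I_2$, using $a\cdot m=\sqrt{\lambda_1\lambda_3}-1$ and $|a||m|=\sqrt{\lambda_3}-\sqrt{\lambda_1}$, then reproduces the closed forms for $a$ and $\hat n$ in \eqref{eq:10}.

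I expect the main obstacle to be bookkeeping rather than insight. There are in fact four rank-one matrices $a\otimes m$ with $(I+a\otimes m)^T(I+a\otimes m)=C$—two with $\det(I+a\otimes m)>0$ and two reflections of negative determinant—and one must carefully discard the latter so that $Q$ lands in $SO(3)$ and not in $O(3)\setminus SO(3)$; keeping the orientation and the sign of $1+a\cdot m$ straight, together with the reciprocal-scaling ambiguity $a\mapsto\rho a$, $m\mapsto m/\rho$, is what makes the count come out as ``exactly two triples'' rather than a one-parameter family. A cleaner way to present the existence half is to substitute the $a$ and $\hat n$ of \eqref{eq:10} directly into $QF-G=a\otimes\hat n$ with $Q:=(G+a\otimes\hat n)F^{-1}$ and verify by computation that $Q\in SO(3)$; uniqueness then follows from the kernel argument of the second paragraph. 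Both routes use the same two inequalities, but the second isolates the delicate part.
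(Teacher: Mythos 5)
The paper does not prove this theorem at all---it is imported verbatim from Ball--James \cite{Ball1987} and Bhattacharya \cite{bhattacharya2003microstructure}---and your argument is, in substance, the standard proof given in those references: reduce to $B=I+a\otimes m$ with $B^{T}B=C$, use the middle eigenvalue to force $a,m\perp\hat u_2$, collapse to the invariant plane $\operatorname{span}\{\hat u_1,\hat u_3\}$, and count solutions via the rotation angle $\cos\phi=(1+\sqrt{\lambda_1\lambda_3})/(\sqrt{\lambda_1}+\sqrt{\lambda_3})$, whose admissibility is exactly $(1-\sqrt{\lambda_1})(1-\sqrt{\lambda_3})\le 0$. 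The one step worth tightening is the kernel argument: deducing $a\perp\hat u_2$ and $m\perp\hat u_2$ from $(C-I)\hat u_2=0$ requires $a$ and $m$ to be linearly independent, and the parallel case (which occurs precisely when $\lambda_1=1$ or $\lambda_3=1$) must be disposed of separately---consistent with the degeneracy of the two solutions that you already note.
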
 

\begin{corollary}
\label{corollary:identity_solutions}
Let $F = F^T = U$ be a variant of martensite and $G=I$ in \autoref{thm: main_solution}. If the eigenvalues of $U$ satisfy $0<\lambda_1 \leq1, \;\lambda_2=1,\;\lambda_3\geq1$, then the solutions to $QU-I = a\otimes \hat{n}$ are given by  

\begin{equation}
    a = \frac{\rho}{\sqrt{\lambda_3^2 - \lambda_1^2}}\left(\lambda_3\sqrt{1-\lambda_1^2}\;\hat{u}_1\; +\; \kappa\; \lambda_1\sqrt{\lambda_3^2 -1}\;\hat{u}_3 \right)
    \nonumber,
\end{equation}

\begin{equation}
\label{eq:11}
    \hat{n} = \frac{\lambda_3 - \lambda_1}{\rho\sqrt{\lambda_3^2 - \lambda_1^2}}\left(-\sqrt{1-\lambda_1^2}\;\hat{u}_1\; +\; \kappa \sqrt{\lambda_3^2 -1}\;\hat{u}_3 \right).
\end{equation}

where $\hat{u}_i$ are the normalized eigenvectors of $U$ corresponding to the eigenvalues $\lambda_i$.\eqref{eq:11} gives the solutions of two possible interfaces between a single variant of martensite and austenite.
\end{corollary}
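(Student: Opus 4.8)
The plan is to specialize Theorem~\ref{thm: main_solution} by substituting $F = U = U^T$ and $G = I$, and then to simplify the resulting formulas using the fact that when $G = I$ the tensor $C$ coincides with $U^2$. First I would observe that $C = G^{-T}F^TFG^{-1} = U^TU = U^2$, so the eigenvectors $\hat u_i$ of $C$ are exactly the eigenvectors of $U$, while the eigenvalues of $C$ are the squares of the eigenvalues of $U$; that is, if $U$ has eigenvalues $\lambda_i$ then $C$ has eigenvalues $\lambda_i^2$. The hypothesis $0 < \lambda_1 \le 1,\ \lambda_2 = 1,\ \lambda_3 \ge 1$ on $U$ therefore translates precisely into the hypothesis of Theorem~\ref{thm: main_solution} for $C$ (its middle eigenvalue is $\lambda_2^2 = 1$, the smallest is $\le 1$, the largest is $\ge 1$), and $C \ne I$ because $U$ is a genuine variant, so the theorem applies and yields exactly two solutions.

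Next I would carry out the substitution in the two displayed formulas of Theorem~\ref{thm: main_solution}, replacing each occurrence of $\lambda_1$ there by $\lambda_1^2$, each $\lambda_3$ by $\lambda_3^2$, and $G^T\hat u_i$ by $U\hat u_i = \lambda_i \hat u_i$ (using symmetry of $U$ and that $\hat u_i$ is an eigenvector). For the $a$-vector this gives
\begin{equation*}
a = \rho\left(\sqrt{\frac{\lambda_3^2(1-\lambda_1^2)}{\lambda_3^2 - \lambda_1^2}}\,\hat u_1 + \kappa\sqrt{\frac{\lambda_1^2(\lambda_3^2-1)}{\lambda_3^2-\lambda_1^2}}\,\hat u_3\right),
\end{equation*}
and pulling $\lambda_3$ and $\lambda_1$ (both positive) out of the respective square roots produces the stated expression with prefactor $\rho/\sqrt{\lambda_3^2-\lambda_1^2}$. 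For the normal, the prefactor $(\sqrt{\lambda_3^2}-\sqrt{\lambda_1^2})/(\rho\sqrt{\lambda_3^2-\lambda_1^2}) = (\lambda_3-\lambda_1)/(\rho\sqrt{\lambda_3^2-\lambda_1^2})$ since $\lambda_1,\lambda_3 > 0$, and the bracket becomes $-\sqrt{1-\lambda_1^2}\,\lambda_1\hat u_1 + \kappa\sqrt{\lambda_3^2-1}\,\lambda_3\hat u_3$ after replacing $G^T\hat u_i$ by $\lambda_i\hat u_i$; absorbing these $\lambda_i$ factors into a redefinition of the normalizing constant $\rho$ (which is only fixed up to the requirement $|\hat n| = 1$ anyway) recovers the form in \eqref{eq:11}. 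I would remark that $a$ and $\hat n$ are each determined only up to the normalization convention, so the $\lambda_i$'s appearing inside the brackets can be shuffled between $a$ and $\hat n$; the clean statement in the corollary is one such normalization.

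The only genuinely delicate point — more bookkeeping than obstacle — is keeping the two normalization constants consistent: $\rho$ in Theorem~\ref{thm: main_solution} is pinned by $|\hat n| = 1$, and after the substitution the eigenvector-coefficients in $\hat n$ change by the factors $\lambda_1, \lambda_3$, so one must either re-solve for $\rho$ or, more cleanly, note that the statement of the corollary silently rescales. I would make this explicit by saying "where $\rho$ is again chosen so that $|\hat n| = 1$," so that no inconsistency arises, and I would double-check that the $a\otimes\hat n$ product is invariant under the compensating rescaling $a \mapsto c\,a$, $\hat n \mapsto c^{-1}\hat n$. The final sentence identifying these as the two austenite–single-variant interfaces follows because $QU - I = a\otimes\hat n$ is exactly the statement that $\{I, QU\}$ are rank-one connected, i.e. that a planar interface with normal $\hat n$ separates austenite (gradient $I$) from the variant $U$ (gradient $QU$).
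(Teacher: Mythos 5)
Your overall strategy---substitute $F=U$, $G=I$ into Theorem~\ref{thm: main_solution}, note that $C=U^2$ so the eigenvalues square while the eigenvectors are unchanged, and check that the hypotheses transfer---is exactly the right (and essentially only) route; the paper itself gives no proof because the corollary is precisely this substitution. Your treatment of the vector $a$ is correct. But there is a genuine error in your treatment of $\hat n$: you replace $G^T\hat u_i$ by $U\hat u_i=\lambda_i\hat u_i$, whereas here $G=I$, not $U$ (it is $F$ that equals $U$). Hence $G^T\hat u_i=\hat u_i$, and the bracket in the theorem becomes $-\sqrt{1-\lambda_1^2}\,\hat u_1+\kappa\sqrt{\lambda_3^2-1}\,\hat u_3$ immediately, which is \eqref{eq:11} verbatim with the same $\rho$ as in the theorem. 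No rescaling is needed at all.

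The repair you then offer---``absorbing these $\lambda_i$ factors into a redefinition of the normalizing constant $\rho$''---does not work and would matter in a situation where the factors were really there. A rescaling $\rho\mapsto c\rho$ multiplies the whole vector $\hat n$ by a single scalar; it cannot change the \emph{ratio} of the $\hat u_1$ and $\hat u_3$ components. The vector $-\lambda_1\sqrt{1-\lambda_1^2}\,\hat u_1+\kappa\,\lambda_3\sqrt{\lambda_3^2-1}\,\hat u_3$ is not parallel to $-\sqrt{1-\lambda_1^2}\,\hat u_1+\kappa\sqrt{\lambda_3^2-1}\,\hat u_3$ unless $\lambda_1=\lambda_3$, so if those factors were genuinely present you would obtain a different habit-plane normal, not a renormalization of the same one. (The legitimate freedom you mention, $a\mapsto c\,a$, $\hat n\mapsto c^{-1}\hat n$, is an overall scalar shuffle between the two vectors, not a componentwise one.) Once you correct $G^T=I$, this whole paragraph can be deleted and the proof closes cleanly; the final identification of the two solutions as the two austenite--single-variant interfaces is fine as you state it.
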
 

\begin{theorem}
\label{thm:Mallards}
    (Gurtin \cite{Gurtin1983}, Bhattacharya \cite{bhattacharya2003microstructure}) Let $R$ be a $180^{\circ}$ rotation about some axis $\hat{e}$ and the matrices $F$ and $G$ as above satisfy
\begin{enumerate}
    \item $F = QGR$ for some rotation $Q \in SO(3),$
    \item $F^T F \neq G^TG.$
\end{enumerate}
Then there are two solutions to the equation $QF-G=a\otimes \hat{n}$, given by

\begin{equation}
    a_I = 2\left(\frac{G^{-T}\hat{e}}{\left|G^{-T}\hat{e}\right|^2} - G\hat{e}\right), \quad\quad \hat{n}_I = \hat{e}.
    \nonumber
\end{equation}

\begin{equation}
\label{eq:12}
    a_{II} = \rho G\hat{e}, \quad\quad  \hat{n}_{II}=\frac{2}{\rho}\left(\hat{e} - \frac{G^TG\hat{e}}{\left|G\hat{e}\right|^2}\right).
\end{equation}
\end{theorem}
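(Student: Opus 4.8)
\textit{Proof proposal.} The plan is to obtain this statement as a specialization of the Ball--James solution formula (\autoref{thm: main_solution}) to the present pair $F,G$, and then to recognize the two abstract solutions it produces as the displayed Type~I and Type~II pairs. Write the $180^{\circ}$ rotation as $R = 2\,\hat e\otimes\hat e - I$, so that $R = R^{T}$, $R^{2} = I$ and $\det R = 1$. From hypothesis (1), $F = QGR$ with $Q$ orthogonal and $R$ symmetric, hence $F^{T}F = R\,G^{T}G\,R$; in particular $\det C = 1$ for $C := G^{-T}F^{T}FG^{-1}$, and $C = (FG^{-1})^{T}(FG^{-1})$ is symmetric positive definite. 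Moreover $C = I$ precisely when $F^{T}F = G^{T}G$, so hypothesis (2) gives $C \neq I$.

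Next I would verify that the middle eigenvalue of $C$ equals $1$. Since $C$ is symmetric positive definite with $\det C = 1$ and $C \neq I$, it suffices to show that $1$ belongs to the spectrum of $C$: the ordering $\lambda_{1} < 1 = \lambda_{2} < \lambda_{3}$ then follows automatically, because if $1$ were the largest or the smallest eigenvalue the determinant constraint would force all eigenvalues to equal $1$, i.e. $C = I$. Now $1$ is an eigenvalue of $C$ iff $\det(F^{T}F - G^{T}G) = 0$, and with $S := G^{T}G$ one has $F^{T}F - G^{T}G = RSR - S = R\,(SR - RS)$, using $R^{2} = I$. Because $S$ and $R$ are symmetric, $SR - RS$ is skew-symmetric, hence singular in odd dimension, so $\det(F^{T}F - G^{T}G) = \det R\,\det(SR - RS) = 0$, as required. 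Thus \autoref{thm: main_solution} applies and yields exactly two solutions $(a,\hat n)$ of $QF - G = a\otimes\hat n$, which is the ``two solutions'' part of the assertion.

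It remains to identify these two solutions with the displayed pairs. Since there are exactly two, it is enough to check that each displayed pair $(a,\hat n)$ solves the equation for some rotation and that the two pairs are distinct. A rotation with $QF = G + a\otimes\hat n$ exists iff $(G + a\otimes\hat n)^{T}(G + a\otimes\hat n) = F^{T}F = RSR$ and $\det(G + a\otimes\hat n) = \det F$, in which case the rotation is $(G + a\otimes\hat n)F^{-1}$. Using the rank-one determinant identity $\det(G + a\otimes\hat n) = (\det G)(1 + \hat n\cdot G^{-1}a)$, the expansion $(G + a\otimes\hat n)^{T}(G + a\otimes\hat n) = S + (G^{T}a)\otimes\hat n + \hat n\otimes(G^{T}a) + |a|^{2}\,\hat n\otimes\hat n$, and $RSR = S - 2(S\hat e)\otimes\hat e - 2\,\hat e\otimes(S\hat e) + 4(\hat e\cdot S\hat e)\,\hat e\otimes\hat e$, the Type~I pair $\hat n_{I} = \hat e$, $a_{I} = 2(G^{-T}\hat e/|G^{-T}\hat e|^{2} - G\hat e)$ is checked by a short computation: one finds $\hat n_{I}\cdot G^{-1}a_{I} = 0$, $G^{T}a_{I} = -2S\hat e + 2|G^{-T}\hat e|^{-2}\hat e$ and $|a_{I}|^{2} = 4\,\hat e\cdot S\hat e - 4|G^{-T}\hat e|^{-2}$, which make both conditions hold; the Type~II pair $a_{II} = \rho G\hat e$, $\hat n_{II} = (2/\rho)(\hat e - G^{T}G\hat e/|G\hat e|^{2})$ is handled in the same way. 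Distinctness is immediate: $\hat n_{I} = \hat e$ and $\hat n_{II}$ are parallel only if $\hat e$ is an eigenvector of $G^{T}G$, and then $RSR = S$, contradicting hypothesis (2).

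The conceptual crux — and the only step that is not bookkeeping — is the observation that $F^{T}F - G^{T}G = R(SR - RS)$ with $SR - RS$ skew-symmetric, which is exactly what forces $\lambda_{2} = 1$. After that the argument is routine, the most laborious part being the explicit substitution in the last paragraph. (One could instead produce $a_{I},\hat n_{I},a_{II},\hat n_{II}$ by specializing the eigenvector formulas \eqref{eq:10} to $C = (FG^{-1})^{T}(FG^{-1})$ and expressing the eigenvectors $\hat u_{1},\hat u_{3}$ of $C$ through $\hat e$ and $G$, but this is comparable effort, and is essentially the hands-on route taken in the classical references.)
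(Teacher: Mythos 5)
The paper does not prove this statement: it is imported verbatim from Gurtin and Bhattacharya as \autoref{thm:Mallards}, so there is no in-paper argument to compare against. Judged on its own, your proposal is correct and follows the standard route to Mallard's law. The one genuinely non-routine step — showing that $\lambda_2(C)=1$ via $F^TF-G^TG=R(SR-RS)$ with $SR-RS$ skew-symmetric and hence singular in odd dimension, combined with $\det C=1$ and $C\neq I$ to pin the eigenvalue $1$ strictly in the middle — is sound, and it is what licenses the application of \autoref{thm: main_solution} to conclude there are exactly two solutions. The verification identities you quote for the Type~I pair ($\hat n_I\cdot G^{-1}a_I=0$, $G^Ta_I=-2S\hat e+2|G^{-T}\hat e|^{-2}\hat e$, $|a_I|^2=4\,\hat e\cdot S\hat e-4|G^{-T}\hat e|^{-2}$) do reproduce $RSR$ when substituted into $(G+a\otimes\hat n)^T(G+a\otimes\hat n)$, and your distinctness argument is complete because hypothesis (2) rules out $\hat e$ being an eigenvector of $G^TG$, which simultaneously guarantees $a_I\neq 0$ and $\hat n_I\nparallel\hat n_{II}$. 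The only cosmetic caveat is that "exactly two solutions" in \autoref{thm: main_solution} is meant up to the rescaling $a\mapsto ca$, $\hat n\mapsto \hat n/c$, which is why $\rho$ appears free in the Type~II formula; your identification argument is compatible with this but could state it explicitly.
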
 

\begin{corollary}
\label{corollary:type12_solutions}
If $U$ and $V$ in \eqref{eq:9} satisfy $V = (-I + 2\;\hat{e}\otimes \hat{e})U(- I + 2\;\hat{e}\otimes \hat{e})$, then by \autoref{thm:Mallards}, we have

\begin{equation}
     a_I = 2\left(\frac{U^{-1}\hat{e}}{\left|U^{-1}\hat{e}\right|^2} - U\hat{e}\right), \quad\quad \hat{n}_I = \hat{e}.
    \nonumber
\end{equation}

\begin{equation}
\label{eq:13}
    a_{II} = \rho U\hat{e}, \quad\quad  \hat{n}_{II}=\frac{2}{\rho}\left(\hat{e} - \frac{U^2\hat{e}}{\left|U\hat{e}\right|^2}\right).
\end{equation}

where $\rho\neq0$ is chosen to make $\left|{\hat{n}_{II}}\right| =1$. In this case $(a_I,\hat{n}_1)$ generate a Type I solution and $(a_{II},\hat{n}_{II})$ generate a Type II solution.
\end{corollary}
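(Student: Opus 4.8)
The plan is to obtain \autoref{corollary:type12_solutions} as a direct specialization of \autoref{thm:Mallards}, so the substance of the proof is matching notation, checking the two hypotheses of that theorem, and simplifying using the symmetry of $U$. First I would record that $R := -I + 2\,\hat{e}\otimes\hat{e}$ (with $|\hat{e}|=1$) is precisely the $180^\circ$ rotation about the axis $\hat{e}$: it is symmetric, $(\hat{e}\otimes\hat{e})^2 = \hat{e}\otimes\hat{e}$ gives $R^2 = I$, it fixes $\hat{e}$ and negates $\hat{e}^{\perp}$, and its eigenvalues $(+1,-1,-1)$ give $\det R = +1$, so $R \in SO(3)$ and $R^{-1} = R = R^T$. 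Hence the hypothesis $V = (-I+2\hat{e}\otimes\hat{e})U(-I+2\hat{e}\otimes\hat{e})$ reads $V = RUR$, which is the situation of \autoref{thm:Mallards} once the twinning equation \eqref{eq:9}, $QV - U = a\otimes\hat{n}$, is matched against $QF - G = a\otimes\hat{n}$ by setting $F = V$, $G = U$.

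Next I would check the two hypotheses of \autoref{thm:Mallards} for this choice. Hypothesis (1) asks for $Q \in SO(3)$ with $F = QGR$; since $R^{-1}=R$, one computes $FR^{-1}G^{-1} = (RUR)\,R\,U^{-1} = RUU^{-1} = R \in SO(3)$, so $Q = R$ works. Hypothesis (2) asks that $F^TF \neq G^TG$; as $U$ and $V$ are symmetric this is $V^2 \neq U^2$, and by uniqueness of the positive-definite symmetric square root this is equivalent to $V \neq U$, which holds because \eqref{eq:9} concerns distinct variants (equivalently, $R$ does not commute with $U$, for otherwise $V = RUR = UR^2 = U$).

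With the hypotheses verified, \autoref{thm:Mallards} yields exactly two solutions. Substituting $G = U$ and using $G^{-T} = U^{-1}$, $G^TG = U^2$, and $|G\hat{e}| = |U\hat{e}|$ converts the general formulas into the stated expressions for $a_I,\hat{n}_I,a_{II},\hat{n}_{II}$, cf.\ \eqref{eq:13}, with $\rho \neq 0$ fixed by $|\hat{n}_{II}| = 1$ and then inherited by $a_{II}$. For the final assertion it remains only to attach the labels: the solution $(a_I,\hat{n}_I)$ has interface normal $\hat{n}_I = \hat{e}$ equal to the axis of the $180^\circ$ rotation relating $U$ and $V$, which is by the classical convention recalled in \autoref{sec:kin_comp} (see also \cite{bhattacharya2003microstructure}) the Type I solution, while $(a_{II},\hat{n}_{II})$, whose shear vector satisfies $a_{II}\parallel U\hat{e}$ — the image under $U$ of the direction $\hat{e}$ — is the Type II solution.

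I do not expect a real obstacle: the statement is labelled a corollary precisely because it is a specialization. The only points that warrant a moment's care are confirming $\det(-I+2\hat{e}\otimes\hat{e}) = +1$, so that $R$ is a genuine rotation and not merely a reflection (which would fall outside the scope of \autoref{thm:Mallards}), and noting that hypothesis (2) simply encodes the standing assumption $U \neq V$ from \eqref{eq:9}; the Type I/Type II assignment is definitional rather than a claim requiring proof.
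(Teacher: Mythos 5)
Your proposal is correct and follows exactly the route the paper intends: the corollary is stated without a separate proof precisely because it is the specialization $F=V=RUR$, $G=U$, $Q=R$ of \autoref{thm:Mallards}, with $G^{-T}=U^{-1}$ and $G^TG=U^2$ by symmetry of $U$. Your added checks (that $-I+2\,\hat{e}\otimes\hat{e}$ has determinant $+1$ and hence lies in $SO(3)$, and that hypothesis (2) reduces to $U\neq V$ via uniqueness of the positive-definite square root) are exactly the details the paper leaves implicit, and they are all sound.
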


\begin{theorem}
\label{thm:compound_domains}
    (Chen et al \cite{CHEN20132566}, Proposition 1)  Let $U \in \mathbb{R}^{3\times3}_{\scriptstyle{+sym}}$ be a variant of martensite and define $V = (-I + 2\;\hat{e}_1\otimes \hat{e}_1)U(- I + 2\;\hat{e}_1\otimes \hat{e}_1)$ such that $V \neq U$. There is a second unit vector $\hat{e}_2 \nparallel \hat{e}_1$, satisfying $V = (-I + 2\;\hat{e}_2\otimes \hat{e}_2)U(- I + 2\;\hat{e}_2\otimes \hat{e}_2)$ if and only if $\hat{e}_1$ is perpendicular to an eigenvector of $U$. Furthermore, $\hat{e}_1,\hat{e}_2$ and that eigenvector(normalized) form an orthonormal basis of $\mathbb{R}^3$. In this case $U$ and $V$ are said to from compound \textit{domains} and the solutions to \eqref{eq:9} are given by
\begin{equation}
    a^c_I = \zeta\: U\hat{e}_2,    \quad\quad  \hat{n}^c_1 = \hat{e}_1, \quad\quad \text{where}\;\; \zeta = 2\frac{\hat{e}_2.{U_1}^{-2}\hat{e}_1}{\hat{e}_1.{U_1}^{-2}\hat{e}_1} \:,
    \nonumber
\end{equation}
\begin{equation}
\label{eq:14}
    a^c_{II} = \eta\: U\hat{e}_1,    \quad\quad  \hat{n}^c_1 = \hat{e}_2, \quad\quad \text{where}\;\; \eta= -2\frac{\hat{e}_2.{U_1}^{2}\hat{e}_1}{\hat{e}_1.{U_1}^{2}\hat{e}_1}.
\end{equation}
\end{theorem}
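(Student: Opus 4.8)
Write $R_i := -I + 2\,\hat e_i\otimes \hat e_i$ for the $180^\circ$ rotation about $\hat e_i$; each $R_i$ is an involution lying in $SO(3)$, and $V = R_1 U R_1$ by definition. The plan is to first convert the existence question into a commutation statement. Since $R_1^2 = R_2^2 = I$ and $(R_2R_1)^{-1} = R_1R_2$, the identity $R_1 U R_1 = R_2 U R_2$ is equivalent to $PUP^{-1} = U$ with $P := R_2 R_1 \in SO(3)$; that is, a unit vector $\hat e_2 \nparallel \hat e_1$ with $V = R_2 U R_2$ exists if and only if there is such an $\hat e_2$ for which $P = R_2 R_1$ commutes with $U$.

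For the ``only if'' direction: if $\hat e_2 \nparallel \hat e_1$ then $P = R_2 R_1 \neq I$ is a genuine rotation, and since it commutes with the symmetric matrix $U$ it maps each eigenspace of $U$ onto itself. In the generic case of three distinct eigenvalues (the degenerate cases are handled the same way and present no real difficulty), $P$ then fixes each of the three mutually orthogonal eigenlines of $U$, so being a rotation with $\det = 1$ it is either $I$ (excluded) or a half-turn about some eigenvector $\hat u_k$ of $U$. On the other hand, the composition $R_2 R_1$ of two half-turns is itself a half-turn precisely when the axes $\hat e_1,\hat e_2$ are orthogonal, in which case its axis is perpendicular to both. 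Hence $\hat e_1 \perp \hat e_2 \perp \hat u_k \perp \hat e_1$, so $\hat e_1$ is perpendicular to the eigenvector $\hat u_k$ and $\{\hat e_1,\hat e_2,\hat u_k\}$ is an orthonormal basis; here the hypothesis $V\neq U$ is used only to exclude $\hat e_1$ itself being an eigenvector, which would force $R_1U=UR_1$ and hence $V=U$. For the ``if'' direction, given a unit eigenvector $\hat u$ with $\hat e_1\perp\hat u$, set $\hat e_2 := \hat u\times\hat e_1$; then $R_2R_1$ is the half-turn about $\hat e_1\times\hat e_2 = \hat u$, which commutes with $U$ because $\hat u$ is an eigenvector, so $R_2UR_2 = R_1UR_1 = V$ while $\hat e_2\perp\hat e_1$.

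It remains to produce the two solutions of \eqref{eq:9}. Since $V = R_1\,U\,R_1$ is literally of the form $F = QGR$ with $Q = R = R_1\in SO(3)$ a $180^\circ$ rotation and $G = U$, and since $V^2 \neq U^2$ (equal squares of positive-definite symmetric matrices coincide), \autoref{thm:Mallards} applies with axis $\hat e_1$ and yields a Type I solution with $\hat n = \hat e_1$ and a Type II solution with $a \parallel U\hat e_1$; applying \autoref{thm:Mallards} a second time to $V = R_2 U R_2$ with axis $\hat e_2$ yields a Type I solution with $\hat n = \hat e_2$ and a Type II solution with $a \parallel U\hat e_2$. Because \autoref{thm:Mallards} guarantees \eqref{eq:9} has exactly two solutions, these four coincide in pairs, the matching being forced once one checks that the Type II normal $\hat n_{II}\propto \hat e_2 - U^2\hat e_2/|U\hat e_2|^2$ built from $\hat e_2$ equals $\pm\hat e_1$. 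This is where the orthonormal structure does the work: $U$ preserves $\mathrm{span}(\hat e_1,\hat e_2) = \hat u^\perp$, so $U\hat e_2, U^2\hat e_2$ stay in that plane, and $\hat e_2 - U^2\hat e_2/|U\hat e_2|^2$ is orthogonal to $\hat e_2$ (using $\hat e_2\cdot U^2\hat e_2 = |U\hat e_2|^2$), hence parallel to $\hat e_1$; symmetrically the Type II normal from $\hat e_1$ is parallel to $\hat e_2$. Thus one solution is $\hat n^c_1 = \hat e_1$, $a^c_I \parallel U\hat e_2$ and the other is $\hat n^c_{II} = \hat e_2$, $a^c_{II} \parallel U\hat e_1$. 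Finally I would pin down the scalars: writing $a^c_I = \zeta\,U\hat e_2$ and equating with the Type I expression $2\bigl(U^{-1}\hat e_1/|U^{-1}\hat e_1|^2 - U\hat e_1\bigr)$, then taking the inner product with $U^{-1}\hat e_2$ and using $\langle U\hat e_2, U^{-1}\hat e_2\rangle = 1$ and $\langle\hat e_1,\hat e_2\rangle = 0$ gives $\zeta = 2\,(\hat e_2\cdot U^{-2}\hat e_1)/(\hat e_1\cdot U^{-2}\hat e_1)$; the analogous computation with $\hat e_1,\hat e_2$ swapped, equating $a^c_{II} = \eta\,U\hat e_1$ and using the Type II normal condition $\hat e_2 = (2/\eta)(\hat e_1 - U^2\hat e_1/|U\hat e_1|^2)$ dotted with $\hat e_2$, gives $\eta = -2\,(\hat e_2\cdot U^2\hat e_1)/(\hat e_1\cdot U^2\hat e_1)$.

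The main obstacle I anticipate is not any single computation but the bookkeeping in the last part: verifying that the two invocations of \autoref{thm:Mallards} (along $\hat e_1$ and along $\hat e_2$) describe the \emph{same} two rank-one connections, correctly pairing each Type I solution with its Type II counterpart, and keeping the sign and normalization conventions for $\hat n$ consistent throughout. The commutation argument in the first half is conceptually clean; the repeated-eigenvalue cases there require only a short separate check.
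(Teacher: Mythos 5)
The paper does not prove this statement: it is imported verbatim as Proposition~1 of Chen et al.\ \cite{CHEN20132566}, so there is no in-paper proof to compare against. Judged on its own, your argument is correct and is essentially the standard one. The reduction of $R_1UR_1=R_2UR_2$ to $PUP^{-1}=U$ with $P=R_2R_1$, the classification of a nontrivial rotation commuting with $U$ as a half-turn about an eigenvector, and the fact that a product of two half-turns is itself a half-turn exactly when the axes are orthogonal (with axis the common perpendicular) together give the ``only if'' direction and the orthonormality of $\{\hat e_1,\hat e_2,\hat u_k\}$; the converse via $\hat e_2=\hat u\times\hat e_1$ is clean. The derivation of \eqref{eq:14} by invoking \autoref{thm:Mallards} along both axes, matching the Type~II solution from $\hat e_2$ with the Type~I solution from $\hat e_1$ (using that $U$ preserves $\hat u^{\perp}$ so the Type~II normal built from $\hat e_2$ is forced to be $\pm\hat e_1$), and extracting $\zeta$ and $\eta$ by the inner products with $U^{-1}\hat e_2$ and $\hat e_2$ respectively, reproduces the stated coefficients exactly (I checked both). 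Two small points deserve a sentence more than you give them: first, your parenthetical that the repeated-eigenvalue case is ``handled the same way'' is true but not immediate — there $P$ need not fix each eigenline, and one must argue that $P\hat u_3=+\hat u_3$ forces $\hat e_1$ into the two-dimensional eigenspace and hence $V=U$ (excluded), while $P\hat u_3=-\hat u_3$ forces $P$ to be a half-turn about an axis inside that eigenspace; second, your remark that $V\neq U$ is used ``only to exclude $\hat e_1$ being an eigenvector'' is slightly misplaced — in the distinct-eigenvalue branch the hypothesis is really consumed in verifying $F^TF\neq G^TG$ for \autoref{thm:Mallards} (via uniqueness of the positive square root, as you note) and in the degenerate branch just described. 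Neither point is a gap.
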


\subsection{Crystallographic theory of martensite}
\label{sec:ptmc}
The crystallographic theory of martensite explains the formation of the interface between austenite and a finely twinned laminate of martensite. The twinned laminate is understood as a sequence of deformations $y^{(k)}, k = 1,2,3 ,....., $ of the two twin variants, such that
\begin{equation}
\label{eq:15}
    \int_{\Omega}\Phi(\nabla y^{(k)}(x),\theta)dx \to 0 \;\;\text{as}\;\; k\to\infty,
\end{equation}
where $k$ is related to the inverse of the width of the interpolation layer between austenite and martensite. Ball and James \cite{Ball1987} show that as $k\to\infty$, the sequence $y^{(k)}(x)$ converges to a deformation $y(x)$ such that
\begin{equation}
\label{eq:16}
    \nabla y = \mu (QV) + (1-\mu)U = \mu (U + a\otimes \hat{n}) + (1-\mu)U.
\end{equation}

Then assuming \eqref{eq:9} holds, the equation of the crystallographic theory of martensite is written as
\begin{equation}
\label{eq:17}
    \tilde{Q}[\mu (U + a\otimes \hat{n}) + (1-\mu)U] - I = b\otimes \hat{m},
\end{equation}

which is to be solved for $0\leq\mu\leq1$, the volume fraction of the laminate, $\tilde{Q}\in SO(3)$ and vectors $b$ and $\hat{m}$.

\subsection{Cofactor conditions}
\label{sec:CC}
The cofactor conditions are the necessary and sufficient conditions that the crystallographic theory of martensite \eqref{eq:17} has a solution for all volume fractions, $\mu\in[0,1]$. The cofactor conditions are given as

\begin{equation}
\label{eq:CC1}
    \lambda_2 =1,\quad \textit{where $\lambda_2$ is the middle eigenvalue of $U,$}
    \tag{CC1}
\end{equation}
\begin{equation}
\label{eq:CC2}
    a.\;U\:cof\:(U^2 - I)\hat{n}=0,
    \tag{CC2}
\end{equation}
\begin{equation}
\label{eq:CC3}
    tr(U^2) - det(U^2) - \frac{\left|a\right|^2}{4} - 2 \geq 0.
    \tag{CC3}
\end{equation}

The satisfaction of cofactor conditions gives rise to special microstructures which can be summarized by the following theorems.

\begin{theorem}
\label{thm:type1triple}
    (Chen et al \cite{CHEN20132566}, Theorem 7) Suppose that the cofactor conditions are satisfied for Type I domains, then there exist rotations $R^{I}_u$ and $R^{I}_v$ in $SO(3)$ and vectors $(b_u^{I},b^{I}_v,a^{I}) \in \mathbb{R}^3$ and vectors $(\hat{m}_u^{I},\hat{m}_v^{I},\hat{n}^{I})$ in $\mathbb{S}^2$ such that the following set of equations hold.
\begin{equation}
    R_u^{I}U - I = b^{I}_u \otimes \hat{m}^{I}_u,
    \nonumber
\end{equation}
\begin{equation}
    R_v^{I}V - I = b_v^{I} \otimes \hat{m}_v^{I},
    \nonumber
\end{equation}
\begin{equation}
    R_u^{I}U - R_v^{I}V = a^{I} \otimes \hat{n}^{I},
    \nonumber
\end{equation}

where $b^{I}_u\;||\;b^{I}_v\;||\;a^{I}$ and $(\hat{m}_u^{I}, \hat{m}_v^{I} $ and $\hat{n}^{I})$ lie in a plane \autoref{fig:triple_1_2_a}.
\end{theorem}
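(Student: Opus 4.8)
The plan is to realize the three asserted compatibility equations as the degeneracy of the crystallographic equation \eqref{eq:17} for Type I domains, exploiting exactly the extra freedom that the cofactor conditions \eqref{eq:CC1}--\eqref{eq:CC3} provide. Since the cofactor conditions hold, \eqref{eq:17} is solvable for every volume fraction $\mu\in[0,1]$; the idea is to take the two \emph{extreme} members of that one-parameter family, namely $\mu=1$ and $\mu=0$. At $\mu=1$ the averaged gradient is $\tilde Q(U+a\otimes\hat n)$, and \eqref{eq:17} together with \eqref{eq:CC1} forces (via \autoref{corollary:identity_solutions}, applied to the variant $QV$, whose middle eigenvalue is also $1$ because $\lambda_2=1$ is invariant under the twinning) the existence of a rotation realizing an austenite/martensite interface on the $V$-well; I would name the resulting rotation $R_v^{I}$ and write $R_v^{I}V-I=b_v^{I}\otimes\hat m_v^{I}$. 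Symmetrically, at $\mu=0$ the averaged gradient is just $U$, and the same corollary yields a rotation $R_u^{I}$ with $R_u^{I}U-I=b_u^{I}\otimes\hat m_u^{I}$. The third equation, $R_u^{I}U-R_v^{I}V=a^{I}\otimes\hat n^{I}$, would then follow by subtracting the first two and using that both $R_u^{I}U$ and $R_v^{I}V$ are rank-one connected to $I$; the difference of two matrices each rank-one connected to the same matrix $I$ is automatically rank one, and one reads off $a^{I}=b_u^{I}-$ (a suitable multiple of) $b_v^{I}$ along the common normal.

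The crucial point — and the place where \eqref{eq:CC2} does its work — is the \emph{geometry}: showing that the three $b$-vectors are parallel and that the three normals $\hat m_u^{I},\hat m_v^{I},\hat n^{I}$ are coplanar. Here I would argue that the two interfaces $R_u^{I}U-I$ and $R_v^{I}V-I$ must be chosen \emph{coherently} — i.e.\ the sign $\kappa=\pm1$ in \autoref{corollary:identity_solutions} must be selected consistently for $U$ and for $V=(-I+2\hat e\otimes\hat e)U(-I+2\hat e\otimes\hat e)$ so that the two austenite interfaces are compatible with the \emph{same} twin interface $a\otimes\hat n$. The cofactor condition \eqref{eq:CC2}, $a\cdot U\,\mathrm{cof}(U^2-I)\hat n=0$, is precisely the algebraic statement that makes this coherent choice possible: it guarantees that the twinning normal $\hat n$ and the two habit-plane normals lie in a common plane, so that the triple junction closes up. I would verify the parallelism of the $b$'s by noting that each $b$-vector is (up to scaling) the projection of a fixed direction determined by the $180^\circ$ axis $\hat e$, using the explicit forms in \autoref{corollary:type12_solutions} and \autoref{corollary:identity_solutions}; and the coplanarity of the normals by a rank/dimension count once \eqref{eq:CC2} is invoked.

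The main obstacle I anticipate is the \textbf{bookkeeping of which solution branch to pick}: \autoref{corollary:identity_solutions} gives two interfaces for $U$ and two for $V$, and only one of the four pairings is simultaneously compatible with the Type I twin; pinning this down cleanly, rather than by brute-force case analysis, is the delicate part, and is exactly where the structure of Type I (as opposed to Type II) twins enters — the Type I normal $\hat n_I=\hat e$ is the "rigid" one, which is what ultimately forces the normals into a plane rather than the $b$-vectors into a line, distinguishing this theorem from its Type II counterpart. A secondary technical point is checking that the inequality \eqref{eq:CC3} is what guarantees the constructed $b$-vectors are real (nonzero and of admissible magnitude), so that the rotations $R_u^{I},R_v^{I}$ genuinely exist in $SO(3)$; I would dispatch this by relating $|a|^2$ to the eigenvalues of $U$ and showing \eqref{eq:CC3} is equivalent to the discriminant condition in \autoref{corollary:identity_solutions} being nonnegative. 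The remaining identities are then routine substitution.
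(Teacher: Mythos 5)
This theorem is imported verbatim from Chen et al.\ \cite{CHEN20132566} (their Theorem 7); the paper states it without proof, so there is no in-paper argument to compare against. Judged on its own terms, your outline does follow the standard strategy (obtain the two austenite--martensite interfaces from \eqref{eq:CC1} via \autoref{corollary:identity_solutions}, then use \eqref{eq:CC2} to select and align the correct $\kappa=\pm1$ branches), but it contains one genuinely false step and leaves the real content undone.

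The false step is the claim that ``the difference of two matrices each rank-one connected to the same matrix $I$ is automatically rank one.'' It is not: if $A-I=b_1\otimes m_1$ and $B-I=b_2\otimes m_2$, then $A-B=b_1\otimes m_1-b_2\otimes m_2$ is generically rank two (take $b_1=m_1=\hat{\underline{e}}_1$, $b_2=m_2=\hat{\underline{e}}_2$, giving $\mathrm{diag}(1,-1,0)$). The difference is rank one precisely when $b_1\parallel b_2$ or $m_1\parallel m_2$, and establishing $b_u^{I}\parallel b_v^{I}$ \emph{is} the theorem --- it cannot be assumed in order to derive the third equation. You do later acknowledge that the parallelism must be proved and that \eqref{eq:CC2} is responsible for it, so your proposal is internally inconsistent rather than irreparably wrong; but the decisive computation --- substituting the explicit $b^{I\pm}$ from \autoref{corollary:identity_solutions} for both $U$ and $V=(-I+2\hat e\otimes\hat e)U(-I+2\hat e\otimes\hat e)$, rewriting \eqref{eq:CC2} in the eigenvector form $\lambda_3\sqrt{1-\lambda_1^2}\,(\hat u_1\cdot\hat e)=\pm\lambda_1\sqrt{\lambda_3^2-1}\,(\hat u_3\cdot\hat e)$ as in \eqref{eq:76}, and verifying that exactly one pairing of branches makes the shear vectors parallel --- is only gestured at (``a rank/dimension count''), and that is where all the work lies. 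Once parallelism is in hand, the coplanarity of $(\hat m_u^{I},\hat m_v^{I},\hat n^{I})$ does fall out of the subtraction as you describe. A secondary misattribution: \eqref{eq:CC3} plays no role in making the $b$-vectors real --- \autoref{corollary:identity_solutions} produces real solutions whenever $\lambda_1\le 1\le\lambda_3$, i.e.\ under \eqref{eq:CC1} alone; \eqref{eq:CC3} governs solvability of \eqref{eq:17} at intermediate volume fractions, not the existence of the triple junction itself.
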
 

\begin{theorem}
\label{thm:type2triple}
    (Chen et al \cite{CHEN20132566}, Theorem 8) Suppose that the cofactor conditions are satisfied for Type II domains, then there exist rotations $R_u^{II}$ and $R_v^{II}$ in $SO(3)$ and vectors $(b_u^{II},b_v^{II},a^{II})\in$ $\mathbb{R}^3$ and vector $\hat{m}^{II}$ $\in$ $\mathbb{S}^2$ such that the following set of equations hold.

\begin{equation}
    R_u^{II}U - I = b_u^{II} \otimes \hat{m}^{II},
    \nonumber
\end{equation}
\begin{equation}
    R_v^{II}V - I = b_v^{II} \otimes \hat{m}^{II},
    \nonumber
\end{equation}
\begin{equation}
    R_u^{II}U - R_v^{II}V = a^{II} \otimes \hat{m}^{II},
    \nonumber
\end{equation}

where $(b_u^{II},\;b_v^{II}$ and $a^{II})$ lie in a plane, \autoref{fig:triple_1_2_b}.
\end{theorem}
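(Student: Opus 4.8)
The plan is to reduce the statement to a single assertion: if the cofactor conditions hold for the Type~II twin $(U,V)$, then the Type~II twinning normal $\hat n_{II}$ of \autoref{corollary:type12_solutions} is simultaneously one of the two habit-plane normals of $U$ and one of the two habit-plane normals of $V$ furnished by \autoref{corollary:identity_solutions}. Granting this, the three displayed equations assemble with no further work: writing $R_u^{II}U - I = b_u^{II}\otimes\hat n_{II}$ and $R_v^{II}V - I = b_v^{II}\otimes\hat n_{II}$ and subtracting gives $R_u^{II}U - R_v^{II}V = (b_u^{II}-b_v^{II})\otimes\hat n_{II}$, so the third equation holds with $\hat m^{II}:=\hat n_{II}$ and $a^{II}:=b_u^{II}-b_v^{II}$. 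The vectors $b_u^{II},b_v^{II},a^{II}$ are then linearly dependent and hence coplanar, and $R_u^{II}U - R_v^{II}V$, being a rank-one connection between the two energy wells with normal $\hat n_{II}\nparallel\hat e$, must be the chosen Type~II twin.

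The core of the work is therefore to show $\hat n_{II}$ is a habit normal of $U$. By \eqref{eq:CC1} the middle eigenvalue of $U$ equals $1$, so \autoref{corollary:identity_solutions} applies, and its two solutions show that the habit normals of $U$ are exactly the unit vectors in the plane spanned by $\hat u_1$ and $\hat u_3$ proportional to $-\sqrt{1-\lambda_1^2}\,\hat u_1\pm\sqrt{\lambda_3^2-1}\,\hat u_3$; in particular every habit normal of $U$ is orthogonal to the middle eigenvector $\hat u_2$. First I would prove $\hat n_{II}\perp\hat u_2$. Substituting the Type~II datum $a_{II}=\rho\,U\hat e$ into \eqref{eq:CC2} and using that $\lambda_2=1$ forces $cof\,(U^2-I)=(\lambda_1^2-1)(\lambda_3^2-1)\,\hat u_2\otimes\hat u_2$, the condition \eqref{eq:CC2} collapses to $(\hat e\cdot\hat u_2)\,(\hat n_{II}\cdot\hat u_2)=0$. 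Since $(U,V)$ is a Type~II and not a compound domain, \autoref{thm:compound_domains} guarantees that $\hat e$ is perpendicular to no eigenvector of $U$, so $\hat e\cdot\hat u_2\neq0$ and hence $\hat n_{II}\cdot\hat u_2=0$. Inserting this into the explicit formula $\hat n_{II}=\frac{2}{\rho}\left(\hat e-U^2\hat e/|U\hat e|^2\right)$ forces $|U\hat e|=1$, and then a short computation of the $\hat u_1$- and $\hat u_3$-components of $\hat n_{II}$, combined with the identity $(1-\lambda_1^2)(\hat e\cdot\hat u_1)^2=(\lambda_3^2-1)(\hat e\cdot\hat u_3)^2$ that $|U\hat e|=1$ supplies, shows that $\hat n_{II}$ is proportional to one of the directions $-\sqrt{1-\lambda_1^2}\,\hat u_1\pm\sqrt{\lambda_3^2-1}\,\hat u_3$; the inequality \eqref{eq:CC3} is not needed in this construction.

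To transfer the conclusion from $U$ to $V$, I would use $V=R_{\hat e}UR_{\hat e}$ with $R_{\hat e}=-I+2\,\hat e\otimes\hat e$: the eigenvectors of $V$ are $R_{\hat e}\hat u_i$, so by the same formula the habit normals of $V$ are precisely the images under $R_{\hat e}$ of those of $U$. Because $\hat n_{II}\cdot\hat e=0$ holds for every Type~II twin, $R_{\hat e}\hat n_{II}=-\hat n_{II}$ spans the same plane, so $\hat n_{II}$ is a habit normal of $V$ as well. Choosing the corresponding rotations $R_u^{II},R_v^{II}\in SO(3)$ and vectors $b_u^{II},b_v^{II}$ (absorbing a sign into $b_v^{II}$) produces the first two equations, and the first paragraph then finishes the proof.

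I expect the main obstacle to be the middle step: the precise algebraic reduction of \eqref{eq:CC2} under the Type~II ansatz to the orthogonality $\hat n_{II}\perp\hat u_2$, and then the verification that the residual constraint $|U\hat e|=1$ places $\hat n_{II}$ \emph{exactly} on the pair of habit-normal directions rather than merely near them --- this is where the cofactor condition is genuinely used. Geometrically this is the statement that the one-parameter family of austenite--martensite interfaces whose existence is guaranteed by the solvability of \eqref{eq:17} for all $\mu\in[0,1]$ retains a fixed normal $\hat n_{II}$ (``parallel domain walls''), in contrast with the Type~I case of \autoref{thm:type1triple}, where the analogous normals sweep a cone and only their coplanarity survives in the limiting triple junction.
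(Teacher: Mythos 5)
The paper does not prove this statement itself: it is imported verbatim from Chen et al.\ \cite{CHEN20132566} (their Theorem 8) as background, so there is no in-paper proof to compare against. Judged on its own, your argument is correct and is essentially the standard route. The reduction of \eqref{eq:CC2} under $\lambda_2=1$ to $(\hat e\cdot\hat u_2)(\hat n_{II}\cdot\hat u_2)=0$ via $cof\,(U^2-I)=(\lambda_1^2-1)(\lambda_3^2-1)\,\hat u_2\otimes\hat u_2$ and $a_{II}\cdot\hat u_2=\rho\,\hat e\cdot\hat u_2$ is right; the appeal to \autoref{thm:compound_domains} to rule out $\hat e\perp\hat u_2$ for a genuine Type~I/II domain is the correct way to discharge the factor $\hat e\cdot\hat u_2$; and the chain $|U\hat e|=1\Rightarrow(1-\lambda_1^2)(\hat e\cdot\hat u_1)^2=(\lambda_3^2-1)(\hat e\cdot\hat u_3)^2$ is exactly the squared form of the paper's own simplified CC2 for Type~II solutions, \eqref{eq:81}, which confirms you have landed on the right invariant. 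The transfer to $V$ via $\hat v_i=\pm R_{\hat e}\hat u_i$ and $R_{\hat e}\hat n_{II}=-\hat n_{II}$ (using $\hat n_{II}\cdot\hat e=0$) is also sound, with the sign absorbed into $b_v^{II}$. Two minor points worth making explicit in a written-up version: you implicitly assume the non-degenerate case $\lambda_1<1<\lambda_3$ when cancelling $(\lambda_1^2-1)(\lambda_3^2-1)$, and you should note $a^{II}=b_u^{II}-b_v^{II}\neq0$ (else $U^2=V^2$ forces $U=V$) so that the third equation is a genuine rank-one connection. Your closing observation that \eqref{eq:CC3} is not used is consistent with Chen et al., where only the degeneracy conditions \eqref{eq:CC1}--\eqref{eq:CC2} enter the construction of the parallel domain walls.
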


\begin{figure}
    \centering

    \begin{minipage}[b]{0.5\textwidth}
        \centering
        \includegraphics[scale=0.75]{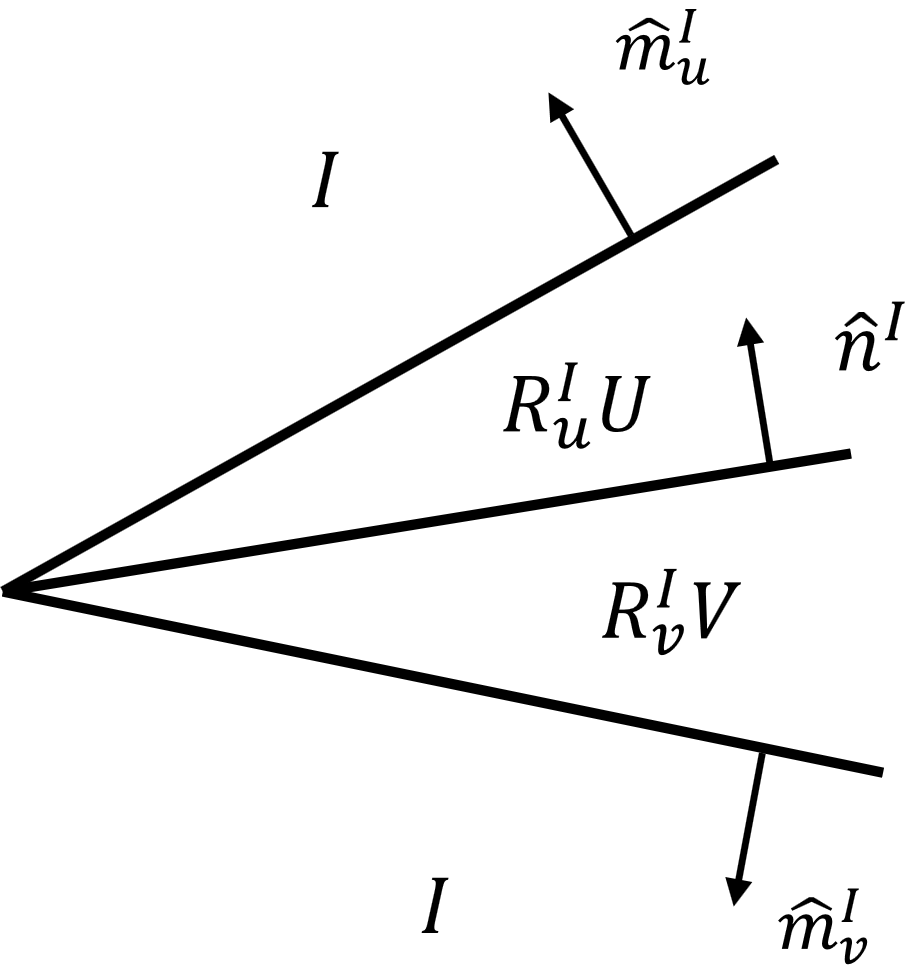}
        \subcaption{}
        \label{fig:triple_1_2_a}
    \end{minipage}%
    \begin{minipage}[b]{0.5\textwidth}
        \centering
        \includegraphics[scale=0.75]{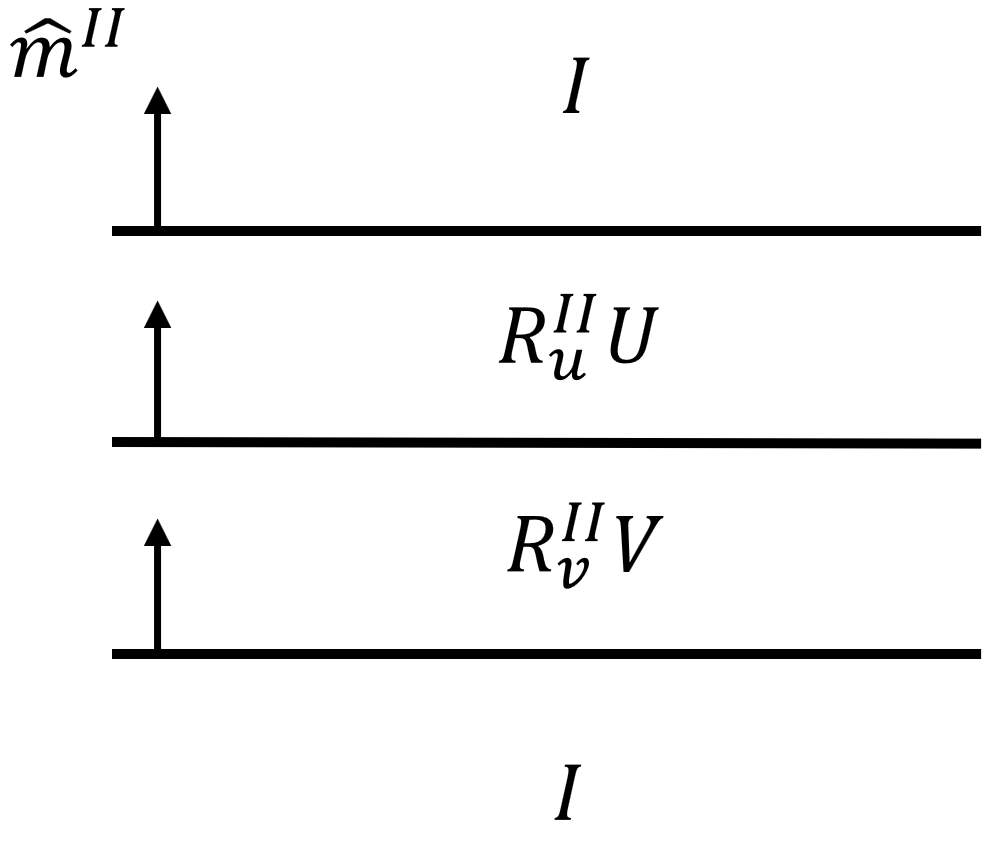}
        \subcaption{}
        \label{fig:triple_1_2_b}
    \end{minipage}

    \caption{(a) \textit{Type-I triple cluster}: A triple junction configuration composed of Type-I domains and austenite as implied by \autoref{thm:type1triple} (b)  \textit{Type-II triple cluster}: A parallel domain wall configuration composed by Type-II domains and austenite as implied by \autoref{thm:type2triple}.}
    \label{fig:triple_1_2}
\end{figure}

\autoref{thm:type1triple} and \autoref{thm:type2triple} allow us to construct stress-free interfaces between a laminate of Type I/II domains and austenite by eliminating the transition layer at the phase interface. However, we do not have an equivalent result for the case of a laminate of compound domains.\\[5 pt]
Finally, we state the following results from Della Porta, which allow us to find all the twin systems that satisfy the cofactor conditions, starting from a pair of variants, as well as, imposes a restriction on the satisfaction of cofactor conditions for Type-I and Type-II twins simultaneously, for that pair.

\begin{theorem}
\label{thm:dellaportaQ}
     (Della Porta \cite{DELLAPORTA201927}, Proposition 3.1) Let $(Q,a,\hat{n})$ be a solution of \eqref{eq:9}. Suppose $(Q,a,\hat{n})$ satisfies the cofactor conditions. Then for every $R\in SO(3)$, the matrices $RUR^T$, $RVR^T \in \mathbb{R}^{3\times 3}_{+sym}$ are such that $RQR^T \in SO(3)$, $Ra \in \mathbb{R}^3$ and $R\hat{n}\in \mathbb{S}^2$ is the same type of solution (Type I/II or compound) of \eqref{eq:9} for stretch tensors $RUR^T$ and $RVR^T$, as is $(Q,a,\hat{n})$. Furthermore, $(RQR^T, Ra,R\hat{n})$ also satisfies the cofactor conditions.
\end{theorem}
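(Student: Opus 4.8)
The plan is to prove this by a direct covariance argument: conjugating the whole datum $(U,V,Q,a,\hat n)$ by an arbitrary $R\in SO(3)$ yields another instance of \eqref{eq:9} of the same solution type, and each of the three cofactor conditions \eqref{eq:CC1}--\eqref{eq:CC3} is visibly invariant under orthogonal conjugation. First I would record that $RUR^T, RVR^T\in\mathbb{R}^{3\times3}_{+sym}$ (symmetry, positive-definiteness and the spectrum are all preserved by conjugation with $R$), that $RQR^T\in SO(3)$ (closure of $SO(3)$ under conjugation), and that $|R\hat n|=1$. Then
\[
(RQR^T)(RVR^T)-RUR^T = R(QV-U)R^T = R(a\otimes\hat n)R^T = (Ra)\otimes(R\hat n),
\]
so $(RQR^T, Ra, R\hat n)$ indeed solves the twinning equation for $RUR^T, RVR^T$.

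Next I would show the type of solution is preserved. The crucial elementary fact is that for the $180^\circ$ rotation $R_{\hat e}=-I+2\,\hat e\otimes\hat e$ one has $RR_{\hat e}R^T = -I+2\,(R\hat e)\otimes(R\hat e)$; thus if $V=R_{\hat e}UR_{\hat e}$ then $RVR^T=(RR_{\hat e}R^T)(RUR^T)(RR_{\hat e}R^T)$, i.e. $RUR^T$ and $RVR^T$ are again related by a $180^\circ$ rotation, now about $R\hat e$. Using the explicit formulas of \autoref{thm:Mallards} and \autoref{corollary:type12_solutions}, the Type I solution is distinguished by having normal equal to the rotation axis, and $R\hat n_I=R\hat e$ is precisely the axis for the conjugated pair; the Type II solution is distinguished by having shear vector parallel to $U\hat e$, and $Ra_{II}\parallel R(U\hat e)=(RUR^T)(R\hat e)$. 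Hence a Type I solution maps to the Type I solution of the conjugated pair and likewise for Type II. For compound domains, \autoref{thm:compound_domains} characterizes them by $\hat e_1$ being perpendicular to an eigenvector $\hat u$ of $U$; since $R\hat e_1\perp R\hat u$ and $R\hat u$ is an eigenvector of $RUR^T$ with the same eigenvalue, the conjugated pair is again compound, and the two conjugated solutions are exactly \eqref{eq:14} for $(RUR^T,RVR^T,R\hat e_1,R\hat e_2)$.

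Then I would verify invariance of the cofactor conditions. For \eqref{eq:CC1}: $RUR^T$ has the same eigenvalues as $U$, so its middle eigenvalue is still $1$. For \eqref{eq:CC2}: using $cof(AB)=cof(A)\,cof(B)$ and $cof(R)=R$ for $R\in SO(3)$, we get $(RUR^T)^2-I=R(U^2-I)R^T$ and therefore $(RUR^T)\,cof\big((RUR^T)^2-I\big)=RU\,cof(U^2-I)\,R^T$, whence
\[
(Ra)\cdot (RUR^T)\,cof\big((RUR^T)^2-I\big)(R\hat n)=a\cdot U\,cof(U^2-I)\,\hat n=0.
\]
For \eqref{eq:CC3}: $tr\big((RUR^T)^2\big)=tr(U^2)$, $det\big((RUR^T)^2\big)=det(U^2)$ and $|Ra|=|a|$, so the left-hand side of \eqref{eq:CC3} is literally unchanged and the inequality persists. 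This exhausts all the claims.

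I expect the only genuinely delicate point to be the type-preservation step: since for a compatible pair \eqref{eq:9} has exactly two solutions, one must be sure conjugation sends each solution to the correctly labelled one of the conjugated pair rather than interchanging them. This is settled by tracking, through the conjugation, the invariant that distinguishes the two solutions — the normal direction for Type I versus the shear direction for Type II, and the ``perpendicular-to-an-eigenvector'' condition for compound domains — all of which transform equivariantly under $R$. Everything else reduces to the standard identities for conjugation by an orthogonal matrix and is routine.
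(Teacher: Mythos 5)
Your proof is correct. Note that the paper itself does not prove this statement---it is imported verbatim from Della Porta \cite{DELLAPORTA201927} (Proposition 3.1) without proof---so there is no in-paper argument to compare against; your equivariance argument (conjugating the whole datum by $R$, checking the twinning equation transforms covariantly, tracking the type via the transformed rotation axis / shear direction / eigenvector-perpendicularity, and using $\mathrm{cof}(RAR^T)=R\,\mathrm{cof}(A)\,R^T$ together with the invariance of eigenvalues, trace, determinant and $|a|$ for \eqref{eq:CC1}--\eqref{eq:CC3}) is the standard one and is complete.
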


\begin{theorem}
\label{thm:dellaportaR}
    (Della Porta \cite{DELLAPORTA201927}, Proposition 3.2) Let $U$ and $V$ as in \eqref{eq:9} be such that $U$ and $V$ form Type-I/II twins. Then the Type-I and Type-II solutions of the twinning equation \eqref{eq:9} cannot satisfy \eqref{eq:CC1} and \eqref{eq:CC2} at the same time.
\end{theorem}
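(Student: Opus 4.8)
The plan is to work directly with the explicit Type-I and Type-II solution formulas of \autoref{corollary:type12_solutions}, and to show that imposing \eqref{eq:CC2} on both solutions at once — noting that \eqref{eq:CC1} is a condition on $U$ alone and hence common to both — would force $U\hat e$ and $U^{-1}\hat e$ to be simultaneously unit vectors, which is impossible for a Type-I/II twin.

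First I would reduce \eqref{eq:CC2} to a scalar condition under \eqref{eq:CC1}. Let $R = -I + 2\hat e\otimes\hat e$ be the $180^{\circ}$ rotation with $V = RUR$ (Mallard's law, \autoref{thm:Mallards}), and let $\hat u_1,\hat u_2,\hat u_3$ be the orthonormal eigenbasis of $U$ with eigenvalues $\lambda_1\le\lambda_2\le\lambda_3$. Since $U$ and $V$ form Type-I/II (not compound) twins, \autoref{thm:compound_domains} says $\hat e$ is not orthogonal to any eigenvector of $U$; I would use this together with $V\neq U$ to conclude that the eigenvalue $\lambda_2 = 1$ imposed by \eqref{eq:CC1} is \emph{simple}, so that in fact $\lambda_1 < 1 < \lambda_3$ and $e_i := \hat e\cdot\hat u_i \neq 0$ for every $i$. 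Then $\mathrm{cof}(U^2 - I) = (\lambda_1^2-1)(\lambda_3^2-1)\,\hat u_2\otimes\hat u_2$, and since $U\hat u_2 = \hat u_2$ we get $U\,\mathrm{cof}(U^2 - I) = (\lambda_1^2-1)(\lambda_3^2-1)\,\hat u_2\otimes\hat u_2$ with nonzero prefactor; hence \eqref{eq:CC2} is equivalent to $(a\cdot\hat u_2)(\hat n\cdot\hat u_2) = 0$.

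Next I would substitute the two solution families. For the Type-I solution, $\hat n_I\cdot\hat u_2 = e_2 \neq 0$, so \eqref{eq:CC2} forces $a_I\cdot\hat u_2 = 0$; using $U^{-1}\hat e\cdot\hat u_2 = U\hat e\cdot\hat u_2 = e_2$ this collapses to $|U^{-1}\hat e|^2 = 1$. Symmetrically, for the Type-II solution, $a_{II}\cdot\hat u_2 = \rho\,e_2 \neq 0$, so \eqref{eq:CC2} forces $\hat n_{II}\cdot\hat u_2 = 0$, which collapses to $|U\hat e|^2 = 1$. Thus, if the Type-I and Type-II solutions both satisfied \eqref{eq:CC1}--\eqref{eq:CC2}, we would need $|U\hat e| = |U^{-1}\hat e| = 1 = |\hat e|$. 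Since $U$ is symmetric, $\langle U\hat e,\,U^{-1}\hat e\rangle = |\hat e|^2 = 1 = |U\hat e|\,|U^{-1}\hat e|$, so equality in Cauchy--Schwarz gives $U\hat e = U^{-1}\hat e$, i.e. $U^2\hat e = \hat e$, and by positive definiteness $U\hat e = \hat e$. Then $\hat e$ is an eigenvector of $U$, so $R$ commutes with $U$ and $V = RUR = R^2U = U$, contradicting $V\neq U$. (Equivalently, writing the two norm conditions in the eigenbasis gives $\sum_i e_i^2(\lambda_i^2-1) = \sum_i e_i^2(\lambda_i^{-2}-1) = 0$; since the $i=2$ terms vanish and $e_1,e_3\neq 0$, these two relations together force $\lambda_1^2 = \lambda_3^2$, again impossible.)

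The individual computations here are all short; the real obstacle is the degenerate-eigenvalue bookkeeping. One must rule out $\mathrm{cof}(U^2 - I) = 0$ (which happens precisely when $\lambda_1 = 1$ or $\lambda_3 = 1$, and would make \eqref{eq:CC2} vacuous, so that the theorem would fail) and confirm $e_i \neq 0$ before the scalar reduction of \eqref{eq:CC2} is legitimate. These are exactly the points where the hypothesis that $U,V$ are Type-I/II rather than compound twins — via \autoref{thm:compound_domains} — is essential, and getting that step airtight (including the case where $U$ has a repeated eigenvalue) is where most of the care is needed.
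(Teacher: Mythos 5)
The paper does not prove this statement; it is imported verbatim from Della Porta \cite{DELLAPORTA201927} (Proposition 3.2), so there is no in-paper proof to compare against. Your argument is correct and self-contained given the results the paper does state: the reduction of \eqref{eq:CC2} to $(a\cdot\hat u_2)(\hat n\cdot\hat u_2)=0$ is valid because, as you note, \autoref{thm:compound_domains} forces a Type-I/II pair to have $\hat e$ non-orthogonal to every eigenvector of $U$ (which simultaneously rules out a repeated eigenvalue and gives $e_i\neq 0$), and the two scalar conditions you extract, $|U^{-1}\hat e|=1$ for Type I and $|U\hat e|=1$ for Type II, are exactly the squares of the conditions \eqref{eq:76} and \eqref{eq:81} that the paper later quotes from Chen et al.\ when it uses this theorem in \autoref{sec:extreme_comp}. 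Your Cauchy--Schwarz endgame and the eigenbasis computation in your parenthetical both correctly yield $\lambda_1=\lambda_3$ (or $U\hat e=\hat e$, hence $V=U$), contradicting the hypotheses; this is essentially Della Porta's original route, packaged slightly more geometrically.
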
 
With these results in place, we now proceed to investigate the possibility of triple clusters beyond Type I/II domains in the following section.

\section{Triple Clusters Beyond Type I/II Domains}
\label{sec:triple_beyond}
In this section, we extend the idea of triple clusters with austenite, as implied in \autoref{thm:type1triple} and Theorem \autoref{thm:type2triple} for Type-I and Type-II domains, respectively, to the case of compound domains, see \autoref{fig:venn_scope}. These compound domains may be either (conventional) compound twins or non-conventional twins. We first study the commutation properties of the stretch tensors of the martensitic variants. This reveals an interesting connection between the commutation of variants and their symmetry. One of our main findings is that when the stretch tensors of two compatible martensitic variants commute, they always form compound domains. However, the converse is not always true. A pair of compound domains may or may not commute. This commutation property can be exploited for two key applications. \\[5pt]
Firstly, the observation that the non-conventional twins in cubic to monoclinic-II transformations commute, this allows us to view them as compound domains, see \autoref{fig:venn_structure}. Then, the solutions obtained by Chen et al. \cite{CHEN20132566} for compound domains, can be easily extended to the case of non-conventional twins, using \autoref{thm:compound_domains}. This allows us to represent the twinning elements of non-conventional twins using relatively simple formulae, without the need to resort to the general solution given by \eqref{eq:10}. Pitteri and Zanzotto have discussed this issue in their work (section 4.8, \cite{PITTERI1998225}).\\[5pt]
Secondly, under certain assumptions, we can utilize the commutation property, to form planar triple clusters between a pair of commuting variants and austenite. These triple clusters enable the elimination of transition layer between austenite and some compound domains (compound twins or non-conventional twins, whose stretch tensors commute). This addresses the constraint of the cofactor conditions, which are insufficient to eliminate the transition layer for compound domains, see \autoref{fig:venn_scope_b}. However, since compound domains do not always commute, there are instances where this strategy does not work. This is the case for (conventional) compound twins in cubic to monoclinic (I \& II) transformations. We explicitly derive formulae for such cases in \autoref{thm:compound_triple}. For improved readability and interpretation, the results of \autoref{sec:triple_beyond} are summarized in \Cref{fig:3ortho,fig:3mono}. We encourage the reader to refer to this figure while reading \autoref{sec:triple_compound}.
\\[5pt] Our approach covers all the modes in which the transition layer can be eliminated between laminates of generic twins and austenite. It naturally leads to the question, `How many of these modes are simultaneously possible in a given transformation, and under what conditions?'. We answer this question in \autoref{sec:extreme_comp}.
\subsection{Commutation properties of stretch tensors}
\label{sec:commutation}
We begin by stating a definition and a lemma, which will be used later to establish some results.

\begin{definition}
\label{def:commutation}
    Two matrices $A$, $B$ $\in \mathbb{C}^{n\times n}$ are said to be simultaneously diagonalizable if there exists a single non-singular $S$ $\in \mathbb{C}^{n\times n}$ such that $S^{-1}AS$ and $S^{-1}BS$ are both diagonal. A family $\mathscr{F} \subset \mathbb{C}^{n\times n}$  is said to be simultaneously diagonalizable if there exists a single non-singular $S$ $\in \mathbb{C}^{n\times n}$ such that $S^{-1}AS$ is diagonal $\forall A\in \mathscr{F}$.
\end{definition} 

\begin{figure}[hbt!]
    \centering
    \includegraphics[scale=0.50]{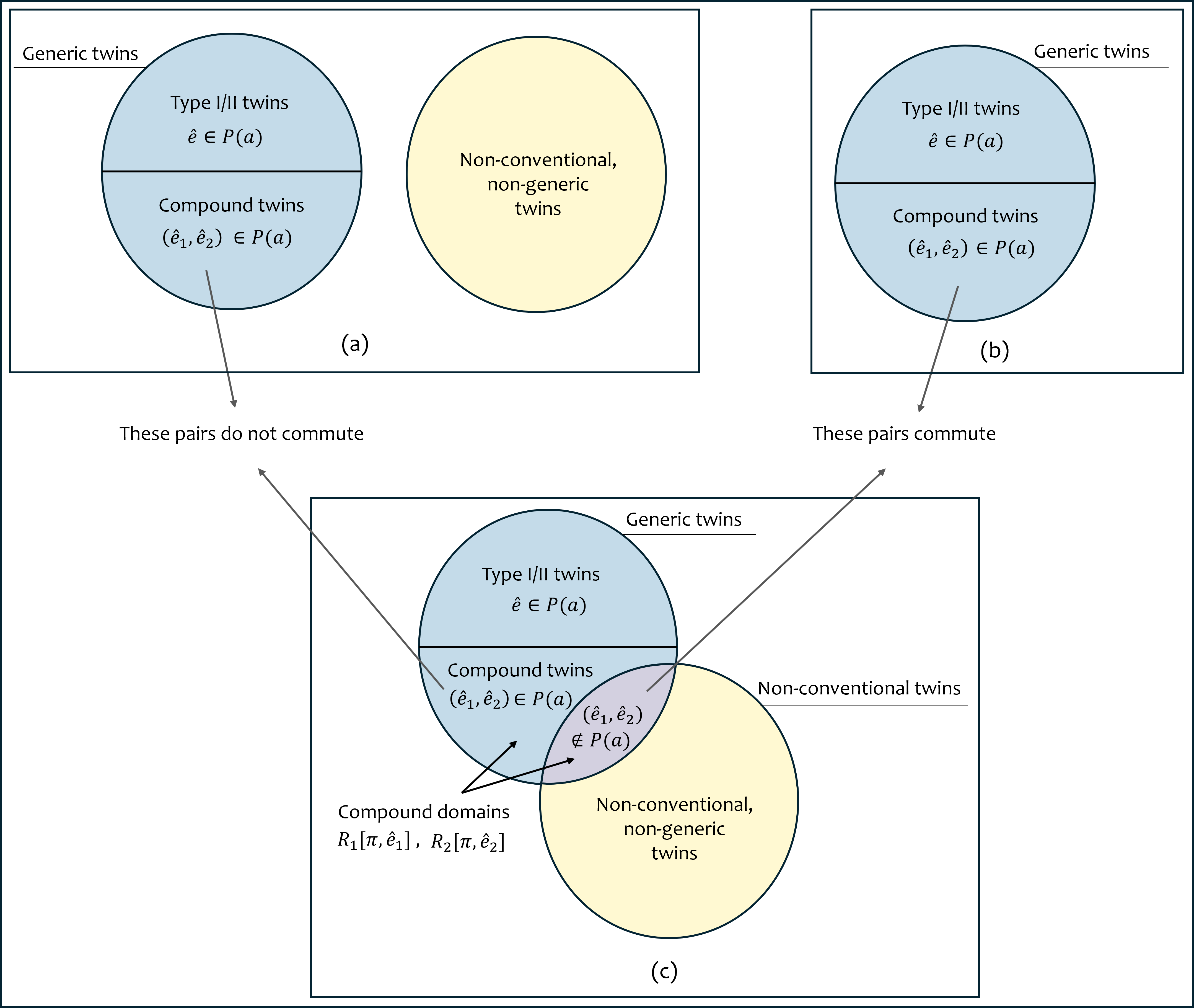}
    \caption{Structure of twinning in (a) Cubic to monoclinic-I transformation (b) Cubic to orthorhombic transformation, $\langle110\rangle$\textsubscript{cubic} type (c) Cubic to monoclinic-II transformation.}
    \label{fig:venn_structure}
\end{figure}

\begin{lemma}
\label{lem:commutation}
    Let $A$, $B$ $\in \mathbb{C}^{n\times n}$ be diagonalizable. Then $A$ and $B$ are simultaneously diagonalizable if and only if they commute. A family of diagonalizable matrices, $\mathscr{F}\subset \mathbb{C}^{n\times n}$  is a simultaneously diagonalizable family if and only if it is a commuting family \cite{Horn_Johnson_1985}.
\end{lemma}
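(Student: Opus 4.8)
The plan is to prove the two-matrix statement first and then bootstrap to the family statement by induction on the dimension $n$. For the two-matrix case, the forward implication is immediate: if $S^{-1}AS = D_A$ and $S^{-1}BS = D_B$ are both diagonal, then $D_A D_B = D_B D_A$ since diagonal matrices commute, and conjugating back gives $AB = BA$. For the converse, suppose $AB = BA$. Decompose $\mathbb{C}^n = \bigoplus_{i=1}^{k} E_i$ into the eigenspaces of $A$, where $E_i = \ker(A - \lambda_i I)$ for the distinct eigenvalues $\lambda_1, \dots, \lambda_k$ of $A$; this is a genuine direct sum because $A$ is diagonalizable. Commutativity forces $B$ to leave each $E_i$ invariant: if $Av = \lambda_i v$ then $A(Bv) = B(Av) = \lambda_i (Bv)$, so $Bv \in E_i$. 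I would then invoke the auxiliary fact that the restriction of a diagonalizable operator to an invariant subspace is again diagonalizable — which follows because the minimal polynomial of $B|_{E_i}$ divides that of $B$, and the latter is a product of distinct linear factors. Hence each $E_i$ has a basis of eigenvectors of $B|_{E_i}$, and every such vector is simultaneously an eigenvector of $A$ with eigenvalue $\lambda_i$. Concatenating these bases over $i = 1, \dots, k$ produces a basis of $\mathbb{C}^n$ consisting of common eigenvectors, and the matrix $S$ whose columns are these vectors simultaneously diagonalizes $A$ and $B$.

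For the family statement, the direction ``simultaneously diagonalizable $\Rightarrow$ commuting family'' is just the pairwise forward implication applied to each pair in $\mathscr{F}$. For the converse I would argue by induction on $n$. If $n = 1$, or more generally if every member of $\mathscr{F}$ is a scalar multiple of $I$, there is nothing to prove (any invertible $S$, or $S = I$, works). Otherwise, pick $A \in \mathscr{F}$ with at least two distinct eigenvalues and decompose $\mathbb{C}^n = \bigoplus_i E_i$ into its eigenspaces, each of dimension strictly less than $n$. As above, every $B \in \mathscr{F}$ preserves each $E_i$ and restricts there to a diagonalizable operator, and $\{\, B|_{E_i} : B \in \mathscr{F}\,\}$ is again a commuting family of diagonalizable operators on $E_i$. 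By the inductive hypothesis each restricted family is simultaneously diagonalizable; splicing together the resulting eigenbases of the $E_i$ yields a single basis of $\mathbb{C}^n$ that diagonalizes every element of $\mathscr{F}$ at once.

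The only step requiring genuine care — and the one I would isolate as a short sub-lemma — is the claim that a diagonalizable operator restricts to a diagonalizable operator on any invariant subspace; everything else is bookkeeping with direct sums and the observation that commutativity transfers to the restricted families. Since the statement is classical and already attributed to \cite{Horn_Johnson_1985} in the lemma, it may suffice in the paper to cite that reference and present only the eigenspace-decomposition argument in condensed form.
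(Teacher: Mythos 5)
Your proof is correct and is the standard argument for this classical fact; the paper itself offers no proof at all, deferring entirely to the citation of Horn and Johnson, so there is nothing in the paper to compare against beyond that reference. Your eigenspace-decomposition argument for two matrices, the minimal-polynomial justification that a diagonalizable operator restricts to a diagonalizable operator on an invariant subspace, and the induction on $n$ for the family case are exactly the ingredients of the textbook proof, and you have correctly handled the degenerate base case where every member of $\mathscr{F}$ is scalar. Your closing remark is also the right editorial judgment: in the context of this paper the lemma is used as a black box, and the citation suffices.
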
 

With these ideas in mind, let us examine, case by case, the structure of stretch tensors for various transformations.

\subsection*{Cubic to tetragonal transformation}
For the case of cubic to tetragonal transformations, we have three variants in the martensitic phase, which have the following form
\begin{gather}
\label{eq:18}
U_1 = 
    \begin{bmatrix}
        a & 0& 0 \\ 0 & a & 0 \\ 0 & 0 & b
    \end{bmatrix},\qquad
    U_2 = 
    \begin{bmatrix}
        a & 0& 0 \\ 0 & b & 0 \\ 0 & 0 & a
    \end{bmatrix},\qquad
    U_3 = 
    \begin{bmatrix}
        b & 0& 0 \\ 0 & a & 0 \\ 0 & 0 & a
    \end{bmatrix}.
\end{gather}

Since the matrices are diagonal, their product is entry-wise, which is commutative. Matrices $U_1$, $U_2$, and $U_3$ satisfy the premise of \autoref{lem:commutation} and form a family of commuting and hence simultaneously diagonalizable matrices. All the pairs of variants $(U_1$, $U_2)$, $(U_1$, $U_3)$ and $(U_2$, $U_3)$ form compound twins.

\subsection*{Cubic to orthorhombic transformation}
In cubic to orthorhombic transformations, we have six variants in the martensitic phase. Depending on the lattice correspondence, two different transformation pathways are possible. In the first type of transformation, two of the three axes of orthorhombic symmetry are obtained from $\langle$110$\rangle$\textsubscript{cubic} axes, then the stretch tensors are of the form

\begin{gather}
    U_1 = 
    \begin{bmatrix}
        d & 0 & 0 \\ 0 & a & b \\ 0 & b & a
    \end{bmatrix},\qquad
    U_2 = 
    \begin{bmatrix}
        d & 0 & 0 \\ 0 & a & -b \\ 0 & -b & a
    \end{bmatrix},\qquad
    U_3 = 
    \begin{bmatrix}
        a & 0& -b \\ 0 & d & 0 \\ -b & 0 & a
    \end{bmatrix},\nonumber
    \\\nonumber\\
    U_4 = 
    \begin{bmatrix}
        a & 0& b \\ 0 & d & 0 \\ b & 0 & a
    \end{bmatrix},\qquad
    U_5 = 
    \begin{bmatrix}
        a & -b& 0 \\ -b & a & 0 \\ 0 & 0 & d
    \end{bmatrix},\qquad
    U_6 = 
    \begin{bmatrix}
        a & b& 0 \\ b & a & 0 \\ 0 & 0 & d
    \end{bmatrix}.\label{eq:19}
\end{gather}

It can be verified that $U_1\:U_2 = U_2\:U_1$, $U_3\:U_4 = U_4\:U_3$ and $U_5\:U_6 = U_6\:U_5$. The pairs $(U_1$,$U_2)$, $(U_3$,$U_4)$, and $(U_5$,$U_6)$ commute and form compound twins. All other pairs form Type-I/II twins, and do not commute. Unless otherwise stated, all mentions of cubic to orthorhombic transformations in this paper refer to this pathway of transformation.

The second type of cubic to orthorhombic transformation is the one in which the axes of orthorhombic symmetry are along the $\langle$100$\rangle$\textsubscript{cubic} axes and the stretch tensors are of the form

\begin{gather}
\label{eq:20}
    U_1 = 
    \begin{bmatrix}
        a & 0 & 0 \\ 0 & b & 0 \\ 0 & 0 & c
    \end{bmatrix},\qquad
    U_2 = 
    \begin{bmatrix}
        b & 0 & 0 \\ 0 & a & 0 \\ 0 & 0 & c
    \end{bmatrix},\qquad
    U_3 = 
    \begin{bmatrix}
        a & 0 & 0 \\ 0 & c & 0 \\ 0 & 0 & b
    \end{bmatrix},\nonumber
    \\\nonumber\\
    U_4 = 
    \begin{bmatrix}
        c & 0 & 0 \\ 0 & b & 0 \\ 0 & 0 & a
    \end{bmatrix},\qquad
    U_5 = 
    \begin{bmatrix}
        b & 0 & 0 \\ 0 & c & 0 \\ 0 & 0 & a
    \end{bmatrix},\qquad
    U_6 = 
    \begin{bmatrix}
        c & 0 & 0 \\ 0 & a & 0 \\ 0 & 0 & b
    \end{bmatrix}.
\end{gather}

In this case, all variants commute with each other. Each variant forms compound twins with exactly three variants and is incompatible with the remaining two variants, e.g. $U_1$ forms compound twins with $U_2$, $U_3$, $U_4$ and is incompatible with $U_5$ and $U_6$. This shows that if two variants commute, they need not always be compatible.\\[5pt]
In both types of cubic to orthorhombic transformations, we see that the stretch tensors of compound twins commute. 

\subsection*{Cubic to monoclinic transformation}
For this transformation, we have twelve variants in the martensitic phase. Similar to the previous case, two different transformation pathways are possible here. In the first type of transformation, the axis of the monoclinic symmetry corresponds to $\langle110\rangle$\textsubscript{cubic} direction. This transformation is also called cubic to monoclinic-I transformation. The stretch tensors are of the form

\begin{gather}
    U_1 = 
    \begin{bmatrix}
        d & -c & -c \\ -c & a & b \\ -c & b & a
    \end{bmatrix},\quad
    U_2 = 
    \begin{bmatrix}
        d & -c & c \\ -c & a & -b \\ c & -b & a
    \end{bmatrix},\quad
    U_3 = 
    \begin{bmatrix}
        d & c & -c \\ c & a & -b \\ -c & -b & a
    \end{bmatrix},\quad
    U_4 = 
    \begin{bmatrix}
        a & c& -b \\ c & d & -c \\ -b & -c & a
    \end{bmatrix},\nonumber
    \\\nonumber\\
    U_5 = 
    \begin{bmatrix}
        a & c& b \\ c & d & c \\ b & c & a
    \end{bmatrix},\quad
    U_6 = 
    \begin{bmatrix}
        d & c& c \\ c & a & b \\ c & b & a
    \end{bmatrix},\quad
    U_7 = 
    \begin{bmatrix}
        a & -b& c \\ -b & a & -c \\ c & -c & d
    \end{bmatrix},\quad
    U_8 = 
    \begin{bmatrix}
        a & b& c \\ b & a & c \\ c & c & d
    \end{bmatrix}\label{eq:21}, 
    \\\nonumber\\
    U_9 = 
    \begin{bmatrix}
        a & -c& b \\ -c & d &-c \\ b & -c & a
    \end{bmatrix},\quad
    U_{10} = 
    \begin{bmatrix}
        a & -c& -b \\ -c & d & c \\ -b & c & a
    \end{bmatrix},\quad
    U_{11} = 
    \begin{bmatrix}
        a & -b& -c \\ -b & a & c \\ -c & c & d
    \end{bmatrix},\quad
    U_{12} = 
    \begin{bmatrix}
        a & b& -c \\ b & a & -c \\ -c & -c & d
    \end{bmatrix}.\nonumber
\end{gather}%
No pair of matrices in \eqref{eq:21} satisfies the commutation property. There are no non-conventional, generic twins in this type of transformation. All twins are either Type I/II or compound. This case (and the next one) is an example of a scenario in which compound twins do not commute.\\[5pt]

For the second kind of cubic to monoclinic transformations, the axis of the monoclinic symmetry corresponds to $\langle100\rangle$\textsubscript{cubic} direction. This transformation is also called cubic to monoclinic-II transformation. The stretch tensors are of the form

\begin{gather}
    U_1 = 
    \begin{bmatrix}
        d & 0 & 0 \\ 0 & a & b \\ 0 & b & c
    \end{bmatrix},\quad
    U_2 = 
    \begin{bmatrix}
        d & 0 & 0 \\ 0 & a & -b \\ 0 & -b & c
    \end{bmatrix},\quad
    U_3 = 
    \begin{bmatrix}
        a & 0& -b \\ 0 & d & 0 \\ -b & 0 & c
    \end{bmatrix},\quad
    U_4 = 
    \begin{bmatrix}
        a & 0& b \\ 0 & d & 0 \\ b & 0 & c
    \end{bmatrix},\nonumber
    \\\nonumber\\
    U_5 = 
    \begin{bmatrix}
        d & 0& 0 \\ 0 & c & b \\ 0 & b & a
    \end{bmatrix},\quad
    U_6 = 
    \begin{bmatrix}
        c & -b& 0 \\ -b & a & 0 \\ 0 & 0 & d
    \end{bmatrix},\quad
    U_7 = 
    \begin{bmatrix}
        c & b& 0 \\ b & a & 0 \\ 0 & 0 & d
    \end{bmatrix},\quad
    U_8 = 
    \begin{bmatrix}
        a & b& 0 \\ b & c & 0 \\ 0 & 0 & d
    \end{bmatrix}\label{eq:22},
    \\\nonumber\\
    U_9 = 
    \begin{bmatrix}
        c & 0& b \\ 0 & d & 0 \\ b & 0 & a
    \end{bmatrix},\quad
    U_{10} = 
    \begin{bmatrix}
        c & 0& -b \\ 0 & d & 0 \\ -b & 0 & a
    \end{bmatrix},\quad
    U_{11} = 
    \begin{bmatrix}
        a & -b& 0 \\ -b & c & 0 \\ 0 & 0 & d
    \end{bmatrix},\quad
    U_{12} = 
    \begin{bmatrix}
        d & 0& 0 \\ 0 & c & -b \\ 0 & -b & a
    \end{bmatrix}.\nonumber
\end{gather}\\[10pt]
 In this case, the pairs $(U_1,U_{12})$, $(U_2,U_{5})$, $(U_3,U_{9})$, $(U_4,U_{10})$, $(U_6,U_{8})$ and $(U_7,U_{11})$ are the ones which commute. These pairs form non-conventional compound domains and constitute column 5 of \autoref{tab:CM}. There are 24 pairs of Type I/II twins, listed in columns 1 and 2 of \autoref{tab:CM}, and 12 pairs of compound twins, listed in columns 3 and 4 of \autoref{tab:CM}, none of which commute.\\[5pt]
Therefore, it can be concluded that in case of cubic austenite transforming to monoclinic martensite, commuting stretch tensors arise only in cubic-to-monoclinic-II transformations, where they form non-conventional twins.

\subsection*{Tetragonal to orthorhombic transformation}
In cubic to orthorhombic transformations, we have just two variants in the martensitic phase. Again, two different transformation pathways are possible in this transformation. If the first type of transformation, the fourfold $c$ axis and the basal face diagonals of the tetragonal unit cell transform to the primitive unit cell of the orthorhombic lattice \cite{intermediatetwinningspontaneous}. The stretch tensors are of the form,

\begin{gather}
\label{eq:23}
    U_1 = 
    \begin{bmatrix}
        a & b & 0 \\ b & a & 0 \\ 0 & 0 & c
    \end{bmatrix},\qquad
    U_2 = 
    \begin{bmatrix}
        a & -b & 0 \\ -b & a & 0 \\ 0 & 0 & c
    \end{bmatrix}.\qquad
\end{gather}

Here, $U_1$ and $U_2$ commute and form compound twins. In the second type of transformation, the axes of orthorhombic unit cell are obtained from $\langle100\rangle$\textsubscript{tetragonal} directions \cite{bhattacharya2003microstructure}. The stretch tensors are given by

\begin{gather}
\label{eq:24}
    U_1 = 
    \begin{bmatrix}
        a & 0 & 0 \\ 0 & b & 0 \\ 0 & 0 & c
    \end{bmatrix},\qquad
    U_2 = 
    \begin{bmatrix}
        b & 0 & 0 \\ 0 & a & 0 \\ 0 & 0 & c
    \end{bmatrix}.\qquad
\end{gather}

Again, $U_1$ and $U_2$ commute and form compound twins.

\subsection*{Tetragonal to monoclinic transformation}
For tetragonal to monoclinic transformations, we have four variants in the martensitic phase. Three different transformation pathways have been reported for this transformation \cite{zirconia_monoclinic}. We discuss only one of these pathways, which is the only case where we get commuting matrices. This is the case when $[110]_t\rightarrow[100]_m\;$, $[001]_t\rightarrow[010]_m\;$ and $\;[1\bar{1}0]_t\rightarrow[001]_m$. The stretch tensors are of the form

\begin{gather}
\label{eq:25}
    U_1 = 
    \begin{bmatrix}
        a & b & 0 \\ b & c & 0 \\ 0 & 0 & d
    \end{bmatrix},\qquad
    U_2 = 
    \begin{bmatrix}
        a & -b & 0 \\ -b & c & 0 \\ 0 & 0 & d
    \end{bmatrix},\qquad
     U_3 = 
    \begin{bmatrix}
        c & b & 0 \\ b & a & 0 \\ 0 & 0 & d
    \end{bmatrix},\qquad
    U_4 = 
    \begin{bmatrix}
        c & -b & 0 \\ -b & a & 0 \\ 0 & 0 & d
    \end{bmatrix}.\qquad
\end{gather}

Here, the pairs $(U_1,U_4)$ and $(U_2,U_3)$ are the only ones that commute with each other. These pairs form non-conventional compound domains. All other pairs, $(U_1,U_2)$, $(U_1,U_3)$, $(U_2,U_4)$ and $(U_3,U_4)$, form compound twins and do not commute.

\subsection*{Orthorhombic to monoclinic transformation}
For orthorhombic to monoclinic transformations, we have two variants in the martensitic phase. This transformation is used as a prototypical example for studying the two well problem, along with the case of tetragonal to orthorhombic transformation \cite{Luskin_1996}. The stretch tensors are of the form

\begin{gather}
\label{eq:26}
    U_1 = 
    \begin{bmatrix}
        a & 0 & 0 \\ b & c & 0 \\ 0 & 0 & d
    \end{bmatrix},\qquad
    U_2 = 
    \begin{bmatrix}
        a & 0 & 0 \\ -b & c & 0 \\ 0 & 0 & d
    \end{bmatrix}.\qquad
\end{gather}

$U_1$ and $U_2$ form compound twins with each other and do not commute.

\begin{remark}
\label{rem:comm_and_compound}
    The property of commutation of stretch tensors bears an interesting relation to the symmetry of corresponding variants and depends only on the symmetry of the martensitic phase. For tetragonal martensite, all variants that commute form compound twins, and all compound twins commute with each other. For orthorhombic martensite, all compound twins commute with each other, but all variants that commute are not necessarily compatible with each other. For monoclinic martensite, the only variants that commute form non-conventional twins. The (conventional) compound twins do not commute in cubic to monoclinic transformation, see \autoref{fig:venn_structure}. This pattern seems to hold irrespective of the symmetry of austenite phase.
\end{remark}
\begin{theorem}
\label{thm:commutation}
    Let $U,V \in \mathbb{R}^{3\times3}_{\scriptstyle{+sym}} $, $U\neq V$, be two compatible variants of martensite such that  $UV=VU$. If the eigenvalues of $U$, $\lambda_i, \;i=1,2,3,$ are all distinct, then $U$ and $V$ form compound domains.
\end{theorem}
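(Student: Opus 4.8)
The plan is to exploit commutativity to place $U$ and $V$ in a common orthonormal eigenbasis, reduce the claim to a statement about a permutation of eigenvalues, discard the ``bad'' permutations using compatibility, and construct the required $180^{\circ}$ rotations by hand for the surviving ones.

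First, since $U$ and $V$ are symmetric and $UV=VU$, \autoref{lem:commutation} (equivalently, the spectral theorem for commuting symmetric operators) supplies a common orthonormal eigenbasis $\{\hat{u}_1,\hat{u}_2,\hat{u}_3\}$. Distinctness of the eigenvalues $\lambda_1,\lambda_2,\lambda_3$ of $U$ forces this to be \emph{the} eigenbasis of $U$ (unique up to reordering and sign), in which $V$ is necessarily diagonal. Being variants of the same martensitic phase, $U$ and $V$ are orthogonally conjugate and hence share their spectrum, so $V=\sum_i \lambda_{\pi(i)}\,\hat{u}_i\otimes\hat{u}_i$ for some $\pi\in S_3$; as $U\neq V$ we have $\pi\neq\mathrm{id}$, so $\pi$ is either a transposition or a $3$-cycle. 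The variant hypothesis is used precisely here, to guarantee equal spectra; a merely compatible commuting $V$ need not be a domain at all.

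Next I rule out $3$-cycles. Writing the twinning equation \eqref{eq:9} as $QV-U=a\otimes\hat{n}$ and taking $F=V$, $G=U$ in \autoref{thm: main_solution}, the governing tensor is $C=U^{-1}V^{2}U^{-1}$, which in the common basis is diagonal with entries $(\lambda_{\pi(i)}/\lambda_i)^{2}$ and determinant $1$. A necessary condition for \eqref{eq:9} to be solvable is that the middle eigenvalue of $C$ equal $1$ (the classical twinning criterion, cf.\ \cite{Ball1987}). Ordering $\lambda_1<\lambda_2<\lambda_3$, a direct inspection of the two $3$-cycles shows that the three ratios are, in one case, two above $1$ and one below, and in the other, two below $1$ and one above; in either case the middle eigenvalue of $C$ is strictly larger than (resp.\ strictly smaller than) $1$, the strictness coming from distinctness of the $\lambda_i$. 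Hence a $3$-cycle is incompatible, contradicting the hypothesis, and $\pi$ must be a transposition. This step is genuinely needed: commuting variants related by a $3$-cycle do occur, e.g.\ the $\langle100\rangle$-type cubic-to-orthorhombic pairs, and are exactly the incompatible ones.

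Finally, say $\pi$ interchanges the eigenvalues carried by $\hat{u}_i$ and $\hat{u}_j$ and fixes that carried by $\hat{u}_k$. Set $\hat{e}_{\pm}=\tfrac{1}{\sqrt2}(\hat{u}_i\pm\hat{u}_j)$ and $R_{\pm}=2\,\hat{e}_{\pm}\otimes\hat{e}_{\pm}-I$. Each $R_{\pm}$ is a $180^{\circ}$ rotation that swaps $\hat{u}_i\leftrightarrow\hat{u}_j$ (up to sign) and sends $\hat{u}_k\mapsto-\hat{u}_k$, so conjugation by $R_{\pm}$ interchanges exactly the $i$-th and $j$-th diagonal entries of $U$; a one-line computation then gives $R_{\pm}UR_{\pm}=V$. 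Since $\hat{e}_{+}\perp\hat{e}_{-}$ the axes are non-parallel, and $V\neq U$ because the two interchanged eigenvalues are distinct, so $U$ and $V$ are related by two distinct $180^{\circ}$ rotations; that is, they form compound domains, with $\hat{u}_k$ in the role of the distinguished eigenvector of \autoref{thm:compound_domains}. The only step requiring care is the compatibility argument above: one must correctly identify which of the three eigenvalue ratios is the middle eigenvalue of $C$ and verify it cannot equal $1$; the rest is routine linear algebra.
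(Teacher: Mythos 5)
Your proof is correct, and it reaches the conclusion by a genuinely more self-contained route than the paper. Both arguments begin identically: commutation plus distinct eigenvalues give a joint orthonormal eigenbasis, and the variant hypothesis forces $V=\sum_i\lambda_{\pi(i)}\,\hat{u}_i\otimes\hat{u}_i$ for a nontrivial permutation $\pi$. The divergence is in how compatibility is used. The paper imports from Chen et al.\ the fact that two compatible (generic) variants are related by some $180^{\circ}$ rotation $Q=-I+2\,\hat{e}\otimes\hat{e}$, and then runs a case analysis on where $Q$ sends $\hat{u}_1$ to conclude that $\hat{e}$ is perpendicular to an eigenvector of $U$, finishing with \autoref{thm:compound_domains}; the exclusion of $3$-cycles is implicit in the assumed form $V=QUQ^{T}$. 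You instead use compatibility only through the primitive solvability criterion for \eqref{eq:9} — the middle eigenvalue of $C=U^{-1}V^{2}U^{-1}$ must equal one — to kill the $3$-cycles directly (your computation that the middle eigenvalue is strictly on one side of $1$ is right, and the $\langle100\rangle$-type cubic-to-orthorhombic pairs in \eqref{eq:20} confirm this case is not vacuous), and you then exhibit both $180^{\circ}$ axes $\hat{e}_{\pm}=\tfrac{1}{\sqrt{2}}(\hat{u}_i\pm\hat{u}_j)$ explicitly rather than obtaining the second axis from \autoref{thm:compound_domains}. What your version buys is independence from the cited structural result on domains and an explicit display of both symmetry axes (which is also what \Cref{tab:non_conv_MC,tab:non_conv_TM} compute in coordinates); what the paper's version buys is brevity, since the existence of one two-fold axis is taken off the shelf. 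One small presentational caveat: your final sentence appeals to "compound domains" as "related by two distinct $180^{\circ}$ rotations with non-parallel axes," which matches \autoref{thm:compound_domains} only after noting, as you do, that $\hat{e}_{\pm}\perp\hat{u}_k$; this is already in your construction, so no gap results.
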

\begin{proof}
    
$U$ and $V$ commute, so they have a joint eigenbasis, see \autoref{lem:commutation}. Let $\lambda_i,\;i=1,2,3,$ be the eigenvalues of $U$ and $\hat{u}_i$ be the corresponding eigenvectors. Since $\lambda_i$ are all distinct, $\hat{u}_i$ form an orthonormal basis of $\mathbb{R}^3$. Then, writing the spectral decomposition of $U$ and $V$, we have
\begin{gather}
U= \lambda_1\;\hat{u}_1\otimes \hat{u}_1 + \lambda_2\;\hat{u}_2\otimes \hat{u}_2 + \lambda_3\;\hat{u}_3\otimes \hat{u}_3,\qquad V= \lambda_1\;\hat{u}_i\otimes \hat{u}_i + \lambda_2\;\hat{u}_j\otimes \hat{u}_j + \lambda_3\;\hat{u}_k\otimes \hat{u}_k,\nonumber
\end{gather}
where $\{i,j,k\}$ is a some permutation of $\{1,2,3\}$, other than $\{1,2,3\}$. Since $U$ and $V$ are compatible, there exists a unit vector $\hat{e}$ such that $Q=-I +2\:\hat{e}\otimes\hat{e}$ and $V=QUQ^{T}$ \cite{CHEN20132566}, we have,
\begin{gather}
V= \lambda_1\;Q\hat{u}_1\otimes Q\hat{u}_1 + \lambda_2\;Q\hat{u}_2\otimes Q\hat{u}_2 + \lambda_3\;Q\hat{u}_3\otimes Q\hat{u}_3.\qquad \nonumber
\end{gather}
Comparing the two expressions of $V$, we have,  $Q\hat{u}_1=\pm\hat{u}_i,\quad Q\hat{u}_2=\pm\hat{u}_j, \quad Q\hat{u}_3=\pm\hat{u}_k$.\\[5 pt]
Without loss of generality, suppose $i=1$, then
\begin{gather}
\nonumber
 Q\hat{u}_1 = \pm\;\hat{u}_1\; \implies (-I + 2\;\hat{e}\otimes\hat{e})\:\hat{u_1}=\pm\hat{u}_1 \implies -\hat{u}_1 \; +\; 2(\hat{u}_1\:.\:\hat{e})\:\hat{e} = \pm \:\hat{u}_1 \\[3pt]
 \nonumber\implies \text{either }\:\hat{e} \perp \:\hat{u}_1\:\:\text{or}\;\;
 \hat{e} = \hat{u}_1\;\perp\hat{u}_2.
 \end{gather}
If $i=2$, then
\begin{gather}
\nonumber
 Q\hat{u}_1 = \pm\;\hat{u}_2\; \implies (-I + 2\;\hat{e}\otimes\hat{e})\:\hat{u_1}=\pm\hat{u}_2 \implies -\hat{u}_1 \; +\; 2(\hat{u}_1\:.\:\hat{e})\:\hat{e} = \pm \:\hat{u}_2 \\[3pt]
 \nonumber\implies \:\hat{e} \;= \:\frac{1}{\sqrt{2}} (\pm\hat{u}_2+\hat{u}_1) \perp \hat{u}_3.
 \end{gather}
If $i=3$, then
 \begin{gather}
\nonumber
 Q\hat{u}_1 = \pm\;\hat{u}_3\; \implies (-I + 2\;\hat{e}\otimes\hat{e})\:\hat{u_1}=\pm\hat{u}_3 \implies -\hat{u}_1 \; +\; 2(\hat{u}_1\:.\:\hat{e})\:\hat{e} = \pm \:\hat{u}_3 \\[3pt]
 \nonumber\implies \:\hat{e} \;= \:\frac{1}{\sqrt{2}} (\pm\hat{u}_3+\hat{u}_1)\;\perp \hat{u}_2.
 \end{gather}
In all cases, $\hat{e}$ is perpendicular to an eigenvector of $U$. Then by \autoref{thm:compound_domains}, $U$ and $V$ form compound domains.
\end{proof}
\begin{corollary}
\label{cor:non_conv_comm}
     Non-conventional twins arising in a cubic to monoclinic-II transformation and a tetragonal to monoclinic transformation are compound domains.
\end{corollary}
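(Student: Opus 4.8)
The plan is to obtain the corollary as an immediate application of \autoref{thm:commutation} to the commuting variant pairs already singled out in the case-by-case analysis above. Recall from that analysis that, among the stretch tensors in \eqref{eq:22} of the cubic-to-monoclinic-II transformation, the non-conventional generic twins are precisely the commuting pairs $(U_1,U_{12})$, $(U_2,U_5)$, $(U_3,U_9)$, $(U_4,U_{10})$, $(U_6,U_8)$ and $(U_7,U_{11})$; likewise, among the stretch tensors in \eqref{eq:25} of the tetragonal-to-monoclinic transformation (under the correspondence considered there), the non-conventional twins are precisely the commuting pairs $(U_1,U_4)$ and $(U_2,U_3)$. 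The commutation in each case is verified directly from the matrix entries, and each pair consists of two distinct, compatible variants.

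Next I would check the eigenvalue hypothesis of \autoref{thm:commutation}. A monoclinic stretch tensor decomposes into a $1\times1$ block carrying the stretch along the monoclinic axis and a symmetric $2\times2$ block acting in the plane normal to it; the two eigenvalues of this $2\times2$ block are distinct whenever its off-diagonal shear entry is nonzero, which is exactly what makes the variant monoclinic rather than orthorhombic, and for generic lattice parameters the remaining eigenvalue equals neither of them. Hence, once the non-generic lattice parameters (treated throughout this paper as incompatible) are excluded, every monoclinic variant has three distinct principal stretches.

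Combining the two observations, \autoref{thm:commutation} applies directly to each of the listed pairs: two compatible variants whose stretch tensors commute and whose common stretch tensor has distinct eigenvalues necessarily form compound domains. Therefore every non-conventional twin arising in the cubic-to-monoclinic-II and tetragonal-to-monoclinic transformations is a compound domain, which is the assertion of the corollary.

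The only point needing care is the eigenvalue-distinctness step: one must confirm that the monoclinic shear parameter does not vanish and discard the measure-zero set of lattice parameters on which the monoclinic-axis stretch coincides with one of the in-plane stretches. Since non-generic twins are excluded from the outset, this poses no real obstacle, and beyond verifying the commutation relations encoded in \eqref{eq:22} and \eqref{eq:25} the argument requires no new computation.
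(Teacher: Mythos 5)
Your proposal is correct and follows essentially the same route as the paper: both arguments simply invoke \autoref{thm:commutation} for the commuting compatible pairs $(U_1,U_{12})$, $(U_2,U_5)$, $(U_3,U_9)$, $(U_4,U_{10})$, $(U_6,U_8)$, $(U_7,U_{11})$ of \eqref{eq:22} and the pairs $(U_1,U_4)$, $(U_2,U_3)$ of \eqref{eq:25} (which you label correctly, whereas the paper's proof misprints them as $(U_1,U_{12})$, $(U_2,U_5)$). Your added verification of the distinct-eigenvalue hypothesis of \autoref{thm:commutation} is a point the paper leaves implicit, and it is handled appropriately.
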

\begin{proof}
    It is easy to check that in a cubic to monoclinic-II transformation, the pairs $(U_1,U_{12})$, $(U_2,U_{5})$, $(U_3,U_{9})$, $(U_4,U_{10})$, $(U_6,U_{8})$ and $(U_7,U_{11})$ as defined in \eqref{eq:22} are compatible and commute, therefore by \autoref{thm:commutation}, form compound domains. Similarly, the non-conventional twins that arise in a tetragonal to monoclinic transformation, i.e. the pairs $(U_1,U_{12})$, $(U_2,U_{5})$ as defined in \eqref{eq:25} are also compatible and commute, therefore, form compound domains.
\end{proof}

The first application of the commutation property of martensitic variants is that we can write the solutions to the twinning equations of the non-conventional twin pairs mentioned in \autoref{cor:non_conv_comm}, using simple formulae, by considering them as compound domains. According to \autoref{thm:compound_domains}, Proposition 12 and Corollary 13 of \cite{CHEN20132566}, each pair listed in \autoref{cor:non_conv_comm} must have two $180^\circ$ rotations relating them, the axes of which are orthogonal to an eigenvector of the stretch tensor, as well as to each other. Once the two axes relating a given pair are known, the corresponding twinning solutions can be explicitly computed using \eqref{eq:14}. The general formula for determining the rotation axes that relate compound domains is provided in equation A.27 of \cite{CHEN20132566}. Substituting the axes obtained from this expression into \eqref{eq:14} yields the twinning solutions for non-conventional twins.\\[5pt]
A tedious computation using equation $A.27$ of \cite{CHEN20132566} shows that for the pairs listed in \autoref{cor:non_conv_comm}, the symmetry axes lie in a plane orthogonal to the axis of monoclinic symmetry. In this setting, the symmetry axes can be obtained as follows.\\[5 pt]
Consider the pair $(U_1,U_{12})$ from \eqref{eq:22}. The axis of monoclinic symmetry for $U_1$ is ${\{1,0,0}\}$, which is also an eigenvector of $U_1$. Let $q\in \mathbb{R}$ be such that there is a $180^\circ$ rotation about the unit vector, $\hat{e}=\{0,q,\sqrt{1-q^2}\}$, given by $R_1=-I+2\;\hat{e}\otimes\hat{e}$. Then, we must have have,
\begin{gather}
\nonumber
    (-I+2\;\hat{e}\otimes\hat{e})U_1(I+2\;\hat{e}\otimes\hat{e})=U_{12}
\end{gather}

Upon solving, we have $q= \pm\sqrt{\frac{1}{2}\pm\frac{|b|}{\sqrt{D}}}$, where $D=4b^2 +(a-c)^2$. This expression for $q$ contains four solutions, out of which only two are admissible, depending on the sign of $(a-c)b$. For example, if $(a-c)b>0$, then the two axes obtained from this expression of $q$, are orthogonal to each other, and are given by,
\begin{equation}
    \hat{e}_1=\begin{pmatrix}
0 \\
\sqrt{\frac{1}{2}+\frac{|b|}{\sqrt{D}}} \\
\sqrt{\frac{1}{2}-\frac{|b|}{\sqrt{D}}}
\end{pmatrix}, \qquad  \hat{e}_2=\begin{pmatrix}
0 \\
-\sqrt{\frac{1}{2}-\frac{|b|}{\sqrt{D}}} \\
\sqrt{\frac{1}{2}+\frac{|b|}{\sqrt{D}}}
\end{pmatrix}
\notag
\end{equation}
\\[5 pt]
Similarly, we can obtain the symmetry axes for the remaining cases of non-conventional compound domains in cubic to monoclinic-II transformation. The exact same procedure is followed to obtain the symmetry axes for the non-conventional compound domains arising in tetragonal to monoclinic transformation. \autoref{tab:non_conv_MC} and \autoref{tab:non_conv_TM} contain the symmetry axes for all non-conventional compound domains arising in these two transformations.

\begin{table}
    \centering
\begin{tabular}{ |c|c|c|} 
 \hline
 \multirow{2}{*}{Variant pairs} &  \multicolumn{2}{|c|}{Symmetry axes}\\ 
 \cline{2-3}
  &  $\hat{e}_1$ & $\hat{e}_2$\\ 
 \hline
 & & \\
 $(U_{1},U_{12})$ & $\{0,\;\sqrt{\frac{1}{2}+\frac{|b|}{\sqrt{D}}},\;
\sqrt{\frac{1}{2}-\frac{|b|}{\sqrt{D}}}\}$ & $\{0,\;-\sqrt{\frac{1}{2}-\frac{|b|}{\sqrt{D}}},\;
\sqrt{\frac{1}{2}+\frac{|b|}{\sqrt{D}}}\}$\\
 \hline
 & & \\
$(U_{2},U_{5})$ & $\{0,\;\sqrt{\frac{1}{2}-\frac{|b|}{\sqrt{D}}},\;
\sqrt{\frac{1}{2}+\frac{|b|}{\sqrt{D}}}\}$ & $\{0,\;-\sqrt{\frac{1}{2}+\frac{|b|}{\sqrt{D}}},\;
\sqrt{\frac{1}{2}-\frac{|b|}{\sqrt{D}}}\}$\\
 \hline
 & & \\
 $(U_{3},U_{9})$ & $\{\sqrt{\frac{1}{2}-\frac{|b|}{\sqrt{D}}},\;0,\;
\sqrt{\frac{1}{2}+\frac{|b|}{\sqrt{D}}}\}$ & $\{-\sqrt{\frac{1}{2}+\frac{|b|}{\sqrt{D}}},\;0,\;
\sqrt{\frac{1}{2}-\frac{|b|}{\sqrt{D}}}\}$\\
 \hline
 & & \\
 $(U_{4},U_{10})$  & $\{\sqrt{\frac{1}{2}+\frac{|b|}{\sqrt{D}}},\;0,\;
\sqrt{\frac{1}{2}-\frac{|b|}{\sqrt{D}}}\}$ & $\{-\sqrt{\frac{1}{2}-\frac{|b|}{\sqrt{D}}},\;0,\;
\sqrt{\frac{1}{2}+\frac{|b|}{\sqrt{D}}}\}$\\
 \hline
 & & \\
 $(U_{6},U_{8})$ & $\{\sqrt{\frac{1}{2}-\frac{|b|}{\sqrt{D}}},\;\sqrt{\frac{1}{2}+\frac{|b|}{\sqrt{D}}},\;
0\}$ & $\{-\sqrt{\frac{1}{2}+\frac{|b|}{\sqrt{D}}},\;\sqrt{\frac{1}{2}-\frac{|b|}{\sqrt{D}}},\;
0\}$\\
 \hline
 & & \\
 $(U_{7},U_{11})$ & $\{\sqrt{\frac{1}{2}+\frac{|b|}{\sqrt{D}}},\;\sqrt{\frac{1}{2}-\frac{|b|}{\sqrt{D}}},\;
0\}$ & $\{-\sqrt{\frac{1}{2}-\frac{|b|}{\sqrt{D}}},\;\sqrt{\frac{1}{2}+\frac{|b|}{\sqrt{D}}},\;
0\}$\\
 \hline
\end{tabular}
\caption{The symmetry axes for the non-conventional twins arising in cubic to monoclinic-II transformations (22).}
\label{tab:non_conv_MC}
\end{table}
\begin{table}
    \centering
\begin{tabular}{ |c|c|c|} 
 \hline
 \multirow{2}{*}{Variant pairs} &  \multicolumn{2}{|c|}{Symmetry axes}\\ 
 \cline{2-3}
  &  $\hat{e}_1$ & $\hat{e}_2$\\ 
 \hline
 & & \\
 $(U_{1},U_{4})$ & $\{\sqrt{\frac{1}{2}-\frac{|b|}{\sqrt{D}}},\;\sqrt{\frac{1}{2}+\frac{|b|}{\sqrt{D}}},\;
0\}$ & $\{-\sqrt{\frac{1}{2}+\frac{|b|}{\sqrt{D}}},\;\sqrt{\frac{1}{2}-\frac{|b|}{\sqrt{D}}},\;
0\}$\\
 \hline
 & & \\
$(U_{2},U_{3})$ & $\{\sqrt{\frac{1}{2}+\frac{|b|}{\sqrt{D}}},\;\sqrt{\frac{1}{2}-\frac{|b|}{\sqrt{D}}},\;
0\}$ & $\{-\sqrt{\frac{1}{2}-\frac{|b|}{\sqrt{D}}},\;\sqrt{\frac{1}{2}+\frac{|b|}{\sqrt{D}}},\;
0\}$\\
 \hline
\end{tabular}
\caption{The symmetry axes for the non-conventional twins arising in tetragonal to monoclinic transformations (25).}
\label{tab:non_conv_TM}
\end{table}
We now turn to the second application of commutation property of the martensitic variants. We will show that when two compatible martensitic variants commute, then under certain assumptions, we can form planar triple clusters between the two martensitic variants and austenite. These triple clusters are similar (\textit{in geometry}) to the ones implied by the cofactor conditions for Type I/II domains, see \Cref{fig:triple_1_2}. However, there is an important difference between the two cases. For Type I/II domains under the cofactor conditions, either a Type I solution or a Type II solution of the twinning equation can form a planar triple cluster with austenite. Both the solutions cannot simultaneously form the triple clusters with austenite, see \autoref{thm:dellaportaR}. In contrast to this, the triple clusters obtained by virtue of the commutation property allows both the solutions of the twinning equation to form two distinct planar triple clusters with austenite. Both these triple clusters are of the same type, i.e. either both the triple clusters have parallel shear vectors and coplanar normal vectors, see \autoref{fig:firstkind} (\textit{similar to Type-I triple cluster}) or both have coplanar shear vectors and same interface normals, see \autoref{fig:secondkind} (\textit{similar to Type-II triple cluster}). To avoid confusion, we call the planar triple clusters of parallel shear type for compound domains as `triple clusters of first kind' and the planar triple clusters of same normal type for compound domains as `triple clusters of second kind'.\\[5 pt]
This ability of the commuting variants to form two distinct triple clusters with austenite, again implies that commuting variants cannot form Type I/II domains, see \autoref{thm:dellaportaR}. Then, commuting (compatible) variants must form compound domains, which was proved in \autoref{thm:commutation} using a different approach. We now proceed towards constructing these triple clusters with austenite using the commutation property. 

\subsection{Commutation and triple clusters with austenite}
\label{sec:comm_triple}

\begin{theorem}
    \label{thm:firstkind}
Let $U,V \in \mathbb{R}^{3\times3}_{\scriptstyle{+sym}} $, $U\neq V$, be two variants of martensite such that $UV=VU$ and the eigenvalues of $U$ satisfy, $\frac{1}{\sqrt{2}}<\lambda_1<\lambda_2 = 1<\lambda_3$. Suppose that the eigenvector corresponding to $\lambda_2$ is same for both $U$ and $V$. If the other two eigenvalues of $U$ satisfy 
\begin{equation}
\label{eq:27}
    \lambda_3=\frac{\lambda_1}{\sqrt{2\lambda_1^2 -1}},
\end{equation}

then there exist rotations $R_u^{I+}$, $R_v^{I+}$, $R_u^{I-}$, $R_v^{I-}$ $\in SO(3)$, vectors $b^{I+}$, $b^{I-}\;\in \mathbb{R}^3$ and normals $\hat{m}_u^{I+}$, $\hat{m}_v^{I+}$, $\hat{m}_u^{I-}$, $\hat{m}_v^{I-}\in \mathbb{S}^2$ such that the following sets of equations hold simultaneously.

\begin{gather}\nonumber
    R_u^{I+}U - I = b^{I+} \otimes \hat{m}_u^{I+},\\[5 pt]\nonumber
    R_v^{I+}V - I = b^{I+} \otimes \hat{m}_v^{I+},\\[5 pt]
    R_u^{I+}U - R_v^{I+}V = b^{I+} \otimes (\hat{m}_u^{I+} -\hat{m}_v^{I+}).
    \label{eq:28}
\end{gather}
\begin{gather}\nonumber
    R_u^{I-}U - I = b^{I-} \otimes \hat{m}_u^{I-},\\[5pt]\nonumber
    R_v^{I-}V - I = b^{I-} \otimes -\hat{m}_v^{I-},\\[5pt]
    R_u^{I-}U - R_v^{I-}V = b^{I-} \otimes (\hat{m}_u^{I-} +\hat{m}_v^{I-}).
    \label{eq:29}
\end{gather}

\eqref{eq:28} and \eqref{eq:29} imply that there exist two distinct planar triple clusters between the two variants, $U, V$ and austenite, $I$ \textit{(triple clusters of first kind, see \autoref{fig:firstkind})}.
\end{theorem}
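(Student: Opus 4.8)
The plan is to leverage the hypothesis $\lambda_2=1$, which makes each of $U$ and $V$ individually compatible with austenite: I would write down the two austenite--martensite interfaces of each variant explicitly via \autoref{corollary:identity_solutions} and then show that \eqref{eq:27} is precisely the condition forcing the shear vector of a $U$--austenite interface to coincide (up to a global sign) with the shear of the ``partner'' $V$--austenite interface. Once two interfaces share a common shear direction, the compatibility equation between the two rotated martensite wells comes for free by subtracting the two austenite relations, and the three interfaces assemble into a planar triple cluster.

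First I would pin down the structure of $V$. Since $UV=VU$ and the eigenvalues of $U$ are distinct, \autoref{lem:commutation} gives a common orthonormal eigenbasis $\{\hat u_1,\hat u_2,\hat u_3\}$, and the argument from the proof of \autoref{thm:commutation}, now combined with the hypothesis that $\hat u_2$ is the $\lambda_2$-eigenvector of both and $U\neq V$, forces $V$ to be the variant obtained from $U$ by interchanging $\lambda_1$ and $\lambda_3$ on $\hat u_1,\hat u_3$ (equivalently, $U,V$ are compound domains with $\hat e=\tfrac1{\sqrt2}(\hat u_1\pm\hat u_3)$, cf.\ \autoref{thm:compound_domains}); in particular the eigenvector of $V$ with eigenvalue $\lambda_1<1$ is $\hat u_3$ and the one with eigenvalue $\lambda_3>1$ is $\hat u_1$. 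Then \autoref{corollary:identity_solutions} applied to $U$ yields, for $\kappa=\pm1$, rotations $R_u^{\kappa}$, unit normals $\hat m_u^{\kappa}$ and shears $b_u^{\kappa}$ with $R_u^{\kappa}U-I=b_u^{\kappa}\otimes\hat m_u^{\kappa}$; the normalization in \eqref{eq:11} fixes, up to an overall sign, $b_u^{\kappa}=\lambda_3\sqrt{1-\lambda_1^2}\,\hat u_1+\kappa\,\lambda_1\sqrt{\lambda_3^2-1}\,\hat u_3$ with $|b_u^{\kappa}|=\lambda_3-\lambda_1$. Applying the same corollary to $V$, with $\hat u_1$ and $\hat u_3$ exchanged, gives $R_v^{\kappa}V-I=b_v^{\kappa}\otimes\hat m_v^{\kappa}$ with $b_v^{\kappa}=\lambda_3\sqrt{1-\lambda_1^2}\,\hat u_3+\kappa\,\lambda_1\sqrt{\lambda_3^2-1}\,\hat u_1$ and $|b_v^{\kappa}|=\lambda_3-\lambda_1$.

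The key observation is that $b_u^{\kappa}$ and $b_v^{\kappa}$ (which have no $\hat u_2$-component) are parallel if and only if $\lambda_3\sqrt{1-\lambda_1^2}=\lambda_1\sqrt{\lambda_3^2-1}$, i.e.\ $\lambda_3^2+\lambda_1^2=2\lambda_1^2\lambda_3^2$, which is exactly \eqref{eq:27} rearranged (and $\lambda_1>1/\sqrt2$ is what makes $\lambda_3=\lambda_1/\sqrt{2\lambda_1^2-1}$ well defined); when it holds, the common magnitude $\lambda_3-\lambda_1$ forces $b_v^{\kappa}=\kappa\,b_u^{\kappa}$. For $\kappa=+1$ I would set $b^{I+}:=b_u^{+}=b_v^{+}$, $R_u^{I+}:=R_u^{+}$, $R_v^{I+}:=R_v^{+}$, $\hat m_u^{I+}:=\hat m_u^{+}$, $\hat m_v^{I+}:=\hat m_v^{+}$; the first two lines of \eqref{eq:28} then hold by construction and the third is their difference. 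For $\kappa=-1$ one has $b_u^{-}=-b_v^{-}$, so putting $b^{I-}:=b_u^{-}$, $R_u^{I-}:=R_u^{-}$, $R_v^{I-}:=R_v^{-}$, $\hat m_u^{I-}:=\hat m_u^{-}$ and $\hat m_v^{I-}:=\hat m_v^{-}$ gives $R_v^{-}V-I=b_v^{-}\otimes\hat m_v^{-}=(-b_v^{-})\otimes(-\hat m_v^{-})=b^{I-}\otimes(-\hat m_v^{I-})$, which is the second line of \eqref{eq:29}, and again the third line is the difference of the first two. Finally I would note that $\hat m_u^{I\pm}$ and $\hat m_v^{I\pm}$ cannot be parallel (that would require $\lambda_1^2+\lambda_3^2=2$, which together with \eqref{eq:27} forces $\lambda_1=\lambda_3=1$), so each system genuinely describes three distinct planes through a common line carrying parallel shears --- a planar triple cluster of first kind, \autoref{fig:firstkind} --- and that the $+$ and $-$ clusters are distinct because their interface normals differ.

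The only step I expect to be delicate is the sign bookkeeping: one must keep track of which choices of $\kappa$ and of the normalization sign in \eqref{eq:11} make $b_u^{\kappa}$ equal to $b_v^{\kappa}$ rather than equal up to a global sign, so that the assembled equations land in precisely the stated shapes --- with $\hat m_v$ appearing directly in \eqref{eq:28} but with a minus sign in \eqref{eq:29}. The equivalence ``parallelism $\Leftrightarrow$ \eqref{eq:27}'' is a one-line computation, and the third equation of each system is automatic once the first two are arranged to share a common shear vector, so nothing else should present an obstacle.
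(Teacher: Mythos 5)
Your proposal is correct and follows essentially the same route as the paper's proof: identify the joint eigenbasis via commutation so that $\hat v_1=\hat u_3$, $\hat v_3=\hat u_1$, apply \autoref{corollary:identity_solutions} to both wells, observe that \eqref{eq:27} is exactly $\lambda_3\sqrt{1-\lambda_1^2}=\lambda_1\sqrt{\lambda_3^2-1}$ which makes $b_v^{\kappa}=\kappa\,b_u^{\kappa}$, and obtain the third equation of each system by subtraction. Your explicit check that $\hat m_u^{I\pm}\nparallel\hat m_v^{I\pm}$ (ruling out $\lambda_1^2+\lambda_3^2=2$ alongside \eqref{eq:27}) is a small addition the paper leaves implicit; the paper instead verifies distinctness of the two clusters via $b^{I+}\nparallel b^{I-}$, which your argument also covers.
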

\begin{proof} $U$ and $V$ are symmetry related domains, so there is a rotation $R \in SO(3)$ such that $V=RUR^T$, implying $U$ is similar to $V$. Then $U$ and $V$ have the same spectrum of eigenvalues.\\[3pt]
So, $U$ and $V$ both have their middle eigenvalue equal to 1. Hence each can form a compatible interface with the austenite. Then there are two solutions to the equation, $QU-I=b_u\;\otimes \;\hat{m}_u$, given by $R^{I\pm}_uU = b^{I\pm}_u\;\otimes \;\hat{m}^{I\pm}_u $, where $b^{I\pm}_u$ and $\hat{m}^{I\pm}_u$ are computed using \autoref{corollary:identity_solutions}.
\begin{equation}
    b_u^{I\pm} = \frac{\rho}{\sqrt{\lambda_3^2 - \lambda_1^2}}\left(\lambda_3\sqrt{1-\lambda_1^2}\;\hat{u}_1\; \; \pm\; \lambda_1\sqrt{\lambda_3^2 -1}\;\hat{u}_3 \right),
    \nonumber
\end{equation}

\begin{equation}
    \hat{m}_u^{I\pm} = \frac{\lambda_3 - \lambda_1}{\rho\sqrt{\lambda_3^2 - \lambda_1^2}}\left(-\sqrt{1-\lambda_1^2}\;\hat{u}_1\;  \pm\;\sqrt{\lambda_3^2 -1}\;\hat{u}_3 \right).
    \label{eq:30}
\end{equation}

Similarly, we can write the solutions to the equation, $QV-I=b_v\;\otimes \;\hat{m}_v$, as $R^{I\pm}_vV = b^{I\pm}_v\;\otimes \;\hat{m}^{I\pm}_v $ as
\begin{equation}
    b_v^{I\pm} = \frac{\rho}{\sqrt{\lambda_3^2 - \lambda_1^2}}\left(\lambda_3\sqrt{1-\lambda_1^2}\;\hat{v}_1\; \; \pm\; \lambda_1\sqrt{\lambda_3^2 -1}\;\hat{v}_3 \right),
    \nonumber
\end{equation}

\begin{equation}
\label{eq:31}
    \hat{m}_v^{I\pm} = \frac{\lambda_3 - \lambda_1}{\rho\sqrt{\lambda_3^2 - \lambda_1^2}}\left(-\sqrt{1-\lambda_1^2}\;\hat{v}_1\;  \pm\;\sqrt{\lambda_3^2 -1}\;\hat{v}_3 \right).
\end{equation}\\[10pt]

\begin{figure}[hbt!]
    \centering
    \begin{minipage}[b]{0.32\textwidth}
    \centering
    \includegraphics[scale=0.59]{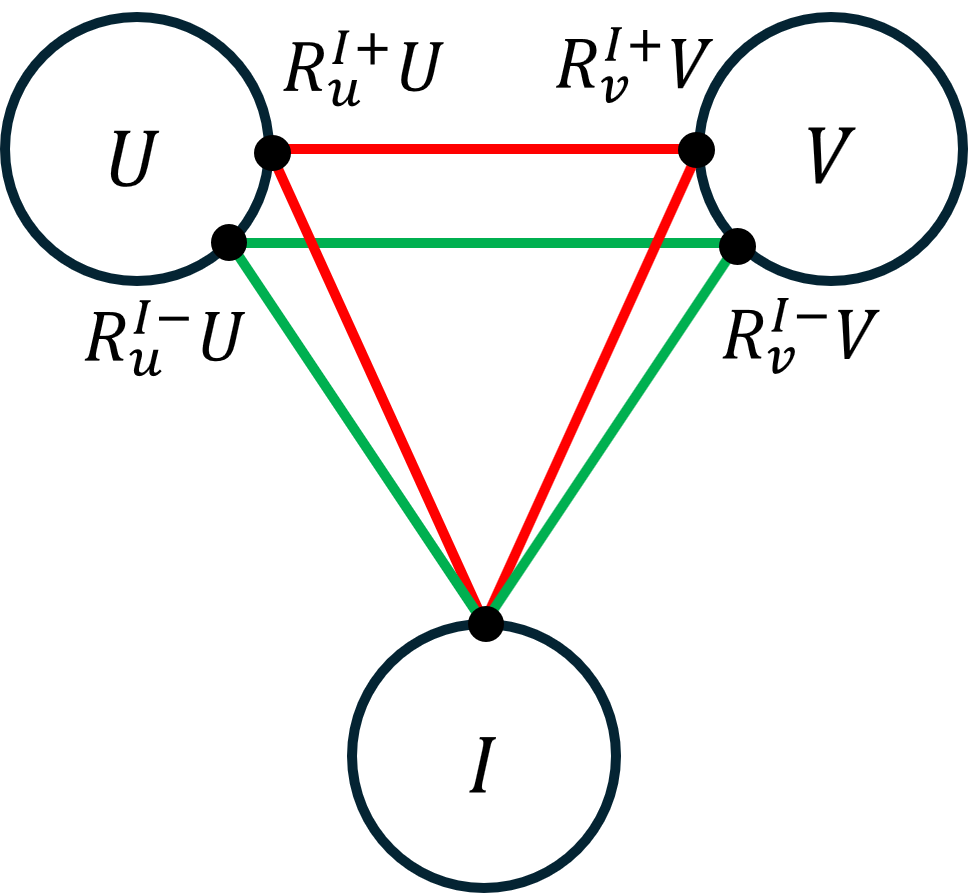}
    \subcaption{}
    \label{fig:firstkind_a}
    \end{minipage}%
    \begin{minipage}[b]{0.32\textwidth}
    \centering
    \includegraphics[scale=0.59]{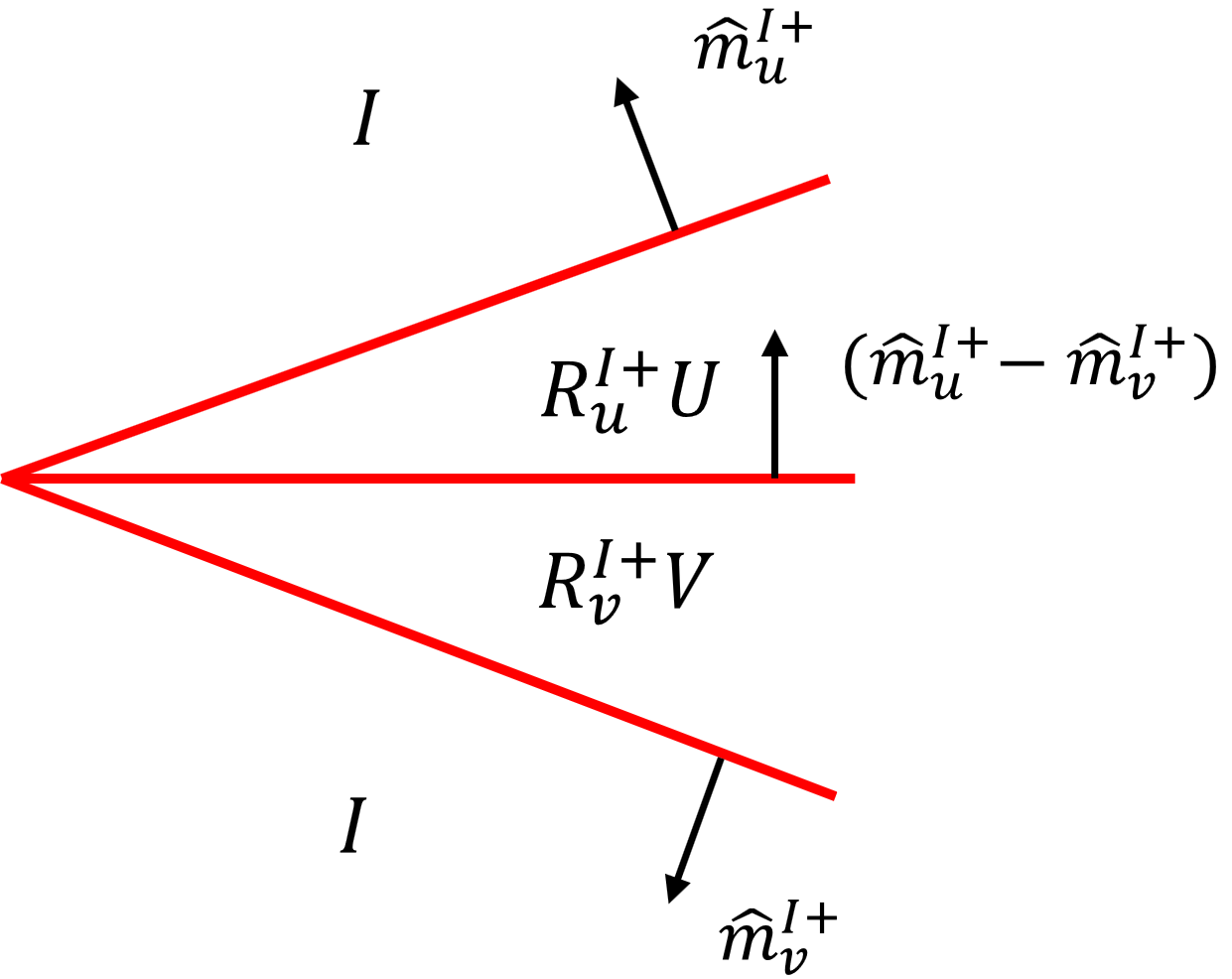}
    \subcaption{}
    \label{fig:firstkind_b}
    \end{minipage}
    \begin{minipage}[b]{0.32\textwidth}
    \centering
    \includegraphics[scale=0.59]{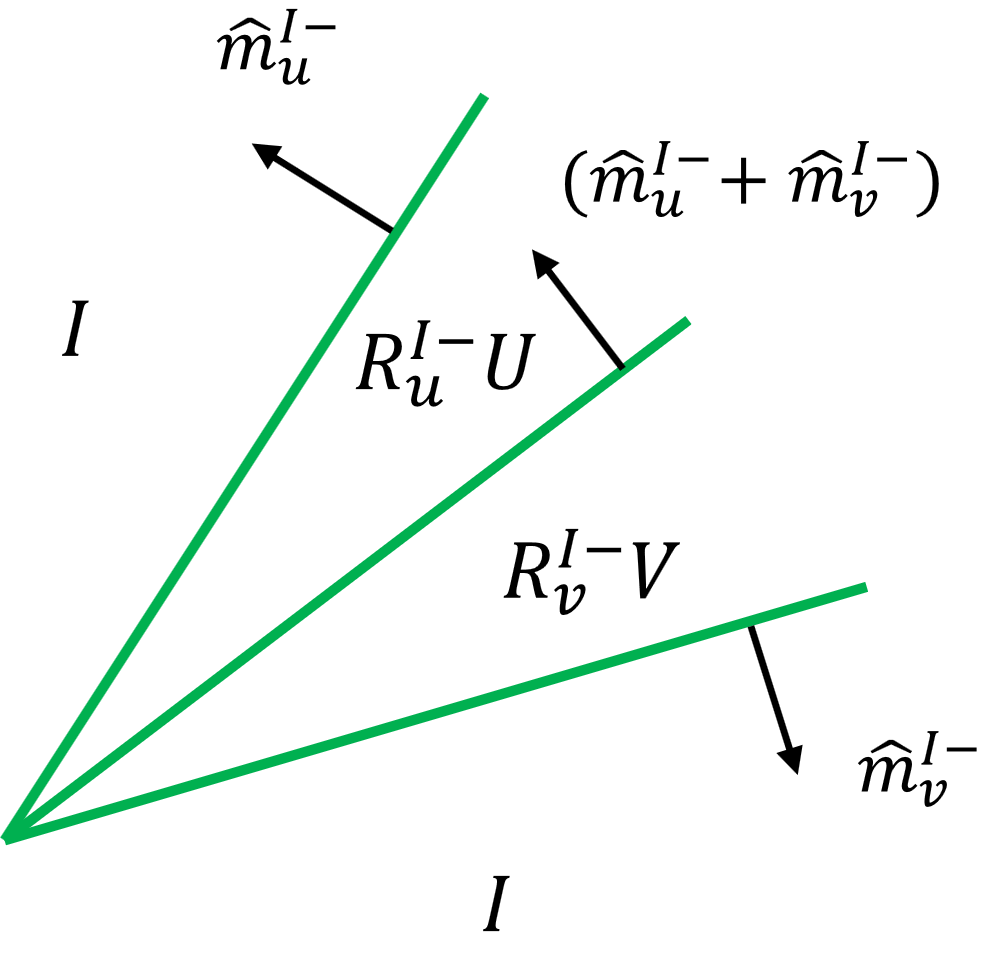}
    \subcaption{}
    \label{fig:firstkind_c}
    \end{minipage}
    \caption{Two distinct planar triple clusters between two commuting variants of martensite and austenite, represented by two different colours (red and green)  (a) Compatibility diagram in strain space (b) Triple cluster implied by \eqref{eq:35} and \eqref{eq:36}  (c) Triple cluster implied by \eqref{eq:37}. (b) and (c) lie in different planes. We call (b) and (c), triple clusters of first kind}
    \label{fig:firstkind}
\end{figure}

Writing the spectral decomposition of $U$ and $V$,
\begin{gather}
U= \lambda_1\;\hat{u}_1\otimes \hat{u}_1 + \hat{u}_2\otimes \hat{u}_2 + \lambda_3\;\hat{u}_3\otimes \hat{u}_3,\qquad V= \lambda_1\;\hat{v}_1\otimes \hat{v}_1 + \hat{u}_2\otimes \hat{u}_2 + \lambda_3\;\hat{v}_3\otimes \hat{v}_3.\nonumber
\end{gather}
$U$ and $V$ commute, so there exists a joint eigenbasis for $U$ and $V$, see \autoref{lem:commutation}. Since $\lambda_i$ are all distinct, eigenvectors are mutually orthogonal, therefore, we must have $\hat{v}_1 = \hat{u}_3$ and $\hat{v}_3 = \hat{u}_1$, or else $U=V$, which is prohibited. \\[5pt]
Rewriting \eqref{eq:31}, we have the solutions of $QV-I=b_v\;\otimes \;\hat{m}_v$ as
\begin{equation}
    b_v^{I\pm} = \frac{\rho}{\sqrt{\lambda_3^2 - \lambda_1^2}}\left(\lambda_3\sqrt{1-\lambda_1^2}\;\hat{u}_3\; \; \pm\; \lambda_1\sqrt{\lambda_3^2 -1}\;\hat{u}_1 \right),
    \nonumber
\end{equation}

\begin{equation}
\label{eq:32}
    \hat{m}_v^{I\pm} = \frac{\lambda_3 - \lambda_1}{\rho\sqrt{\lambda_3^2 - \lambda_1^2}}\left(-\sqrt{1-\lambda_1^2}\;\hat{u}_3\;  \pm\;\sqrt{\lambda_3^2 -1}\;\hat{u}_1 \right).
\end{equation}

Using the relation, $\lambda_3=\frac{\lambda_1}{\sqrt{2\lambda_1^2 -1}} \implies\lambda_3\sqrt{1-\lambda_1^2}=\lambda_1\sqrt{\lambda_3^2 -1}$, in \eqref{eq:30} and \eqref{eq:32}, we have

\begin{gather}\nonumber
    b_u^{I\pm} = C(\hat{u}_1 \pm \hat{u}_3),\\[3 pt]\nonumber
    b_v^{I\pm} = C(\hat{u}_3 \pm \hat{u}_1),\\[3 pt]
    \nonumber
    \hat{m}_u^{I\pm} = D(-\sqrt{ 2\lambda_1^2-1}\; \hat{u}_1 \pm \hat{u}_3),\\[3 pt]
    \hat{m}_v^{I\pm} = D(-\sqrt{2\lambda_1^2-1}\; \hat{u}_3 \pm \hat{u}_1).
    \label{eq:33}
\end{gather}
where,
\begin{gather}
    C =\frac{\lambda _1 \sqrt{1-\lambda _1^2} \rho }{\sqrt{2 \lambda _1^2-1} \sqrt{\lambda _1^2 \left(\frac{1}{2 \lambda _1^2-1}-1\right)}}, \;\;\;\;\;\;\;\;\;    \text{and}    \;\;\;\;\;\;\;\;\;\; D=\frac{\sqrt{\lambda _1^2 \left(\frac{1}{2 \lambda _1^2-1}-1\right)} \sqrt{\frac{\lambda _1^2}{2 \lambda _1^2-1}-1}}{\lambda _1 \left(\frac{1}{\sqrt{2 \lambda _1^2-1}}+1\right) \rho }.\label{eq:34}
\end{gather}

We can write, $b^{I+}_u = b_v^{I+} = b^{I+} (say)$ and $b^{I-}_u = -b_v^{I-} = b^{I-} (say)$. Then, we have
\begin{gather}
    \nonumber R_u^{I+}U -I = b^{I+} \otimes \;\hat{m}_u^{I+},\\[3pt]
    R_v^{I+}V -I = b^{I+} \otimes \;\hat{m}_v^{I+}.\label{eq:35}
\end{gather}
Subtracting \eqref{eq:35}\textsubscript{2} from \eqref{eq:35}\textsubscript{1}, we have 
\begin{gather}
    R_u^{I+}U - R_v^{I+}V  = b^{I+} \otimes \;(\hat{m}_u^{I+}- \hat{m}_v^{I+}).\label{eq:36}
\end{gather}

Similarly, we can write the following set of equations,
\begin{gather}
    \nonumber R_u^{I-}U -I = b^{I-} \otimes \;\hat{m}_u^{I-},\\[3pt]
    \nonumber R_v^{I-}V -I = b^{I-} \otimes \;-\hat{m}_v^{I-},\\[3pt]
    R_u^{I-}U - R_v^{I-}V  = b^{I-} \otimes \;(\hat{m}_u^{I-} + \hat{m}_v^{I-}).\label{eq:37}
\end{gather}

If $\frac{1}{\sqrt{2}}<\lambda _1<1$, then we have $C>0$,  $\;b^{I+}=C(\hat{u}_1+\hat{u}_3)$ and  $b^{I-}=C(\hat{u}_1-\hat{u}_3)$. Clearly, $\; b^{I+} \nparallel b^{I-}$, which shows that the triple cluster implied by \eqref{eq:35},\eqref{eq:36} is distinct from the one implied by \eqref{eq:37}. We refer to these two planar clusters as triple clusters of the first kind.
\end{proof}

\begin{figure}
    \centering
    \begin{minipage}[b]{0.32\textwidth}
    \centering
    \includegraphics[scale=0.59]{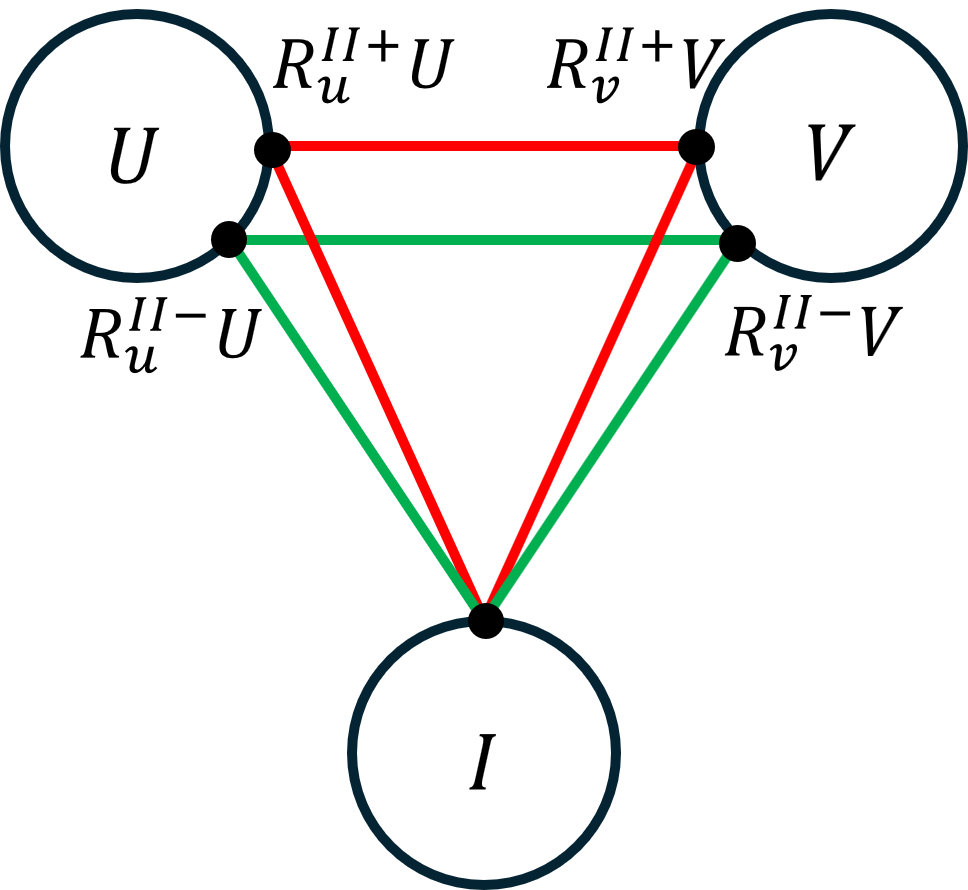}
    \subcaption{}
    \label{fig:secondkind_a}
    \end{minipage}%
    \begin{minipage}[b]{0.32\textwidth}
    \centering
    \includegraphics[scale=0.59]{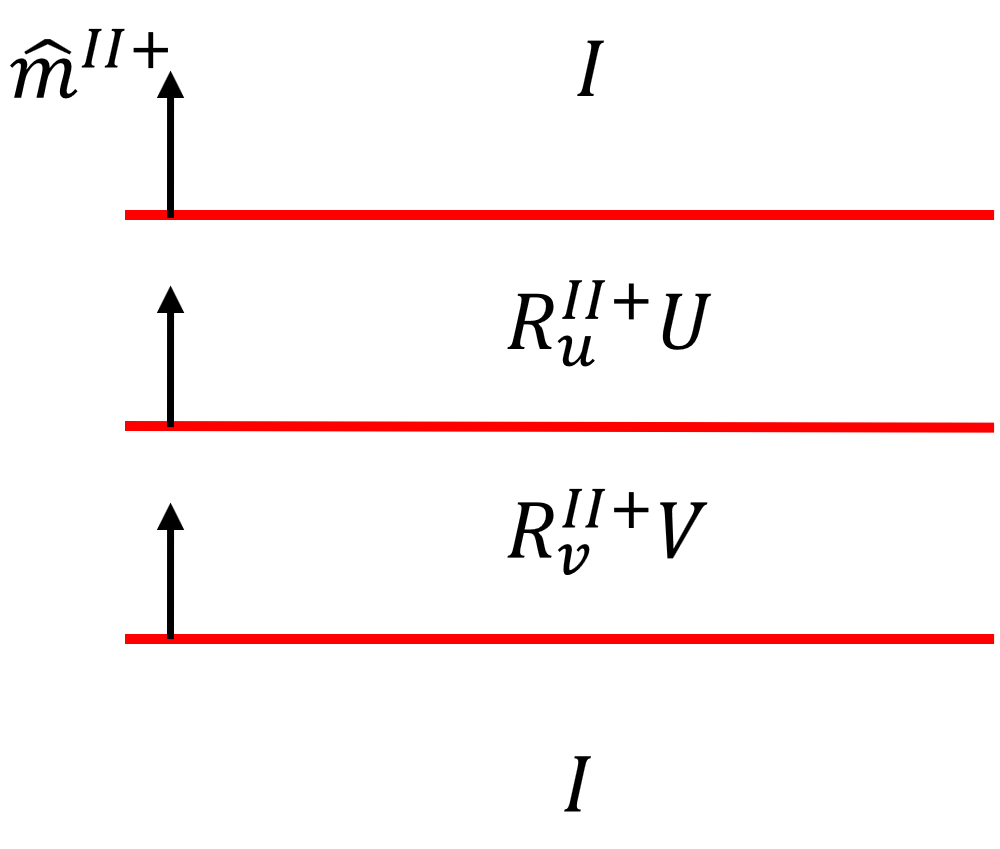}
    \subcaption{}
    \label{fig:secondkind_b}
    \end{minipage}
    \begin{minipage}[b]{0.32\textwidth}
    \centering
    \includegraphics[scale=0.59]{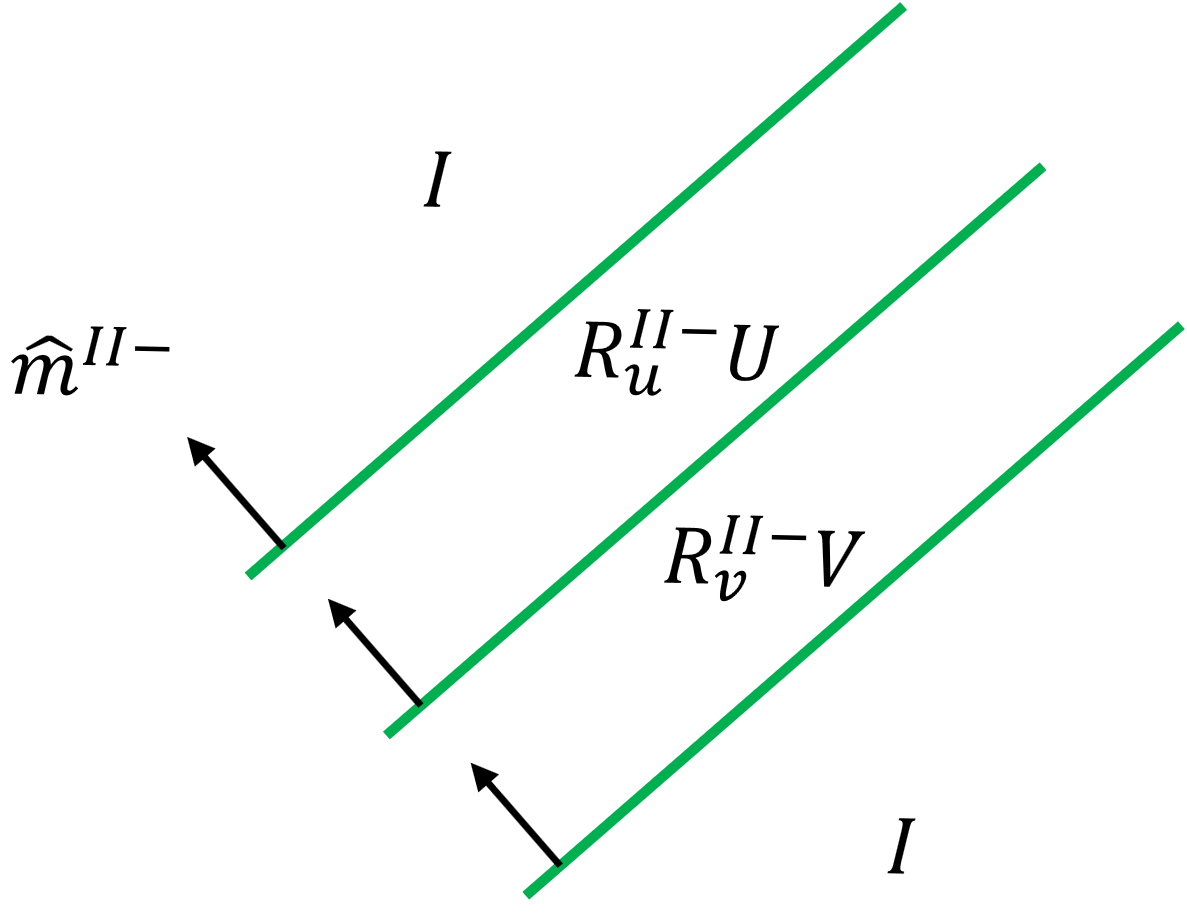}
    \subcaption{}
    \label{fig:secondkind_c}
    \end{minipage}
    \caption{Two distinct planar triple clusters between two commuting variants and austenite, represented by two different colors (red and green) (a) Compatibility diagram in strain space (b) Triple cluster implied by \eqref{eq:43} and \eqref{eq:44}  (c) Triple cluster implied by \eqref{eq:45}. Normals $\hat{m}^{II+}$ in (b) and $\hat{m}^{II-}$ in (c) are different. We call (b) and (c), triple clusters of second kind}
    \label{fig:secondkind}
\end{figure}
\begin{theorem}
    \label{thm:secondkind}
 Let $U,V \in \mathbb{R}^{3\times3}_{\scriptstyle{+sym}} $ be two variants of martensite such that $UV=VU$ and the eigenvalues of $U$ satisfy, $0<\lambda_1<\lambda_2 = 1<\lambda_3$. Suppose that the eigenvector corresponding to $\lambda_2$ is same for both $U$ and $V$. If the other two eigenvalues of $U$ satisfy
\begin{equation}
    \lambda_3 = \sqrt{2-\lambda_1^2},\label{eq:38}
\end{equation}

then there exist rotations $R_u^{II+}$, $R_v^{II+}$, $R_u^{II-}$, $R_v^{II-}$ $\in SO(3)$, vectors $b_u^{II+}$, $b_v^{II+}$, $b_u^{II-}$, $b_v^{II-}\;\in \mathbb{R}^3$ and normals $\hat{m}^{II+}$, $\hat{m}^{II-}\in \mathbb{S}^2$ such that the following sets of equations hold simultaneously.

\begin{gather}\nonumber
    R_u^{II+}U - I = b_u^{II+} \otimes \hat{m}^{II+},\\[5 pt]\nonumber
    R_v^{II+}V - I = b_v^{II+} \otimes \hat{m}^{II+},\\[5 pt]
    R_u^{II+}U - R_v^{II+}V =  (\hat{b}_u^{II+} +\hat{b}_v^{II+})\; \otimes \hat{m}^{II+}.\label{eq:39}
\end{gather}
\begin{gather}\nonumber
    R_u^{II-}U - I = b_u^{II-} \otimes \hat{m}^{II-},\\[5 pt]\nonumber
    R_v^{II-}V - I = b_v^{II-} \otimes \hat{m}^{II-},\\[5 pt]
    R_u^{II-}U - R_v^{II-}V =  (\hat{b}_u^{II-} -\hat{b}_v^{II-})\; \otimes \hat{m}^{II-}.\label{eq:40}
\end{gather}

\eqref{eq:39} and \eqref{eq:40} allow two distinct planar triple clusters to form between the two variants, $U, V$ and austenite, $I$ \textit{(triple clusters of the second kind, see \autoref{fig:secondkind})}.\\[5pt]
\end{theorem}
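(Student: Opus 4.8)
The plan is to mirror the structure of the proof of \autoref{thm:firstkind}, but now exploit the \emph{Type-II} solutions of the austenite--variant interface equation rather than the Type-I ones, so that the two triple clusters share a common normal instead of a common shear direction. First I would record that since $U$ and $V$ are symmetry related, $V = RUR^T$ for some $R\in SO(3)$, hence they share the spectrum $\lambda_1 < \lambda_2 = 1 < \lambda_3$; in particular each of $U,V$ satisfies the $\lambda_2=1$ condition and forms a compatible interface with $I$. Applying \autoref{corollary:identity_solutions} gives the two solutions $R_u^{II\pm}U - I = b_u^{II\pm}\otimes\hat m_u^{II\pm}$ for $U$, and likewise $R_v^{II\pm}V - I = b_v^{II\pm}\otimes\hat m_v^{II\pm}$ for $V$, written in terms of the eigenvectors $\hat u_i$ of $U$ and $\hat v_i$ of $V$. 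Since $UV=VU$ and the eigenvalues are distinct, \autoref{lem:commutation} forces a shared eigenbasis, and because the $\lambda_2$-eigenvector is common while $U\neq V$, the only possibility is $\hat v_1 = \hat u_3$, $\hat v_3 = \hat u_1$ (the $\lambda_1$ and $\lambda_3$ eigenvectors are swapped). Rewriting the $V$-solutions in the $\{\hat u_1,\hat u_2,\hat u_3\}$ basis then puts both sets of solutions on equal footing.

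Next I would substitute the hypothesis $\lambda_3 = \sqrt{2-\lambda_1^2}$, equivalently $\lambda_3^2 - 1 = 1 - \lambda_1^2$, into the explicit formulae from \autoref{corollary:identity_solutions}. The key algebraic consequence is that the coefficients of $\hat u_1$ and $\hat u_3$ inside the normal vectors $\hat m_u^{II\pm}$ become equal in magnitude, so that $\hat m_u^{II+}$ is parallel to $\hat m_v^{II+}$ (and similarly for the $-$ branch), since the $V$-normals are the same expressions with $\hat u_1 \leftrightarrow \hat u_3$ interchanged. After normalizing, $\hat m_u^{II+} = \hat m_v^{II+} =: \hat m^{II+}$ and $\hat m_u^{II-} = \hat m_v^{II-} =: \hat m^{II-}$. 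This gives the first two equations of each of \eqref{eq:39} and \eqref{eq:40}. Subtracting the $V$-equation from the $U$-equation in each branch yields $R_u^{II\pm}U - R_v^{II\pm}V = (b_u^{II\pm} \mp b_v^{II\pm})\otimes \hat m^{II\pm}$ — note the sign pattern comes from how the $\pm$ sits in the $b$-vectors under the eigenvector swap — which is exactly the third equation in \eqref{eq:39}, \eqref{eq:40}. The notation $\hat b$ in the statement should presumably be read as $b$ (or one can carry the norms explicitly); this bookkeeping needs care.

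Finally I would verify that the two clusters are genuinely distinct: one computes $\hat m^{II+}$ and $\hat m^{II-}$ explicitly in the $\hat u_1,\hat u_3$ plane and checks $\hat m^{II+}\nparallel \hat m^{II-}$, exactly as the $b^{I+}\nparallel b^{I-}$ check was done in \autoref{thm:firstkind}; here the condition $0<\lambda_1<1$ guarantees the relevant coefficients are nonzero and the two normals are not collinear. I expect the main obstacle to be purely computational: keeping the $\pm$ signs, the eigenvector swap $\hat v_1=\hat u_3$, $\hat v_3=\hat u_1$, and the normalization constant $\rho$ all consistent so that the shear vectors land in a common plane and the interface normals genuinely coincide within each branch. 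A secondary subtlety is confirming that the resulting geometric configuration is a bona fide \emph{planar} triple cluster — i.e.\ that $b_u^{II+}$, $b_v^{II+}$ and $b_u^{II+}\mp b_v^{II+}$ are coplanar (here automatic, since two of them already span the plane) and that the normals, being equal, trivially lie in a common plane — which is what \autoref{fig:secondkind} depicts.
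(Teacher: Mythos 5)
Your proposal is correct and follows essentially the same route as the paper: apply \autoref{corollary:identity_solutions} to both wells, use commutation plus the shared $\lambda_2$-eigenvector to deduce the swap $\hat v_1=\hat u_3$, $\hat v_3=\hat u_1$, substitute $\sqrt{\lambda_3^2-1}=\sqrt{1-\lambda_1^2}$ so that the normals in each branch become (anti)parallel, subtract, and verify $\hat m^{II+}\nparallel\hat m^{II-}$ from $0<\lambda_1<1$. The sign bookkeeping you flag is exactly where the paper also absorbs a minus sign (it finds $\hat m_u^{II+}=-\hat m_v^{II+}$ and folds the sign into $b_v^{II+}$), so your caveat is well placed but not a gap.
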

\begin{proof}The proof is similar to that of \autoref{thm:firstkind}. Substituting, $\lambda_3 = \sqrt{2-\lambda_1^2} \; \implies \sqrt{\lambda_3^2 -1} = \sqrt{1-\lambda_1^2}$, in equations \eqref{eq:32} and \eqref{eq:34}, we get

\begin{gather}\nonumber
    b_u^{II\pm} =E(\sqrt{2-\lambda_1^2}\;\hat{u}_1 \pm \lambda_1\hat{u}_3),\\[3 pt]
    \nonumber b_v^{II\pm} = E(\sqrt{2-\lambda_1^2}\;\hat{u}_3 \pm \lambda_1\hat{u}_1),\\[3 pt]
    \nonumber
    \hat{m}_u^{II\pm} = F(-\hat{u}_1 \pm \hat{u}_3),\\[3 pt]
    \hat{m}_v^{II\pm} = F(- \hat{u}_3 \pm \hat{u}_1).\label{eq:41}
\end{gather}
where,
\begin{gather}
    E =\frac{\rho }{\sqrt{2}}, \;\;\;\;\;\;\;\;\;    \text{and}    \;\;\;\;\;\;\;\;\;\; F=\frac{\sqrt{2-\lambda _1^2}-\lambda _1}{\sqrt{2} \rho }.\label{eq:42}
\end{gather}

We can write, $\hat{m}^{II+}_u = -\hat{m}_v^{II+} = \hat{m}^{II+} (say)$ and $m^{II-}_u = m_v^{II-} = m^{II-} (say)$. Then, we have
\begin{gather}
    \nonumber R_u^{II+}U -I = b_u^{II+} \otimes \;\hat{m}^{II+},\\[3pt]
    R_v^{II+}V -I = -b_v^{II+} \otimes \;\hat{m}^{II+}.\label{eq:43}
\end{gather}
Subtracting \eqref{eq:43}\textsubscript{2} from \eqref{eq:43}\textsubscript{1}, we have 
\begin{gather}
    R_u^{II+}U - R_v^{II+}V  = (b_u^{II+} + b_v^{II+}) \otimes \;\hat{m}^{II+}.\label{eq:44}
\end{gather}

Similarly, we can write the following set of equations,
\begin{gather}
    \nonumber R_u^{II-}U -I = b_u^{II-} \otimes \;\hat{m}^{II-},\\[3pt]
    \nonumber R_v^{II-}V -I = b_v^{II-} \otimes \;\hat{m}^{II-},\\[3pt]
    R_u^{II-}U - R_v^{II-}V  = (b_u^{II-} - b_v^{II-}) \otimes \;\hat{m}^{II-}.\label{eq:45}
\end{gather}

If $0<\lambda _1<1$, then we have $F>0,\; \hat{m}^{II+} = F(-\hat{u}_1+\hat{u}_3)$, and $\hat{m}^{II-}=F(-\hat{u}_1-\hat{u}_3)$. Clearly, $\hat{m}^{II+}\neq \hat{m}^{II-}$, which shows that the triple cluster implied by \eqref{eq:43},\eqref{eq:44} is distinct from the one implied by \eqref{eq:45}. We refer to these two planar clusters as triple clusters of the second kind. \\
\end{proof}

\begin{remark}
    \label{rem:27_38}
The requirement of \autoref{thm:firstkind} and \autoref{thm:secondkind} cannot be satisfied in cubic to tetragonal transformations, since two of the eigenvalues are equal for this transformation. For cubic to orthorhombic transformations of $\langle110\rangle_{cubic}$ type given by \eqref{eq:19}, these theorems hold only if $d=1$ and the other two eigenvalues satisfy either \eqref{eq:27} or \eqref{eq:38}. In this case, the compound twins can form two planar triple clusters with austenite. For cubic to monoclinic-II transformations, given by \eqref{eq:22}, we again require that $d=1$ and the other two eigenvalues satisfy either \eqref{eq:27} or \eqref{eq:38}. In this case, the non-conventional compound domains can form two planar triple clusters with austenite. Similarly, for tetragonal to orthorhombic transformations (\eqref{eq:23},\eqref{eq:24}), these theorems can be satisfied if $c=1$ and the other two eigenvalues satisfy either \eqref{eq:27} or \eqref{eq:38}. Here, compound twins can form two planar triple clusters with austenite. For tetragonal to monoclinic transformation \eqref{eq:25}, we require $d=1$, and either \eqref{eq:27} or \eqref{eq:38} holds. In this case, non conventional compound domains can form triple clusters with austenite. For cubic to monoclinic-I transformation \eqref{eq:21} and orthorhombic to monoclinic transformation \eqref{eq:26}, it is not possible to satisfy the premises of these theorems.
\end{remark}
\begin{remark}
  \label{rem:bothsolutions}  
The two triple clusters as shown in \autoref{fig:firstkind} \textit{(or \autoref{fig:secondkind})} imply that both the solutions of twinning equation between two variants $U$ and $V$ form triple clusters of the first \textit{(or the second kind)} with austenite. This can be shown simply by premultiplying the compatibility equations of planar clusters with appropriate rotations. To see this, premultiply \eqref{eq:35},\eqref{eq:36} by $(R_u^{I+})^T$ and \eqref{eq:37} by $(R_u^{I-})^T$ to obtain the following set of equations.
\begin{gather}
    \nonumber U -(R_u^{I+})^T = (R_u^{I+})^Tb^{I+} \otimes \;\hat{m}_u^{I+},\\[3pt]
    \nonumber(R_u^{I+})^TR_v^{I+}V -(R_u^{I+})^T = (R_u^{I+})^Tb^{I+} \otimes \;\hat{m}_v^{I+},\\[3pt]
    U - (R_u^{I+})^TR_v^{I+}V  = (R_u^{I+})^Tb^{I+} \otimes \;(\hat{m}_u^{I+}- \hat{m}_v^{I+}).\label{eq:46}\\[12pt]
    \nonumber U -(R_u^{I-})^T = (R_u^{I-})^Tb^{I-} \otimes \;\hat{m}_u^{I-},\\[3pt]
     \nonumber(R_u^{I-})^TR_v^{I-}V -(R_u^{I-})^T = (R_u^{I-})^Tb^{I-} \otimes \;-\hat{m}_v^{I-},\\[3pt]
     U - (R_u^{I-})^TR_v^{I-}V  = (R_u^{I-})^Tb^{I-} \otimes \;(\hat{m}_u^{I-} + \hat{m}_v^{I-}).\label{eq:47}
\end{gather}\\[10 pt]

\begin{figure}[hbt!]
    \centering
    \begin{minipage}[b]{0.5\textwidth}
    \centering
    \includegraphics[scale=0.7]{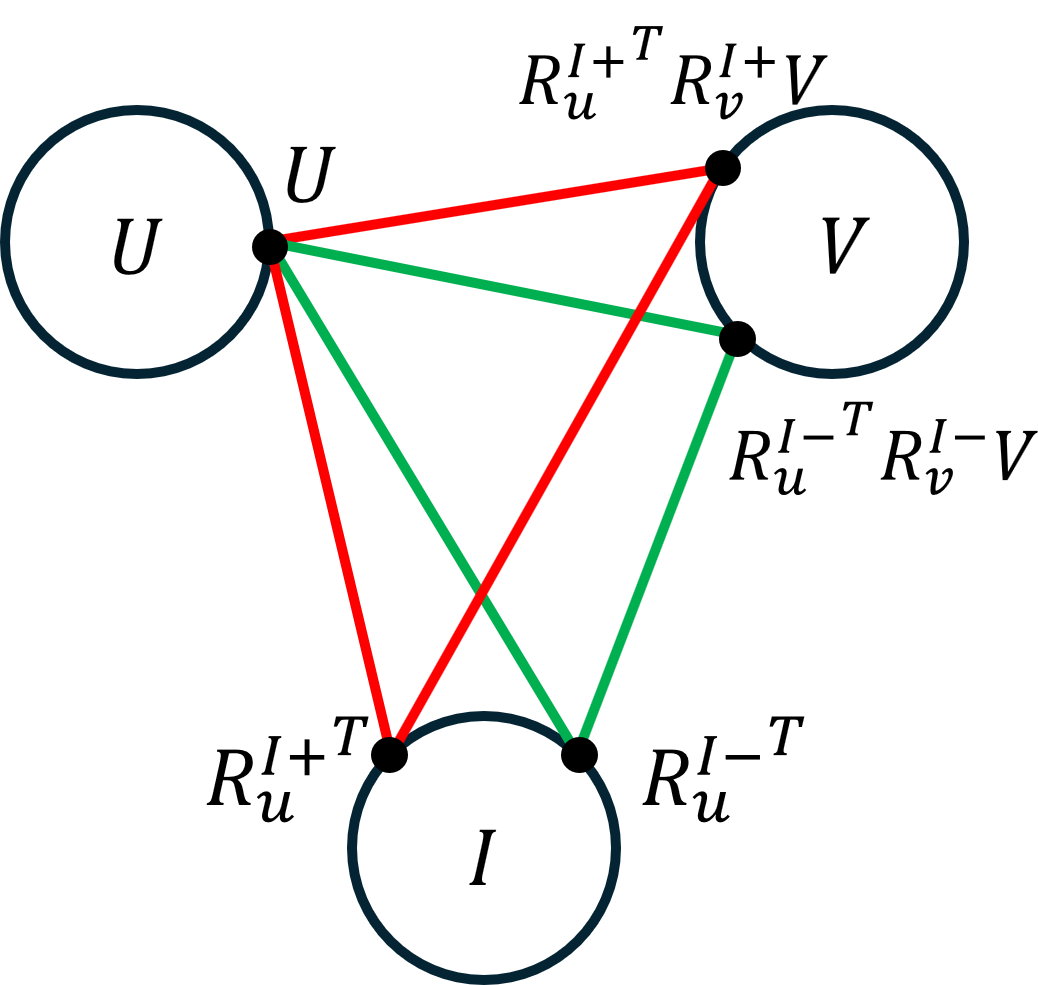}
    \subcaption{}
    \label{fig:bothsolutions_a}
    \end{minipage}%
    \begin{minipage}[b]{0.5\textwidth}
    \centering
    \includegraphics[scale=0.7]{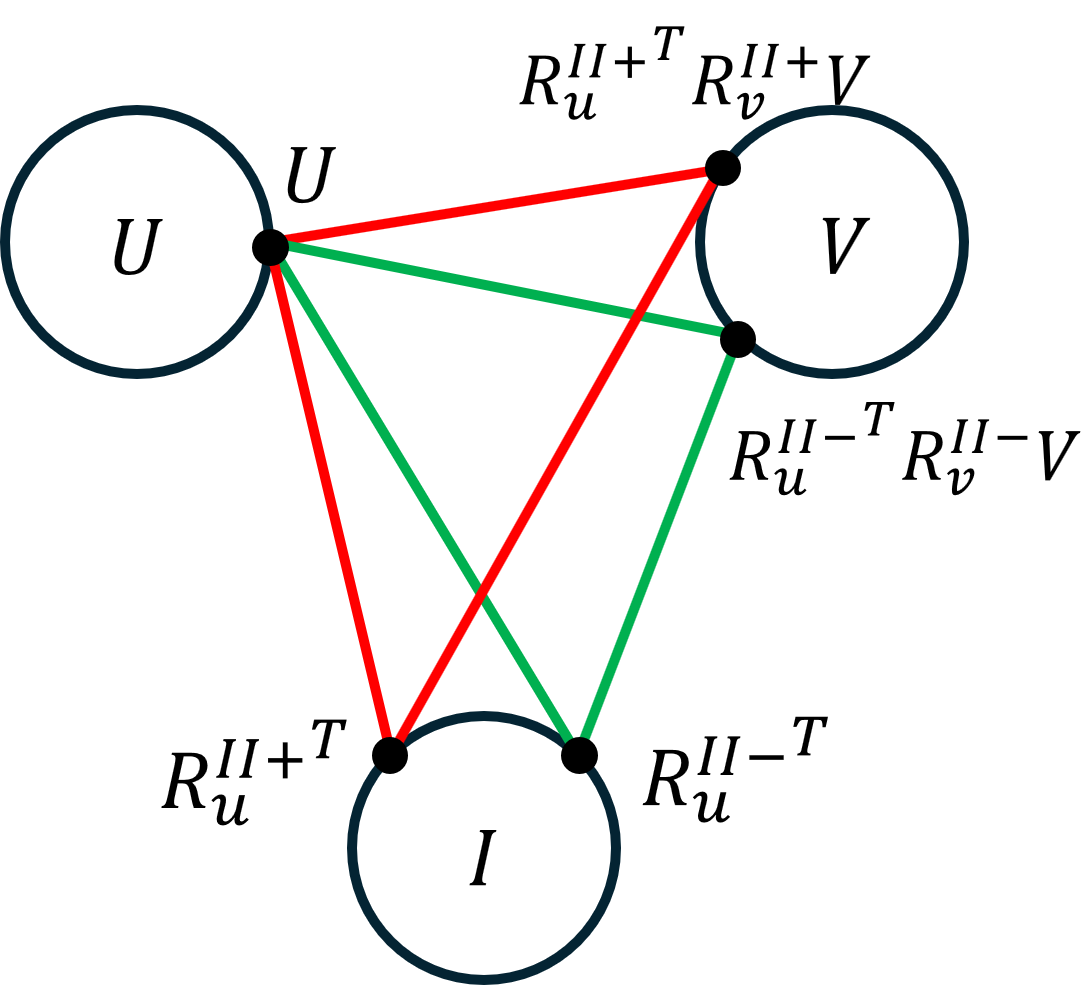}
    \subcaption{}
    \label{fig:bothsolutions_b}
    \end{minipage}
    \caption{The two distinct triple clusters in \autoref{thm:firstkind} and \autoref{thm:secondkind} imply that both the solutions of the twinning equation form triple clusters with austenite, (a) is obtained from \autoref{fig:firstkind_a} by multiplying the compatibility equations with appropriate rotations (b) is obtained from \autoref{fig:secondkind_a} in the same manner.}
    \label{fig:bothsolutions}
\end{figure}

\autoref{fig:bothsolutions_a} shows the compatibility equations \eqref{eq:46} and \eqref{eq:47}. Similarly, we can obtain the equations for compatibility shown in \autoref{fig:bothsolutions_b} by premultiplying \eqref{eq:43},\eqref{eq:44} by $(R_u^{II+})^T$ and \eqref{eq:45} by $(R_u^{II-})^T$. In both the cases, we see that both the solutions of twinning equation between $U$ and $V$ can form distinct triple clusters with austenite. This is not possible if $U$ and $V$ form Type I/II twins, as \autoref{thm:dellaportaR} does not allow Type I and Type II solution to form triple clusters \eqref{eq:CC2} simultaneously.
\end{remark}
\autoref{thm:firstkind} and \autoref{thm:secondkind} enable a pair of commuting compound domains to form two distinct triple clusters with austenite. An important question that arises is whether compound domains always give rise to exactly two distinct triple clusters, or do other possibilities exist. Moreover, what possibilities exist for compound domains that do not commute? These questions are addressed in detail in \autoref{sec:triple_compound}.

\subsection{Triple clusters with compound domains}
\label{sec:triple_compound}
Using the commutation property, we can construct triple clusters involving two compound domains and austenite. These compound domains may be the conventional compound twins or non-conventional compound domains. These triple clusters helps in eliminating the transition layer between austenite and martensite. We discuss this in detail in \autoref{sec:extreme_comp}. However, it must be noted that the commutation property doesn't encompass all cases of compound domains. For example, the conventional compound twins in cubic to monoclinic transformations (both I \& II),  do not commute, and hence are not covered under the commutation property. In fact, in all the transformations discussed above, which have monoclinic martensite, the conventional compound twins fail to commute. To address these more general cases, we derive the general conditions under which any pair of compound domains, whether or not they commute, can form triple clusters of the first or the second kind with austenite, see  \autoref{thm:compound_triple}. The conditions established in \autoref{thm:firstkind} and  \autoref{thm:secondkind} can be seen as special instances of these generalized criteria. We first prove a proposition, which we use later in the derivation .\\[5 pt]
\begin{proposition}
\label{prop:compound}
Let $U\in \mathbb{R}^{3 \times 3}_{+sym}$ be a variant of martensite such that $\lambda_1,\;\lambda_2=1,\; \lambda_3$ are the ordered eigenvalues of $U$. If $(R_i\:,b_i,\:\hat{m}_i)$, $i=1,2$ be the solutions of the equation $QU-I=b\:\otimes\:\hat{m}$, then, $R_1\hat{u}_2 = R_2\hat{u}_2 = \hat{u}_2$. Also, if $R_1$ is a rotation of $\theta$ about $\hat{u}_2$, then $R_2$ is a rotation of $-\theta$ about $\hat{u}_2$ such that $R_1R_2=R_2R_1=I$, where $\theta$ is given by

\begin{equation}
    \theta = \cos^{-1}\frac{1+\lambda_1\lambda_3}{\lambda_1 +\lambda_3} = \sin^{-1}\frac{\sqrt{(1-\lambda_1^2)(\lambda_3^2-1)}}{\lambda_1 +\lambda_3}.\nonumber
\end{equation}
\end{proposition}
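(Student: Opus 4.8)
The plan is to read off almost everything from the explicit solution formulas of \autoref{corollary:identity_solutions} and then close with a single $2\times2$ determinant computation carried out in the plane orthogonal to $\hat{u}_2$.

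First I would note that in \eqref{eq:11} both $b_i$ and $\hat{m}_i$ are linear combinations of $\hat{u}_1$ and $\hat{u}_3$ alone, so both lie in the plane $P=\hat{u}_2^{\perp}=\mathrm{span}(\hat{u}_1,\hat{u}_3)$; in particular $\hat{m}_i\cdot\hat{u}_2=0$. Applying $R_iU-I=b_i\otimes\hat{m}_i$ to $\hat{u}_2$ and using $U\hat{u}_2=\lambda_2\hat{u}_2=\hat{u}_2$ gives $R_i\hat{u}_2-\hat{u}_2=(\hat{m}_i\cdot\hat{u}_2)\,b_i=0$, so $R_i\hat{u}_2=\hat{u}_2$ for $i=1,2$. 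A rotation in $SO(3)$ fixing the nonzero vector $\hat{u}_2$ is a rotation about the axis $\hat{u}_2$; write $\theta_i$ for its angle. In particular $R_i$ leaves $P$ invariant, and so does $U$ (since $\hat{u}_1,\hat{u}_3$ are eigenvectors of $U$).

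Next I would restrict the equation $R_iU-I=b_i\otimes\hat{m}_i$ to $P$. Because $R_i|_P\,U|_P-I|_P$ coincides with the rank-at-most-one operator $v\mapsto(\hat{m}_i\cdot v)\,b_i$ on $P$, its determinant vanishes. Choosing the orthonormal basis $\{\hat{u}_1,\hat{u}_3\}$ of $P$, in which $U|_P=\mathrm{diag}(\lambda_1,\lambda_3)$ and $R_i|_P$ is a planar rotation by $\theta_i$, a direct evaluation gives
\[
0=\det\big(R_i|_P\,U|_P-I|_P\big)=\lambda_1\lambda_3-(\lambda_1+\lambda_3)\cos\theta_i+1,
\]
hence $\cos\theta_i=\dfrac{1+\lambda_1\lambda_3}{\lambda_1+\lambda_3}$. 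The sine form follows from $\sin^2\theta_i=1-\cos^2\theta_i$ together with the identity $(\lambda_1+\lambda_3)^2-(1+\lambda_1\lambda_3)^2=(1-\lambda_1^2)(\lambda_3^2-1)$.

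Since $\cos\theta_1=\cos\theta_2$ forces $\theta_2=\pm\theta_1$, and since once $R_i$ is fixed the right-hand side $R_iU-I$ --- and hence the whole solution $(R_i,b_i,\hat{m}_i)$ --- is determined (a rank-one matrix has an essentially unique $b\otimes\hat m$ decomposition), the two distinct solutions of \autoref{corollary:identity_solutions} must be $R_1=R(\theta,\hat{u}_2)$ and $R_2=R(-\theta,\hat{u}_2)$; the degenerate case $\theta=0$ is excluded because $U-I=\mathrm{diag}(\lambda_1-1,0,\lambda_3-1)$ has rank $2$ when $\lambda_1<1<\lambda_3$, so $Q=I$ cannot solve the twinning equation. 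Rotations about a common axis commute and add their angles, so $R_1R_2=R_2R_1=R(0,\hat{u}_2)=I$. I do not expect a genuine obstacle in this argument; the only step needing a little care is the last one --- ruling out $\theta_2=+\theta_1$ and pinning the two solutions down as exactly $\pm\theta$ --- which is handled by the uniqueness of the rank-one decomposition once the rotation axis and angle are in hand.
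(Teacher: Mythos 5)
Your proof is correct, and while it opens the same way as the paper's (applying $R_iU-I=b_i\otimes\hat m_i$ to $\hat u_2$ and using $\hat m_i\cdot\hat u_2=0$ to pin the common rotation axis), the rest takes a genuinely different route. The paper extracts the component identities $\lambda_jR_i\hat u_j\cdot\hat u_k-\delta_{jk}=(b_i\cdot\hat u_k)(\hat m_i\cdot\hat u_j)$ and then uses the explicit sign relations between the $\kappa=\pm1$ solutions in \eqref{eq:11} (e.g.\ $b_1\cdot\hat u_3=-b_2\cdot\hat u_3$, $\hat m_1\cdot\hat u_3=-\hat m_2\cdot\hat u_3$) to read off $\cos\theta_1=\cos\theta_2$ and $\sin\theta_1=-\sin\theta_2$ directly, and finally computes $\cos\theta=R_1\hat u_1\cdot\hat u_1$ from the same formulas. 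You instead observe that $R_iU-I$ preserves the plane $P=\hat u_2^{\perp}$ and is rank at most one there, so $\det(R_i|_PU|_P-I|_P)=\lambda_1\lambda_3-(\lambda_1+\lambda_3)\cos\theta_i+1=0$, which yields the angle formula for both solutions in one stroke without touching the explicit components of $b_i,\hat m_i$; you then recover the sign distinction $\theta_2=-\theta_1$ from the fact that there are exactly two distinct solutions and a fixed $R_i$ determines the rank-one factor essentially uniquely (with $\theta=0$ excluded because $U-I$ has rank two). Your determinant argument is cleaner and more self-contained for the angle; the paper's component-wise argument has the advantage that the opposite-sense relation between the two rotations falls out of the $\kappa=\pm1$ structure automatically rather than via a counting/uniqueness step. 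Both are valid, and your closing steps (the distinctness argument, the exclusion of $\theta=0$, and the factorization identity $(\lambda_1+\lambda_3)^2-(1+\lambda_1\lambda_3)^2=(1-\lambda_1^2)(\lambda_3^2-1)$) all check out.
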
 
\begin{proof}
For $i=1,2$, we have $R_iU-I\:=\:b_i \otimes m_i$. Then,
\begin{gather}
\nonumber
(R_iU-I)\:\hat{u}_2=\:(b_i \otimes m_i)\:\hat{u}_2 \implies (R_iU-I)\:\hat{u}_2\:= (m_i.\hat{u}_2)\:b_i =0 \qquad \textit{using \eqref{eq:11}}\\[5 pt]
   \nonumber \implies R_iU_2\hat{u}_2= \hat{u}_2 \implies R_i\hat{u}_2 = \hat{u}_2 \implies\hat{u}_2 = R_i^T\hat{u}_2.
\end{gather}
This proves the first assertion. Now, we multiply the equation $R_iU-I\:=\:b_i \otimes m_i$, first by $\hat{u}_j$ and then take the dot product with $\hat{u}_k$ to get,
\begin{gather}
\lambda_jR_i\hat{u}_j.\hat{u}_k - \delta_{jk}= (\hat{b}_i.\hat{u}_k)(\hat{m}_i.\hat{u}_j)
\label{eq:48}
\end{gather}

From the expressions in \eqref{eq:11}, we find that \\[8 pt]
$\bullet \;(b_1.\hat{u}_1)=(b_2.\hat{u}_1)$
$\qquad \bullet \; (b_1.\hat{u}_3)=-(b_2.\hat{u}_3)$
$\qquad \bullet \;(m_1.\hat{u}_1)=(m_2.\hat{u}_1)$
$\qquad \bullet \;(m_1.\hat{u}_3)=-(m_2.\hat{u}_3)$\\[5pt]
where subscripts `1,2' are used for solutions corresponding to $\kappa=1,-1$ respectively. Using these relations, we can write,
\begin{equation}
    \begin{gathered}
      ( b_1.\hat{u}_1)(m_1.\hat{u}_1)=(b_2.\hat{u}_1)(m_2.\hat{u}_1) \implies R_1 \hat{u}_1.\hat{u}_1= R_2\hat{u}_1.\hat{u}_1, \\[5pt]
(b_1.\hat{u}_3)(m_1.\hat{u}_1) = -(b_2.\hat{u}_3)(m_2.\hat{u}_1) \implies R_1 \hat{u}_1.\hat{u}_3= - 
 \:R_2\hat{u}_1.\hat{u}_3, \\[5pt]
( b_1.\hat{u}_1)(m_1.\hat{u}_3)=-(b_2.\hat{u}_1)(m_2.\hat{u}_3) \implies R_1 \hat{u}_3.\hat{u}_1 =- R_2\hat{u}_3.\hat{u}_1, \\[5pt]
( b_1.\hat{u}_3)(m_1.\hat{u}_3)=(b_2.\hat{u}_3)(m_2.\hat{u}_3) \implies R_1 \hat{u}_3.\hat{u}_3 = R_2\hat{u}_3.\hat{u}_3.  \label{eq:49}
    \end{gathered}
\end{equation}
Let $\theta_1,\theta_2$ be the angles of rotation for the rotations $R_1,R_2$ respectively. Then, from equations $\eqref{eq:49}_1$ and $\eqref{eq:49}_4$, we have, $\cos{\theta_1} = \cos{\theta_2}$. Similarly, from equations $\eqref{eq:49}_2$ and $\eqref{eq:49}_3$, we have, $\sin{\theta_1} = -\sin{\theta_2}$
   $\implies\theta_1=-\theta_2=\theta \;(say)$. This means that the rotations $R_1$ and $R_2$ are equal in magnitude but opposite in sense with $\hat{u}_2$ as their common axis, \textit{see \autoref{fig:eigenbasisU}}. Then their composition must leave the space invariant (identity matrix). $\theta$ can be computed by observing that $\cos\theta = \sqrt{1-\sin^{2}\theta}= R_1\hat{u}_1.\hat{u}_1$ and using values from \eqref{eq:11}.
\end{proof} 

\begin{figure}
    \centering
    \begin{minipage}[b]{0.33\textwidth}
    \centering
    \includegraphics[scale=0.85]{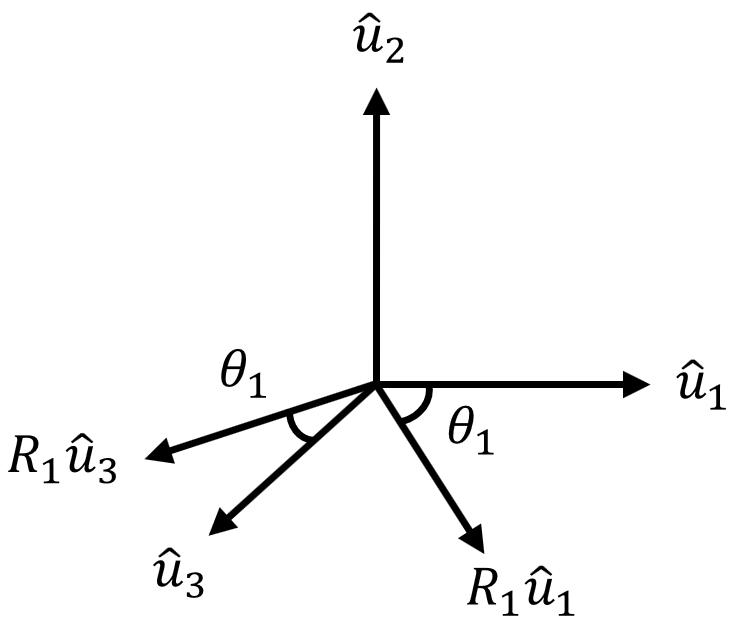}
    \subcaption{}
    \label{fig:eigenbasisU_a}
    \end{minipage}%
    \begin{minipage}[b]{0.33\textwidth}
    \centering
    \includegraphics[scale=0.85]{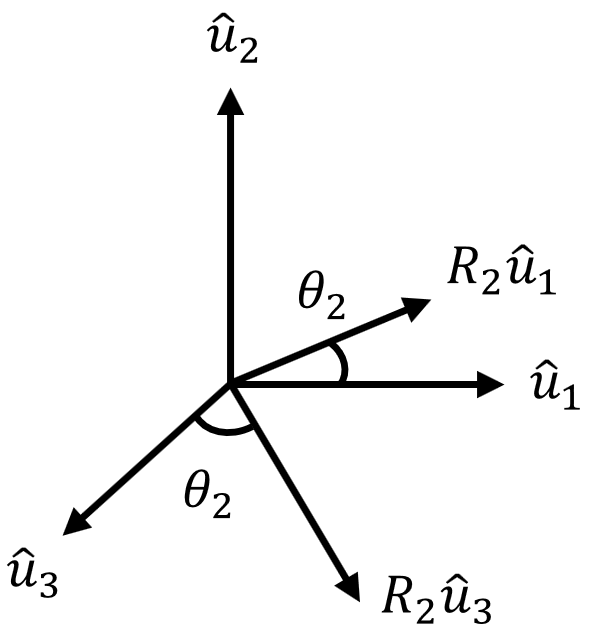}
    \subcaption{}
    \label{fig:eigenbasisU_b}
    \end{minipage}%
    \begin{minipage}[b]{0.33\textwidth}
    \centering
    \includegraphics[scale=0.85]{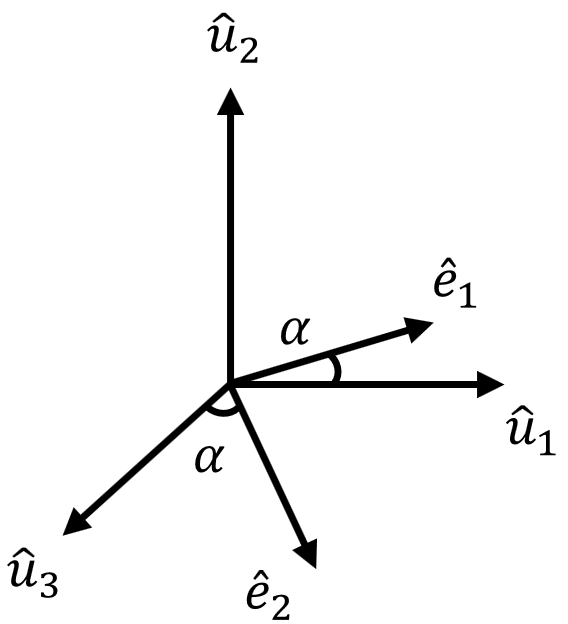}
    \subcaption{}
    \label{fig:eigenbasisU_c}
    \end{minipage}%
    \caption{(a,b) The effect of rotations $(R_1,R_2)$ respectively on the eigenbasis of $U$. $\hat{u}_2$ is the common axis of rotation for $R_1$ and $R_2$. The angles of rotation are equal but opposite in sense. i.e. $\theta_2=-\theta_1$ (c) Structure of the symmetry axes in the eigenbasis of $U$, when it forms a compound domain with another variant $V$.}
    \label{fig:eigenbasisU}
\end{figure}

\begin{lemma}
\label{lem:ruddock}
    (G.Ruddock, \cite{Ruddock1994-yq}) Suppose that $x_i$ and $y_i$ are non-zero for $i=1,2,3$ and that $\sum_{i=1}^{3}x_i \otimes\:y_i = 0$. Then both $\{x_1,x_2,x_3\}$ and $\{y_1,y_2,y_3\}$ are sets of coplanar vectors and at least one of them is a set of parallel vectors.
\end{lemma}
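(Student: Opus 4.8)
The plan is to read the hypothesis $\sum_{i=1}^{3} x_i \otimes y_i = 0$ as an identity of $3\times 3$ matrices and to exploit the elementary fact that a dyad $x\otimes y$ vanishes only if $x=0$ or $y=0$, together with the action $(x\otimes y)v = (y\cdot v)\,x$.

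\textbf{Coplanarity of each family.} First I would show that $\{x_1,x_2,x_3\}$ is linearly dependent, hence coplanar in $\mathbb{R}^3$. Suppose not, so the three vectors form a basis. Applying the matrix identity to an arbitrary $v\in\mathbb{R}^3$ gives $\sum_{i=1}^{3} (y_i\cdot v)\,x_i = 0$, and linear independence of the $x_i$ forces $y_i\cdot v = 0$ for every $v$, i.e. $y_i = 0$, contradicting the hypothesis that the $y_i$ are non-zero. Hence $\{x_1,x_2,x_3\}$ is coplanar. Transposing the identity yields $\sum_{i=1}^{3} y_i\otimes x_i = 0$, and the identical argument shows $\{y_1,y_2,y_3\}$ is coplanar.

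\textbf{At least one family is a set of parallel vectors.} Next I would argue that the two families cannot both fail to be parallel. Assume $\{x_1,x_2,x_3\}$ is \emph{not} a set of parallel vectors; since it is coplanar it contains a linearly independent pair, so after relabeling $x_1,x_2$ are independent and $x_3 = \alpha x_1 + \beta x_2$ for scalars $\alpha,\beta$. Substituting into the identity and regrouping,
\[
  x_1 \otimes (y_1 + \alpha y_3) + x_2 \otimes (y_2 + \beta y_3) = 0 ;
\]
testing against an arbitrary $v$ and using the independence of $x_1,x_2$ gives $y_1 = -\alpha y_3$ and $y_2 = -\beta y_3$. Thus $y_1,y_2,y_3$ are all scalar multiples of $y_3\neq 0$, so $\{y_1,y_2,y_3\}$ is a set of parallel vectors. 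Since $y_1,y_2\neq 0$, the coefficients $\alpha,\beta$ are automatically non-zero, so no subcase is lost. By the symmetry of the situation under $x_i\leftrightarrow y_i$, this proves the dichotomy.

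\textbf{Main obstacle.} There is no real analytic difficulty here — the argument is pure finite-dimensional linear algebra and in fact works in any dimension once ``coplanar'' is read as ``spanning a subspace of dimension at most two.'' The only points that need care are organizational: performing the relabeling so that the chosen pair $x_1,x_2$ is genuinely independent, and checking that the non-vanishing hypothesis on the $x_i,y_i$ rules out the degenerate subcases (for instance $\beta=0$ would force $y_2=0$). I would also note that, since the statement is symmetric in the two families, the conclusion ``at least one of them is a set of parallel vectors'' is sharp and cannot be improved to ``both.''
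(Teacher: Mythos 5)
Your argument is correct and complete: the linear-independence contradiction gives coplanarity of each family, and the substitution $x_3=\alpha x_1+\beta x_2$ correctly forces $y_1,y_2$ to be (non-zero) multiples of $y_3$, which is exactly the dichotomy claimed. Note that the paper does not prove this lemma at all — it is imported verbatim from Ruddock \cite{Ruddock1994-yq} — so there is no in-paper argument to compare against; your proof stands as a clean, self-contained replacement for the citation. The only cosmetic remark is that the linearly independent pair in the second step comes from the failure of mutual parallelism (two non-zero, non-parallel vectors are independent), not from coplanarity; coplanarity is what you then use to write $x_3$ in the span of that pair.
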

\begin{lemma}
\label{lem:commutationQ}
     Let $U$ and $V$ be two matrices that commute. Then for any $Q\in SO(3)$, $QUQ^T$ and $QVQ^T$ also commute.
\end{lemma}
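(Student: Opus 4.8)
The plan is to expand the conjugated product directly and invoke orthogonality of $Q$. First I would record that $Q\in SO(3)$ gives $Q^{T}Q=I$, so that for any $A,B\in\mathbb{R}^{3\times3}$ one has $(QAQ^{T})(QBQ^{T})=QA(Q^{T}Q)BQ^{T}=Q(AB)Q^{T}$; in other words, conjugation by $Q$ is multiplicative. Applying this identity with $A=U$, $B=V$ and then with $A=V$, $B=U$, and inserting the hypothesis $UV=VU$ in between, yields
\begin{equation}
(QUQ^{T})(QVQ^{T}) = Q(UV)Q^{T} = Q(VU)Q^{T} = (QVQ^{T})(QUQ^{T}),
\end{equation}
which is precisely the assertion that $QUQ^{T}$ and $QVQ^{T}$ commute.

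There is no genuine obstacle here: the only nontrivial input is the relation $Q^{T}Q=I$, which cancels the inner pair of factors and reduces the statement to the commutativity of $U$ and $V$ that is already assumed. I would remark in passing that the argument uses only invertibility of $Q$ (so the result in fact holds for any $Q\in\mathbb{R}^{3\times3}$ with positive determinant), and that restricting to $SO(3)$ simply matches the context in which the lemma is later applied, e.g. alongside \autoref{thm:dellaportaQ}. Equivalently, one may phrase the proof conceptually by noting that $A\mapsto QAQ^{T}$ is an algebra homomorphism on $\mathbb{R}^{3\times3}$, hence preserves polynomial identities, and in particular maps the commutator relation $[U,V]=0$ to $[QUQ^{T},QVQ^{T}]=Q[U,V]Q^{T}=0$.

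Since the lemma is short and is used immediately in the sequel, I would present the displayed one-line computation in full rather than merely citing the homomorphism property, so that the step is self-contained at the point of use.
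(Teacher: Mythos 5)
Your one-line computation is exactly the paper's proof: expand $(QUQ^{T})(QVQ^{T})$, cancel $Q^{T}Q=I$, and use $UV=VU$. One caveat on your passing remark: the identity $(QAQ^{T})(QBQ^{T})=Q(AB)Q^{T}$ genuinely requires $Q^{T}Q=I$, so the lemma as stated with $Q^{T}$ does \emph{not} extend to arbitrary $Q\in\mathbb{R}^{3\times3}_{+}$ — for general invertible $Q$ you would need to conjugate by $Q^{-1}$ rather than $Q^{T}$.
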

\begin{proof}
    $(QUQ^T)(QVQ^T) = QUVQ^T = QVUQ^T = (QVQ^T)(QUQ^T).$ 
\end{proof}

\begin{theorem}
\label{thm:compound_triple}
    Let $U,V \in$ $\mathbb{R}^{3\times 3}_{+sym}$ be two variants of martensite \textit{(see \autoref{fig:compoundUV})} such that $U$ and $V$ form compound domains with symmetry axes $\hat{e}_1$ and $\hat{e}_2$. Let $\lambda_1<1,\; \lambda_2=1,\;\lambda_3>1,$ be the eigenvalues of $U$ and $\hat{u}_1,\;\hat{u}_2,\;\hat{u_3}$ be the corresponding eigenvectors such that $(\hat{u}_2,\;\hat{e}_1,\;\hat{e}_2)$ forms an orthonormal basis. If there exists a triple cluster between $U,V$ and $I$, then at least one of the following must hold.\\[5 pt]
 \begin{equation}
        \lambda_1\sqrt{\lambda_3^2-1}(\hat{u}_3.\hat{e}_1) = -\lambda_3\sqrt{1- \lambda_1^2}(\hat{u}_1.\hat{e}_1).\tag{EC1}
     \end{equation}

\begin{equation} 
\lambda_1\sqrt{\lambda_3^2-1}(\hat{u}_1.\hat{e}_1) = -\lambda_3\sqrt{1-\lambda_1^2}(\hat{u}_3.\hat{e}_1).\tag{EC2} \end{equation}
        
\begin{equation} 
\sqrt{1-\lambda_1^2}\:(\hat{u}_3.\hat{e}_1)= -\:\sqrt{\lambda_3^2-1}\:(\hat{u}_1.\hat{e}_1).\tag{EC3} \end{equation}
      
\begin{equation}
\sqrt{1-\lambda_1^2}\:(\hat{u}_1.\hat{e}_1)= -\:\sqrt{\lambda_3^2-1}\:(\hat{u}_3.\hat{e}_1).\tag{EC4} \end{equation}\\[5 pt]
\end{theorem}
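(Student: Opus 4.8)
The plan is to reduce ``there exists a triple cluster'' to a rank-one condition on a difference of two rank-one tensors and then to read off the four displayed relations by substituting the explicit austenite--martensite interface solutions of \autoref{corollary:identity_solutions}. The first step is to record the geometry: by \autoref{thm:compound_domains} and the hypothesis that $(\hat{u}_2,\hat{e}_1,\hat{e}_2)$ is orthonormal, the eigenvectors $\hat{u}_1,\hat{u}_3$ of $U$ lie in the plane $\Pi=\mathrm{span}(\hat{e}_1,\hat{e}_2)$, and $V=QUQ^{T}$ with $Q=-I+2\,\hat{e}_1\otimes\hat{e}_1$. Hence the eigenvectors of $V$ are $\hat{v}_i=Q\hat{u}_i$: since $\hat{u}_2\perp\hat{e}_1$ we get $\hat{v}_2=-\hat{u}_2$, so $U$ and $V$ share, up to sign, the $\lambda_2$-eigenvector, while $\hat{v}_1,\hat{v}_3$ are the reflections of $\hat{u}_1,\hat{u}_3$ in the line $\mathbb{R}\hat{e}_1$ inside $\Pi$, which I would write out explicitly as linear combinations of $\hat{u}_1,\hat{u}_3$ with coefficients built from $\hat{u}_1\cdot\hat{e}_1$ and $\hat{u}_3\cdot\hat{e}_1$. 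I would also note here that neither dot product can vanish, since $\hat{u}_1\cdot\hat{e}_1=0$ (respectively $\hat{u}_3\cdot\hat{e}_1=0$) would force $\hat{e}_1=\pm\hat{u}_3$ (respectively $\pm\hat{u}_1$), and then $QUQ^{T}=U$, contradicting $U\neq V$.

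Next I would reduce a triple cluster to a parallelism statement. A triple cluster consists of relations $R_uU-I=b_u\otimes\hat{m}_u$, $R_vV-I=b_v\otimes\hat{m}_v$ and $R_uU-R_vV=a\otimes\hat{n}$; subtracting the first two gives $b_u\otimes\hat{m}_u-b_v\otimes\hat{m}_v-a\otimes\hat{n}=0$, and here $a\neq0$ because $R_uU=R_vV$ would force $U^{2}=V^{2}$ and hence $U=V$, while the remaining vectors are manifestly nonzero. Applying \autoref{lem:ruddock} to this relation makes one of the triples $\{b_u,b_v,a\}$, $\{\hat{m}_u,\hat{m}_v,\hat{n}\}$ a set of parallel vectors, so in either case $b_u\parallel b_v$ (giving a triple cluster of the first kind) or $\hat{m}_u\parallel\hat{m}_v$ (of the second kind).

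Then comes the computation. Since $\lambda_1<1=\lambda_2<\lambda_3$, \autoref{corollary:identity_solutions} supplies exactly the two solutions $(b_u^{\kappa},\hat{m}_u^{\kappa})$, $\kappa=\pm1$, of $QU-I=b\otimes\hat{m}$, all lying in $\Pi$ with the coefficients of \eqref{eq:11}, and replacing $\hat{u}_i$ by $\hat{v}_i$ gives the two solutions $(b_v^{\kappa'},\hat{m}_v^{\kappa'})$ for $V$, which I would rewrite in the $\hat{u}$-basis using the first step. Imposing $b_u^{\kappa}\parallel b_v^{\kappa'}$ is a $2\times2$ determinant condition in $\Pi$; the subcase $\kappa'=-\kappa$ collapses to $(\hat{u}_1\cdot\hat{e}_1)(\hat{u}_3\cdot\hat{e}_1)=0$, which was excluded, so $\kappa'=\kappa$, and the determinant then factors into a product of two linear forms in $\hat{u}_1\cdot\hat{e}_1$ and $\hat{u}_3\cdot\hat{e}_1$ whose vanishing is precisely (EC1) or (EC2) --- one factor reproducing the displayed condition for $\hat{e}_1$, the other the same condition for $\hat{e}_2$, which is admissible because which of the two symmetry axes is labelled $\hat{e}_1$ is an arbitrary choice. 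The identical computation with $\hat{m}_u^{\kappa}\parallel\hat{m}_v^{\kappa'}$ instead yields (EC3) or (EC4). Combining the two kinds, the existence of any triple cluster forces at least one of (EC1)--(EC4).

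The main obstacle is organizational rather than conceptual: three independent sign freedoms are in play --- the $\pm$ normalization of the eigenvectors, the index $\kappa=\pm1$ labelling the two austenite interfaces of each variant, and the interchangeability of $\hat{e}_1$ and $\hat{e}_2$ --- and the argument hinges on tracking them so that the parallelism conditions factor exactly onto the four displayed forms rather than onto a longer redundant list; recognizing that the ``crossed'' choice $\kappa'=-\kappa$ is vacuous is the simplification that makes the factorization go through.
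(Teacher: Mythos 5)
Your proposal is correct in substance, and it reaches EC1--EC4 by a genuinely different route than the paper. The paper anchors each case on a pairing of one twin solution of \autoref{thm:compound_domains} with one austenite rotation ($Q_1$ with $R_1$, $Q_2$ with $R_2$) and tests parallelism between the \emph{twin} shear $\zeta U\hat{e}_2$ (resp.\ twin normal $\hat{e}_1$) and the $U$--austenite shear $b_u^1$ (resp.\ normal $\hat{m}_u^1$); because $b_u^1$ lives in the frame where $U$ is unrotated, this forces the paper to first prove \autoref{prop:compound} (that $R_1,R_2$ are rotations about $\hat{u}_2$ by $\pm\theta$ with an explicit $\theta$) in order to evaluate the cross products $\hat{u}_i\times R_1\hat{u}_j$. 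You instead test parallelism between the \emph{two austenite-interface} shears $b_u^{\kappa}\parallel b_v^{\kappa'}$ (resp.\ normals), working in the frame where austenite is the identity, so both solutions come straight from \autoref{corollary:identity_solutions} and the only geometric input is the reflection $Q_1\hat{u}_i$ written in the $(\hat{u}_1,\hat{u}_3)$ basis; \autoref{prop:compound} is never needed, and the whole thing reduces to a factoring $2\times2$ determinant. Your enumeration over $(\kappa,\kappa')$ is also more visibly exhaustive than the paper's four cases: the paper never discusses the ``crossed'' pairings (e.g.\ twin solution $Q_1$ with austenite rotation $R_2$), whereas you show explicitly that $\kappa'=-\kappa$ degenerates to $(\hat{u}_1\cdot\hat{e}_1)(\hat{u}_3\cdot\hat{e}_1)=0$, which is excluded since either dot product vanishing forces $V=U$. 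Both arguments invoke \autoref{lem:ruddock} identically to upgrade a single parallelism to the full cluster, and your observations that $a\neq 0$ (via $U^2=V^2\Rightarrow U=V$) and that the factorization closes onto one of the two twin solutions are sound. What the paper's route buys in exchange is that the identity of the participating twin solution (red versus green triangle) is explicit from the outset, which is what \autoref{cor:45degrees}--\autoref{cor:first_second_simul} subsequently exploit.

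One caveat, which you partly anticipate in your final paragraph: your determinant for $(\kappa,\kappa')=(+,+)$ factors as $\bigl(\alpha c_3-\beta c_1\bigr)\bigl(\alpha c_1+\beta c_3\bigr)=0$ with $\alpha=\lambda_3\sqrt{1-\lambda_1^2}$, $\beta=\lambda_1\sqrt{\lambda_3^2-1}$, $c_i=\hat{u}_i\cdot\hat{e}_1$; the second factor is \eqref{eq:58} verbatim, but the first is \eqref{eq:59} only up to a sign that depends on the orientation conventions chosen for $\hat{u}_1,\hat{u}_3,\hat{e}_1,\hat{e}_2$. This is not a defect of your argument relative to the paper --- the paper's own EC relations are not invariant under $\hat{u}_i\mapsto-\hat{u}_i$ and its worked example in \autoref{rem:45_O_NCM} only matches EC1 after an implicit relabelling --- but if you want the factorization to land exactly on the displayed forms you should either fix the convention of \autoref{fig:eigenbasisU_c} at the outset or state EC1--EC4 in squared (sign-free) form, under which your four factors and the paper's four conditions coincide exactly.
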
 
\begin{proof}
$U$ and $V$ form compound domains, so there exist unit vectors $\hat{e}_1,\hat{e}_2$ such that $\hat{e}_1\;\perp \;\hat{e}_2$ and $V=Q_1UQ_1^T=Q_2UQ_2^T$, where $Q_1 = -I + 2\:\hat{e}_1\otimes \hat{e}_1$ and $Q_2 = -I + 2\:\hat{e}_2\otimes \hat{e}_2$. Since the middle eigenvalue of $U$ is unity, rank-one connections between the variants and austenite exist. \\[10 pt]
\begin{minipage}[]{0.6\textwidth}
    Then the solutions to the equation of compatibility \eqref{eq:9} between $U$ and $V$ can be written using \autoref{thm:compound_domains}.
    \begin{gather}
    Q_1V-U=\zeta\: U\hat{e}_2 \otimes \: \hat{e_1} \label{eq:50},\\[4 pt]
    Q_2V-U=\eta\: U\hat{e}_1 \otimes  \:\hat{e_2} \label{eq:51}.
\end{gather}

where $\zeta$ and $\eta$ are given by \eqref{eq:14}. The rank-1 connections of $U$ with identity are given as
\begin{gather}
    U-R_1 = b^1_u \:\otimes \:\hat{m}^1_u \label{eq:52},\\[4pt]
    U-R_2 = b^2_u \:\otimes \:\hat{m}^2_u \label{eq:53}.
\end{gather}
Let the rank-1 connections between the wells $V$ and $I$ be as follows:
    \begin{gather}
    Q_1V-R_1 = b^1_v \:\otimes \:\hat{m}^1_v \label{eq:54},\\[4pt]
    Q_2V-R_2 = b^2_v \:\otimes \:\hat{m}^2_v \label{eq:55}.
    \end{gather}
\end{minipage}%
\hspace{0.09\textwidth}
    \begin{minipage}[]{0.3\textwidth}
    \includegraphics[scale=0.65]{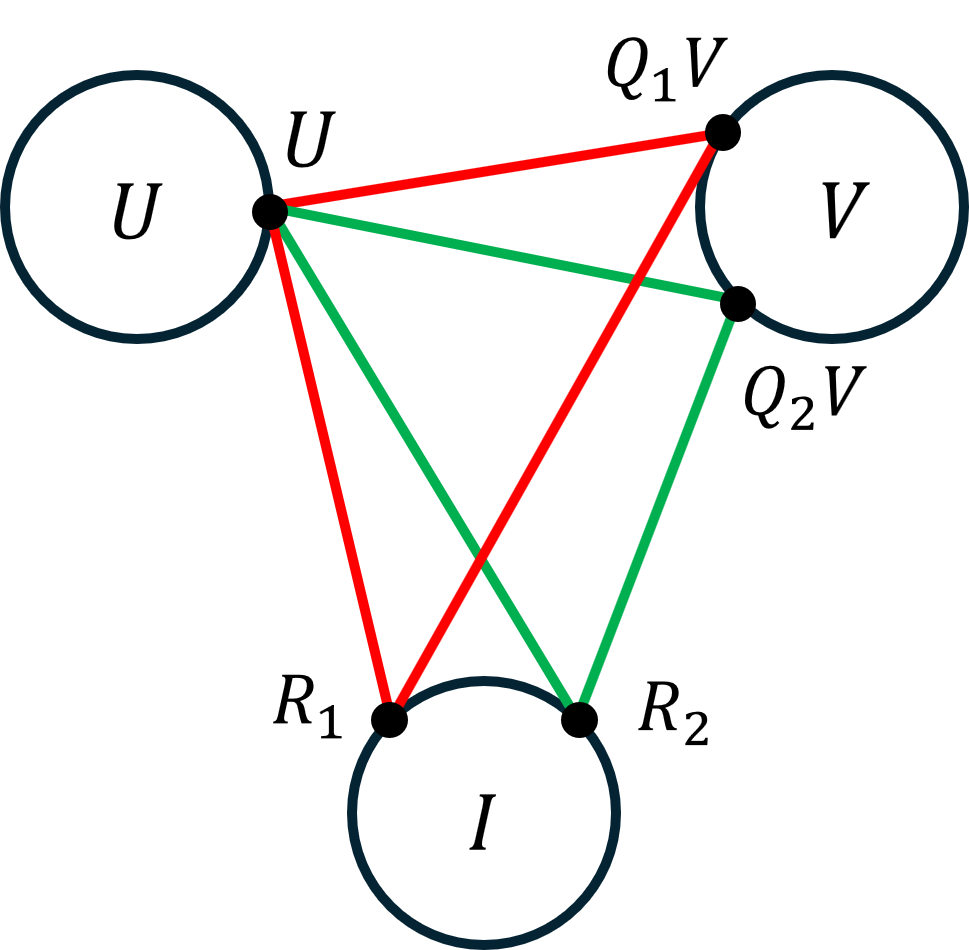}
    \captionof{figure} {$U$ and $V$ are compound domains and both the solutions between $U$ and $V$ form triple clusters with austenite. Red triangle represents equations \eqref{eq:50}, \eqref{eq:52} and \eqref{eq:54}. Green triangle represents equations \eqref{eq:51}, \eqref{eq:53} and \eqref{eq:55}.}
    \label{fig:compoundUV}
\end{minipage}\\[10 pt]
Assume that there exists a planar triple cluster, either the red one or the green one, as shown in \autoref{fig:compoundUV}. Then one of the following cases must hold. $(i)$ $(U,Q_1V,R_1)$ is a planar triple cluster of the first kind, $(ii)$ $(U,Q_2V,R_2)$ is a planar triple cluster of first kind, $(iii)$ $(U,Q_1V,R_1)$ is a triple cluster of second kind or $(iv)$  $(U,Q_2V,R_2)$ is a triple cluster of second kind. The equivalent mathematical expressions for these cases are
\begin{enumerate}[label=(\it{\roman*})]
    \item $\zeta\: U\hat{e}_2$ $||$ $b^1_u$ $||$ $b^1_v$ \: and \: $(\hat{e}_1,\;\hat{m}_u^1, \;\hat{m}_v^1)$ are coplanar.
    \item $\eta\: U\hat{e}_1$ $||$ $b^2_u$ $||$ $b^2_v$ \: and \: $(\hat{e}_2,\;\hat{m}_u^2, \;\hat{m}_v^2)$ are coplanar.
    \item $\hat{e}_1=\hat{m}_u^1=\hat{m}_v^1\;$ and $\;(\zeta U\hat{e}_2,\;b^1_u,\;b^1_v)$  are coplanar.
    \item $\hat{e}_2=\hat{m}_u^2=\hat{m}_v^2\;$ and $\;(\eta U\hat{e}_1,\;b^2_u,\;b^2_v)$  are coplanar.
\end{enumerate}

We now examine each of these cases individually, in the same order.
 
\subsection*{Case (i) Triple cluster of the first kind (red triangle indicated in \autoref{fig:compoundUV})}
For this case to be true, condition \textit{(i)} must hold. It suffices to establish $\zeta\: U\hat{e}_2$ $||$ $b^1_u$. Then the other requirements of condition \textit{(i)} are automatically satisfied by the structure of equations \eqref{eq:50},\eqref{eq:52},\eqref{eq:54} and \autoref{lem:ruddock}.\\[8pt]
$\zeta\: U\hat{e}_2$ $||$ $b^1_u$ $\implies$ $U\hat{e}_2 \times b^1_u =0$. The expression for $b^1_u$ can be obtained by converting \eqref{eq:52} into the standard form (\textit{by premultiplying with $R_1^T$}) and applying \autoref{corollary:identity_solutions}. Here, we take $R_1^T,R_2^T$ to be the rotations corresponding to the solutions $\kappa=+1$ and $\kappa=-1$ respectively. Then, using the spectral decomposition of $U$, we rewrite condition 1 as 
\begin{gather}\nonumber
    (\lambda_1\hat{u}_1 \otimes\hat{u}_1 + \hat{u}_2\otimes \hat{u}_2 + \lambda_3\hat{u}_3\otimes \hat{u}_3)\:\hat{e}_2\; \times \; \frac{\rho}{\sqrt{\lambda_3^2 - \lambda_1^2}}\left(\lambda_3\sqrt{1-\lambda_1^2}\;R_1\hat{u}_1\; +\lambda_1\sqrt{\lambda_3^2 -1}\;\:R_1\hat{u}_3 \right) = 0.
 \end{gather}

 \begin{align}
    \nonumber\implies & \frac{\rho}{\sqrt{\lambda_3^2 - \lambda_1^2}} 
    \{ \lambda_1\lambda_3\sqrt{1-\lambda_1^2}\:(\hat{u}_1.\hat{e}_2)\; \hat{u}_1 \times R_1\hat{u}_1 
    \;\; +\: \lambda_1^2 \sqrt{\lambda_3^2-1}\:(\hat{u}_1.\hat{e}_2)\; \hat{u}_1 \times R_1\hat{u}_3 \\[5pt]
     & +\;  \lambda_3^2\sqrt{1-\lambda_1^2}\:(\hat{u}_3.\hat{e}_2)\; \hat{u}_3 \times R_1\hat{u}_1 +
    \; \lambda_1 \lambda_3\sqrt{\lambda_3^2-1}\:(\hat{u}_3.\hat{e}_2)\; \hat{u}_3 \times R_1\hat{u}_3 \} = 0
    \label{eq:56}.
\end{align}%
The expression involves cross products of $\hat{u}_i$ and $R_1\hat{u}_j$, $i,j=\{1,3\}$. To evaluate these terms, we use \autoref{prop:compound} and \textit{\autoref{fig:eigenbasisU}} to get,\\[7pt]
$\bullet \;\;\hat{u}_1 \times R_1\hat{u}_1 =- \sin\theta\:\hat{u}_2$ $\qquad\quad$
$\bullet \;\;\hat{u}_1 \times R_1\hat{u}_3 =- \cos\theta\:\hat{u}_2$ $\qquad\quad$
$\bullet \;\;\hat{u}_3 \times R_1\hat{u}_1 = \cos\theta\:\hat{u}_2$ \\
$\bullet \;\;\hat{u}_3 \times R_1\hat{u}_3 = -\sin\theta\:\hat{u}_2$\\[5pt]
Substituting these expressions in \eqref{eq:56} and using the value of $\theta$ from \autoref{prop:compound}, we get
\begin{equation}
    \lambda_1\sqrt{\lambda_3^2-1}(\hat{u}_1.\hat{e}_2) = \lambda_3\sqrt{1-\lambda_1^2}(\hat{u}_3.\hat{e}_2).\label{eq:57}
\end{equation}
From \autoref{fig:eigenbasisU_c}, we observe that\\[5pt] 
$\bullet \;\;\hat{u}_1.\hat{e}_1 = \hat{u}_3.\hat{e}_2$ $\qquad\quad$
$\bullet \;\;-\hat{u}_1. \hat{e}_2 =\;\hat{u}_3. \hat{e}_1$\\[5pt]
Rewriting \eqref{eq:57}\\
\begin{equation}
    \lambda_1\sqrt{\lambda_3^2-1}(\hat{u}_3.\hat{e}_1) = -\lambda_3\sqrt{1-\lambda_1^2}(\hat{u}_1.\hat{e}_1).\label{eq:58}
    \tag{EC1}
\end{equation}

\subsection*{Case (ii) Triple cluster of the first kind (green triangle indicated in \autoref{fig:compoundUV})}

To obtain an equivalent expression for this case , we replace $R_1$ by $R_2$ in \eqref{eq:56} and multiply the second and fourth terms by $-1$. After simplification, we can show that condition \textit{(ii)} is satisfied if

\begin{equation}
\label{eq:59}
    \lambda_1\sqrt{\lambda_3^2-1}(\hat{u}_1.\hat{e}_1) = -\lambda_3\sqrt{1-\lambda_1^2}(\hat{u}_3.\hat{e}_1).
    \tag{EC2}
\end{equation}

\subsection*{Case (iii) Triple cluster of the second kind (red triangle indicated in \autoref{fig:compoundUV})}
 For this case, condition \textit{(iii)} must hold. Again, we impose $\hat{e}_1 \times \hat{m}^1_u = 0$,  then the other requirements of condition 3 are automatically satisfied by the structure of equations \eqref{eq:50}, \eqref{eq:52}, \eqref{eq:54} and \autoref{lem:ruddock}. 

\begin{equation}
    \hat{e}_1 \times \hat{m}^1_u = 0 \implies 
      \frac{\lambda_3 - \lambda_1}{\rho\sqrt{\lambda_3^2 - \lambda_1^2}}\left(-\sqrt{1-\lambda_1^2}\;\;\hat{e}_1 \times \hat{u}_1\;  +\;\sqrt{\lambda_3^2 -1}\;\;\hat{e}_1 \times\hat{u}_3 \right) =0.\label{eq:60}
\end{equation}
From \autoref{fig:eigenbasisU_c}, we have,\\[5 pt]
$\bullet \;\;\hat{e}_1 \times \hat{u}_1 = \sin\alpha \:\hat{u}_2 = (\hat{e_1}.\hat{u}_3)\;\hat{u}_2$ \qquad\quad
$\bullet \;\;\hat{e}_1 \times \hat{u}_3 = -\cos\alpha \:\hat{u}_2 = -(\hat{e_1}.\hat{u}_1)\;\hat{u}_2$ \\[5pt]
Substituting in \eqref{eq:60} and simplifying, we get
\begin{equation}
    \sqrt{1-\lambda_1^2}\:(\hat{u}_3.\hat{e}_1)= -\:\sqrt{\lambda_3^2-1}\:(\hat{u}_1.\hat{e}_1).\label{eq:61}
    \tag{EC3}
\end{equation}
\subsection*{Case (iv) Triple cluster of the second kind (green triangle indicated in \autoref{fig:compoundUV})}
For condition \textit{(iv)}, we must have
$\hat{e}_2 \times \hat{m}^2_u = 0$. 
\begin{equation}
    \hat{e}_2 \times \hat{m}^2_u = 0 \implies 
      \frac{\lambda_3 - \lambda_1}{\rho\sqrt{\lambda_3^2 - \lambda_1^2}}\left(-\sqrt{1-\lambda_1^2}\;\;\hat{e}_2 \times \hat{u}_1\;  -\;\sqrt{\lambda_3^2 -1}\;\;\hat{e}_2 \times\hat{u}_3 \right) =0.\label{eq:62}
\end{equation}
From \autoref{fig:eigenbasisU_c}, we have,\\[5 pt]
$\bullet \;\;\hat{e}_2 \times \hat{u}_1 = \cos\alpha \:\hat{u}_2 = (\hat{e_2}.\hat{u}_3)\;\hat{u}_2 =\:(\hat{e_1}.\hat{u}_1)\;\hat{u}_2$ $\qquad\quad$
$\bullet \;\;\hat{e}_2 \times \hat{u}_3 = \sin\alpha \:\hat{u}_2 = -(\hat{e_2}.\hat{u}_1)\;\hat{u}_2 = \: (\hat{e_1}.\hat{u}_3)\;\hat{u}_2$ \\[5pt]
Substituting in \eqref{eq:62} and simplifying, we get 
\begin{equation}
    \sqrt{1-\lambda_1^2}\:(\hat{u}_1.\hat{e}_1)= -\:\sqrt{\lambda_3^2-1}\:(\hat{u}_3.\hat{e}_1).\label{eq:63}
    \tag{EC4}
\end{equation}
\end{proof}
\begin{corollary}
    \label{cor:45degrees}
If the red and the green triple clusters coexist simultaneously and are of the same kind, then $(\hat{u}_1.\hat{e}_1)=-(\hat{u}_3.\hat{e}_1)$ must hold. Furthermore, when $(\hat{u}_1.\hat{e}_1)=-(\hat{u}_3.\hat{e}_1)$ holds, the variants $U$ and $V$ commute.
\end{corollary}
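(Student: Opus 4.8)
The plan is to prove the two assertions separately, both by short computations in the eigenbasis $\{\hat u_1,\hat u_2,\hat u_3\}$ of $U$ supplied by \autoref{thm:compound_triple}.

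\emph{First assertion.} Saying that the red and green triple clusters coexist and are of the same kind means, in the language of the proof of \autoref{thm:compound_triple}, that cases $(i)$ and $(ii)$ hold simultaneously (both of the first kind) or cases $(iii)$ and $(iv)$ hold simultaneously (both of the second kind); equivalently, that \eqref{eq:58} and \eqref{eq:59} both hold, or \eqref{eq:61} and \eqref{eq:63} both hold. In the first situation I would add the two identities: the combination is $\bigl(\lambda_1\sqrt{\lambda_3^2-1}+\lambda_3\sqrt{1-\lambda_1^2}\bigr)\bigl[(\hat u_1.\hat e_1)+(\hat u_3.\hat e_1)\bigr]=0$, and since $0<\lambda_1<1<\lambda_3$ makes the parenthetical prefactor strictly positive, we conclude $(\hat u_1.\hat e_1)=-(\hat u_3.\hat e_1)$. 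The second-kind situation is handled identically, the prefactor now being $\sqrt{1-\lambda_1^2}+\sqrt{\lambda_3^2-1}>0$; so in either case the claimed relation holds. One may note in passing that re-substituting this relation into \eqref{eq:58}, resp. \eqref{eq:61}, and dividing by the (necessarily nonzero) factor $(\hat u_3.\hat e_1)$ forces \eqref{eq:27}, resp. \eqref{eq:38}, on the eigenvalues — so this coexistence is exactly the regime of \autoref{thm:firstkind} and \autoref{thm:secondkind}.

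\emph{Second assertion.} Suppose now $(\hat u_1.\hat e_1)=-(\hat u_3.\hat e_1)$. Since $(\hat u_2,\hat e_1,\hat e_2)$ is orthonormal, $\hat e_1\perp\hat u_2$, hence $\hat e_1$ lies in the plane $\mathrm{span}(\hat u_1,\hat u_3)$ and $(\hat u_1.\hat e_1)^2+(\hat u_3.\hat e_1)^2=1$. Combined with the sign relation this gives $(\hat u_1.\hat e_1)^2=(\hat u_3.\hat e_1)^2=\tfrac12$, so neither dot product vanishes and $\hat e_1=\tfrac{1}{\sqrt2}(\hat u_1-\hat u_3)$ up to an overall sign. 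I would then evaluate the Householder reflection $Q_1=-I+2\,\hat e_1\otimes\hat e_1$, with $V=Q_1UQ_1^{T}$, on the eigenbasis; a one-line substitution gives $Q_1\hat u_1=-\hat u_3$, $Q_1\hat u_3=-\hat u_1$, $Q_1\hat u_2=-\hat u_2$. Feeding this into the spectral form of $V$ yields $V=\lambda_3\,\hat u_1\otimes\hat u_1+\hat u_2\otimes\hat u_2+\lambda_1\,\hat u_3\otimes\hat u_3$, which is diagonal in the same orthonormal basis as $U$. Hence $U$ and $V$ are simultaneously diagonalizable, and \autoref{lem:commutation} gives $UV=VU$.

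\emph{Main obstacle.} There is no real difficulty here: the first part is two lines of algebra and the second a short reflection computation. The only point requiring a little care is the geometric step in the second part — one must observe that the sign relation, together with $\hat e_1\perp\hat u_2$, pins $\hat e_1$ to the $45^{\circ}$ bisector of $\hat u_1$ and $\hat u_3$, so that $Q_1$ genuinely interchanges the lines spanned by $\hat u_1$ and $\hat u_3$ (rather than fixing them), together with the trivial but necessary remark that $(\hat u_1.\hat e_1)$ and $(\hat u_3.\hat e_1)$ cannot both be zero since $\hat e_1$ is a unit vector of that plane.
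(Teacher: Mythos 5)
Your proposal is correct and follows essentially the same route as the paper: the first assertion comes from the same linear combination of \eqref{eq:58},\eqref{eq:59} (resp.\ \eqref{eq:61},\eqref{eq:63}) with a strictly positive prefactor forcing $(\hat{u}_1.\hat{e}_1)+(\hat{u}_3.\hat{e}_1)=0$, and the second from the observation that this sign relation pins $\hat{e}_1$ to $\tfrac{1}{\sqrt{2}}(\hat{u}_1-\hat{u}_3)$ so that $V=Q_1UQ_1^{T}$ is diagonal in the eigenbasis of $U$ and \autoref{lem:commutation} applies. The only difference is cosmetic: you compute $Q_1\hat{u}_i$ directly rather than matching coefficients in the expanded spectral decomposition as the paper does, which is a slightly more economical execution of the identical idea.
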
 
\begin{proof}  
Two situations arise. First is that both red and green triple clusters are of the first kind, for which equations \eqref{eq:58} and \eqref{eq:59} must hold simultaneously. Then, $\eqref{eq:58}+\eqref{eq:59}$ and $\eqref{eq:58}-\eqref{eq:59}$ must hold as well. $\eqref{eq:58}+\eqref{eq:59}$ implies,

\begin{equation}
    (\lambda_1\sqrt{\lambda_3^2-1}  -\lambda_3\sqrt{1-\lambda_1^2}\;)\;(\hat{u}_3.\hat{e}_1)=(\lambda_1\sqrt{\lambda_3^2-1}  -\lambda_3\sqrt{1-\lambda_1^2}\;)\;(\hat{u}_1.\hat{e}_1).\label{eq:64}
\end{equation}

Similarly, $\eqref{eq:58}-\eqref{eq:59}$ implies,

\begin{equation}
     (\lambda_1\sqrt{\lambda_3^2-1}  +\lambda_3\sqrt{1-\lambda_1^2}\;)\;(\hat{u}_3.\hat{e}_1)=-(\lambda_1\sqrt{\lambda_3^2-1}  +\lambda_3\sqrt{1-\lambda_1^2}\;)\;(\hat{u}_1.\hat{e}_1).\label{eq:65}
\end{equation}

If we assume the expressions multiplying the dot product terms in \eqref{eq:64} and \eqref{eq:65} to be non-zero, then  $(\hat{u}_1.\hat{e}_1)=(\hat{u}_3.\hat{e}_1)$ and $(\hat{u}_1.\hat{e}_1)=-(\hat{u}_3.\hat{e}_1)$ must both be true. This is possible only if both the dot products are individually equal to zero and $\hat{e}_1$ is perpendicular to both $\hat{u}_1$ and $\hat{u}_3$. Then, $\hat{e}_1$ must be parallel to $\hat{u}_2$, which further implies that $(-I+2\;\hat{u}_2\otimes\hat{u}_2)\;U(-I+2\;\hat{u}_2\otimes\hat{u}_2) = U$. This contradicts the fact that $U$ and $V$ are distinct. This shows that our assumption is wrong and the coefficients of \eqref{eq:64} and \eqref{eq:65} must not be all non-zero. Since, the eigenvalues of $U$ are all positive and distinct, the coefficients of \eqref{eq:65} are all positive quantities. Then, the coefficients of \eqref{eq:64} must be zero. Therefore, for equations \eqref{eq:58} and \eqref{eq:59} to hold simultaneously, we must have, 

\begin{equation}
   \lambda_1\sqrt{\lambda_3^2-1}  =\lambda_3\sqrt{1-\lambda_1^2}\;\quad \text{and} \quad(\hat{u}_3.\hat{e}_1)= -\: (\hat{u}_3.\hat{e}_1).\label{eq:66}
\end{equation}

Rewriting $\eqref{eq:66}_1$ to express $\lambda_3$ as an explicit function of $\lambda_1$  yields \eqref{eq:27}.\\[5pt]
The second situation arises if both red and green triple clusters are of the second kind, for which equations \eqref{eq:61} and \eqref{eq:63} must hold simultaneously. By a similar analysis as performed for the first situation, it can be shown that equations \eqref{eq:61} and \eqref{eq:63} hold simultaneously only if 

\begin{equation}
   \sqrt{\lambda_3^2-1}  =\sqrt{1-\lambda_1^2}\;\quad \text{and} \quad(\hat{u}_3.\hat{e}_1)= -\: (\hat{u}_1.\hat{e}_1).\label{eq:67}
\end{equation}

Rewriting $\eqref{eq:67}_1$ to express $\lambda_3$ as an explicit function of $\lambda_1$  yields \eqref{eq:38}.\\[5 pt]
\eqref{eq:66} and \eqref{eq:67} show that for both the twinning solutions to form triple clusters of the same kind $(\hat{u}_3.\hat{e}_1)= -\: (\hat{u}_1.\hat{e}_1)$ is a necessary condition. Then depending on whether \eqref{eq:27} or \eqref{eq:38} is additionally satisfied, triple clusters of either the first or the second kind are possible. This proves the first assertion of \autoref{cor:45degrees}. We now proceed to prove the second assertion, which claims that under the hypothesis of \autoref{thm:compound_triple}, if $(\hat{u}_3.\hat{e}_1)= -\: (\hat{u}_1.\hat{e}_1)$, then $U$ and $V$ commute. \\[5 pt]
First, we note that the condition $(\hat{u}_3.\hat{e}_1)= -\: (\hat{u}_1.\hat{e}_1)$ implies $\alpha=\frac{\pi}{4}$ in \autoref{fig:eigenbasisU_c}. Then, we can write, $(\hat{u}_3.\hat{e}_1)= -\: (\hat{u}_1.\hat{e}_1)=-\frac{1}{\sqrt{2}}.$ \\[5 pt]
Now, the middle eigenvalue of $U$ is 1, therefore, we have $U\hat{u}_2=\hat{u}_2\implies\;Q_1VQ_1^T\hat{u}_2=\hat{u}_2$.\\
\begin{equation}
    \implies (-I + 2\;\hat{e}_1\otimes\hat{e}_1)\;V (-I + 2\;\hat{e}_1\otimes\hat{e}_1)=\hat{u_2}\;\;\implies V\hat{u}_2-2(V\hat{e}_1.\hat{u}_2)\:\hat{e}_1=\hat{u}_2.\label{eq:68}
\end{equation}

Similarly, we can write, $Q_2VQ_2^T\hat{u}_2=\hat{u}_2$,
\begin{equation}
   \implies (-I + 2\;\hat{e}_2\otimes\hat{e}_2)\;V (-I + 2\;\hat{e}_2\otimes\hat{e}_2)=\hat{u_2}\;\;\implies V\hat{u}_2-2(V\hat{e}_2.\hat{u}_2)\:\hat{e}_2=\hat{u}_2.\label{eq:69}
\end{equation}
From \eqref{eq:68} and \eqref{eq:69}, we find $(V\hat{e}_1.\hat{u}_2) = (V\hat{e}_2.\hat{u}_2) = 0$. Substituting back in \eqref{eq:68}, we see that $V\hat{u}_2 = \hat{u}_2,$ implying that $\hat{u}_2$ is an eigenvector of $V$ corresponding to the eigenvalue $\lambda_2=1$.\\[5pt]
Again, $Q_1UQ_1^T=V$. Writing $U$ and $V$ in their spectral decomposition, we have, 
\begin{equation}  
(I+2\hat{e}_1\otimes\hat{e}_1)\:(\lambda_1\hat{u}_1\otimes\hat{u}_1+\hat{u}_2\otimes\hat{u}_2 +\lambda_3\hat{u}_3\otimes\hat{u}_3)\;(I+2\hat{e}_1\otimes\hat{e}_1)\;= \lambda_1\hat{v}_1\otimes\hat{v}_1+\hat{u}_2\otimes\hat{u}_2 +\lambda_3\hat{v}_3\otimes\hat{v}_3. \label{eq:70}
\end{equation}
Expanding \eqref{eq:70} and substituting $(\hat{u}_3.\hat{e}_1)= -\: (\hat{u}_1.\hat{e}_1)=-\frac{1}{\sqrt{2}}, $ we get,
\begin{equation}
    \implies\lambda_1\{\hat{u}_1\otimes\hat{u}_1-\sqrt{2}\:(\hat{u}_1\otimes\hat{e}_1 \;+\;\hat{e}_1\otimes\hat{u}_1) \:+2\:\hat{e}_1\otimes\hat{e}_1\} + \lambda_3\{\:\hat{u}_3\otimes\hat{u}_3+\sqrt{2}\:(\hat{u}_3\otimes\hat{e}_1 \;+\;\hat{e}_1\otimes\hat{u}_3) \:+2\:\hat{e}_1\otimes\hat{e}_1\} 
    \notag
\end{equation}
\begin{equation}
    =\;\lambda_1\;\{\hat{v}_1 \otimes\hat{v}_1\} + \lambda_3\;\{\hat{v}_3 \otimes\hat{v}_3\}.\label{eq:71}
\end{equation}

\begin{equation}
    \implies\lambda_1\{(\hat{u}_1-\sqrt{2}\:\hat{e}_1)\otimes (\hat{u}_1-\sqrt{2}\:\hat{e}_1)\} + \lambda_3\{(\hat{u}_3+\sqrt{2}\:\hat{e}_1)\otimes (\hat{u}_3+\sqrt{2}\:\hat{e}_1)\} 
    \notag
\end{equation}
\begin{equation}
    =\;\lambda_1\;\{\hat{v}_1 \otimes\hat{v}_1\} + \lambda_3\;\{\hat{v}_3 \otimes\hat{v}_3\}.\label{eq:72}
\end{equation}

Comparing the coefficients of $\lambda_1$ and $\lambda_3$ on both sides, we have,

\begin{equation}
    (\hat{u}_1-\sqrt{2}\:\hat{e}_1)=\pm \hat{v}_1 \quad\text{and}\quad(\hat{u}_3+\sqrt{2}\:\hat{e}_1)=\pm \hat{v}_3.\label{eq:73}
\end{equation}

Using $\hat{e_1}= \frac{1}{\sqrt{2}}(\hat{u}_1-\hat{u}_3)$, we have,

\begin{equation}
    \hat{u}_3=\pm\hat{v_1}\quad \text{and} \quad\hat{u}_1=\pm \hat{v}_3.\label{eq:74}
\end{equation}

The same set of eigenvectors (unique up to sign) diagonalizes both $U$ and $V$. Then by \autoref{lem:commutation}, $U$ and $V$ commute.
\end{proof}
\begin{remark}
\label{rem:45_O_NCM}
    If $d=1$ holds for compound twins in cubic to orthorhombic  transformations in \eqref{eq:19}, then the condition $(\hat{u}_3.\hat{e}_1)= -\: (\hat{u}_1.\hat{e}_1)$ is automatically satisfied. In addition, if \eqref{eq:27} or \eqref{eq:38} holds, then \autoref{thm:firstkind} or \autoref{thm:secondkind} becomes applicable, respectively. For non-conventional compound domains in cubic to monoclinic-II transformations in \eqref{eq:22}, the condition $(\hat{u}_3.\hat{e}_1)= -\: (\hat{u}_1.\hat{e}_1)$ is automatically satisfied when $d=1$ and either \eqref{eq:27} or \eqref{eq:38} holds. Furthermore, in either of these transformations, if $d=1$ and \eqref{eq:27} is satisfied, then \eqref{eq:58} and \eqref{eq:59} hold simultaneously and the two triple clusters are of the first kind. Similarly, if $d=1$ and \eqref{eq:38} is satisfied, then \eqref{eq:61} and \eqref{eq:63} hold simultaneously and the two triple clusters are of the second kind. 
\end{remark}
\begin{corollary}
\label{cor:first_second_impossible}
    It is not possible to have one triple cluster of the first kind and another of the second kind simultaneously between $U$, $V$, and $I$. In other words, if both the red and green triple clusters shown in \autoref{fig:compoundUV} exist, they must be of the same kind.
\end{corollary}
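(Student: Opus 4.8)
The plan is to reduce everything to the four scalar conditions \eqref{eq:58}--\eqref{eq:63} already isolated in \autoref{thm:compound_triple}. Recall from that proof that the ``red'' triple in \autoref{fig:compoundUV} is $(U,Q_1V,R_1)$ and the ``green'' triple is $(U,Q_2V,R_2)$, and that the red cluster is of the first kind precisely when \eqref{eq:58} holds and of the second kind precisely when \eqref{eq:61} holds, while the green cluster is of the first kind precisely when \eqref{eq:59} holds and of the second kind precisely when \eqref{eq:63} holds. Consequently, the existence of one cluster of the first kind together with the other of the second kind is equivalent to one of exactly two situations: either \eqref{eq:58} and \eqref{eq:63} hold together, or \eqref{eq:61} and \eqref{eq:59} hold together. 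I would then derive a contradiction in each situation.

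For the bookkeeping I would set $p:=\hat{u}_1\cdot\hat{e}_1$ and $q:=\hat{u}_3\cdot\hat{e}_1$. Since $(\hat{u}_2,\hat{e}_1,\hat{e}_2)$ is an orthonormal basis, $\hat{e}_1$ lies in the plane spanned by $\hat{u}_1$ and $\hat{u}_3$, so $p^2+q^2=1$; and, exactly as in the proof of \autoref{cor:45degrees}, neither $p$ nor $q$ may vanish, for otherwise $\hat{e}_1$ would be an eigenvector of $U$ and hence $V=Q_1UQ_1^T=U$, contradicting $U\neq V$. I would also note that $\lambda_1<1<\lambda_3$ forces $\sqrt{1-\lambda_1^2}>0$, $\sqrt{\lambda_3^2-1}>0$, and $\lambda_1\neq\lambda_3$.

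With this notation, \eqref{eq:58} reads $\lambda_1\sqrt{\lambda_3^2-1}\,q=-\lambda_3\sqrt{1-\lambda_1^2}\,p$ and \eqref{eq:63} reads $\sqrt{1-\lambda_1^2}\,p=-\sqrt{\lambda_3^2-1}\,q$; multiplying the latter by $\lambda_3$ and substituting into the former eliminates $p$ and leaves $(\lambda_1-\lambda_3)\sqrt{\lambda_3^2-1}\,q=0$, which forces $q=0$ --- a contradiction. Symmetrically, \eqref{eq:61} reads $\sqrt{1-\lambda_1^2}\,q=-\sqrt{\lambda_3^2-1}\,p$ and \eqref{eq:59} reads $\lambda_1\sqrt{\lambda_3^2-1}\,p=-\lambda_3\sqrt{1-\lambda_1^2}\,q$; the same manipulation eliminates $q$ and leaves $(\lambda_1-\lambda_3)\sqrt{\lambda_3^2-1}\,p=0$, forcing $p=0$ --- again a contradiction. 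Hence whenever both the red and the green clusters exist, they must be of the same kind.

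Since the algebra is entirely elementary, I do not anticipate a genuine obstacle; the only points requiring care are the correct transcription of the correspondence between the cases (i)--(iv) of \autoref{thm:compound_triple} and the conditions \eqref{eq:58}--\eqref{eq:63}, and the appeal to the non-vanishing of $p$ and $q$ established in \autoref{cor:45degrees}. The conclusion is ultimately dictated by the rigid sign pattern of the eigenvalue prefactors appearing in \eqref{eq:58}--\eqref{eq:63}.
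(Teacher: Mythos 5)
Your proposal is correct and follows essentially the same route as the paper: both reduce the two mixed-kind scenarios to the simultaneous validity of \eqref{eq:58} with \eqref{eq:63}, or of \eqref{eq:59} with \eqref{eq:61}, and then derive a contradiction by elementary elimination. The only (harmless) difference is the final step — the paper concludes via impossible eigenvalue relations ($\lambda_1=\lambda_3$ or $\lambda_1\lambda_3=-\tfrac12$), whereas you use $\lambda_1\neq\lambda_3$ to force $\hat{u}_1\cdot\hat{e}_1=\hat{u}_3\cdot\hat{e}_1=0$ and hence $V=U$, which is an equally valid (arguably cleaner) way to close the same argument.
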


\begin{proof}
Two situations arise. First, red triple cluster is of first kind and green triple cluster is of the second kind. In this case, \eqref{eq:58} and \eqref{eq:63} must hold simultaneously. The second possible situation is when green triple cluster is of the first kind and red triple cluster is of the second kind. In this case, \eqref{eq:59} and \eqref{eq:61} must hold simultaneously.\\[5pt]
In both the situations, it can be shown that the simultaneous satisfaction of the pair of equations requires either $\lambda_1=\lambda_3$ or $\lambda_1\lambda_3=-\frac{1}{2}$. The former is a case of degeneracy and can be ignored. The latter is impossible, since $U$ is positive definite. Therefore, having two different kinds of triple clusters for the same pair is not possible.  
\end{proof}

\begin{corollary}
\label{cor:first_second_simul}
    If a single triple cluster (either red or green) is of the first kind as well as the second kind simultaneously, then it is necessary that $det\:(U)=1$.
\end{corollary}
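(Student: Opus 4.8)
The plan is to translate the hypothesis directly into the algebraic conditions established in the proof of \autoref{thm:compound_triple}. Saying that the red triple cluster of \autoref{fig:compoundUV} is of the first kind means Case~(i) there holds, i.e.\ \eqref{eq:58} is satisfied; saying it is \emph{also} of the second kind means Case~(iii) holds, i.e.\ \eqref{eq:61} is satisfied. If instead it is the green cluster that is of both kinds, one uses \eqref{eq:59} and \eqref{eq:63} in place of \eqref{eq:58} and \eqref{eq:61}; the argument is verbatim the same with the roles of $\hat{u}_1$ and $\hat{u}_3$ interchanged, so it suffices to treat the red case. Abbreviate $p := \hat{u}_1\cdot\hat{e}_1$ and $q := \hat{u}_3\cdot\hat{e}_1$.

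First I would record that $p\neq 0$ and $q\neq 0$. Indeed, by \eqref{eq:61} the vanishing of one of them forces the vanishing of the other, and then $\hat{e}_1$ would be orthogonal to both $\hat{u}_1$ and $\hat{u}_3$, hence parallel to $\hat{u}_2$; but this makes $(-I+2\,\hat{u}_2\otimes\hat{u}_2)\,U\,(-I+2\,\hat{u}_2\otimes\hat{u}_2)=U$, contradicting $U\neq V$. This is precisely the non-degeneracy observation already used in the proof of \autoref{cor:45degrees}, so I would simply cite it.

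With $p,q\neq 0$ in hand I would solve \eqref{eq:58} and \eqref{eq:61} each for the ratio $q/p$, obtaining
\[
  \frac{q}{p}=-\frac{\lambda_3\sqrt{1-\lambda_1^2}}{\lambda_1\sqrt{\lambda_3^2-1}}
  \qquad\text{and}\qquad
  \frac{q}{p}=-\frac{\sqrt{\lambda_3^2-1}}{\sqrt{1-\lambda_1^2}}.
\]
Equating the two right-hand sides and cross-multiplying — legitimate because $\lambda_1<1<\lambda_3$ makes all the radicals strictly positive, and $\lambda_1,\lambda_3>0$ — gives $\lambda_1(\lambda_3^2-1)=\lambda_3(1-\lambda_1^2)$, hence $\lambda_1\lambda_3(\lambda_1+\lambda_3)=\lambda_1+\lambda_3$. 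Since $\lambda_1+\lambda_3>0$ this yields $\lambda_1\lambda_3=1$, and therefore $\det U=\lambda_1\lambda_2\lambda_3=\lambda_1\lambda_3=1$, as claimed. The computation is entirely routine; the only step needing a word of justification is the non-vanishing of $p$ and $q$, handled above, so I do not foresee any genuine obstacle.
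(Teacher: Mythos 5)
Your proposal is correct and follows essentially the same route as the paper: identify the red (resp.\ green) double-kind cluster with the simultaneous validity of \eqref{eq:58} and \eqref{eq:61} (resp.\ \eqref{eq:59} and \eqref{eq:63}), eliminate the dot products, and reduce to $\lambda_1(\lambda_3^2-1)=\lambda_3(1-\lambda_1^2)$, hence $\lambda_1\lambda_3=1$ and $\det U=1$. The only difference is that you make explicit the non-vanishing of $\hat{u}_1\cdot\hat{e}_1$ and $\hat{u}_3\cdot\hat{e}_1$ before dividing, a point the paper leaves implicit (it is the same degeneracy argument as in \autoref{cor:45degrees}), so this is a harmless and slightly more careful presentation of the same proof.
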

\begin{proof}
There are two scenarios under which such a situation can arise. The first is when \textit{Case (i)} and \textit{Case (iii)} are both true. In this situation, the pair of equations \eqref{eq:58} and \eqref{eq:61} must hold simultaneously. In the second scenario, \textit{Case (ii)} and \textit{Case (iv)} are both true. In this case, the pair of equations \eqref{eq:59} and \eqref{eq:63} must be satisfied at the same time. \\[5 pt]
In either scenario, for the pairs of equations to hold simultaneously, it is necessary that 

\begin{equation}
      \lambda_1\frac{\sqrt{\lambda_3^2-1}}{\sqrt{1-\lambda_1^2}}= \lambda_3\frac{\sqrt{1-\lambda_1^2}}{\sqrt{\lambda_3^2-1}}.
      \notag
    \end{equation}
    \begin{equation}
      \implies \lambda_1\lambda_3=1.\label{eq:75}
    \end{equation}

Since $\lambda_2=1,$ \eqref{eq:75} implies $det\;(U)=1.$
\end{proof}
\begin{remark}
    \label{rem:commutation_det_one_impossible}
When a pair of compound domains forms two distinct triple clusters of the same kind with austenite, then the 
 two triple clusters are either of first kind or second kind, but cannot be both at the same time. This is because for such a situation to hold, either equations \eqref{eq:27} and \eqref{eq:75} must hold simultaneously or equations \eqref{eq:38} and \eqref{eq:75} must hold simultaneously. In either case, this leads to a degeneracy, where $\lambda_1=\lambda_2=\lambda_3=1$, implying no transformation at all. Therefore, we conclude that if a triple cluster is of the first as well as the second kind simultaneously, then only one of the twinning solutions between compound domains can form triple clusters with austenite. For example, in cubic to orthorhombic transformations, every compound twin pair forms two distinct triple clusters with austenite, so they cannot be of the first as well as the second kind simultaneously.
\end{remark}
\begin{remark}
    \label{rem:different_compound_tendancy}
All compound domains do not have the same tendency to form triple clusters with austenite. There are compound domains in which \autoref{cor:45degrees} holds and are therefore capable of forming two distinct planar triple clusters with austenite for a given pair of compound domains. Examples of this kind of domains are the compound twins in cubic to orthorhombic transformation and non-conventional compound domains in cubic to monoclinic-II transformation, under $d=1$. There are other compound domains, e.g. the compound twins in cubic to monoclinic-II transformations, which do not obey \autoref{cor:45degrees} and in which, at most, one solution of the twinning equation can form a triple cluster with austenite. This is probably the reason that no generalized statement about the triple clusters involving compound domains is made in \cite{CHEN20132566}.
\end{remark}

\begin{figure}[hbt!]
    \centering
    \includegraphics[scale=0.55]{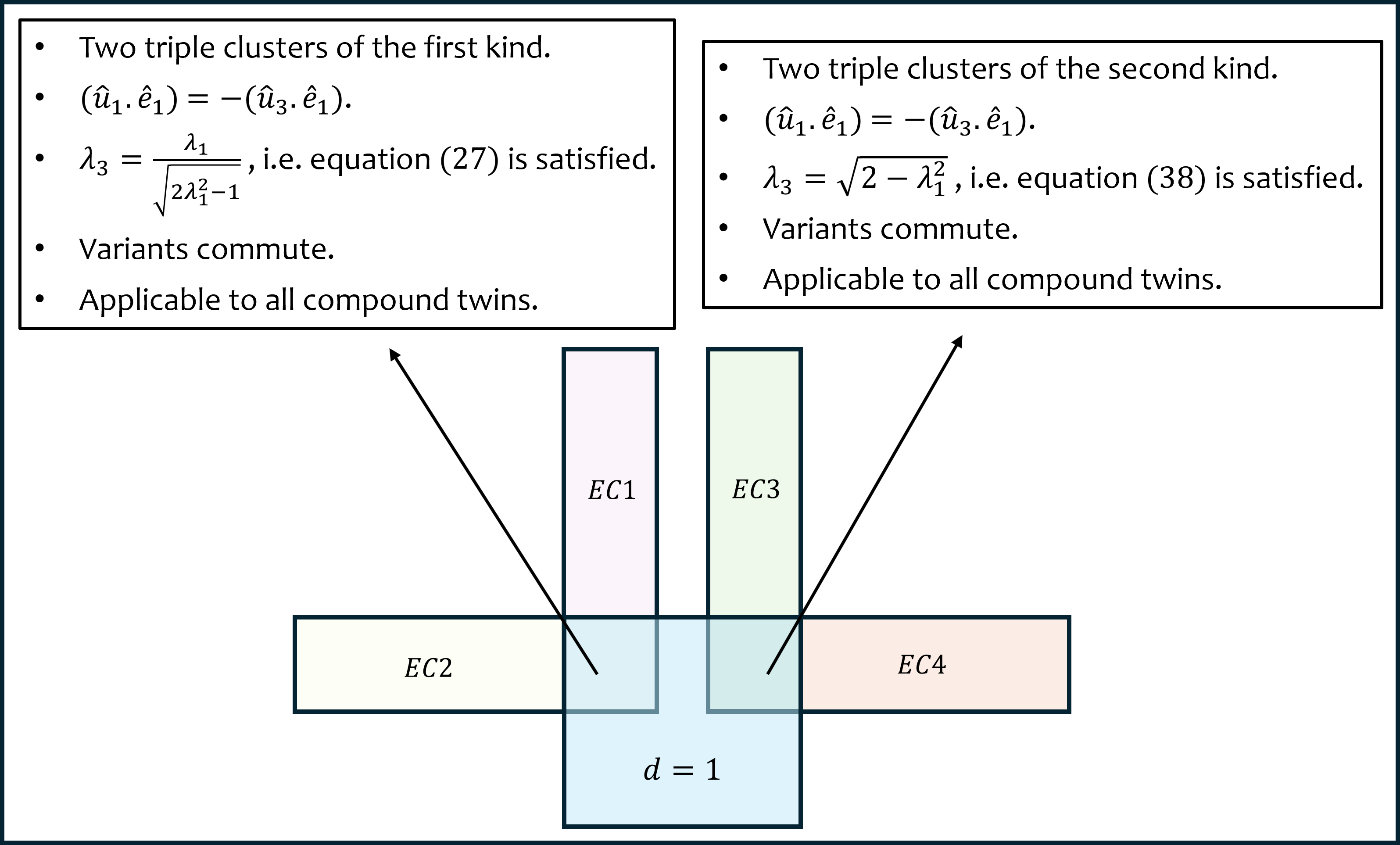}
   \caption{Summary of results obtained in \autoref{sec:triple_beyond} for cubic to orthorhombic transformations. When $d=\lambda_2=1$ is satisfied then for all compound twin pairs, $(\hat{u}_2,\hat{e}_1,\hat{e}_2)$ forms an orthonormal basis. The Venn diagram illustrates regions where the stated conditions hold. Overlapping regions represent simultaneous satisfaction of the associated conditions. See main text for details.}
       \label{fig:3ortho}
\end{figure}

\begin{figure}[hbt!]
    \centering
    \includegraphics[scale=0.45]{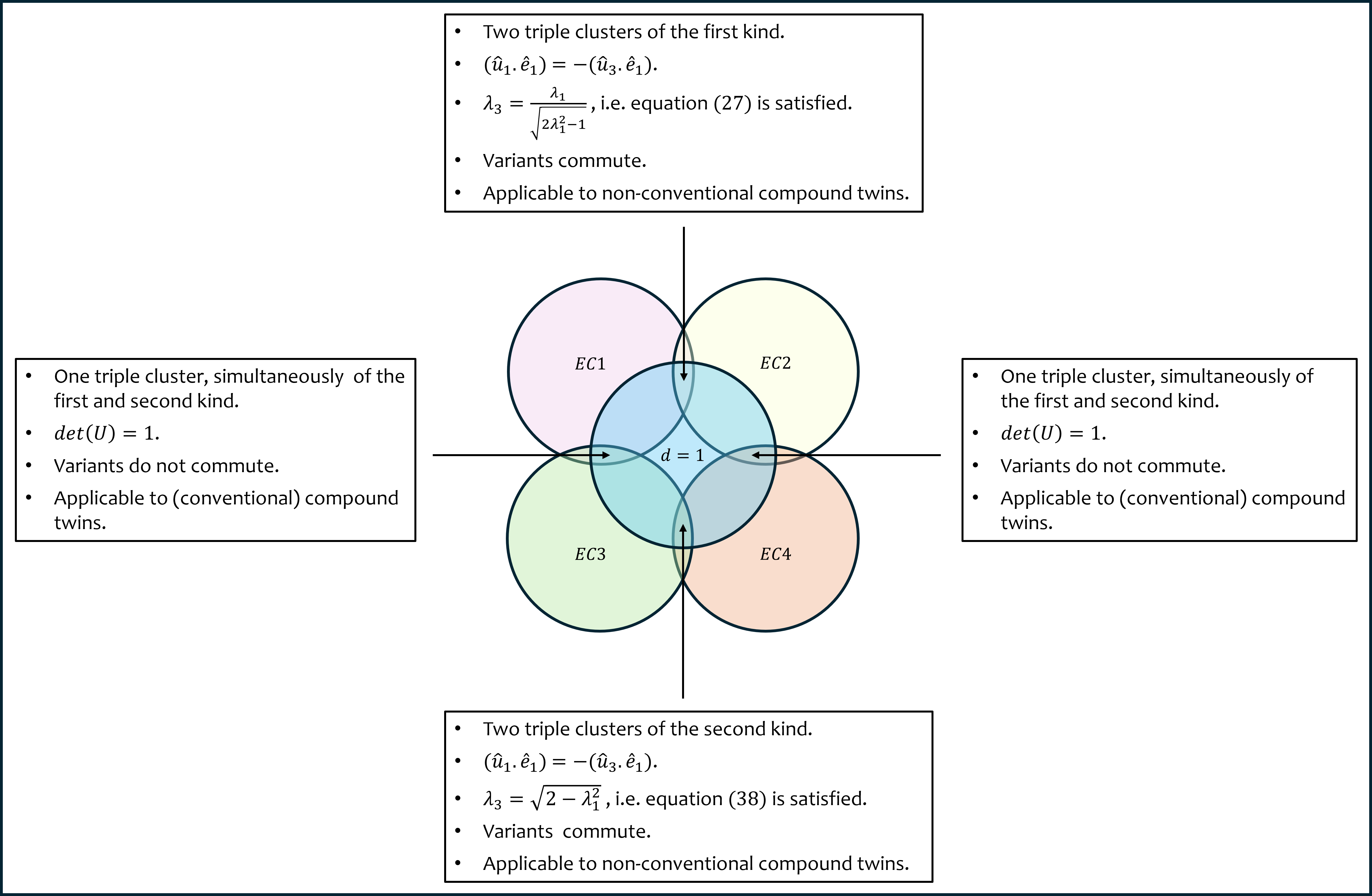}
   \caption{Summary of results obtained in \autoref{sec:triple_beyond} for cubic to monoclinic-II transformations. When $d=\lambda_2=1$ is satisfied then for all compound domains, $(\hat{u}_2,\hat{e}_1,\hat{e}_2)$ forms an orthonormal basis. The Venn diagram illustrates regions where the stated conditions hold. Overlapping regions represent simultaneous satisfaction of the associated conditions. See main text for details.}
       \label{fig:3mono}
\end{figure}

\Cref{fig:3ortho,fig:3mono} summarize the main results of \autoref{sec:triple_beyond}, as applied to cubic to orthorhombic and cubic to monoclinic-II transformations, respectively, using Venn diagrams. For both these transformations, when \( d = 1 \) holds, the vectors \( (\hat{u}_2, \hat{e}_1, \hat{e}_2) \) form an orthonormal basis of \( \mathbb{R}^3 \) for all compound domains, therefore satisfying the hypothesis of \autoref{thm:compound_triple}.\\[5pt]
For cubic to orthorhombic transformations with $d=1$, \eqref{eq:58} and \eqref{eq:59} are simultaneously satisfied and together imply the satisfaction of \eqref{eq:27}. Similarly, \eqref{eq:61} and \eqref{eq:63} are simultaneously satisfied and together imply the satisfaction of \eqref{eq:38}. This situation is represented by the regions common to \eqref{eq:58}, \eqref{eq:59} and \eqref{eq:61}, \eqref{eq:63} in \autoref{fig:3ortho} and correspond to the results of \autoref{cor:45degrees}, where the two compound domains form two distinct triple clusters of the same kind with austenite.\\[5pt]
For cubic to monoclinic-II transformations with \( d = 1 \), simultaneous satisfaction of the pairs of equations~\eqref{eq:58}, \eqref{eq:59} or~\eqref{eq:61}, \eqref{eq:63} happens only for non-conventional twins. In this case, either \eqref{eq:27} or \eqref{eq:38} is satisfied, leading to the formation of two distinct triple clusters of the same kind, similar to the case of cubic to orthorhombic transformations. This case is represented by the upper and lower boxes of \autoref{fig:3mono}, in accordance with \autoref{cor:45degrees}. The regions corresponding to \eqref{eq:58} and \eqref{eq:63} are disjoint, as are the regions corresponding to \eqref{eq:59} and \eqref{eq:61}. This implies that two distinct triple clusters of different kinds cannot coexist$-$an observation established in \autoref{cor:first_second_impossible}. Finally, the left and right black boxes represent the intersections of regions \eqref{eq:58} with \eqref{eq:61}, and \eqref{eq:59} with \eqref{eq:63}, respectively. These intersections correspond to conditions under which only one triple cluster is possible between two compound domains and austenite, which is simultaneously of the first and second kind, as outlined in \autoref{cor:first_second_simul}. Such a configuration is not possible for cubic to orthorhombic transformations but may arise in cubic to monoclinic-II transformations.

In this section, we have analyzed various possibilities for triple clusters between compound domains and austenite. We now proceed to examine how these relate to the formation of triple clusters between Type I/II domains and austenite. This connection leads to the concept of \emph{extreme compatibility}, discussed in \autoref{sec:extreme_comp}.

\section {Conditions of Extreme Compatibility}
\label{sec:extreme_comp}
The strongest known conditions of geometric compatibility between the phases are the cofactor conditions, which have been comprehensively studied by Chen et al \cite{CHEN20132566}. Cofactor conditions consist of three sub-conditions - $\eqref{eq:CC1}$, $\eqref{eq:CC2}$ and $\eqref{eq:CC3}$. While $\eqref{eq:CC1}$ and $\eqref{eq:CC2}$ provide the necessary conditions, $\eqref{eq:CC3}$ serves as the sufficiency condition for the existence of stress-free interfaces between Type I/Type II laminates and austenite phase for all volume fractions. It is important to note that the satisfaction of the cofactor conditions does not guarantee the formation of stress-free interfaces between compound laminates and austenite.\\[5pt]
To address this limitation of the cofactor conditions, we have derived the conditions for the existence of triple clusters between a pair of compound domains and austenite in \autoref{sec:triple_beyond}. It is important to emphasize that the existence of such triple clusters constitutes only a necessary condition for the realization of stress-free interfaces between austenite and the laminates formed by compound domains, for all volume fractions $\mu\:\in\:[0,1]$. The corresponding sufficiency condition for achieving stress-free interfaces, in this context, is once again provided by $\eqref{eq:CC3}$, as outlined by Chen et al \cite{CHEN20132566}. The overarching goal is to enable the participation of all twinning systems in generating stress-free interfaces between the phases. Due to the absence of established terminology, we shall refer to these conditions as \textit{extreme compatibility}. The extreme compatibility conditions define the criteria under which the laminates of Type I/II twins as well as compound twins can simultaneously maintain compatibility with austenite, for all volume fractions, $\mu\in[0,1]$, without the need of any transition layer at the phase interface. As highlighted in \autoref{rem:different_compound_tendancy}, this situation varies from one transformation to another. Consequently, we have different extreme compatibility conditions for different transformations.\\[5 pt]
We investigate extreme compatibility for two main types of transformations, \textit{(i)} Cubic to orthorhombic and \textit{(ii)} Cubic to monoclinic-II. These transformations have been studied in the context of cofactor conditions by Chen et al \cite{CHEN20132566}, and Della Porta \cite{DELLAPORTA201927}. The limited set of alloys known to satisfy the cofactor conditions undergo either cubic-to-orthorhombic or cubic-to-monoclinic-II phase transformations. Examples include Ti\textsubscript{54}Ni\textsubscript{34}Cu\textsubscript{12} \cite{chluba} and (Ti\textsubscript{54}Ni\textsubscript{34}Cu\textsubscript{12})\textsubscript{90}Nb\textsubscript{10} \cite{Tong2019} which exhibit a cubic to orthorhombic transformation, and Zn\textsubscript{45}Au\textsubscript{30}Cu\textsubscript{25} \cite{Song2013} and V\textsubscript{0.976}W\textsubscript{0.024}O\textsubscript{2} \cite{VO2} which undergo a cubic to monoclinic-II transformation.
\subsection{Cubic to orthorhombic transformation}
\label{sec:extreme_ortho}
In cubic to orthorhombic transformation, we have six orthorhombic variants as given by \eqref{eq:19}. Each variant forms a compound twin with exactly one other variant and Type I/II twins with the other variants, see \autoref{fig:ortho}.\\[5pt]
The cofactor conditions for cubic to orthorhombic transformations have been studied by Della Porta, with the results summarized in Table 2 of \cite{DELLAPORTA201927}. For reference, his findings are imported here as \autoref{tab:DP_CO}. Based on the satisfaction of cofactor conditions, Della Porta has categorized the pairs of variants into two columns. Column A contains all Type I/II twin pairs and column B contains all compound twin pairs for this transformation. If the cofactor conditions are satisfied for any pair in column A (or column B), then they are satisfied for all the pairs in column A (or column B). The difference between column A and column B is that when the cofactor conditions are satisfied for the pairs in column A, all the pairs in column A can form triple clusters with austenite and eliminate the transition layer at the phase interface, while the same is not true for column B. The satisfaction of cofactor conditions for column B does not enable the elimination of transition layer at the phase interface. Ideally, it would be desirable to have all the pairs in column A and column B, form triple cluster with austenite, leading to increased ways of forming stress-free interfaces with austenite. This idea of elimination of transition layers for all possible twin pairs (Type I/II and compound) leads to the conditions of extreme compatibility for cubic to orthorhombic transformation.
\begin{figure}
    \centering
    \begin{minipage}[b]{0.5\textwidth}
    \centering
    \includegraphics[scale=0.65]{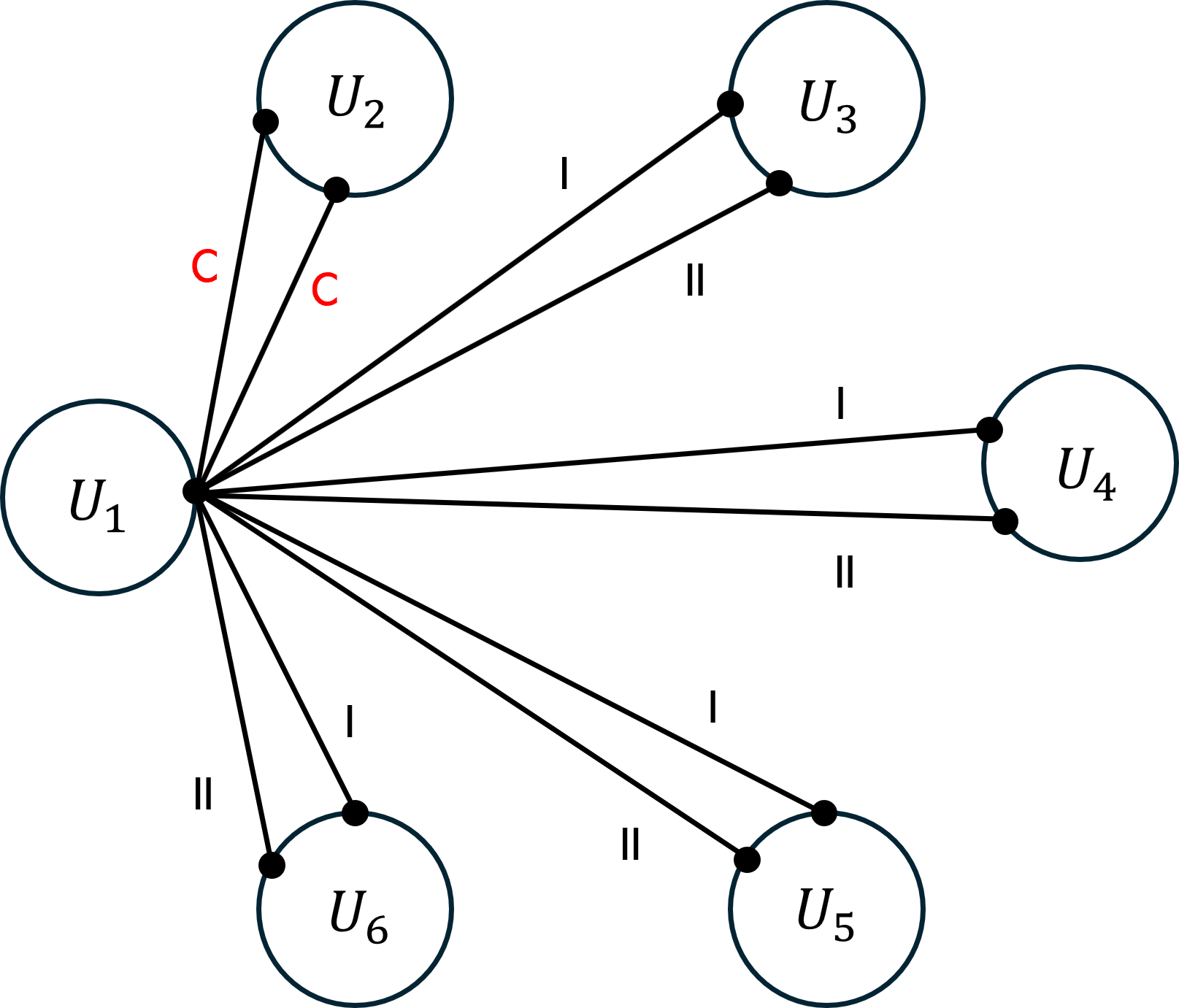}
    \subcaption{}
    \label{fig:ortho_a}
    \end{minipage}%
    \begin{minipage}[b]{0.5\textwidth}
    \centering
    \includegraphics[scale=0.65]{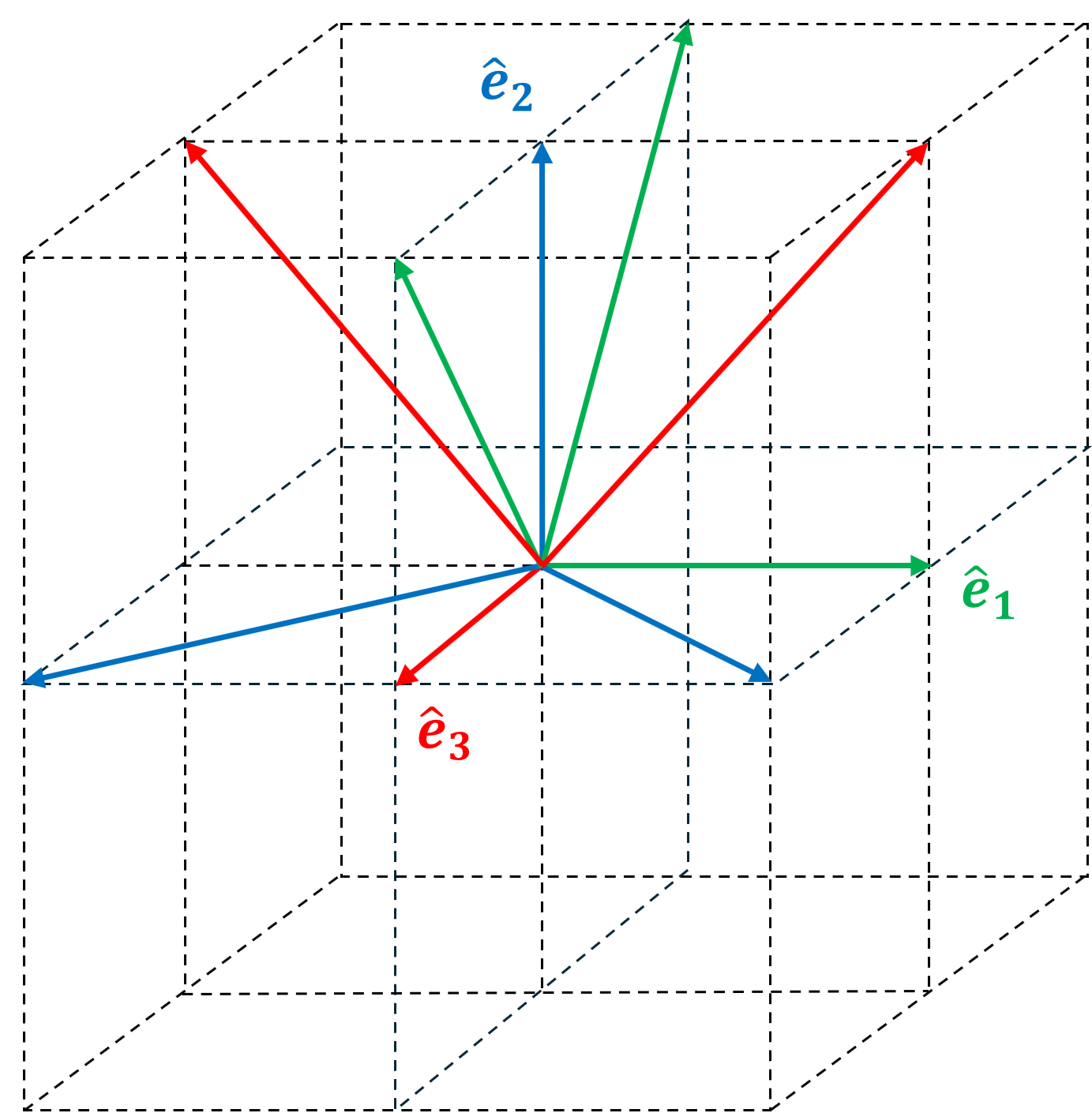}
    \subcaption{}
    \label{fig:ortho_b}
    \end{minipage}%
    \caption{(a) The strain space for cubic to orthorhombic transformations. I,II and C represents  Type-I, Type-II, and compound connections, respectively. (b) green vectors form the eigenbasis for  variants $U_1$ and $U_2$, blue vectors form the eigenbasis for variants $U_3$ and $U_4$, red vectors form the eigenbasis for variants $U_5$ and $U_6$}
    \label{fig:ortho}
\end{figure}
 \begin{table}
    \centering
 \begin{tabular}{ |c|c|c| } 
 \hline
 $R(\theta, axis)$ & Type I/II (Col A) & Compound (Col B)\\
 \hline
 &&\\[-5pt]
 $\pi,\{1,0,0\}$ & &{\color{red} ($U_3,U_4$),($U_5,U_6$)} \\[5pt] 
 $\pi,\{0,1,0\}$ & &{\color{red} ($U_1,U_2$),($U_5,U_6$)} \\[5pt] 
 $\pi,\{0,0,1\}$ & & {\color{red} ($U_1,U_2$),($U_3,U_4$)} \\[5pt] 
 $\pi,\{1,1,0\}$ & {\color{blue} ($U_1,U_3$),($U_2,U_4$)} &  \\[5pt] 
 $\pi,\{1,-1,0\}$ & {\color{blue}($U_1,U_4$),($U_2,U_3$)} & \\[5pt] 
 $\pi,\{0,1,1\}$ & {\color{blue}($U_3,U_6$),($U_4,U_5$)} & \\[5pt] 
 $\pi,\{0,1,-1\}$ & {\color{blue}($U_3,U_5$),($U_4,U_6$)} & \\[5pt] 
 $\pi,\{1,0,1\}$ & {\color{blue}($U_1,U_5$),($U_2,U_6$)} & \\[5pt]
 $\pi,\{1,0,-1\}$ & {\color{blue}($U_1,U_6$),($U_2,U_5$)} & \\[5pt] 
 \hline
\end{tabular}
\caption{Della Porta's classification of twin pairs in cubic to orthorhombic transformations. If a pair in any of the two columns satisfies cofactor conditions, then so do all the pairs of that column. }
\label{tab:DP_CO}
\end{table}
\\[5 pt]
 We now proceed to show that the conditions of extreme compatibility for cubic to orthorhombic transformation are achieved when $d=1$ and either \eqref{eq:27} or \eqref{eq:38} is satisfied. Here, $d$ is the ratio of the lattice parameter of the orthorhombic phase to that of the cubic phase along the $<100>_{cubic}||<100>_{orthorhombic}$ axes. $d$ is also an eigenvalue of the stretch tensor for this transformation. The aim here is to show that when $d=1$ and either \eqref{eq:27} or \eqref{eq:38} holds, then the laminates of all Type I/II as well as compound pairs listed in \autoref{tab:DP_CO} can simultaneously eliminate the transition layer at the phase interface for all volume fractions of the two variants. We employ the following strategy to arrive at this result. \\[5 pt]
      First, we show that when $d=1$ and \eqref{eq:27} is satisfied, then the cofactor conditions are satisfied for all Type I twins in column A,\autoref{tab:DP_CO}. This allows all Type I laminates to be compatible with austenite for all volume fractions, without requiring any transition layer. Moreover, under these conditions, \autoref{thm:firstkind} becomes applicable, which enables the formation of triple clusters of first kind between austenite and the pairs of compound twins in column B, \autoref{tab:DP_CO}. This provides the necessary condition for the compound laminates to be compatible with austenite for all volume fractions, without requiring any transition layer. Finally, we show that the condition of sufficiency for the case of compound twins, given by $\eqref{eq:CC3}$, is automatically satisfied. Therefore, $d=1$ and \eqref{eq:27} provide the necessary as well as the sufficient conditions for the existence of stress-free interfaces for the laminates of all pairs of \autoref{tab:DP_CO} with austenite for all volume fractions. We now present the analysis.\\[7pt]
Let $U_1$ be as in \eqref{eq:19}, such that $d=1$ and the other two eigenvalues satisfy \eqref{eq:27}. $\eqref{eq:CC1}$ is trivially satisfied because $d$ is an eigenvalue of $U_1$. Consider $U_1$ and $U_3$ which form Type I/II twins. The symmetry axis relating $U_1$ and $U_3$ is $\hat{e}=\frac{1}{\sqrt{2}}(\hat{\underline{e}}_1 + \hat{\underline{e}}_2)$. For Type-I twins, $\eqref{eq:CC2}$ can be written as (equation (34), \cite{CHEN20132566})
\begin{equation}
    \lambda_3 \sqrt{1-\lambda_1^2}\:(\hat{u}_1.\hat{e}) = \pm \lambda_1 \sqrt{\lambda_3^2 -1}\:(\hat{u}_3.\hat{e}).\label{eq:76}
\end{equation}\\[5 pt]
Without loss of generality, we can assume that $\lambda_1 = a-b$ and $\lambda_3 = a+b$. Then, we have $\hat{u}_1 = \frac{1}{\sqrt{2}}(-\hat{\underline{e}}_2 +\hat{\underline{e}}_3)$ and $\hat{u}_3 = \frac{1}{\sqrt{2}}(\hat{\underline{e}}_2 +\hat{\underline{e}}_3)$. Substituting these values in \eqref{eq:76}, gives us \eqref{eq:27}. This implies that by assuming $d=1$ and \eqref{eq:27} holds, \eqref{eq:CC1} and \eqref{eq:CC2} are satisfied for Type-I twins. The condition of sufficiency for Type-I laminates to be compatible with austenite for all volume fractions is given by the third cofactor condition, \eqref{eq:CC3}. We adopt a more general form of \eqref{eq:CC3}, following Chen et al. (equation 23, \cite{CHEN20132566}).
\begin{equation}
\label{eq:CC3prime}
    tr(U_1^2)-det(U_1^2)+(\mu^2 -\mu){|a|^2} - 2 = (1-\lambda_1^2(\mu))(\lambda_3^2(\mu)-1)> 0,
    \tag{CC3$^\prime$}
\end{equation}

where $|a|$ is the magnitude of the shear vector obtained by solving the twinning equation for the two variants, $\mu$ is the volume fraction of $U_1$ in the laminate, $\lambda_1(\mu)$ and $\lambda_3(\mu)$ are the eigenvalues of the stretch tensor of average deformation gradient formed by laminate of volume fraction $\mu$. Since $U_1 \in \mathbb{R}^{3 \times 3}_{sym+}$, the eigenvalues of $U_1^2$ are $\lambda_1^2,\;1,\;\lambda_3^2$.\\[5 pt]
The expression for the shear vector for Type-I solution is given by $\eqref{eq:13}_1$. In terms of eigenvalues, the expression for \eqref{eq:CC3prime} for the Type-I solution between $U_1$ and $U_3$ is

\begin{equation}
    -1+\lambda _1^2+\lambda _3^2 -\lambda _3^2 \lambda _1^2+\frac{\mu (\mu-1) \left(\left(2 \lambda _3^2+1\right) \lambda _1^4+2 \left(\lambda _3^4-5 \lambda _3^2+1\right) \lambda _1^2+\lambda _3^4+2 \lambda _3^2\right)}{\left(2 \lambda _3^2+1\right) \lambda _1^2+\lambda _3^2} > 0.\label{eq:77}
\end{equation}

Using \eqref{eq:27}, \eqref{eq:77} can be rewritten as
\begin{equation}
    \frac{(2 \mu(\mu-1) +1) \left(\lambda _1^2-1\right){}^2}{2 \lambda _1^2-1} > 0 .\label{eq:78}
\end{equation}

For $\mu \in[0,1]$, \eqref{eq:78} is satisfied for all admissible values of $\lambda_1$, i.e. $\frac{1}{\sqrt{2}}<\lambda_1<1$, see \autoref{fig:type1cc3}. This restriction on $\lambda_1$ is necessary for $U_1$ to have real and distinct eigenvalues and is inherent in the assumption of \eqref{eq:27}. Therefore, when $d=1$ and \eqref{eq:27} holds, the cofactor conditions are satisfied for the Type-I twin solution of the pair $(U_1,U_3)$ and therefore for all the pairs in column A of \autoref{tab:DP_CO}. This provides the necessary and sufficient conditions for Type-I laminates to eliminate the transition layer at the phase interface for all volume fractions. We now proceed to analyze the corresponding necessary and sufficient conditions for compound twins listed in column B of \autoref{tab:DP_CO}.\\[5 pt]
Consider the pairs $(U_1,U_2)$ which forms compound twins such that the eigenvector corresponding to $d$ is $\hat{\underline{e}}_1$ and symmetry axes relating $U_1$ and $U_2$ are $\hat{\underline{e}_2}$ and $\hat{\underline{e}}_3$. Clearly, $\hat{u}_2.\hat{e_1}$ = $\hat{u}_2.\hat{e_2}$ = 0, implying that $CC2$ is satisfied for the twinning solutions of $U_1$ and $U_2$ \cite{CHEN20132566}. Furthermore, $d=1$ and \eqref{eq:27} holds, then $(U_1,U_2)$ fulfill the premises of \autoref{thm:firstkind}. Therefore, two distinct triple clusters of first kind are possible between the pair $(U_1,U_2)$ and austenite. The only thing left to do is to check if the condition of sufficiency, \eqref{eq:CC3prime}, is satisfied. \\[5pt] 
For the compound twins, $(U_1$,$U_2)$, the shear vector is given by the expression in \eqref{eq:14} and the axes of symmetry are $\hat{\underline{e}}_1= \{0,1,0\}$ and $\hat{\underline{e}}_2= \{0,0,1\}$. Using these values in \eqref{eq:14}, we get,

\begin{equation}
    |a_{I}^c|^2 \;= |a_{II}^c|^2 \;= \;2\frac{(\lambda_3^2 - \lambda_1^2)^2}{\lambda_3^2 + \lambda_1^2}.\label{eq:79}
\end{equation}

Then, using \eqref{eq:79} and \eqref{eq:27} in \eqref{eq:CC3prime}, we have,

\begin{equation}
\frac{(1-2 \mu)^2 \left(\lambda _1^2-1\right){}^2}{2 \lambda _1^2-1}> 0.
 \label{eq:80}
\end{equation}

For $\mu\in[0,1]\setminus\{\frac{1}{2}\}$, \eqref{eq:80} is satisfied for all admissible values of $\lambda_1$, i.e. $\frac{1}{\sqrt{2}}<\lambda_1<1$, see \autoref{fig:compoundcc3}. However, at $\mu=\frac{1}{2}$, the L.H.S. of \eqref{eq:80} reduces to zero, irrespective of the value of $\lambda_1$. This implies that at a volume fraction of exactly one-half, the stretch tensor corresponding to the average deformation gradient of the compound laminate has at least two eigenvalues equal to one. By a brute force computation, one can show that when $d=1$ and \eqref{eq:27} holds, the eigenvalues of the laminate formed by $U_1$ and $U_2$ with volume fraction $\mu=\frac{1}{2}$ are $\lambda_1(\frac{1}{2})=1,\;\lambda_2(\frac{1}{2})=1,$ and $\; \lambda_3(\frac{1}{2})=\frac{\lambda_1^2}{\sqrt{2\lambda_1^2-1}}$. For this set of eigenvalues, the two solutions of \eqref{eq:17} coincide and collapse into a single solution. For all other volume fractions, \eqref{eq:17} has two distinct solutions. In both the cases, there exists at least one stress-free interface between the two phases.

\begin{figure}[hbt!]
    \centering
    \begin{minipage}[b]{0.5\textwidth}
    \centering
    \includegraphics[scale=0.0495]{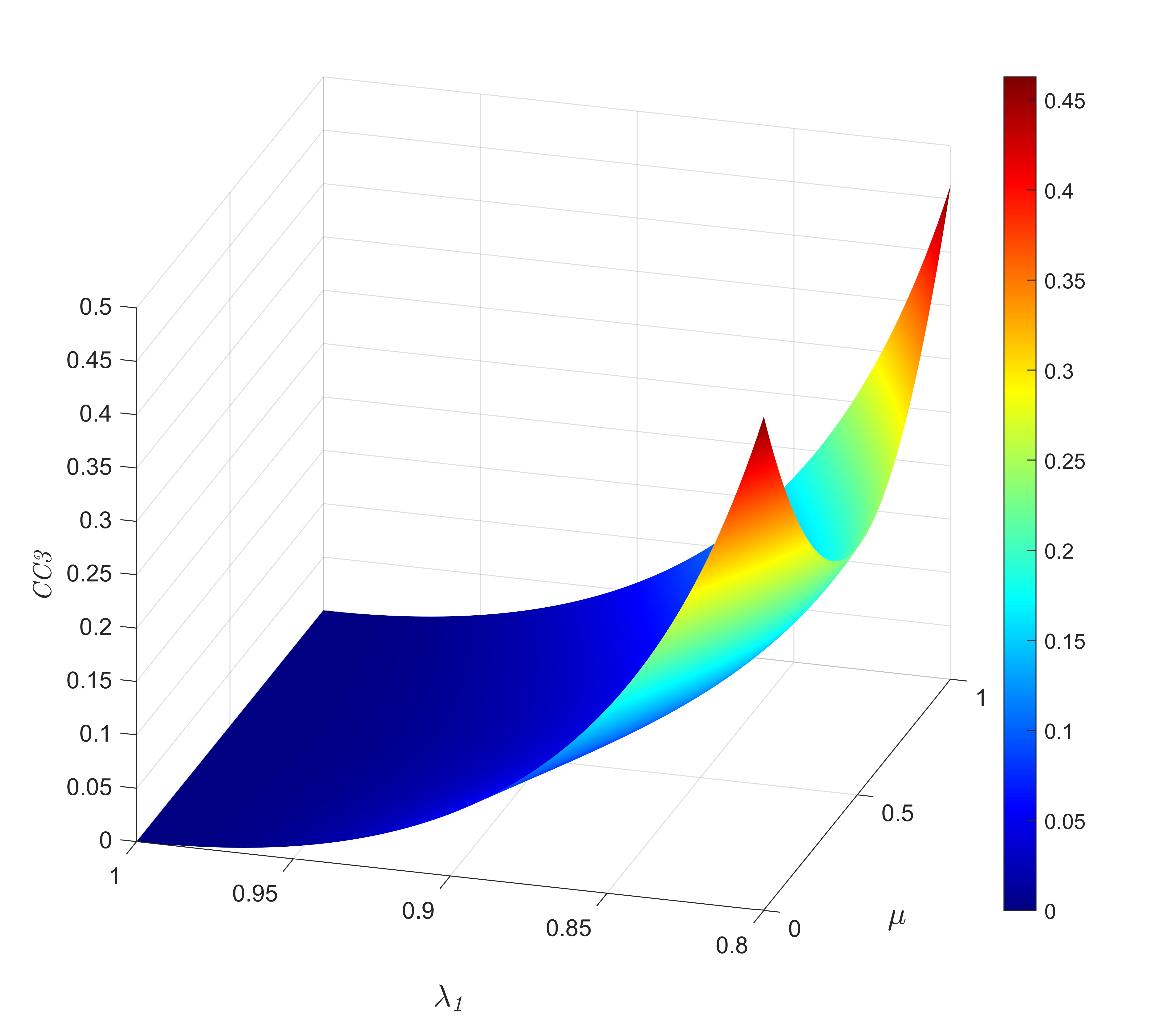}
    \subcaption{}
    \label{fig:type1cc3}
    \end{minipage}%
    \begin{minipage}[b]{0.5\textwidth}
    \centering
    \includegraphics[scale=0.0495]{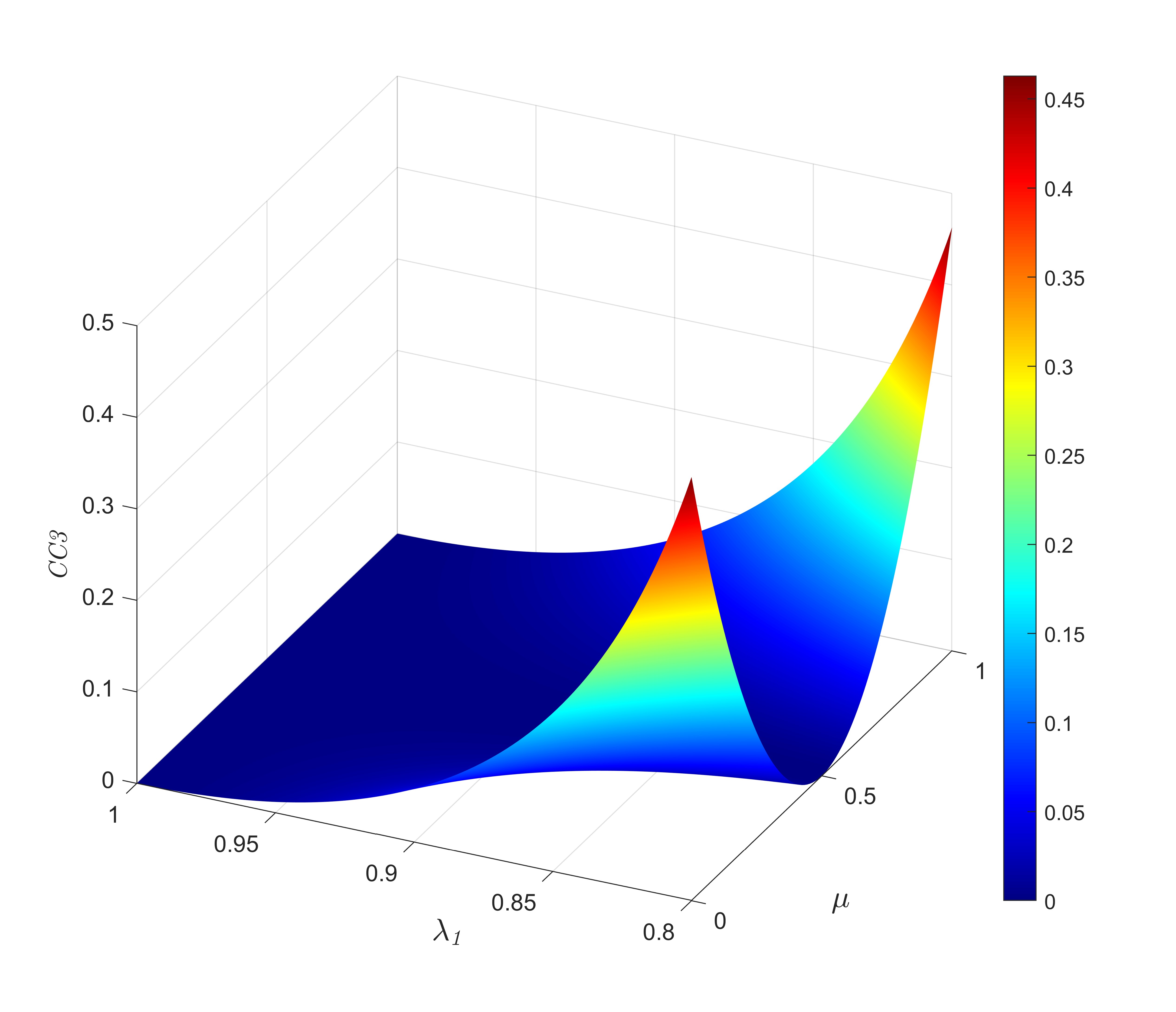}
    \subcaption{}
    \label{fig:compoundcc3}
    \end{minipage}%
    \caption{Plot of \eqref{eq:CC3prime} when $d=1$ and \eqref{eq:27} holds (a) for Type-I laminates, given by \eqref{eq:78} and (b) for compound laminates, given by \eqref{eq:80}}
    \label{fig:ortho_plot}
\end{figure}

Therefore, the conditions $d=1$ and \eqref{eq:27} are necessary as well as sufficient for the laminate of $(U_1,U_2)$ to be compatible with austenite for all volume fractions, without the need of any transition layer at the phase interface. We can repeat this analysis to get the same results for the compound pairs $(U_3,U_4)$ and $(U_5,U_6)$ as well. Therefore, we have established the first case of extreme compatibility for cubic to orthorhombic transformations, in which column A pairs of \autoref{tab:DP_CO} form Type-I triple clusters and column B pairs of \autoref{tab:DP_CO} form the triple clusters of first kind with austenite and their laminates form stress-free phase interfaces for all volume fractions. \\[5 pt]
The second case of extreme compatibility occurs when column A pairs of \autoref{tab:DP_CO} form Type-II triple clusters and column B pairs of \autoref{tab:DP_CO} form the triple clusters of second kind.  This happens when when $d=1$ and \eqref{eq:38} holds. In this case, the pairs of column A, satify the cofactor conditions for Type-II solution and the pairs of column B fulfill the premises of \autoref{thm:secondkind} and form triple clusters of second kind at the phase interface. Through an analysis similar to the first case, it can be shown that the conditions $d=1$ and \eqref{eq:38}, are necessary as well as sufficient for the laminates of Type-II twins and compound twins to maintain compatibility with austenite for all volume fractions without the need of a transition layers at the phase interface. As in the first case, for a volume fraction of $\mu=\frac{1}{2}$, the two solutions between the compound laminate and austenite collapse into a single solution. In this case, $\lambda_1$ can take values from a broader set, i.e. from the interval (0,1). \\[5 pt]
The overall structure of the analysis for the second case remains largely unchanged, with two key differences. First, instead of \eqref{eq:76}, we use \eqref{eq:CC2} for Type-II solution, given by (equation (34), \cite{CHEN20132566})

\begin{equation}
    \sqrt{1-\lambda_1^2}\:(\hat{u}_1.\hat{e}) = \pm \sqrt{\lambda_3^2 -1}\:(\hat{u}_3.\hat{e}).\label{eq:81}
\end{equation}

Second, in applying \eqref{eq:CC3prime} to check the sufficiency for Type-II laminates, we use the expression for shear vector corresponding to Type-II solution, given by $\eqref{eq:13}_2$. For brevity, we present the result directly without repeating the full computation.\\[5pt]
Under the conditions, when $d=1$ and \eqref{eq:38} holds, the Type-II laminates formed by the pairs in column A of \autoref{tab:DP_CO} and the compound laminates formed by the pairs in column B of \autoref{tab:DP_CO} are all compatible with austenite for all volume fractions, without the need of any transition layer at the phase interface.\\[5pt] 
\begin{table}
    \centering
\begin{tabular}{ |c|c| } 
 \hline
 $R(\theta, axis)$ & Twin system \\ 
 \hline
 &\\[- 5pt]
 $\pi,\{1,0,0\}$ & {\color{red} ($U_3,U_4$),($U_5,U_6$)} \\[5pt] 
 $\pi,\{0,1,0\}$ & {\color{red} ($U_1,U_2$),($U_5,U_6$)} \\[5pt] 
 $\pi,\{0,0,1\}$ & {\color{red} ($U_1,U_2$),($U_3,U_4$)} \\[5pt] 
 $\pi,\{1,1,0\}$ & {\color{blue} ($U_1,U_3$),($U_2,U_4$)}  \\[5pt] 
 $\pi,\{1,-1,0\}$ & {\color{blue}($U_1,U_4$),($U_2,U_3$)} \\[5pt] 
 $\pi,\{0,1,1\}$ & {\color{blue}($U_3,U_6$),($U_4,U_5$)} \\[5pt] 
 $\pi,\{0,1,-1\}$ & {\color{blue}($U_3,U_5$),($U_4,U_6$)} \\[5pt] 
 $\pi,\{1,0,1\}$ & {\color{blue}($U_1,U_5$),($U_2,U_6$)} \\[5pt]
 $\pi,\{1,0,-1\}$ & {\color{blue}($U_1,U_6$),($U_2,U_5$)} \\[5pt] 
 \hline
\end{tabular}
\caption{Red pairs from compound twins and blue pairs form Type I/II twins. If $d=1$  and \eqref{eq:27} holds, then the red pairs form two triple clusters of first kind with austenite and blue pairs form a single Type-I triple clusters with austenite. Similarly if $d=1$ and \eqref{eq:38} holds, then the red pairs form two triple clusters of second kind with austenite and blue pairs form a single Type-II triple clusters with austenite. In both the cases, all laminates form stress-free interfaces with austenite $\forall\:\mu\in[0,1]$.}
\label{tab:CO}
\end{table}
\begin{figure}[hbt!]
    \centering
    \begin{minipage}[b]{0.5\textwidth}
    \centering
    \includegraphics[scale=0.09]{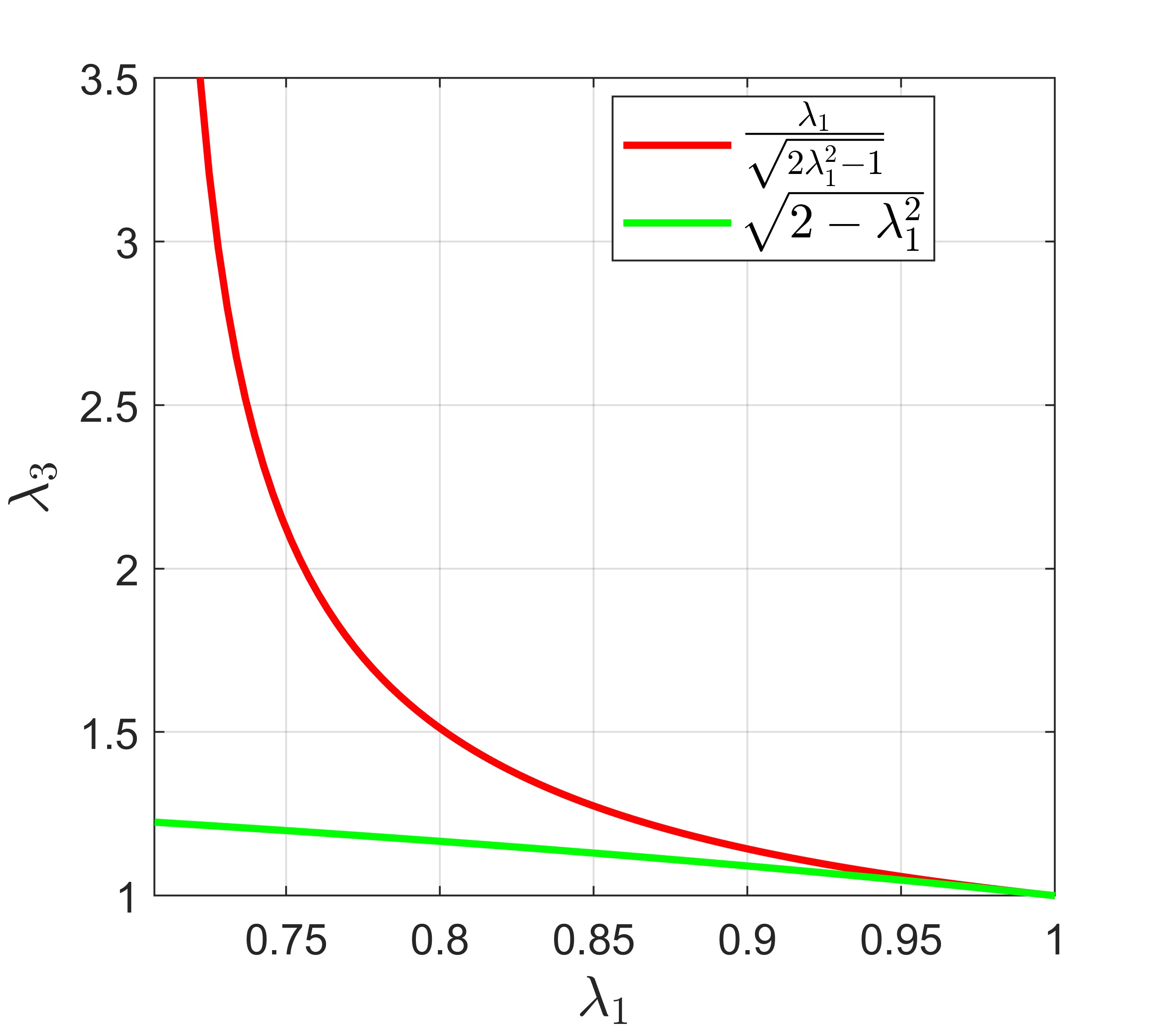}
    \subcaption{}
    \end{minipage}%
    \begin{minipage}[b]{0.5\textwidth}
    \centering
    \includegraphics[scale=0.09]{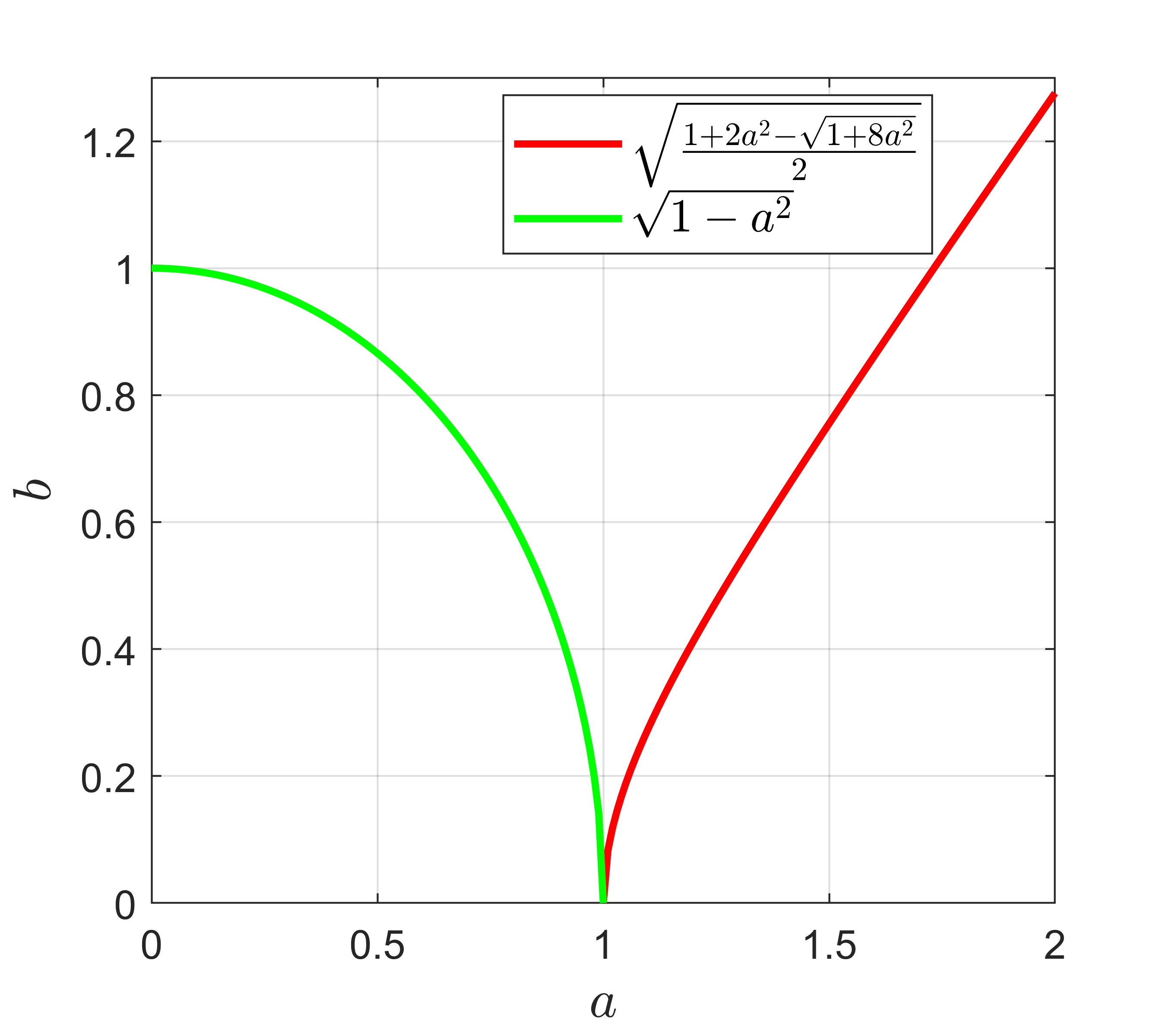}
    \subcaption{}
    \end{minipage}%
    \caption{Plot of equations \eqref{eq:27} (in red) and \eqref{eq:38} (in green) in (a) eigenvalue space space and (b) lattice parameter space. When $d=1$ is satisfied, then these curves represent the two cases of extreme compatibility in cubic to orthorhombic transformations, see \autoref{tab:CO}.}
    \label{fig:ortho_plot}
\end{figure}

\autoref{fig:ortho_plot} shows the conditions of extreme compatibility for cubic to orthorhombic transformation in the eigenvalue space and the lattice parameter space. We also tabulate the results of extreme compatibility for cubic to orthorhombic transformation, see \autoref{tab:CO} . Under the extreme compatibility conditions, all the pairs in Table \autoref{tab:CO}  can eliminate the transition layer at the phase interface simultaneously. Foe example, if $d=1$  and \eqref{eq:27} holds, then the pair $(U_1,U_2)$ forms two triple clusters of first kind with austenite and the pair $(U_1,U_3)$ forms a Type-I triple cluster with austenite. Similarly, if $d=1$ and \eqref{eq:38} holds, then the pair $(U_1,U_2)$ forms two triple clusters of second kind with austenite and the pair $(U_1,U_3)$ forms a Type-II triple cluster with austenite. In both the cases, the laminates of $(U_1,U_2)$ and $(U_1,U_3)$ are compatible with austenite for all volume fractions, without requiring any transition layer. The difference between \autoref{tab:DP_CO} and \autoref{tab:CO}, is that we have reduced the two columns in \autoref{tab:DP_CO} to a single column in \autoref{tab:CO}. These additional modes for eliminating the transition layer at the phase interface highlight the advantage of the extreme compatibility conditions over the cofactor conditions for cubic to orthorhombic transformation. Furthermore, the enhanced compatibility due to triple clusters in compound twins gives rise to interesting microstructures, which is discussed in \autoref{sec:new_micro}. \\[5pt]

\subsection{Cubic to monoclinic-II transformation}
\label{sec:extreme_mono}
For cubic to monoclinic-II transformation, we have twelve possible variants in the martensitic phase as given by \eqref{eq:22}. Each variant forms Type I/II twins with four variants, compound twins with two variants, non-conventional compound domain with one variant, and is not compatible with the four remaining variants, see \autoref{fig:mono_variants}.\\[5pt]
The implications of the cofactor conditions for cubic to monoclinic-II transformation have been examined by Della Porta \cite{DELLAPORTA201927}. Similar to the case of cubic to orthorhombic transformation, if the cofactor conditions are satisfied for Type I/II twins, it enables the elimination of transition layer for Type I/II laminates at the phase interface. But unlike cubic to orthorhombic transformations, satisfaction of CC's for any given Type I/II pair does not imply satisfaction of CC's for all Type I/II pairs, but only for a subset of Type I/II pairs, specifically, for half of the total possible Type I/II pairs. Consequently, Della Porta has classified all possible pairs of variants into three columns (Table 1, \cite{DELLAPORTA201927}). For ease of reference, the aforementioned table is reproduced herein as \autoref{tab:DP_CM}. \\[5pt]

\begin{figure}[hbt!]
    \centering
    \includegraphics[scale=0.75]{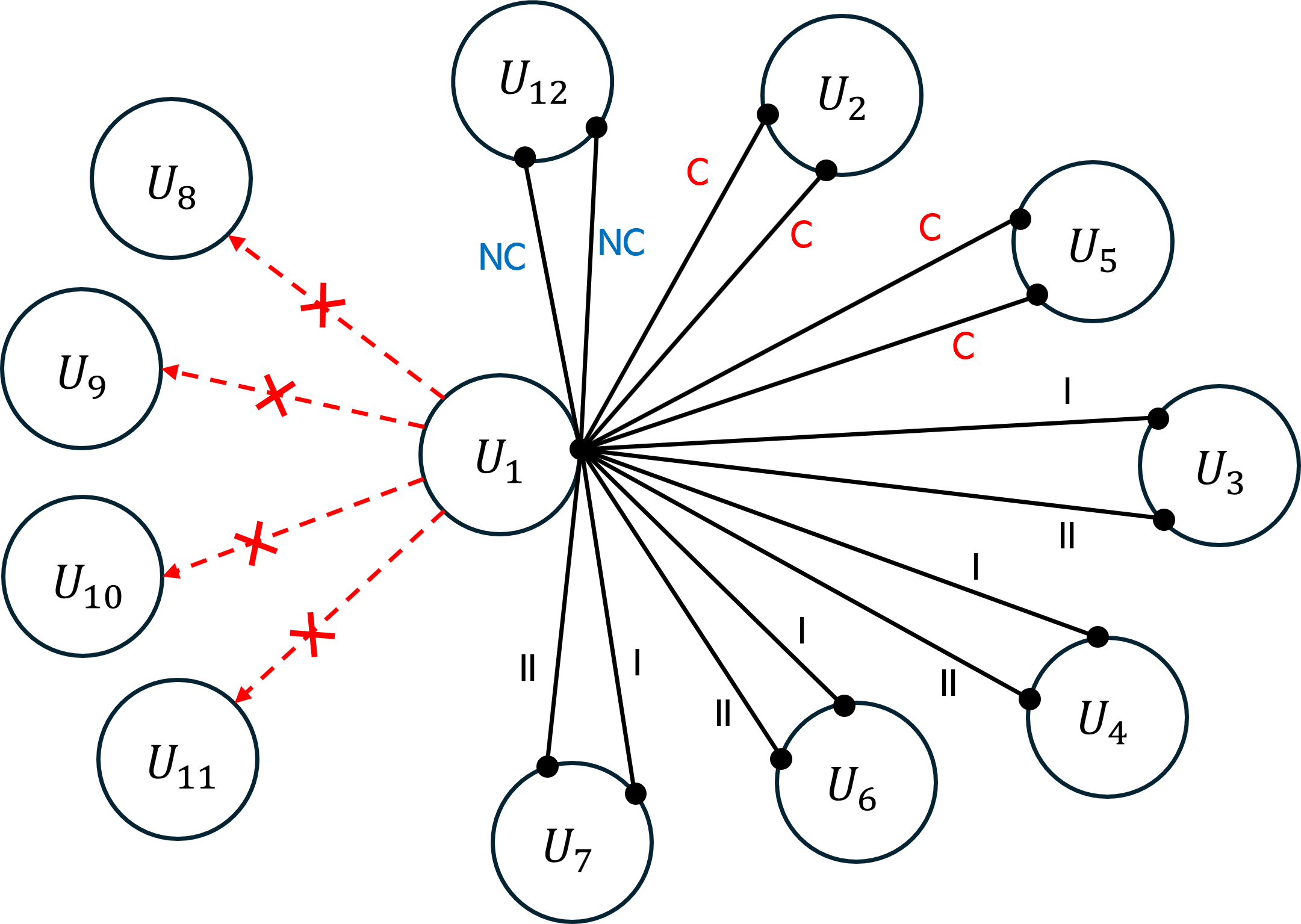}
    \caption{ The strain space for cubic to monoclinic-II transformations. I,II C and NC represent Type-I, Type-II, compound and non-conventional connections respectively. Red crosses denote incompatibility.}
    \label{fig:mono_variants}
\end{figure}
The first two columns consist of Type I/II twin pairs, such that if any pair from the two columns satisfies CC's, then all the pairs in that column also satisfy CC's. The third column has two different twin groups. These are the compound twin pairs (in red) and non-conventional compound domain pairs (in black). If any red pair in the third column satisfies CC's, then they are satisfied by all red pairs. Likewise, if any black pair in the third column satisfies CC's, then so do all the black pairs. However, the satisfaction of the cofactor conditions for any pair, whether red or black, does not imply the elimination of transition layers at the phase interface.\\[5pt]
In \autoref{sec:triple_beyond}, we have established the conditions which allow triple clusters in compound domains. We will use these conditions here to enable the elimination of transition layer for the pairs in column 3, in addition to the pairs in column 1 and column 2 of \autoref{tab:DP_CM}. The goal here is to regroup the variant pairs in column 3 of \autoref{tab:DP_CM}, based on their ability to eliminate the transition layers between the two phases, and not the cofactor conditions. This is because cofactor conditions do not allow for the elimination of transition layer for compound domains, and therefore do not constitute the optimal compatibility conditions for this transformation.

\begin{table}
    \centering
\begin{tabular}{ |c|c|c|c| } 
 \hline
 $R(\theta, axis)$ & Type I/II (Col 1) & Type I/II (Col 2) & Compound, NC (Col 3)\\
 \hline
 & & &\\[-5pt]
 $\pi,\{1,0,0\}$ & & & {\color{red} ($U_3,U_4$),($U_6,U_7$)}  \\[5pt]
 $\pi,\{1,0,0\}$ & & & {\color{red} ($U_8,U_{11}$),($U_9,U_{10}$)}  \\[5pt]
 $\pi,\{0,1,0\}$ & & & {\color{red} ($U_1,U_2$),($U_5,U_{12}$)}  \\[5pt]
  $\pi,\{0,1,0\}$ & & & {\color{red} ($U_6,U_7$),($U_8,U_{11}$)}  \\[5pt]
 $\pi,\{0,0,1\}$ & & & {\color{red} ($U_1,U_2$),($U_3,U_4$)} \\[5pt]
 $\pi,\{0,0,1\}$ & & & {\color{red} ($U_5,U_{12}$),($U_9,U_{10}$)} \\[5pt]
 $\pi,\{1,1,0\}$ & {\color{blue} ($U_1,U_3$)} &{\color{blue} ($U_5,U_{10}$)} & {\color{red} ($U_6,U_{11}$)}\\[5pt]
  $\pi,\{1,1,0\}$ & {\color{blue} ($U_2,U_4$)} &{\color{blue} ($U_9,U_{12}$)} & {\color{red} ($U_7,U_{8}$)}\\[5pt]
 $\pi,\{1,-1,0\}$ & {\color{blue} ($U_1,U_4$)} &{\color{blue} ($U_5,U_{9}$)} & {\color{red} ($U_6,U_{11}$)}\\[5pt]
 $\pi,\{1,-1,0\}$ & {\color{blue} ($U_2,U_3$)} &{\color{blue} ($U_{10},U_{12}$)} & {\color{red} ($U_7,U_{8}$)}\\[5pt]
 $\pi,\{0,1,1\}$ & {\color{blue} ($U_6,U_9$)} &{\color{blue} ($U_3,U_{8}$)} & {\color{red} ($U_1,U_{5}$)}\\[5pt]
 $\pi,\{0,1,1\}$ & {\color{blue} ($U_7,U_{10} $)} &{\color{blue} ($U_{4},U_{11}$)} & {\color{red} ($U_2,U_{12}$)}\\[5pt]
 $\pi,\{0,1,-1\}$ & {\color{blue} ($U_6,U_{10}$)} &{\color{blue} ($U_3,U_{11}$)} & {\color{red} ($U_1,U_{5}$)}\\[5pt]
  $\pi,\{0,1,-1\}$ & {\color{blue} ($U_7,U_{9}$)} &{\color{blue} ($U_4,U_{8}$)} & {\color{red} ($U_2,U_{12}$)}\\[5pt]
  $\pi,\{1,0,1\}$ & {\color{blue} ($U_5,U_{11}$)} &{\color{blue} ($U_1,U_{6}$)} & {\color{red} ($U_3,U_{10}$)}\\[5pt]
 $\pi,\{1,0,1\}$ & {\color{blue} ($U_8,U_{12} $)} &{\color{blue} ($U_{2},U_{7}$)} & {\color{red} ($U_4,U_{9}$)}\\[5pt]
   $\pi,\{1,0,-1\}$ & {\color{blue} ($U_{11},U_{12}$)} &{\color{blue} ($U_2,U_{6}$)} & {\color{red} ($U_3,U_{10}$)}\\[5pt]
  $\pi,\{1,0,-1\}$ & {\color{blue} ($U_5,U_{8} $)} &{\color{blue} ($U_{1},U_{7}$)} & {\color{red} ($U_4,U_{9}$)}\\[5pt]
 $\frac{\pi}{2},\{1,0,0\}$ &  & & {\color{black} ($U_1,U_{12}$),($U_2,U_5$)}\\[5pt]
 $\frac{-\pi}{2},\{1,0,0\}$ & & & {\color{black} ($U_1,U_{12}$),($U_2,U_5$)} \\[5pt]
 $\frac{\pi}{2},\{0,1,0\}$ & & & {\color{black} ($U_3,U_9$),($U_4,U_{10}$)} \\[5pt]
 $\frac{-\pi}{2},\{0,1,0\}$ & & & {\color{black} ($U_3,U_9$),($U_4,U_{10}$)} \\[5pt]
 $\frac{\pi}{2},\{0,0,1\}$ & & & {\color{black} ($U_6,U_8$),($U_7,U_{11}$)} \\[5pt]
 $\frac{-\pi}{2},\{0,0,1\}$ & & & {\color{black} ($U_6,U_8$),($U_7,U_{11}$)} \\[5pt]
 \hline
\end{tabular}
\caption{Della Porta's classification of variant pairs to understand the cofactor conditions for cubic to monoclinic-II transformation. In blue are the Type I/II pairs, in red are the compound pairs and in black are the non-conventional pairs.}
\label{tab:DP_CM}
\end{table}

In order to regroup \autoref{tab:DP_CM} on the basis of triple clusters, we make use of \autoref{thm:dellaportaQ} in the following way. Recall equations \eqref{eq:58},\eqref{eq:59}, \eqref{eq:61} and \eqref{eq:63}. These are the conditions for the existence of triple cluster(s) between a pair of compound domains and austenite. In all four equations we have $(\hat{u}_1.\hat{e}_1)$ term and $(\hat{u}_3.\hat{e}_1)$ term. Now, for any $Q \in SO(3)$, we have $(\hat{u}_1.\hat{e}_1)\:=\:$$(Q\hat{u}_1.Q\hat{e}_1)$ and $(\hat{u}_3.\hat{e}_1)$\:=\:$(Q\hat{u}_3.Q\hat{e}_1)$. From \autoref{thm:dellaportaQ}, we know that if $U$ and $V$ form compound domains, with symmetry axes $\hat{e}_1$ and $\hat{e}_2$, then $QUQ^T$ and $QVQ^T$ also form compound domains, with symmetry axes $Q\hat{e}_1$ and $Q\hat{e}_2$. Further, if $\hat{u}_i, i=1,2,3,$ are the eigenvectors of $U$, then $Q\hat{u}_i, i=1,2,3,$ are the eigenvectors of $QUQ^T$. Then, we conclude that if $U$ and $V$ satisfy any of the equations \eqref{eq:58},\eqref{eq:59}, \eqref{eq:61} or \eqref{eq:63}, then so do $QUQ^T$ and $QVQ^T$. To regroup \autoref{tab:DP_CM}, we restrict the set $Q$ to $\mathscr{P}(a)$.  \\[5 pt]
For different choices of $Q \in \mathcal{P}(a)$, the twins pairs listed in column 3 of \autoref{tab:DP_CM} can be sorted into three separate columns, as shown in \autoref{tab:CM}. Columns 3 and 4 of \autoref{tab:CM} comprise of compound twins such that if any pair of variants in column 3 (or column 4) forms a triple cluster of the first or the second kind, then so do all the pairs of column 3 (or column 4). Column 5 comprises of non-conventional compound domains which commute. Again, for different choices of $Q \in \mathscr{P}(a)$, we can generate all the pairs of non-conventional compound domains, starting from a single pair. Also, \autoref{lem:commutationQ} tells us that if one pair in column 5 commutes, then so do all the pairs. Under the conditions, when $d=1$ and \eqref{eq:27} holds, \autoref{thm:firstkind} is applicable to the pairs in column 5, resulting in two distinct triple clusters of the first kind for every pair of column 5. Similarly, when $d=1$ and \eqref{eq:38} holds, \autoref{thm:secondkind} is applicable to the pairs in column 5, resulting in two distinct triple clusters of the second kind for every pair of column 5. This finalizes the regrouping of twin pairs from \autoref{tab:DP_CM} into the new structure of \autoref{tab:CM}. It is important to note that this new classification of variants is based on the ability to form triple clusters with austenite and not the cofactor conditions.\\[5 pt]
\begin{table}
    \centering
\begin{tabular}{ |c|s|s|s|s|c|} 
 \hline
 \multicolumn{1}{|c|}{Symmetry} & \multicolumn{2}{|c|}{Type I/II} & \multicolumn{2}{|c|}{Compound}  &  \multicolumn{1}{|c|}{Non-Conventional}\\
 \hline
 $R(\theta, axis)$ & Col 1 & Col 2 & Col 3 & Col 4 & Col 5\\
 \hline
 & & & & & \\[-5pt]
 $\pi,\{1,0,0\}$ & & & {\color{red} ($U_3,U_4$),($U_6,U_7$)} & &  \\[5pt]
 $\pi,\{1,0,0\}$ & & & {\color{red} ($U_8,U_{11}$),($U_9,U_{10}$)} & &  \\[5pt]
 $\pi,\{0,1,0\}$ & & & {\color{red} ($U_1,U_2$),($U_5,U_{12}$)}  & & \\[5pt]
  $\pi,\{0,1,0\}$ & & & {\color{red} ($U_6,U_7$),($U_8,U_{11}$)}  & & \\[5pt]
 $\pi,\{0,0,1\}$ & & & {\color{red} ($U_1,U_2$),($U_3,U_4$)} & & \\[5pt]
 $\pi,\{0,0,1\}$ & & & {\color{red} ($U_5,U_{12}$),($U_9,U_{10}$)} & & \\[5pt]
 $\pi,\{1,1,0\}$ & {\color{blue} ($U_1,U_3$)} & {\color{blue} ($U_5,U_{10}$)} & & {\color{red} ($U_6,U_{11}$)} &\\[5pt]
  $\pi,\{1,1,0\}$ & {\color{blue} ($U_2,U_4$)} &{\color{blue} ($U_9,U_{12}$)} & & {\color{red} ($U_7,U_{8}$)} &\\[5pt]
 $\pi,\{1,-1,0\}$ & {\color{blue} ($U_1,U_4$)} &{\color{blue} ($U_5,U_{9}$)} & & {\color{red} ($U_6,U_{11}$)} & \\[5pt]
 $\pi,\{1,-1,0\}$ & {\color{blue} ($U_2,U_3$)} &{\color{blue} ($U_{10},U_{12}$)} & & {\color{red} ($U_7,U_{8}$)} & \\[5pt]
 $\pi,\{0,1,1\}$ & {\color{blue} ($U_6,U_9$)} &{\color{blue} ($U_3,U_{8}$)} & & {\color{red} ($U_1,U_{5}$)} & \\[5pt]
 $\pi,\{0,1,1\}$ & {\color{blue} ($U_7,U_{10} $)} &{\color{blue} ($U_{4},U_{11}$)} & & {\color{red} ($U_2,U_{12}$)} &\\[5pt]
 $\pi,\{0,1,-1\}$ & {\color{blue} ($U_6,U_{10}$)} &{\color{blue} ($U_3,U_{11}$)} & & {\color{red} ($U_1,U_{5}$)} &\\[5pt]
  $\pi,\{0,1,-1\}$ & {\color{blue} ($U_7,U_{9}$)} &{\color{blue} ($U_4,U_{8}$)} & & {\color{red} ($U_2,U_{12}$)} &\\[5pt]
  $\pi,\{1,0,1\}$ & {\color{blue} ($U_5,U_{11}$)} &{\color{blue} ($U_1,U_{6}$)} & & {\color{red} ($U_3,U_{10}$)} &\\[5pt]
 $\pi,\{1,0,1\}$ & {\color{blue} ($U_8,U_{12} $)} &{\color{blue} ($U_{2},U_{7}$)} & & {\color{red} ($U_4,U_{9}$)} &\\[5pt]
   $\pi,\{1,0,-1\}$ & {\color{blue} ($U_{11},U_{12}$)} &{\color{blue} ($U_2,U_{6}$)} & & {\color{red} ($U_3,U_{10}$)}&\\[5pt]
  $\pi,\{1,0,-1\}$ & {\color{blue} ($U_5,U_{8} $)} &{\color{blue} ($U_{1},U_{7}$)} & & {\color{red} ($U_4,U_{9}$)} & \\[5pt]
 $\frac{\pi}{2},\{1,0,0\}$ &  & & & & {\color{black} ($U_1,U_{12}$),($U_2,U_5$)}\\[5pt]
 $\frac{-\pi}{2},\{1,0,0\}$ & & & & & {\color{black} ($U_1,U_{12}$),($U_2,U_5$)} \\[5pt]
 $\frac{\pi}{2},\{0,1,0\}$ & & & & & {\color{black} ($U_3,U_9$),($U_4,U_{10}$)} \\[5pt]
 $\frac{-\pi}{2},\{0,1,0\}$ & & & & & {\color{black} ($U_3,U_9$),($U_4,U_{10}$)} \\[5pt]
 $\frac{\pi}{2},\{0,0,1\}$ & & & & & {\color{black} ($U_6,U_8$),($U_7,U_{11}$)} \\[5pt]
 $\frac{-\pi}{2},\{0,0,1\}$ & & & & & {\color{black} ($U_6,U_8$),($U_7,U_{11}$)} \\[5pt]
 \hline
\end{tabular}
\caption{New classification of variant pairs for cubic to monoclinic-II transformation based on the ability to eliminate the transition layer at the phase interface. In blue are the Type I/II pairs, in red are the compound pairs and in black are the non-conventional pairs. If a pair in any column eliminates the transition layer at the phase interface, then so do all the members of that column. Shaded columns (Col 1, Col 2, Col 3, Col 4) represent the region covered by optimal compatibility conditions for this transformation, where all the conventional twins pairs can simultaneously eliminate the transition layer at the phase interface. This happens when \eqref{eq:87} or \eqref{eq:88} is satisfied. }
\label{tab:CM}
\end{table}
The necessary conditions for the formation of triple clusters for every column of \autoref{tab:CM} are now known. The sufficient condition for the laminates of the pairs in these columns to be compatible with austenite for all volume fractions, $\mu\in[0,1]$, is given by \eqref{eq:CC3} in each case. The only term in the expression for \eqref{eq:CC3}, which differs from column to column, is the $\frac{|a|^2}{4}$ term. Noting that $|a|^2=|Qa|^2$ for any $Q\in SO(3)$, we can use \autoref{thm:dellaportaQ} to establish that if \eqref{eq:CC3} is satisfied for any pair, in any column, for a particular kind of triple cluster, then it is satisfied for all the pairs of that column for that kind of triple cluster. To arrive at the conditions of extreme compatibility in cubic to monoclinic-II transformations, what remains to be checked, is that for how many of these columns are triple clusters possible at the same time. We present the argument in the following six points. We will assume $d=1$, till the end of this section.\\[5 pt]
\begin{enumerate}[label=\Roman*)]
    \item It is not possible to form two distinct triple clusters of the same kind for a given pair of compound twins listed in column 3 and column 4 of \autoref{tab:CM} . In other words, equation $(\hat{u}_1.\hat{e}_1)=-(\hat{u}_3.\hat{e}_1)$ cannot be satisfied here, see \autoref{cor:45degrees}. It suffices to show this for $U_1$, which forms compound twins with $U_2$ and $U_5$. For $(U_1,U_2)$, the symmetry axis, $\hat{e}_A$ is $\{0,1,0\}$ and the eigenvectors of $U_1$ are\\[5pt]
    \begin{equation}
        \hat{u}_1= \frac{1}{\rho_1}\begin{pmatrix} 0 \\ \frac{a-c-\sqrt{D}}{2b} \\ 1 \end{pmatrix},\qquad \qquad \hat{u}_3= \frac{1}{\rho_2}\begin{pmatrix} 0 \\ \frac{a-c+\sqrt{D}}{2b} \\ 1 \end{pmatrix}.\;\; 
        \notag
    \end{equation}\\[5pt]
    where $D= (a-c)^2+4b^2$ and $\rho_1,\rho_2$ are chosen so that $|\hat{u}_1|=|\hat{u}_3|=1$. Now, $(\hat{u}_1.\hat{e}_1) = -\: (\hat{u}_3. \hat{e}_1)$ implies\\[5pt]
    \begin{equation}
        (\hat{u}_1.\hat{e}_A) = -\: (\hat{u}_3. \hat{e}_A)
        \notag
    \end{equation}
    \begin{equation}
       \implies \rho_2^2(a-c-\sqrt{D})^2 = \rho_1^2(a-c+\sqrt{D})^2.
        \notag
    \end{equation}
    \begin{equation}
        \implies (\frac{(a-c+\sqrt{D})^2}{4b^2} +1)(a-c-\sqrt{D})^2 = (\frac{(a-c-\sqrt{D})^2}{4b^2} +1)(a-c+\sqrt{D})^2.
        \notag
    \end{equation}
     \begin{equation}
        \implies a-c-\sqrt{D} = \:\pm\; (a-c+\sqrt{D}).
        \notag
    \end{equation}
    \begin{equation}
        \implies \text{either }\: D=0\;\; \text{or }\:a=c.
        \notag
    \end{equation}\\[5pt]
    $D=0$ further implies $b=0$ \text{and} $a=c$. This reduces the transformation to cubic to orthorhombic and therefore, for this compound pair,  $(\hat{u}_1.\hat{e}_1) = -\: (\hat{u}_3. \hat{e}_1)$ does not hold.\\[10pt]
    For the pair $(U_1,U_5)$, the axis of symmetry, $\hat{e}_B=$ $\frac{1}{\sqrt{2}}{\{0,1,1\}}$ and $(\hat{u}_1.\hat{e}_1) = -\: (\hat{u}_3. \hat{e}_1)$ implies\\[5pt]
    \begin{equation}
        (\hat{u}_1.\hat{e}_B) = -\: (\hat{u}_3. \hat{e}_B).
        \notag
    \end{equation}
    \begin{equation}
        \implies \rho_2^2(a-c-\sqrt{D}+2b)^2 = \rho_1^2(a-c+\sqrt{D}+2b)^2.
        \notag
    \end{equation}
     \begin{equation}
        \implies(\frac{(a-c+\sqrt{D})^2}{4b^2} +1)(a-c-\sqrt{D} + 2b)^2 = (\frac{(a-c-\sqrt{D})^2}{4b^2} +1)(a-c+\sqrt{D}+2b)^2.
        \notag
    \end{equation}
     \begin{equation}
        \implies-4b^2 (a-c-\sqrt{D})+4b^2 (a-c+\sqrt{D}) =  -4b^2 (a-c+\sqrt{D})+4b^2 (a-c-\sqrt{D}).
        \notag
    \end{equation}\\[5pt]
    which again requires either $b=0$ or $D=0$, which is forbidden. \\[5pt]
Alternatively, we can arrive at the same result by noting that the pairs listed in columns 3 and 4 do not commute and therefore by \autoref{cor:45degrees}, cannot form two distinct triple clusters of the same kind.
    \item It is possible to form two distinct triple clusters of the same kind, simultaneously, from a single pair of non-conventional compound domains in column 5 of \autoref{tab:CM}. This is achieved when $d=1$ and either \eqref{eq:27} or \eqref{eq:38} is satisfied, see \autoref{thm:firstkind} and \autoref{thm:secondkind}.\\[5pt]
    For example, the pair $(U_1,U_{12})$ is a pair of commuting compatible variants, and can form two triple clusters of first or the second kind, depending on whether \eqref{eq:27} or \eqref{eq:38} holds along with $d=1$, see \autoref{rem:45_O_NCM}.

    \item If the pairs in column 1 form a Type-I triple cluster, then the pairs in column 3 form a triple cluster of the first kind and vice-versa. This can be shown as follows.\\[5pt]
    Consider the representative pair $(U_1,U_3)$ from column 1 which is related by $\hat{e}_T = \frac{1}{\sqrt{2}}\{1,1,0\}$. Suppose $(U_1,U_3)$ satisfy the cofactor conditions for Type-I, implying a triple cluster of the first kind between this pair and austenite. Then, \eqref{eq:CC2} implies\\[5pt]
    \begin{equation}
        \lambda_3\sqrt{1-\lambda_1^2}\:(\hat{u}_1.\hat{e}_T)=- \lambda_1\sqrt{\lambda_3^2-1}\:(\hat{u}_3.\hat{e}_T).\label{eq:82}
    \end{equation}
    \\[5 pt]
   We have chosen the minus sign here because $(\hat{u}_1.\hat{e}_T)$ and $(\hat{u}_1.\hat{e}_T)$ are always of opposite sign. For the triple cluster of the first kind to hold for column 3, either \eqref{eq:58} or \eqref{eq:59} should hold. Again, consider the representative pair $(U_1,U_2)$ from column 3, related by $\hat{e}_A = \{{0,1,0}\}$. For \eqref{eq:58} to hold, we must have \\[5 pt]
    \begin{equation}
        \lambda_3\sqrt{1-\lambda_1^2}\:(\hat{u}_1.\hat{e}_A)=- \lambda_1\sqrt{\lambda_3^2-1}\:(\hat{u}_3.\hat{e}_A).
    \end{equation}\label{eq:83}
    \\[5 pt]
    Then for the triple clusters of the first kind to hold simultaneously for column 1 and column 3, we must have \\[5 pt]
    \begin{equation}
        \frac{(\hat{u}_1.\hat{e}_T)}{(\hat{u}_1.\hat{e}_A)} = \frac{(\hat{u}_3.\hat{e}_T)}{(\hat{u}_3.\hat{e}_A)},
        \notag
    \end{equation}
    \\[5 pt]
    which is always true. We show the calculation here for the sake of completeness.\\[5pt]
    \begin{equation}
        \hat{u}_1= \frac{1}{\rho_1}\begin{pmatrix} 0 \\ \frac{a-c-\sqrt{D}}{2b} \\ 1 \end{pmatrix}, \qquad \hat{u}_3= \frac{1}{\rho_2}\begin{pmatrix} 0 \\ \frac{a-c+\sqrt{D}}{2b} \\ 1 \end{pmatrix},\qquad \hat{e}_T=\frac{1}{\sqrt{2}}\begin{pmatrix}
            1\\1\\0
        \end{pmatrix},\qquad \hat{e}_A=\begin{pmatrix} 0\\1\\0\end{pmatrix} .
        \notag
    \end{equation}
    \\[5pt]
    Substituting the values in the expression above, we have\\[5pt]
    \begin{equation}
        \frac{(\hat{u}_1.\hat{e}_T)}{(\hat{u}_1.\hat{e}_A)} = \frac{(\hat{u}_3.\hat{e}_T)}{(\hat{u}_3.\hat{e}_A)}=\frac{1}{\sqrt{2}}.
        \notag
    \end{equation}
    \\[5pt]
    Similarly, we can show if the pairs in column 1 form a Type-II triple cluster, then the pairs in column 3 form a triple cluster of the second kind and vice-vera. For example, if the pair $(U_1,U_3)$ of column 1 forms a Type-I triple cluster with austenite, then the pair $(U_1,U_2)$ forms a triple cluster of first kind with austenite. Similarly, if the pair $(U_1,U_2)$ of column 3 forms a triple cluster of second kind with austenite, then the pair $(U_1,U_3)$ from column 1 forms a triple cluster of second kind with austenite.

    \item It is possible to form the same kind of triple cluster for the pairs in column 3 and column 4. To show this, we again analyze the representative pairs from each column. From column 3, we choose the pair $(U_1,U_2)$ as before, related by $\hat{e}_A=\{0,1,0\}$. Let $(U_1,U_5)$ be the representative pair from column 4, where the variants are related by $\hat{e}_B=\frac{1}{\sqrt{2}}\{0,1,1\}$. We present the analysis for the triple cluster of first kind. \\[5 pt]
    Let $(U_1,U_2)$ form a triple cluster of the first kind and satisfy \eqref{eq:58}. Then we have \\[5 pt]
    \begin{equation}
        \lambda_3\sqrt{1-\lambda_1^2}\:(\hat{u}_1.\hat{e}_A)=- \lambda_1\sqrt{\lambda_3^2-1}\:(\hat{u}_3.\hat{e}_A).
        \notag
    \end{equation}
    \\[5 pt]
    In order to have the first kind of triple cluster for $(U_1,U_5)$ as well, either \eqref{eq:58} or \eqref{eq:59} must hold. Lets assume \eqref{eq:58} holds, then we must have \\[5 pt]
    \begin{equation}
        \lambda_3\sqrt{1-\lambda_1^2}\:(\hat{u}_1.\hat{e}_B)=- \lambda_1\sqrt{\lambda_3^2-1}\:(\hat{u}_3.\hat{e}_B).\label{eq:84}
    \end{equation}
    \\[5 pt]
     Then, for both cases to hold at once, we must have, 
    \begin{equation}
        \frac{(\hat{u}_1.\hat{e}_A)}{(\hat{u}_1.\hat{e}_B)} = \frac{(\hat{u}_3.\hat{e}_A)}{(\hat{u}_3.\hat{e}_B)}
          \notag
    \end{equation}
    \begin{equation}
        \implies \frac{\frac{a-c-\sqrt{D}}{2b\rho_1}}{\frac{1}{\rho_1}(\frac{a-c-\sqrt{D}}{2b}+1)} = \frac{\frac{a-c+\sqrt{D}}{2b\rho_2}}{\frac{1}{\rho_2}(\frac{a-c+\sqrt{D}}{2b}+1)}
        \notag
    \end{equation}
    \begin{equation}
        \implies \frac{2b}{a-c-\sqrt{D}} = \frac{2b}{a-c+\sqrt{D}} \implies D=0,
        \notag
    \end{equation}
    which is forbidden. Therefore, this possibility does not exist. If, otherwise, $(U_1,U_5)$ satisfies \eqref{eq:59}, then, we must have \\[5 pt]
    \begin{equation}
        \lambda_3\sqrt{1-\lambda_1^2}\:(\hat{u}_3.\hat{e}_B)=- \lambda_1\sqrt{\lambda_3^2-1}\:(\hat{u}_1.\hat{e}_B).
        \notag
    \end{equation}
    \\[5 pt]
    Again, for both cases to hold simultaneously, we must have,
    \begin{equation}
        \frac{(\hat{u}_1.\hat{e}_A)}{(\hat{u}_3.\hat{e}_B)} = \frac{(\hat{u}_3.\hat{e}_A)}{(\hat{u}_1.\hat{e}_B)}
          \notag
    \end{equation}
    \begin{equation}
       \implies \frac{\frac{a-c-\sqrt{D}}{2b\rho_1}}{\frac{1}{\rho_2}(\frac{a-c+\sqrt{D}}{2b}+1)} = \frac{\frac{a-c+\sqrt{D}}{2b\rho_2}}{\frac{1}{\rho_1}(\frac{a-c-\sqrt{D}}{2b}+1)}
        \notag
    \end{equation}
    \begin{equation}
        \implies (a-2b-c)^2=12b^2 + (a-c)^2
        \notag
    \end{equation}
    \begin{equation}
        \implies a-c=-2b\label{eq:85}
    \end{equation}\\[5pt]
    Similarly, we can show that for the triple clusters of second kind to coexist for columns 3 and 4, \eqref{eq:85} must hold. \\[5pt]
    For example, if the pair $(U_1,U_2)$ of column 3 forms a triple cluster of first kind with austenite and \eqref{eq:85} holds, then the pair $(U_1,U_5)$ of column 4 also forms a triple cluster of first kind with austenite. Conversely, if the pair $(U_1,U_5)$ of column 4 forms a triple cluster of first kind with austenite and \eqref{eq:85} holds, then the pair $(U_1,U_2)$ also forms a triple cluster of first kind with austenite. The same can be said about the triple clusters of second kind in these pairs as well. \\[5pt]
    Here, it is important to emphasize the difference between columns 1,2 and the columns 3,4. All the pairs in the columns 1,2 are Type I/II twins and cannot form the same kind of triple cluster, as shown by Della Porta \cite{DELLAPORTA201927}. Columns 3,4 consist of pairs of compound twins only, and it is possible for both the columns to form the same kind of triple cluster. This suggests that compound twins offer a greater flexibility to form compatible triple clusters than Type I/II twins.

    \item Pairs of column 1 and column 2 cannot form the same type of triple cluster simultaneously. This was observed by Della Porta (Remark 3.2, \cite{DELLAPORTA201927}). However, it is possible to have different types of triple clusters in column 1 and column 2, at the same time. To see this, let us assume that the pairs in column 1 form a Type-I triple cluster with austenite. Then, \eqref{eq:CC2} for the representative pair $(U_1,U_3)$ implies \\[5 pt]
    \begin{equation}
        \lambda_3\sqrt{1-\lambda_1^2}\:(\hat{u}_1.\hat{e}_T)=- \lambda_1\sqrt{\lambda_3^2-1}\:(\hat{u}_3.\hat{e}_T).
        \notag
    \end{equation}
    In order to have a Type-II triple cluster in column 2, \eqref{eq:CC2} must be satisfied for the Type-II solution. Let $(U_1,U_6)$, related by $\hat{e}_S=\frac{1}{\sqrt{2}}\{1,0,1\}$, be a representative pair of column 2. Then we have\\[5 pt]
    \begin{equation}
        \sqrt{1-\lambda_1^2}\:(\hat{u}_1.\hat{e}_S)= \sqrt{\lambda_3^2-1}\:(\hat{u}_3.\hat{e}_S).\label{eq:86}
    \end{equation}
    We choose the plus sign here, because $(\hat{u}_1.\hat{e}_S)$ and $(\hat{u}_3.\hat{e}_S)$ are positive quantities. Then, for both the conditions to hold simultaneously, we must have \\[5pt]
    \begin{equation}
       \lambda_3 \frac{(\hat{u}_1.\hat{e}_T)}{(\hat{u}_1.\hat{e}_S)} = - \lambda_1\frac{(\hat{u}_3.\hat{e}_T)}{(\hat{u}_3.\hat{e}_S)}.
          \notag
    \end{equation}\\[5pt]
    Substituting the expressions for $\hat{u}_1,\hat{u}_3,\hat{e}_T,\hat{e}_S, \lambda_1$ and $\lambda_3$, we get,\\[5pt]
    \begin{equation}
       (a+c+\sqrt{D})(a-c-\sqrt{D}) = - (a+c-\sqrt{D})(a-c+\sqrt{D})
          \notag
    \end{equation}
    \begin{equation}
      \implies c(a-c)=2b^2 \label{eq:87}
    \end{equation}

    If instead, column 1 satisfies \eqref{eq:CC2} for Type-II solutions, then a similar calculation shows that column 2 satisfies \eqref{eq:CC2} for Type-I solutions, if  \\[5pt]
    \begin{equation}
       \frac{1}{\lambda_3} \frac{(\hat{u}_1.\hat{e}_T)}{(\hat{u}_1.\hat{e}_S)} = - \frac{1}{\lambda_1}\frac{(\hat{u}_3.\hat{e}_T)}{(\hat{u}_3.\hat{e}_S)}.
          \notag
    \end{equation}
    \begin{equation}
      \implies a(a-c)=-2b^2 \label{eq:88}
    \end{equation}

     For example, if the pair $(U_1,U_3)$ of column 1 forms a Type-I triple cluster with austenite and \eqref{eq:87} holds, then the pair $(U_1,U_6)$ of column 2 forms a Type-II triple cluster with austenite. Similarly, if the pair $(U_1,U_6)$ of column 2 forms a Type-I triple cluster with austenite and \eqref{eq:88} holds, then the pair $(U_1,U_3)$ of column 1 forms a Type-II triple cluster with austenite. \\[5pt]

    \item It is possible in that a given triple cluster is of the first kind as well as the second kind at the same time. This possibility exists only for pairs of compound twins, i.e. for columns 3 and 4, and not for pairs of Type I/II twins, i.e for columns 1 and 2.\\[5pt]
    For this situation to occur, $det(U)=1$ is a necessary condition, see \autoref{cor:first_second_simul}. Therefore, we conclude that for a compound triple cluster to be of the first kind (parallel shear vectors) and the second kind (same normals) at the same time, the transformation must be volume-preserving. This case bears a close connection to the concept of self-accommodation in martensite \cite{Bhattacharya1992}, and will be explored in detail in a forthcoming work \cite{forthcoming}.

    For example, if the pair $(U_1,U_2)$ of column 3 forms a triple cluster of first kind with austenite and \eqref{eq:75} holds, then the same triple cluster is also a triple cluster of second kind. Similarly, if the same pair $(U_1,U_2)$ of column 3 forms a triple cluster of second kind with austenite and \eqref{eq:75} holds, then the same triple cluster is also a triple cluster of first kind. The same holds for the pair $(U_1,U_5)$ as well.\\[5pt]
    However, the pairs of column 5, i.e. the non-conventional compound domains in cubic to monoclinic-II transformations do not exhibit this property. This is due to the fact that the formation of triple clusters in column 5 requires either \eqref{eq:27} or \eqref{eq:38} to hold, which cannot simultaneously hold with \eqref{eq:75}, see \autoref{rem:commutation_det_one_impossible}.
 \end{enumerate}

 Based on the analysis presented in points (I-VI), we conclude that there are two cases of extreme compatibility conditions in cubic to monoclinic-II transformations.\\[5pt]
 \begin{enumerate}[label=\arabic*)]
     \item  Pairs of column 1 form Type-I triple clusters, pairs of column 2 form Type-II triple clusters and the pairs of columns 3 and 4 form triple clusters of the first kind. \\[5pt]
     For this case to be true, equations \eqref{eq:82},\eqref{eq:85} and \eqref{eq:87} must all hold at the same time. The stretch tensor for the cubic to monoclinic-II transformation has four parameters, namely $a,b,c$ and $d$, see \eqref{eq:22}. After we assume $d=1$, we are left with three parameters, which are uniquely fixed by the three equations, \eqref{eq:82},\eqref{eq:85} and \eqref{eq:87}. After simplification, we obtain the stretch tensor in this case to be\\[5 pt]
    \begin{equation}
        U_1 = 
    \begin{bmatrix}
        1 & 0 & 0 \\ 0 & \frac{3}{\sqrt{2}} & -\frac{1}{\sqrt{2}} \\ 0 & -\frac{1}{\sqrt{2}} & \frac{1}{\sqrt{2}}
    \end{bmatrix}.\label{eq:89}
    \end{equation} 

    \item  Pairs of column 1 form Type-II triple clusters, pairs of column 2 form Type-I triple clusters and the pairs of columns 3 and 4 form triple clusters of the second kind. \\[5pt]
     For this case to be true, equations \eqref{eq:85},\eqref{eq:86} and \eqref{eq:88} must all hold at the same time. Again, the stretch tensor is uniquely fixed for this transformation as\\[5 pt]
    \begin{equation}
        U_1 = 
    \begin{bmatrix}
        1 & 0 & 0 \\ 0 & \frac{1}{\sqrt{2}} & \frac{1}{\sqrt{2}} \\ 0 & \frac{1}{\sqrt{2}} & \frac{3}{\sqrt{2}}
    \end{bmatrix}.\label{eq:90}
    \end{equation} 
 \end{enumerate}
     
 It is easy to check that the sufficiency condition, \eqref{eq:CC3}, is satisfied for all columns 1,2,3 and 4, for both the matrices in \eqref{eq:89} and \eqref{eq:90}. Furthermore, these matrices satisfy \eqref{eq:75}, implying that the transformation is volume preserving. Therefore, all triple clusters formed by the corresponding compound twins of columns 3 and 4 will be of the first kind as well as the second kind at the same time.
 
\begin{remark}
 \label{rem:mono_extreme}
     It is important to note that the conditions of extreme compatibility obtained above do not include the triple clusters for non-conventional twins, i.e column 5 of \autoref{tab:CM}. This is because the matrices obtained in \eqref{eq:89} and \eqref{eq:90} do not satisfy either \eqref{eq:27} or \eqref{eq:38}. From the microstructural perspective, it is more feasible to optimize the compatibility for conventional twins rather than the non-conventional twins. This preference is supported by the experimental evidence that the conventional twins are much more commonly observed than the non-conventional twins.
 \end{remark} 

\begin{remark}
\label{rem:restrictive}
    The matrices obtained in \eqref{eq:89} and \eqref{eq:90} are extremely restrictive (no free parameters) and uniquely determine the lattice parameters of the transformation. However, by relaxing one or more of the constraints given in \cref{eq:85,eq:87,eq:88}, it is possible to introduce free parameters into the transformation matrices. This allows for more flexibility by imposing a relatively weaker compatibility but for a wider range of lattice parameters. Even so, the resulting compatibility conditions remain more stringent than the cofactor conditions.
\end{remark}
\section{New Microstructures}
\label{sec:new_micro}

In this section, we employ the extreme compatibility conditions derived in \autoref{sec:extreme_ortho} to explore additional microstructures that may exhibit zero elastic energy in cubic to orthorhombic transformations. The inclusion of compound twins within the zero-energy framework suggests the possibility of new nucleation mechanisms beyond those permitted by the cofactor conditions. These configurations may contribute to transformation pathways with reduced hysteresis and enhanced reversibility, while also exhibiting a broader range of geometric complexity.\\[5pt]
In addition to facilitating the transformation between austenite and martensite, the extreme compatibility conditions also significantly improve compatibility among martensitic variants themselves. Specifically, under these conditions, the triplet conditions \cite{DellaPorta2022Triplet} are automatically satisfied. Moreover, they enable the formation of new types of crossing twins in which all interfaces belong to the same twin system, i.e. either all Type-I or all Type-II. This microstructural flexibility enhances the mechanical properties of the martensitic phase by enabling smoother variant interactions and reducing internal stresses. We note that the extreme compatibility conditions associated with cubic-to-monoclinic-II transformations give rise to distinct microstructural features, the detailed investigation of which will be pursued in subsequent studies.
The stretch tensors for cubic to orthorhombic transformation are given in \eqref{eq:19}. We import them here and assume $d=1$. 

\begin{gather}
    U_1 = 
    \begin{bmatrix}
        1 & 0 & 0 \\ 0 & a & b \\ 0 & b & a
    \end{bmatrix},\qquad
    U_2 = 
    \begin{bmatrix}
        1 & 0 & 0 \\ 0 & a & -b \\ 0 & -b & a
    \end{bmatrix},\qquad
    U_3 = 
    \begin{bmatrix}
        a & 0& -b \\ 0 & 1 & 0 \\ -b & 0 & a
    \end{bmatrix},\nonumber
    \\\nonumber\\
    U_4 = 
    \begin{bmatrix}
        a & 0& b \\ 0 & 1 & 0 \\ b & 0 & a
    \end{bmatrix},\qquad
    U_5 = 
    \begin{bmatrix}
        a & -b& 0 \\ -b & a & 0 \\ 0 & 0 & 1
    \end{bmatrix},\qquad
    U_6 = 
    \begin{bmatrix}
        a & b& 0 \\ b & a & 0 \\ 0 & 0 & 1
    \end{bmatrix}.\label{eq:91}
\end{gather}

The ordered eigenvalues for $U_1-U_6$ are $a-b,\:1,\:a+b$, where we assume $b>0$, without loss of generality. The corresponding normalized eigenvectors are listed in \autoref{tab:Eigen_O}.
\begin{table}
    \centering
\begin{tabular}{ |c|c|c|c|} 
 \hline
 \multirow{2}{*}{Variant} &  \multicolumn{3}{|c|}{Eigenbasis}\\ 
 \cline{2-4}
  &  $\hat{u}_1$ & $\hat{u}_2$ & $\hat{u}_3$\\
 \hline
 &&&\\
 $U_{1}$ & $\frac{1}{\sqrt{2}}\{0,-1,1\}$ & $\{1,0,0\}$ & $\frac{1}{\sqrt{2}}\{0,1,1\}$\\
 \hline
 &&&\\
 $U_{2}$ & $\frac{1}{\sqrt{2}}\{0,1,1\}$  & $\{1,0,0\}$& $\frac{1}{\sqrt{2}}\{0,-1,1\}$\\
 \hline
 &&&\\
 $U_{3}$ & $\frac{1}{\sqrt{2}}\{-1,0,1\}$ & $\{0,1,0\}$ & $\frac{1}{\sqrt{2}}\{1,0,1\}$\\
 \hline
 &&&\\
$U_{4}$ & $\frac{1}{\sqrt{2}}\{1,0,1\}$ & $\{0,1,0\}$ & $\frac{1}{\sqrt{2}}\{-1,0,1\}$\\
 \hline
 &&&\\
 $U_{5}$ & $\frac{1}{\sqrt{2}}\{-1,1,0\}$ & $\{0,0,1\}$ & $\frac{1}{\sqrt{2}}\{1,1,0\}$\\
 \hline
 &&&\\
 $U_{6}$ & $\frac{1}{\sqrt{2}}\{1,1,0\}$ & $\{0,0,1\}$ & $\frac{1}{\sqrt{2}}\{-1,1,0\}$\\
 \hline

\end{tabular}
\caption{The orthonormal eigenbases for the orthorhombic variants given by \eqref{eq:91}, see also \autoref{fig:ortho_b}.}
\label{tab:Eigen_O}
\end{table}

Since the middle eigenvalue of the stretch tensors is unity, rank-1 connections between the martensitic variants and austenite exist. Let,

\begin{equation}
    R_i^{\pm}U_i-I=b_i^{\pm} \otimes \hat{m}_i^{\pm},\label{eq:92}
\end{equation}

where $i=1,2,...,6$. The expressions for $b_i^{\pm}$ and $\hat{m}_i^{\pm}$ are given by \eqref{eq:11}. 
\begin{equation}
    b_i^{\pm} = \frac{\rho}{\sqrt{\lambda_3^2 - \lambda_1^2}}\left(\lambda_3\sqrt{1-\lambda_1^2}\;\hat{u}_1\; \pm\; \; \lambda_1\sqrt{\lambda_3^2 -1}\;\hat{u}_3 \right)
    \nonumber,
\end{equation}

\begin{equation}
    \hat{m}_{i}^{\pm} = \frac{\lambda_3 - \lambda_1}{\rho\sqrt{\lambda_3^2 - \lambda_1^2}}\left(-\sqrt{1-\lambda_1^2}\;\hat{u}_1\; \pm\; \sqrt{\lambda_3^2 -1}\;\hat{u}_3 \right).
    \nonumber
\end{equation}

where $\hat{u}_1,\hat{u}_3$ for each choice of $i$ are taken from \autoref{tab:Eigen_O}.\\[5pt]
If \eqref{eq:27} holds, then we have,
\begin{equation}
    \lambda_3=\frac{\lambda_1}{\sqrt{2\lambda_1^2 -1}} \implies \lambda_3\sqrt{1-\lambda_1^2}\;=\;\lambda_1\sqrt{\lambda_3^2-1}.
    \notag
\end{equation}

Substituting this result into the expression of $b_i^\pm$ and $\hat{m}_i^\pm$, we get,
\begin{gather}\nonumber
    b_i^{\pm} = C(\hat{u}_1 \pm \hat{u}_3),\\[3 pt]\nonumber
    \nonumber
    \hat{m}_i^{\pm} = D(-\sqrt{ 2\lambda_1^2-1}\; \hat{u}_1 \pm \hat{u}_3).
\end{gather}
where $C$ and $D$ are one-parameter constants ($\lambda_1$), given by \eqref{eq:34}. A direct computation shows that 
\begin{gather}
\nonumber
    b_1^+=b_2^+=b_3^+=b_4^+ = \sqrt{2}C\begin{pmatrix}
        0\\0\\1
    \end{pmatrix}, \qquad \alpha_1\hat{m}_1^++\alpha_2\hat{m}_2^++\alpha_3\hat{m}_3^++\alpha_4\hat{m}_4^+ \neq 0,\\[3pt]
   - b_1^-=b_2^-=b_5^+=b_6^+ = \sqrt{2}C\begin{pmatrix}
        0\\1\\0
    \end{pmatrix}, \qquad \alpha_1\hat{m}_1^-+\alpha_2\hat{m}_2^-+\alpha_3\hat{m}_5^++\alpha_4\hat{m}_6^+ \neq 0 \label{eq:93},\\[3pt]
    \nonumber
    b_3^-=-b_4^-=b_5^-=-b_6^- = \sqrt{2}C\begin{pmatrix}
        1\\0\\0
    \end{pmatrix}, \qquad \alpha_1\hat{m}_3^-+\alpha_2\hat{m}_4^-+\alpha_3\hat{m}_5^-+\alpha_4\hat{m}_6^- \neq 0. 
\end{gather}\\[5 pt]
for any real values of $\alpha_1,\alpha_2,\alpha_3,\alpha_4$, not all zero at once. Based on \eqref{eq:93}, we can categorize the twelve deformation gradients ($R_i^{\pm}U_i\:,\;\; i=1,2,...,6$) into three groups, each consisting of four deformation gradients. We call these groups 'quartets', see \autoref{tab:Quartets}. The deformations in each quartet are rank-1 compatible with austenite with the same shear vector (up to a change of sign) and the normals are not coplanar. This gives rise to a spearhead-shaped nucleus of martensite, surrounded by austenite, see \autoref{fig:spearhead}.\\[5 pt]
\begin{table}[hbt!]
    \centering
\begin{tabular}{ |c|c|c|} 
 \hline
 Quartet 1 & Quartet 2 & Quartet 3\\
 \hline
 &&\\
 $R_1^+U_{1}$ & $R_1^-U_{1}$ & $R_3^-U_{3}$\\
 \hline
 &&\\
 $R_2^+U_{2}$ & $R_2^-U_{2}$ & $R_4^-U_{4}$\\
 \hline

 &&\\
 $R_3^+U_{3}$ & $R_5^+U_{5}$ & $R_5^-U_{5}$\\
 \hline
 &&\\
 $R_4^+U_{4}$ & $R_6^+U_{6}$ & $R_6^-U_{6}$\\
 \hline
\end{tabular}

\caption{Classification of all possible deformation gradients that satisfy (82) into three quartets.}
\label{tab:Quartets}
\end{table}\\[5pt]
\begin{figure}[hbt!]
    \centering
    \begin{minipage}[b]{0.32\textwidth}
    \centering
    \includegraphics[scale=0.29]{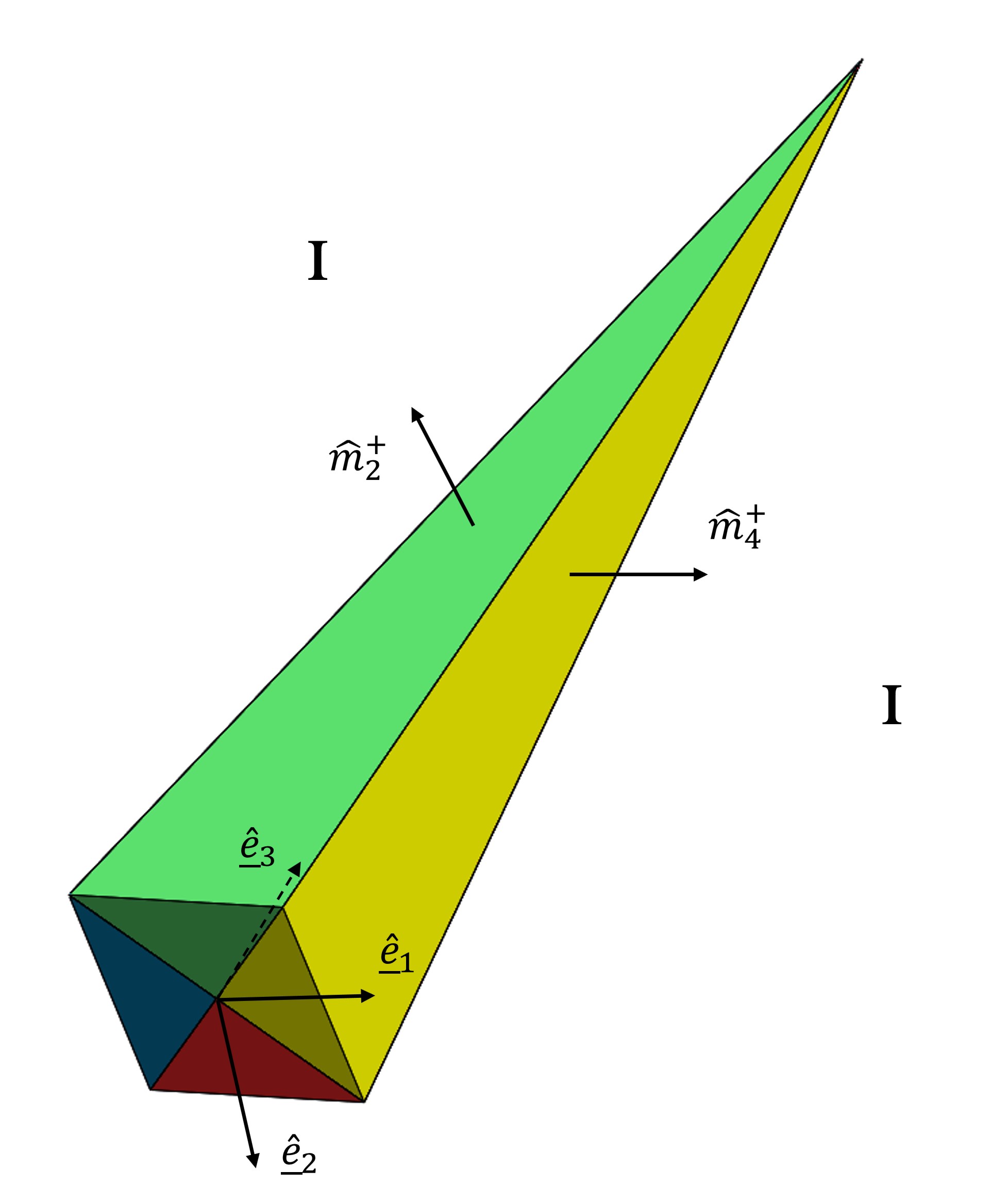}
    \subcaption{}
        \label{fig:spearhead_a}
    \end{minipage}%
    \begin{minipage}[b]{0.32\textwidth}
    \centering
    \includegraphics[scale=0.29]{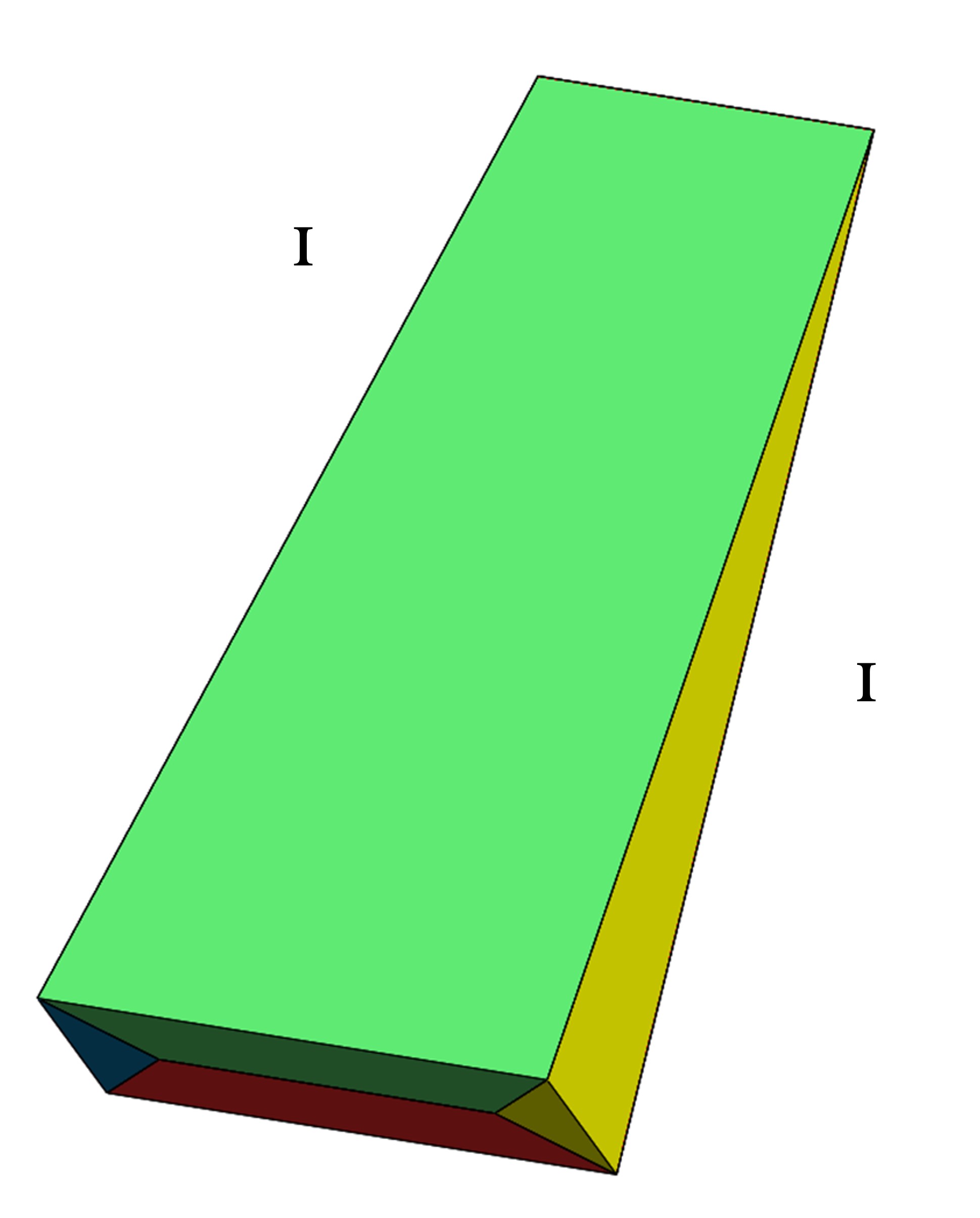}
    \subcaption{}
        \label{fig:spearhead_b}
    \end{minipage}%
     \begin{minipage}[b]{0.32\textwidth}
    \centering
    \includegraphics[scale=0.29]{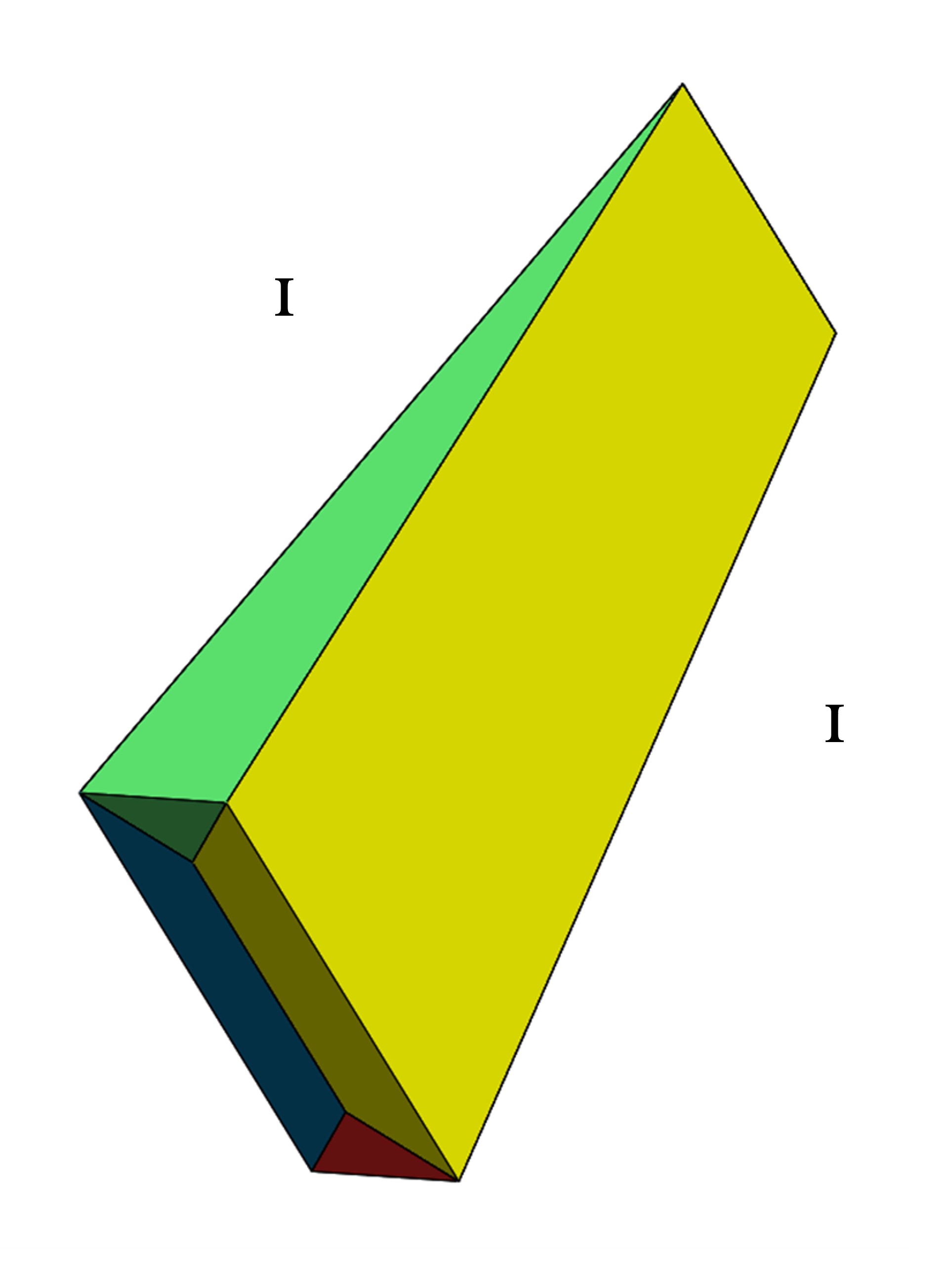}
    \subcaption{}
        \label{fig:spearhead_c}
    \end{minipage}%
    \caption{(a) A spearhead nucleus of four variants, surrounded by austenite, formed by the deformations that constitute quartet 1. Red, green, blue and yellow regions have gradients $R_1^+U_1,\; R_2^+U_2,\; R_3^+U_3$ and $R_4^+U_4$ respectively (b) Elongation of the nucleus facilitated by the formation of a compound interface between red and green regions (c) Elongation of the nucleus facilitated by the formation of a compound interface between blue and yellow regions.}
    \label{fig:spearhead}
\end{figure}

\begin{figure}[hbt!]
    \centering
    \includegraphics[scale=0.75]{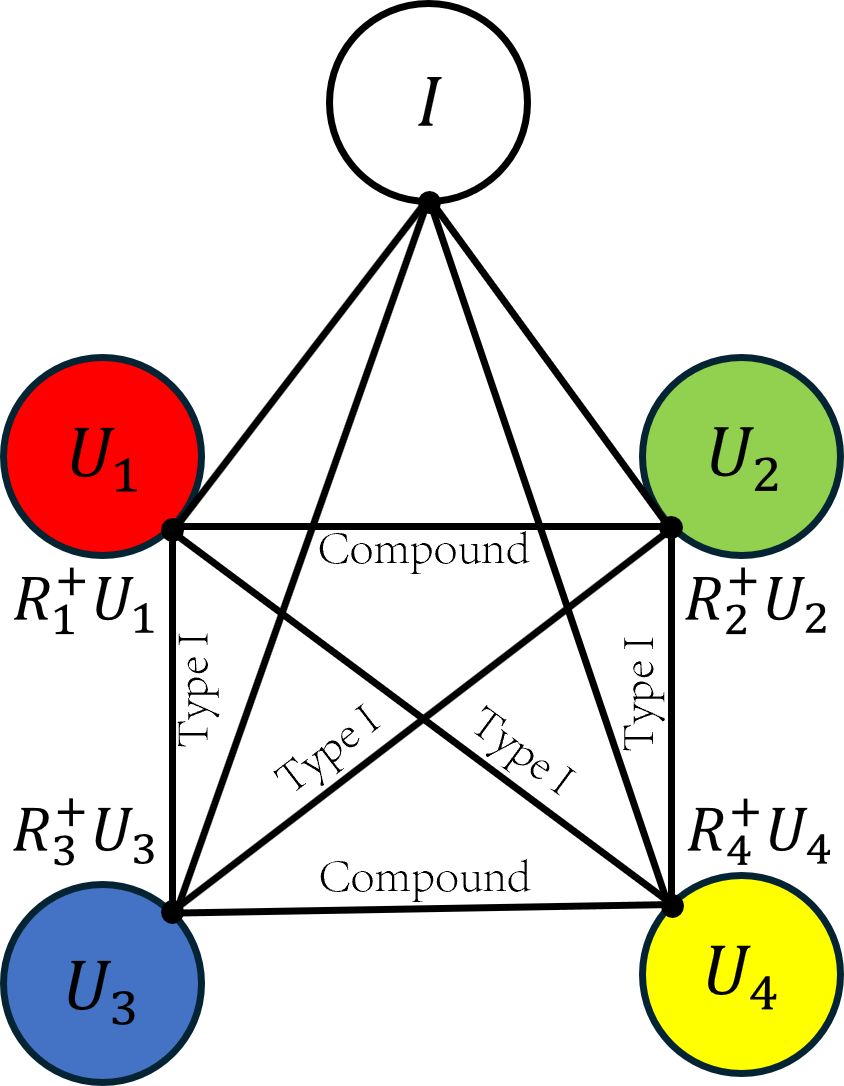}
    \caption{Compatibility diagram in strain space for quatret 1 under optimal compatibility conditions}.
    \label{fig:quartet_1_comp}
\end{figure}

The spearhead-shaped nucleus comprises of four martensitic variants that are mutually compatible and also compatible with the surrounding austenite, see \autoref{fig:quartet_1_comp}. The axis of the spearhead points along the direction of the common shear vector, i.e. along $\hat{\underline{e}}_3$ direction (for quartet 1). As shown in \autoref{fig:spearhead_a}, the nucleus contains four austenite–martensite interfaces and four martensite–martensite interfaces, all of which are Type-I interfaces.\\[5 pt]
An important feature of this nucleus is its ability to elongate by forming an additional, fifth martensite-martensite interface—specifically, a compound interface, without losing compatibility. This interface can emerge in one of two configurations: either between the red and green variants, \autoref{fig:spearhead_b}, or between the blue and yellow variants, \autoref{fig:spearhead_c}. In both scenarios, a planar triple cluster of the first kind is established between the compound twin pair and the austenite.\\[5 pt]
The spearhead nucleus presents the first explicit example of a microstructure featuring triple clusters involving compound twins. It also brings out the significance of such triple clusters in introducing an additional degree of freedom, thereby enhancing the mobility of the nucleus. This increased mobility facilitates a more flexible, bidirectional growth behavior, distinguishing this configuration from triple clusters formed solely with Type-I twins.\\[5 pt]
From \autoref{tab:Quartets}, two more similar but distinct spearhead configurations can be identified, with their axes pointing along the $\hat{\underline{e}}_2$ and $\hat{\underline{e}}_1$ directions. These nuclei can be constructed from the deformation gradients in second and third columns (quartets) respectively. Thus, three such nuclei are possible, which grow in mutually orthogonal directions inside a material with cubic austenite.\\[5 pt]
In addition to spearhead martensitic nucleus, the extreme compatibility conditions for cubic to orthorhombic transformations also enable the formation of fully compatible stress-free inclusions of austenite embedded within the martensitic matrix. We present two such configurations, shown in \autoref{fig:inclusion1} and \autoref{fig:inclusion2}. These configurations can be constructed by combining different spearhead configurations in \autoref{fig:spearhead}, as illustrated in \autoref{fig:inclusion1_a} and \autoref{fig:inclusion2_a}. 

\begin{figure}[hbt!]
    \centering
    \begin{minipage}[b]{0.5\textwidth}
    \centering
    \includegraphics[scale=0.45]{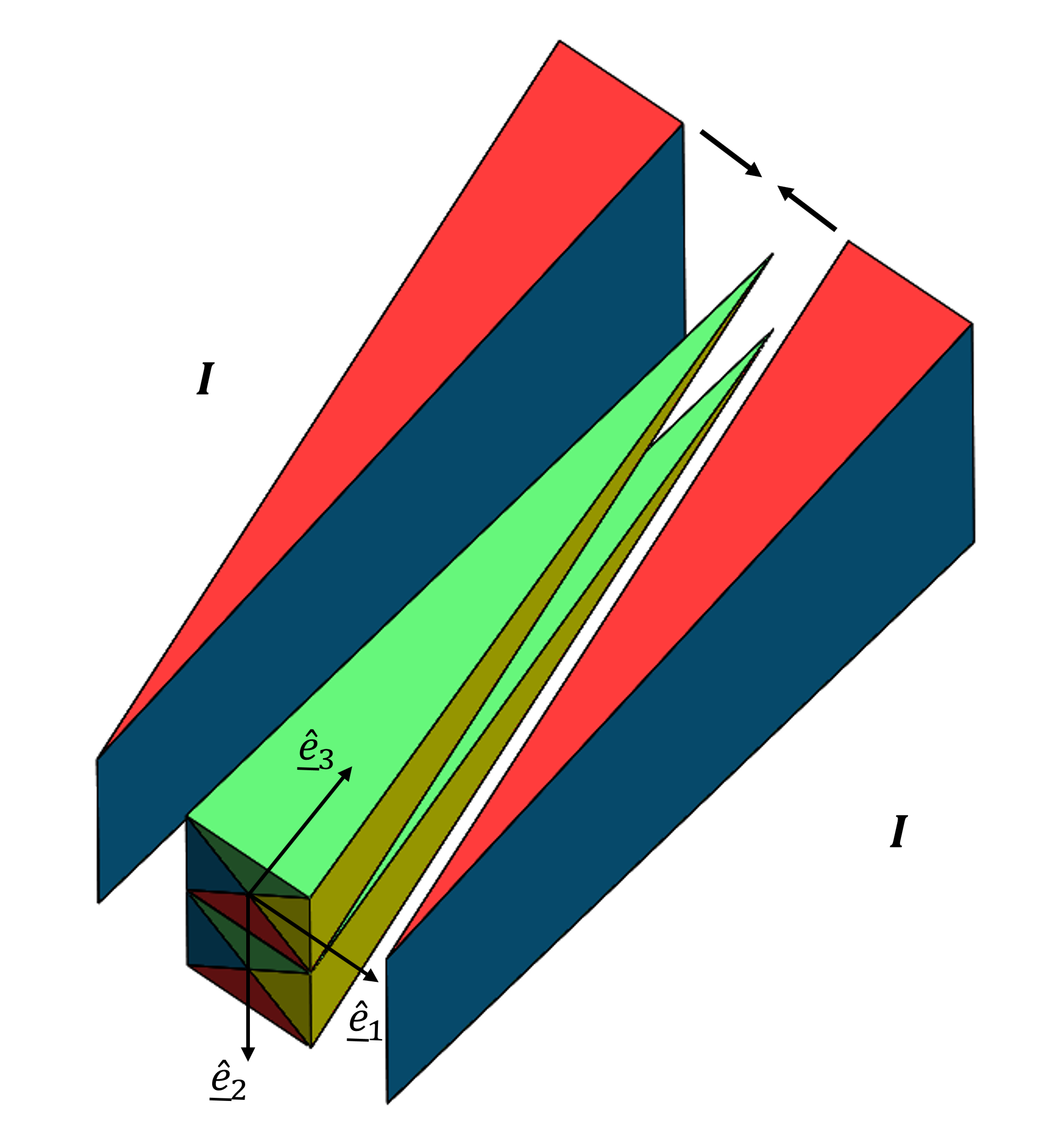}
    \subcaption{}
        \label{fig:inclusion1_a}
    \end{minipage}%
    \begin{minipage}[b]{0.5\textwidth}
    \centering
    \includegraphics[scale=0.45]{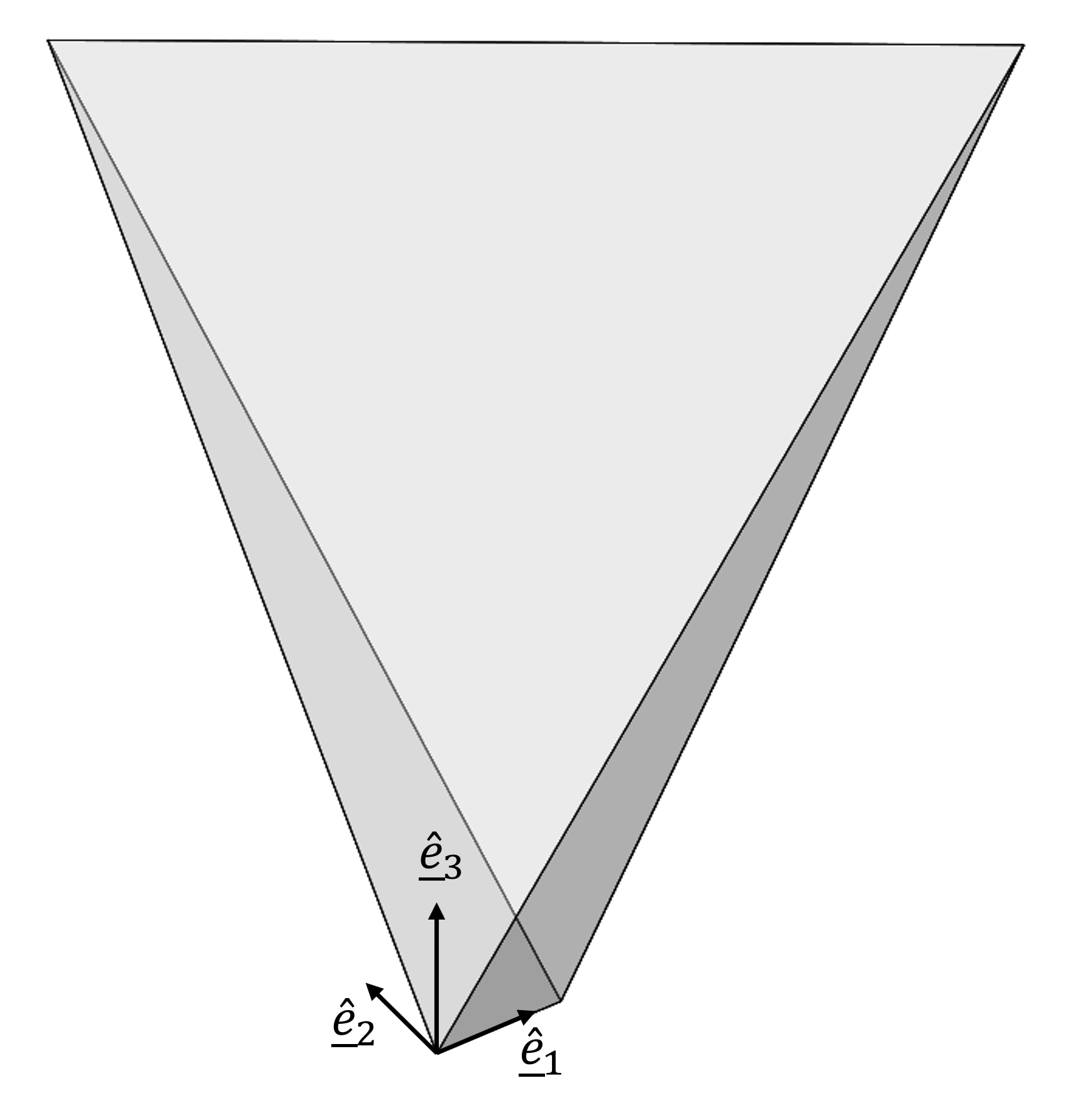}
    \subcaption{}
        \label{fig:inclusion1_b}
    \end{minipage}%
    \caption{(a) Construction of an inclusion of austenite using the spearhead configurations shown in \autoref{fig:spearhead}. The black arrows indicate the direction in which the two spearheads must move in order to enclose the volume of austenite in between the spearheads. (b) A \textit{rhombic disphenoid} shaped austenite inclusion formed by the construction shown in (a). }
    \label{fig:inclusion1}
\end{figure}

\begin{figure}[hbt!]
    \centering
    \begin{minipage}[b]{0.6\textwidth}
    \centering
    \includegraphics[scale=0.585]{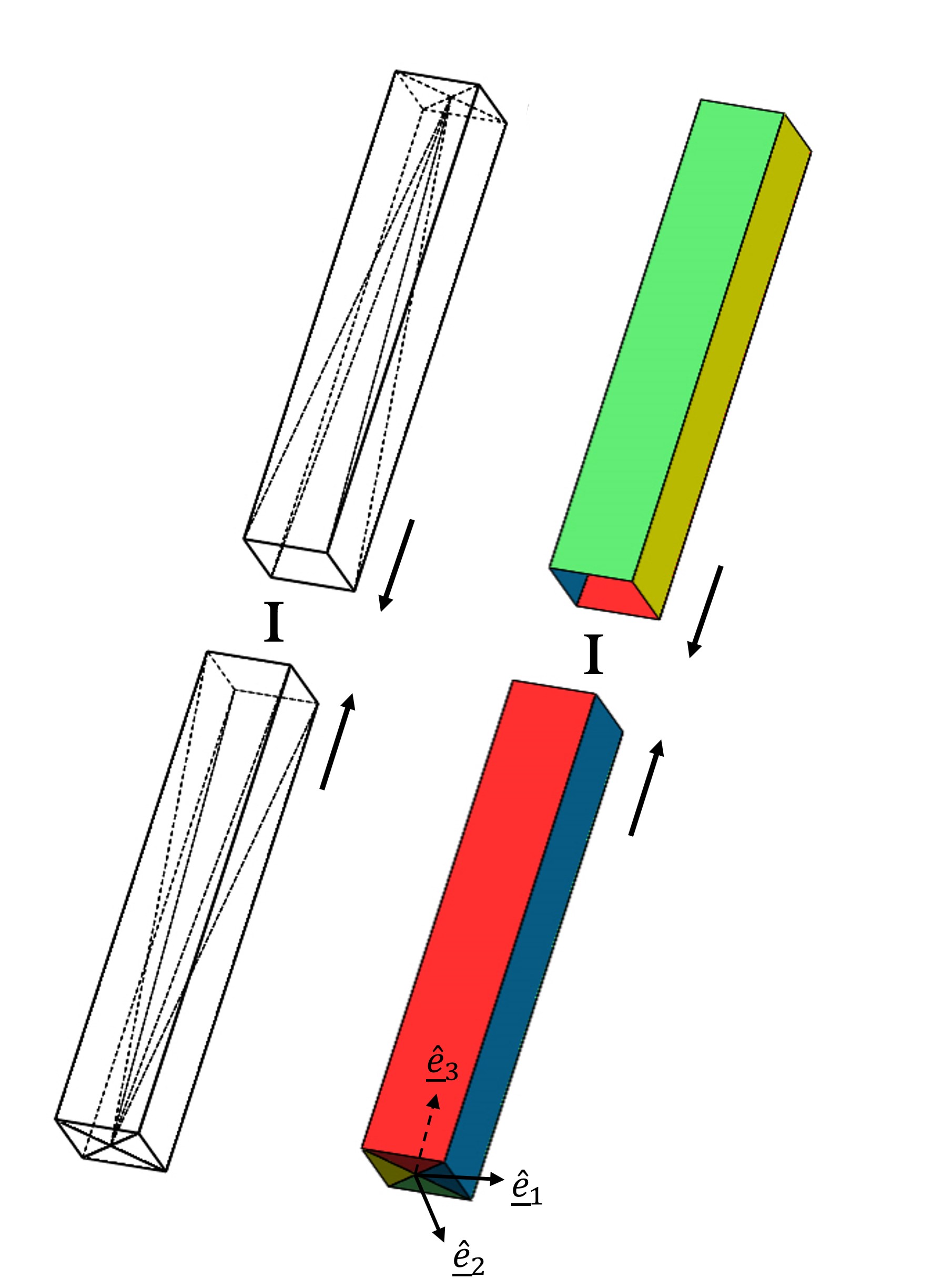}
    \subcaption{}
        \label{fig:inclusion2_a}
    \end{minipage}%
    \begin{minipage}[b]{0.35\textwidth}
    \centering
    \raisebox{1 em}{
    \includegraphics[scale=0.475]{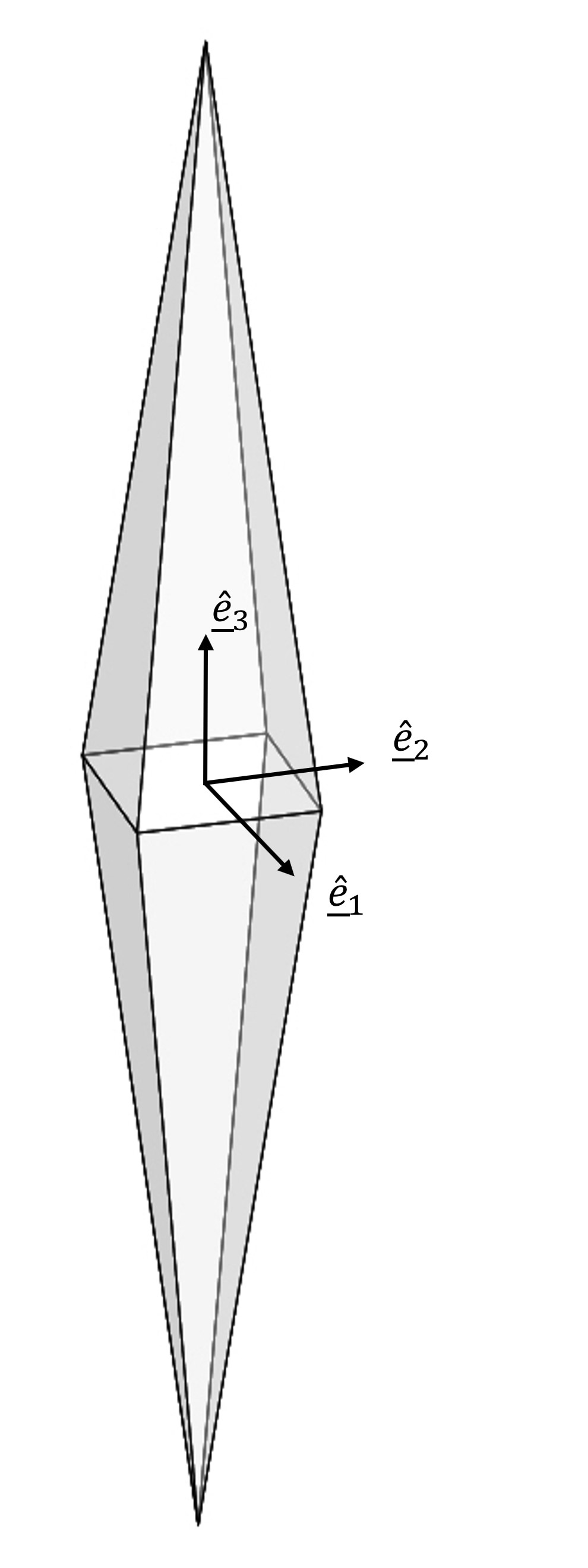}
    }
    \subcaption{}
        \label{fig:inclusion2_b}
    \end{minipage}%
   \caption{(a) Construction of an inclusion of austenite using the inverted form of spearhead configuration shown in \autoref{fig:spearhead_a}. The inverted spearhead is formed by interchanging the austenite and martensite regions across every phase interface, while maintaining compatibility. The wire-frame representation shows the internal structure of the inverted spearheads. The black arrows indicate the direction in which the two inverted spearheads must move in order to enclose a finite volume of austenite between them (b) A \textit{rhombic dipyramid} shaped austenite inclusion formed by the construction shown in (a).}
       \label{fig:inclusion2}
\end{figure}

The crystallography of the inclusions shown in \autoref{fig:inclusion1_b} and \autoref{fig:inclusion2_b} is particularly interesting as they are members of the family of 47 crystallographic forms \cite{graef012structure}. The inclusion \autoref{fig:inclusion1_b} has a \textit{rhombic disphenoid} shape, characterized by the point group $222\:(D_2)$, while the inclusion \autoref{fig:inclusion2_b} is a \textit{rhombic dipyramid} shape, having a point group of $mmm\:(D_{2h})$. Together, these two geometries are the only members of the crystallographic form family that possess an orthorhombic point group and enclose a finite volume. This is especially significant for martensitic transformations, as they represent the first known configurations that enclose a finite region of austenite within the martensitic phase through geometrically exact interfaces.\\[5 pt]
The extreme compatibility conditions in cubic to orthorhombic transformations, not only improve the austenite-martensite compatibility, but also enhance the martensite-martensite compatibility. To understand this, consider again, quartet 1 from \autoref{tab:Quartets}. Then by using \eqref{eq:92} and \eqref{eq:93}, we can write,

\begin{equation}
    R^+_iU_i-R^+_jU_j\;=\;\sqrt{2}C\begin{pmatrix}
        0\\0\\1
    \end{pmatrix}\otimes\:(\hat{m}^+_i-\hat{m}^+_j)\label{eq:94}
\end{equation}

where $i=1,2,3$ and $j=i+1,\;i+2,\;...,\;4$. A direct computation shows,\\[5 pt]
\begin{equation}
    \hat{m}^+_1-\hat{m}^+_2\;=\;2\beta\begin{pmatrix}
        0\\1\\0
    \end{pmatrix},\qquad
    \hat{m}^+_1-\hat{m}^+_3\;=\;\beta\begin{pmatrix}
        1\\1\\0
    \end{pmatrix},\qquad
   \hat{m}^+_1-\hat{m}^+_4\;=\;\beta\begin{pmatrix}
        -1\\1\\0
    \end{pmatrix},
    \notag
\end{equation}

\begin{equation}
    \hat{m}^+_2-\hat{m}^+_3\;=\;\beta\begin{pmatrix}
        1\\-1\\0
    \end{pmatrix},\qquad
    \hat{m}^+_2-\hat{m}^+_4\;=\;\beta\begin{pmatrix}
        -1\\-1\\0
    \end{pmatrix},\qquad
    \hat{m}^+_3-\hat{m}^+_4\;=\;2\beta\begin{pmatrix}
        -1\\0\\0
    \end{pmatrix},\label{eq:95}
\end{equation}

where $\beta=-\frac{\lambda _1^2-1}{\rho\sqrt{2\lambda _1^2-1}}.$\\[5pt]
\begin{figure}[hbt!]
    \centering
    \includegraphics[scale=0.60]{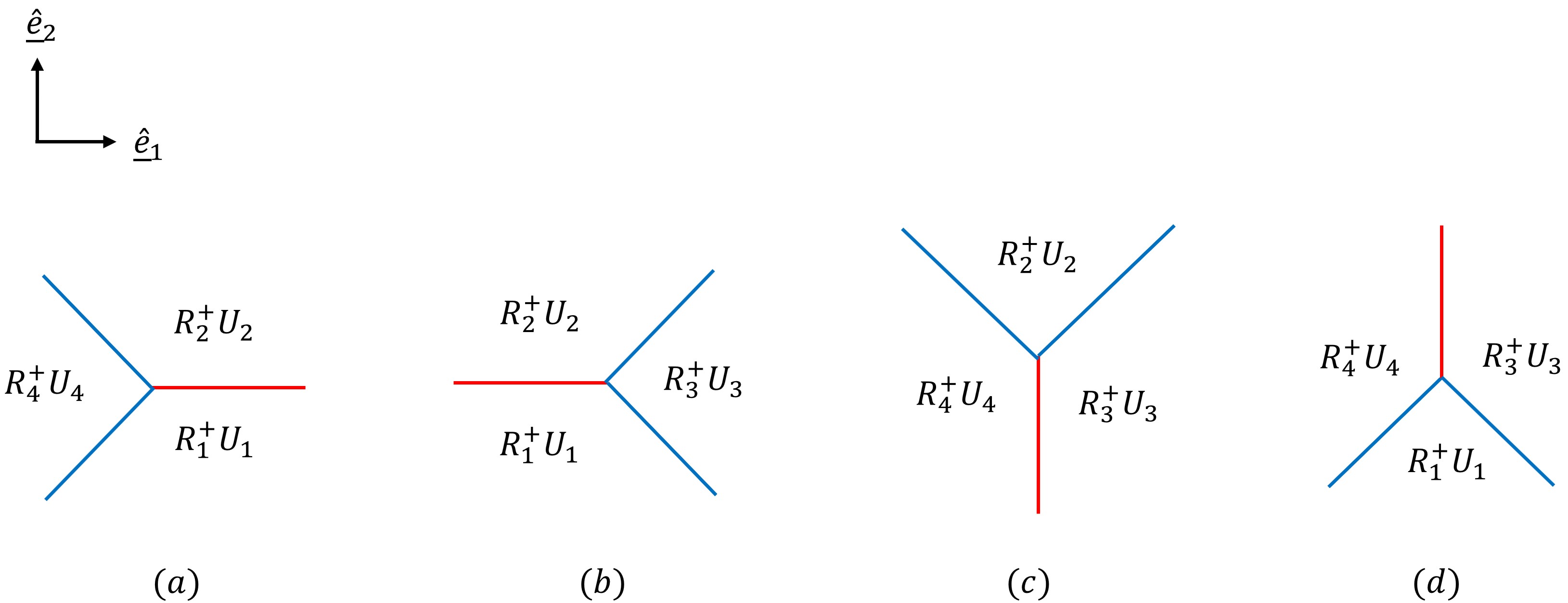}
    \caption{Martensitic triplets for the deformations belonging to quartet 1, when $d=1$ and \eqref{eq:27} is satisfied. Blue lines represent Type-I interfaces and red lines represent compound interfaces.}
    \label{fig:triplets_27}
\end{figure}

\begin{figure}[hbt!]
    \centering
    \includegraphics[scale=0.58]{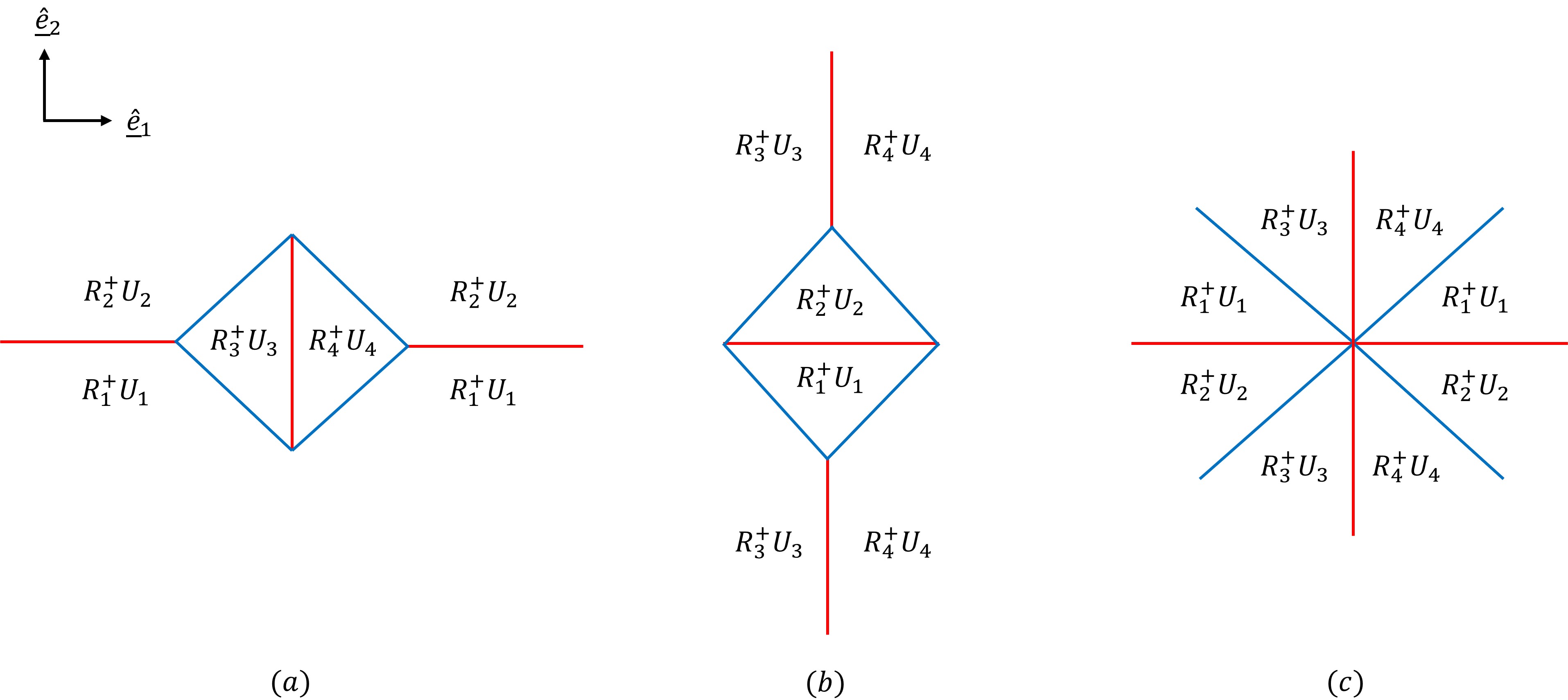}
    \caption{More intricate microstructures constructed from the martensitic triplets shown in \autoref{fig:triplets_27}. }
    \label{fig:more_triplets}
\end{figure}

Using \eqref{eq:94} and \eqref{eq:95}, we can construct several microstructures using three martensitic variants, as illustrated in \autoref{fig:triplets_27}. These three-fold microstructures, known as triplets, differ fundamentally from the triple clusters discussed earlier in this paper. While the triple clusters discussed previously involved two martensitic variants and austenite—highlighting supercompatibility between the two phases, the triplets shown here lie entirely within the martensitic phase. They represent a condition of supercompatibility among martensitic variants, known as the triplet condition, which was introduced by Della Porta \cite{DellaPorta2022Triplet}. The ability to construct martensitic triplets here suggests that the triplet condition must hold under the extreme compatibility conditions for cubic-to-orthorhombic transformations. In what follows, we demonstrate that this is indeed the case.
\begin{table}[hbt!]
    \centering
\begin{tabular}{ |c|c|c|} 
 \hline
 Group 1 & Group 2 & Group 3\\
 \hline
 &&\\
 $(U_1,U_3,U_5)$ & $(U_2,U_3,U_5)$ & $(U_1,U_2,U_3), \:(U_1,U_3,U_4),\;(U_1,U_5,U_6)$\\
 &&\\
 $(U_1,U_4,U_6)$ & $(U_2,U_4,U_6)$ & $(U_1,U_2,U_4),\:(U_2,U_3,U_4),\: (U_3,U_5,U_6)$\\
 &&\\
  $(U_2,U_3,U_6)$ & $(U_1,U_3,U_6)$ & $(U_1,U_2,U_5),\:(U_3,U_4,U_5),\;(U_3,U_5,U_6) $\\
 &&\\
 $(U_2,U_4,U_5)$ & $(U_1,U_4,U_5)$ & $(U_1,U_2,U_6),\:(U_3,U_4,U_6),\;(U_4,U_5,U_6)$\\
  &&\\
 \hline
\end{tabular}
\caption{Classification of all possible variant triples by Della Porta \cite{DellaPorta2022Triplet}. If any one triple from any group satisfies the triplet conditions, then all the triples in that group also satisfy the triplet conditions.}
\label{tab:groups_TC}
\end{table}
Let $U_1$ to $U_6$ as defined in \eqref{eq:19} be the six stretch tensors corresponding to orthorhombic variants. Let $\lambda_1$, $\lambda_2=d=1$ and $\lambda_3$ be the eigenvalues of $U_1$. Based on the relative symmetry of the variants, Della Porta et al. \cite{DellaPorta2022Triplet} have classified all possible variant triples into three groups. This grouping is shown here for reference; see \autoref{tab:groups_TC}. If a triple from any of the three groups satisfies the triplet conditions, then so do all the triples in that group. \\[5 pt]
According to Della Porta et al \cite{DellaPorta2022Triplet}, there are six ways in which triplet conditions can be satisfied in cubic to orthorhombic transformations. These can be expressed by the following six algebraic equations in terms of $\lambda_1$ and $\lambda_3$.

\begin{enumerate}[label=\Alph*)]
    \item $\lambda_3^2\;+2\lambda_1^2\lambda_3^2\;-3\lambda_1^2=0$,
    \item $\lambda_1^2\;+2\lambda_1^2\lambda_3^2\;-3\lambda_3^2=0$,
    \item $2+\lambda_1^2\;-3\lambda_3^2=0$,
    \item $2+\lambda_3^2\;-3\lambda_1^2=0$,
    \item $\lambda_1^2\;-2\lambda_1^2\lambda_3^2\;+\lambda_3^2=0$,
    \item $2-\lambda_1^2\;-\lambda_3^2=0$.
\end{enumerate}

Equations (A) and (B) allow the formation of martensitic triplets for groups 1 and 2, respectively, such that all interfaces are Type-I. Similarly, equations (C) and (D) enable the formation of martensitic triplets for groups 1 and 2, respectively, such that all interfaces are Type-II. Equation (E) allows martensitic triplets in group 3, where two interfaces are Type-I and one interface is compound. Likewise, equation (F) allows martensitic triplets in group 3, with two Type-II interfaces and one compound interface. \\[5 pt]
It is easy to see that equations (E) and (F) can be obtained by squaring equations \eqref{eq:27} and \eqref{eq:38} respectively. This shows that when $d=1$ and either \eqref{eq:27} or \eqref{eq:38} holds, i.e. when the extreme compatibility conditions are satisfied, the triplet conditions for group 3 are automatically achieved.\\[5 pt]
Another interesting feature of the microstructures shown in \Cref{fig:spearhead_a,fig:inclusion1_a,fig:inclusion2_a} is the emergence of a four-fold microstructure composed of four martensitic variants. In this configuration, all interfaces between variants are Type-I interfaces. This is in contrast to the well-known crossing twin structure studied in \cite{Bhattacharya1997}, where the interfaces alternate between Type I and Type II solutions. The significance of this microstructure lies in the fact that, unlike crossing twins which arise naturally from the symmetry of martensite phase, the formation of this four-fold structure requires additional compatibility conditions, which are met under the extreme compatibility conditions for cubic-to-orthorhombic transformations. To show this, we carry out an analysis similar to Bhattacharya's treatment of crossing twins (Theorem 2, \cite{Bhattacharya1997}), in the following theorem. \\[5pt]
\begin{theorem}
    \label{thm:crossing}
    Given $V_m$ $\in \mathbb{R}^{3\times3}_{sym+}  \; and \; Q_m\in SO(3) ,\;m=1,2,3,4,\:$ $V_i\neq V_j\;\forall\;i\neq j$, suppose the following conditions hold:\\[5pt]
    \begin{equation}
        Q_1V_2\:-\:V_1=\;a_1\:\otimes\:\hat{n}_1
        \tag{FF1}\label{eq:FF1}
    \end{equation}
    \begin{equation}
        Q_2V_3\:-\:V_2=\;a_2\:\otimes\:\hat{n}_2
        \tag{FF2}\label{eq:FF2}
    \end{equation}
    \begin{equation}
        Q_3V_4\:-\:V_3=\;a_3\:\otimes\:\hat{n}_3
        \tag{FF3}\label{eq:FF3}
    \end{equation}
    \begin{equation}
        Q_4V_1\:-\:V_4=\;a_4\:\otimes\:\hat{n}_4
        \tag{FF4}\label{eq:FF4}
    \end{equation}
    \begin{equation}
        Q_1Q_2Q_3Q_4=I
        \tag{FF5}\label{eq:FF5}
    \end{equation}
     \begin{equation}
        \hat{n}_1,\:\hat{n}_2,\:\hat{n}_3,\:\hat{n}_4 \quad lie \; on\; a \; plane.
        \tag{FF6}\label{eq:FF6}
    \end{equation}
    \begin{equation}
        V_2=R_1V_1R_1^T,\;V_3=R_2V_2R_2^T,\;V_4=R_2V_1R_2^T 
        \tag{FF7}\label{eq:FF7}
    \end{equation}

    for non-zero vectors $a_m$, unit vectors $\hat{n}_m,\;\hat{e}_1,\: \hat{e}_2$ and rotations  $R_1=R_{\pi}[\hat{e}_1]$, $R_2=R_{\pi}[\hat{e}_2]$, such that $\hat{e}_1.\hat{e}_2=0$.  Then, for equations \eqref{eq:FF1}$\:-\:$\eqref{eq:FF4} to all be Type-I solutions, the following must hold:\\[5pt]
\begin{equation}
    \sum_{i=1}^3 \frac{1}{\lambda_i^2}(\hat{v}_i \cdot \hat{e}_1)(\hat{v}_i \cdot \hat{e}_2) = 0.
    \label{eq:alltype1}
\end{equation}

where $\lambda_i$ are the eigenvalues of $V_1$ and $\hat{v}_i$ are the corresponding eigenvectors. In particular if $d=1$ and \eqref{eq:27} holds for matrices in \eqref{eq:19}, then, \eqref{eq:alltype1} is automatically satisfied for cubic to orthorhombic transformations.\\[5pt]
Similarly, for equations \eqref{eq:FF1}$\:-\:$\eqref{eq:FF4} to all be Type-II solutions, the following must hold:\\[5pt]
\begin{equation}
    \sum_{i=1}^3 {\lambda_i^2}(\hat{v}_i \cdot \hat{e}_1)(\hat{v}_i \cdot \hat{e}_2) = 0.
    \label{eq:alltype2}
\end{equation}
In particular if $d=1$ and \eqref{eq:38} holds for matrices in \eqref{eq:19}, then, \eqref{eq:alltype2} is automatically satisfied for cubic to orthorhombic transformations.
\end{theorem}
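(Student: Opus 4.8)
\emph{Overview.} The plan is to follow Bhattacharya's crossing-twin argument (Theorem~2 of \cite{Bhattacharya1997}): obtain a closed form for the rotation in each twinning equation, substitute the symmetry relations \eqref{eq:FF7}, and then read off what the loop condition \eqref{eq:FF5} forces. First I would exploit $\hat{e}_1\cdot\hat{e}_2=0$: this gives $R_2\hat{e}_1=-\hat{e}_1$, hence $R_2R_1R_2^{T}=R_{\pi}[R_2\hat{e}_1]=R_1$, and combining with \eqref{eq:FF7} yields $V_3=R_1V_4R_1^{T}$. Thus \eqref{eq:FF1} and \eqref{eq:FF3} are twins across the axis $\hat{e}_1$, with base variants $V_1$ and $V_3$, while \eqref{eq:FF2} and \eqref{eq:FF4} are twins across $\hat{e}_2$, with base variants $V_2$ and $V_4$. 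Next I would record the explicit twinning rotation for a pair $U$, $RUR$ with $R=R_{\pi}[\hat{e}]$: solving $QV-U=a\otimes\hat{n}$ and using that an element of $O(3)$ which acts as $-I$ on a plane must be the $180^{\circ}$ rotation about the orthogonal line (apply the equation to vectors in $\hat{e}^{\perp}$ and set $P:=QR$), one obtains
\[
Q=R_{\pi}\big[\widehat{U^{-1}\hat{e}}\,\big]\,R_{\pi}[\hat{e}]\quad\text{for the Type-I solution},\qquad Q=R_{\pi}\big[\widehat{U\hat{e}}\,\big]\,R_{\pi}[\hat{e}]\quad\text{for the Type-II solution},
\]
where the hat denotes normalization; a one-line check against \autoref{corollary:type12_solutions} confirms the shear and normal.

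\emph{Telescoping and the loop condition.} Substituting \eqref{eq:FF7} into these formulas, each $Q_m$ becomes a word in the commuting involutions $R_{\pi}[\hat{e}_1]$, $R_{\pi}[\hat{e}_2]$, $R_{\pi}[\hat{e}_1\times\hat{e}_2]$ (which, together with $I$, form a Klein four-group) and the two rotations $A:=R_{\pi}[\hat{g}]$, $B:=R_{\pi}[\hat{h}]$, where $\hat{g}\parallel V_1^{-1}\hat{e}_1$ and $\hat{h}\parallel V_1^{-1}\hat{e}_2$ in the Type-I case (and $\hat{g}\parallel V_1\hat{e}_1$, $\hat{h}\parallel V_1\hat{e}_2$ in the Type-II case). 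The group relations collapse the length-four product to $Q_1Q_2Q_3Q_4=(AB)^{2}$, so \eqref{eq:FF5} is equivalent to $(AB)^{2}=I$. Since $\hat{e}_1\perp\hat{e}_2$ forces $\hat{g}\nparallel\hat{h}$, the rotation $AB=R_{\pi}[\hat{g}]R_{\pi}[\hat{h}]$ is a rotation by twice the angle between $\hat{g}$ and $\hat{h}$ about $\hat{g}\times\hat{h}$, so $(AB)^{2}=I$ holds iff that angle is $\pi/2$, i.e. $\hat{g}\perp\hat{h}$. Unpacked, this is $\hat{e}_1\cdot V_1^{-2}\hat{e}_2=0$ in the Type-I case and $\hat{e}_1\cdot V_1^{2}\hat{e}_2=0$ in the Type-II case, which, written in the eigenbasis of $V_1$, are precisely \eqref{eq:alltype1} and \eqref{eq:alltype2}. (In the Type-I case all four normals equal $\pm\hat{e}_1$ or $\pm\hat{e}_2$, so \eqref{eq:FF6} is automatic; in the Type-II case it is simply retained among the hypotheses.)

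\emph{Specialization to cubic--orthorhombic.} For the ``in particular'' statements I would take \eqref{eq:19} with $d=1$ and $V_1=U_1$, so that $\lambda_2=d=1$ and the eigenbasis is that of \autoref{tab:Eigen_O}. The admissible four-fold clusters are the three quartets of \autoref{tab:Quartets}; these are permuted by the cubic symmetry acting simultaneously on $V_1,\hat{e}_1,\hat{e}_2$ and fixing $\lambda_1,\lambda_3$, and both sides of \eqref{eq:alltype1} and \eqref{eq:alltype2} are invariant under such a simultaneous rotation, so it suffices to treat one quartet, e.g. the cyclic arrangement $(U_1,U_3,U_2,U_4)$ with $\hat{e}_1=\tfrac{1}{\sqrt{2}}(1,1,0)$ and $\hat{e}_2=\tfrac{1}{\sqrt{2}}(1,-1,0)$. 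The products $(\hat{v}_i\cdot\hat{e}_1)(\hat{v}_i\cdot\hat{e}_2)$ then evaluate to $-\tfrac14,\ \tfrac12,\ -\tfrac14$ for $i=1,2,3$, so \eqref{eq:alltype1} reduces to $2\lambda_1^{2}\lambda_3^{2}=\lambda_1^{2}+\lambda_3^{2}$, i.e. equation~(E), which is the squared form of \eqref{eq:27}; and \eqref{eq:alltype2} reduces to $\lambda_1^{2}+\lambda_3^{2}=2$, i.e. equation~(F), the squared form of \eqref{eq:38}. Hence each condition is forced exactly by $d=1$ together with \eqref{eq:27} (respectively \eqref{eq:38}).

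\emph{Main obstacle.} The step requiring the most care is the closed form for the twinning rotations, together with the accompanying bookkeeping --- keeping straight which variant of each pair is the ``base'' $U$, and hence which axis $\hat{e}_k$ and which $V_m$ enter each $Q_m$ --- and then verifying that the Klein-four relations genuinely collapse $Q_1Q_2Q_3Q_4$ to $(AB)^{2}$ with no residual sign or conjugating factor. Everything downstream of that identity is routine linear algebra.
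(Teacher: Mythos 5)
Your proposal is correct and follows essentially the same route as the paper: closed-form twinning rotations from Mallard's law, substitution of \eqref{eq:FF7} to express everything through $V_1$, collapse of the loop product $Q_1Q_2Q_3Q_4$ to $\left(R_{\pi}[V_1^{\mp1}\hat{e}_1]\,R_{\pi}[V_1^{\mp1}\hat{e}_2]\right)^{2}$, and the observation that \eqref{eq:FF5} then forces $\hat{e}_1\cdot V_1^{\mp2}\hat{e}_2=0$, which in the eigenbasis of $V_1$ is \eqref{eq:alltype1} or \eqref{eq:alltype2}. Your explicit evaluation of the quartet $(U_1,U_3,U_2,U_4)$ giving $-\tfrac14,\tfrac12,-\tfrac14$ and hence equations (E)/(F) matches the paper's specialization, which it leaves as a direct verification.
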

\begin{proof}
Let $R_1=(-I+2\:\hat{e}_1\otimes\hat{e}_1)$ and $R_2=(-I+2\:\hat{e}_2\otimes\hat{e}_2)$, where $(\hat{e}_1 \perp\hat{e}_2)$. Then, $R_1R_2\:=\:R_2R_1\:= I \:-2\: \hat{e_1}\otimes\hat{e_1}-2\: \hat{e_2}\otimes\hat{e_2}\:=\: -I \:+2\:\hat{e_3}\:\otimes\:\hat{e_3} $, where $\hat{e}_3 \perp(\hat{e}_1, \hat{e}_2)$. Then, $R_3=R_1R_2=R_2R_1$ is a rotation of $180^\circ$ about an axis perpendicular to both $\hat{e}_1$ and $\hat{e}_2$. \\[5pt]
From \eqref{eq:FF7}, we have $V_3= R_1V_4R_1^T=R_3V_1R_3^T$ and $V_4=R_3V_2R_3^T.$ Then, the Type-I solutions to \cref{eq:FF1,eq:FF2,eq:FF3,eq:FF4} are written as: \\[5 pt]
\begin{equation}
\begin{aligned}
    a_1 &= 2\left(\frac{V_1^{-1} \hat{e}_1}{|V_1^{-1} \hat{e}_1|^2} - V_1 \hat{e}_1\right),\qquad
    \hat{n}_1 = \hat{e}_1,\qquad
    Q_1 = \left(-I + 2 \frac{V_1^{-1} \hat{e}_1 \otimes V_1^{-1} \hat{e}_1}{|V_1^{-1} \hat{e}_1|^2} \right) R_1, \\[4pt]
    a_2 &= 2\left(\frac{V_2^{-1} \hat{e}_2}{|V_2^{-1} \hat{e}_2|^2} - V_2 \hat{e}_2\right),\qquad
    \hat{n}_2 = \hat{e}_2,\qquad
    Q_2 = \left(-I + 2 \frac{V_2^{-1} \hat{e}_2 \otimes V_2^{-1} \hat{e}_2}{|V_2^{-1} \hat{e}_2|^2} \right) R_2, \\[4pt]
    a_3 &= 2\left(\frac{V_3^{-1} \hat{e}_1}{|V_3^{-1} \hat{e}_1|^2} - V_3 \hat{e}_1\right),\qquad
    \hat{n}_3 = \hat{e}_1,\qquad
    Q_3 = \left(-I + 2 \frac{V_3^{-1} \hat{e}_1 \otimes V_3^{-1} \hat{e}_1}{|V_3^{-1} \hat{e}_1|^2} \right) R_1, \\[4pt]
    a_4 &= 2\left(\frac{V_4^{-1} \hat{e}_2}{|V_4^{-1} \hat{e}_2|^2} - V_4 \hat{e}_2\right),\qquad
    \hat{n}_4 = \hat{e}_2,\qquad
    Q_4 = \left(-I + 2 \frac{V_4^{-1} \hat{e}_2 \otimes V_4^{-1} \hat{e}_2}{|V_4^{-1} \hat{e}_2|^2} \right) R_2.
\end{aligned}
\label{eq:type1solutions}
\end{equation}
    
Expressing \cref{eq:type1solutions} in terms of $V_1$ and using the relations $R_1\hat{e}_1=\hat{e_1},\;\:R_2\hat{e}_1= -\hat{e}_2$, and  $R_3\hat{e}_1=-\hat{e_1}$, we have the following: 
\\[5 pt]
\begin{equation}
\begin{aligned}
    a_1 = 2\left(\frac{V_1^{-1} \hat{e}_1}{|V_1^{-1} \hat{e}_1|^2} - V_1 \hat{e}_1\right),\qquad
    \hat{n}_1 = \hat{e}_1,\qquad
    Q_1 &= \left(-I + 2 \frac{V_1^{-1} \hat{e}_1 \otimes V_1^{-1} \hat{e}_1}{|V_1^{-1} \hat{e}_1|^2} \right) R_1, \\[4pt]
    a_2 = 2R_1\left(-\frac{V_1^{-1} \hat{e}_2}{|V_1^{-1} \hat{e}_2|^2} + V_1 \hat{e}_2\right),\qquad
    \hat{n}_2 = \hat{e}_2,\qquad
    Q_2 &= R_1\left(-I + 2 \frac{V_1^{-1} \hat{e}_2 \otimes V_1^{-1} \hat{e}_2}{|V_1^{-1} \hat{e}_2|^2} \right)R_1 R_2, \\[4pt]
    a_3 = 2R_3\left(-\frac{V_1^{-1} \hat{e}_1}{|V_1^{-1} \hat{e}_1|^2} + V_1 \hat{e}_1\right),\qquad
    \hat{n}_3 = \hat{e}_1,\qquad
    Q_3 &= R_3\left(-I + 2 \frac{V_1^{-1} \hat{e}_1 \otimes V_1^{-1} \hat{e}_1}{|V_1^{-1} \hat{e}_1|^2} \right) R_3R_1, \\[4pt]
    a_4 = 2R_2\left(\frac{V_1^{-1} \hat{e}_2}{|V_1^{-1} \hat{e}_2|^2} - V_1 \hat{e}_2\right),\qquad
    \hat{n}_4 = \hat{e}_2,\qquad
    Q_4 &= R_2\left(-I + 2 \frac{V_1^{-1} \hat{e}_2 \otimes V_1^{-1} \hat{e}_2}{|V_1^{-1} \hat{e}_2|^2} \right) .
\end{aligned}
\label{eq:type1solutions_substd}
\end{equation}

$\hat{n}_1,\;\hat{n}_2,\;\hat{n}_3$ and $\hat{n}_4$ are all perpendicular to $\hat{e}_3$, so \eqref{eq:FF6} holds for this choice of Type-I solutions. Then, this choice of solutions must also satisfy \eqref{eq:FF5}. Imposing this requirement yields the sufficient conditions for the existence of a four-fold cluster with all Type-I interfaces.\\[7 pt]
From \eqref{eq:type1solutions_substd}, $Q_1Q_2Q_3Q_4= R_\pi[V_1^{-1}\hat{e}_1]\;R_\pi[V_1^{-1}\hat{e}_2]\;R_\pi[V_1^{-1}\hat{e}_1]\;R_\pi[V_1^{-1}\hat{e}_2]$ $= \left(R_\pi[V_1^{-1}\hat{e}_1]\;R_\pi[V_1^{-1}\hat{e}_2]\right)^2$.\\[7pt]
Then, for \eqref{eq:FF5} to hold, either
\[
R_\pi[V_1^{-1}\hat{e}_1]\;R_\pi[V_1^{-1}\hat{e}_2] = I \quad \text{or} \quad R_\pi[V_1^{-1}\hat{e}_1]\;R_\pi[V_1^{-1}\hat{e}_2] = -I + 2\:\hat{c}\otimes\hat{c},
\]
for some unit vector $\hat{c}$. If the former holds, then $V_1^{-1}\hat{e}_1 \parallel V_1^{-1}\hat{e}_2$, which implies
\[
V_1^{-1}\hat{e}_1 \times V_1^{-1}\hat{e}_2 = 0 \;\Longrightarrow\; \det(V_1^{-1})\,V_1(\hat{e}_1 \times \hat{e}_2) = 0.
\]
Since $V_1 \in \mathbb{R}^{3\times3}_{\text{sym}+}$, this implies $\hat{e}_1 \parallel \hat{e}_2$, which is a contradiction. If the latter holds, then the composition of two distinct $180^\circ$ rotations results in another $180^\circ$ rotation about axis the $\hat{c}$. This is possible only if the axes of the two composing rotations are orthogonal. So, we must have
\[
V_1^{-1}\hat{e}_1.V_1^{-1}\hat{e}_2=0 \implies\hat{e}_1.V_1^{-2}\hat{e}_2=0.
\]
$V_1 \in \mathbb{R}^{3\times3}_{\text{sym}+} \implies V_1^{-1} \in \mathbb{R}^{3\times3}_{\text{sym}+}$ and therefore $V_1^{-1}$ admits a polar decomposition, the use of which in the above equation yields \eqref{eq:alltype1}. \\[5pt]
Let, $V_1=U_1$ as in \eqref{eq:19}, and assume that $d=1$ and that \eqref{eq:27} holds. Then, it can be easily verified that \eqref{eq:alltype1} is identically satisfied. Then, for $\hat{e}_1=\frac{1}{\sqrt{2}}\{1,-1,0\}$ and $\hat{e}_2=\frac{1}{\sqrt{2}}\{1,1,0\}$, we have $V_1=U_1$, $V_2=U_4$, $V_3=U_2$ and $V_4=U_3$, which can form a four-fold microstructure composed entirely of Type-I interfaces, see \autoref{fig:spearhead}\textit{(i)}.\\[5pt]
Similarly, if we assume Type-II solutions to \cref{eq:FF1,eq:FF2,eq:FF3,eq:FF4} that satisfy \eqref{eq:FF7}, we can show that, for \eqref{eq:FF5} and \eqref{eq:FF6} to hold, it is necessary that $\hat{e}_1 \cdot V_1^{2} \hat{e}_2 = 0$. This condition, after applying the polar decomposition of $V_1$, leads to \eqref{eq:alltype2}. Furthermore, if $U_1$ satisfies $d=1$ and \eqref{eq:38}, then \eqref{eq:alltype2} is satisfied and the variants $U_1,\:U_4,\:U_2,\:U_3$ can form a four-fold microstructure composed entirely of Type-II interfaces.\\[5pt]
\end{proof}
The significance of extreme compatibility conditions is underlined by their ability to not only facilitate phase transformations but also enhance compatibility among martensitic variants. This is achieved through the formation of martensitic triplets, regular crossing twins, and special crossing twins, as established in \autoref{thm:crossing}. Such configurational flexibility enables considerable strain accommodation. As a result, the material is expected to suppress functional fatigue more effectively and to promote variant switching without generating dislocations under mechanical loading. These attributes can collectively improve the functional performance of the alloy under both thermal and mechanical actuation.
\\[5pt]

\begin{figure}[hbt!]
    \centering
    \includegraphics[scale=0.58]{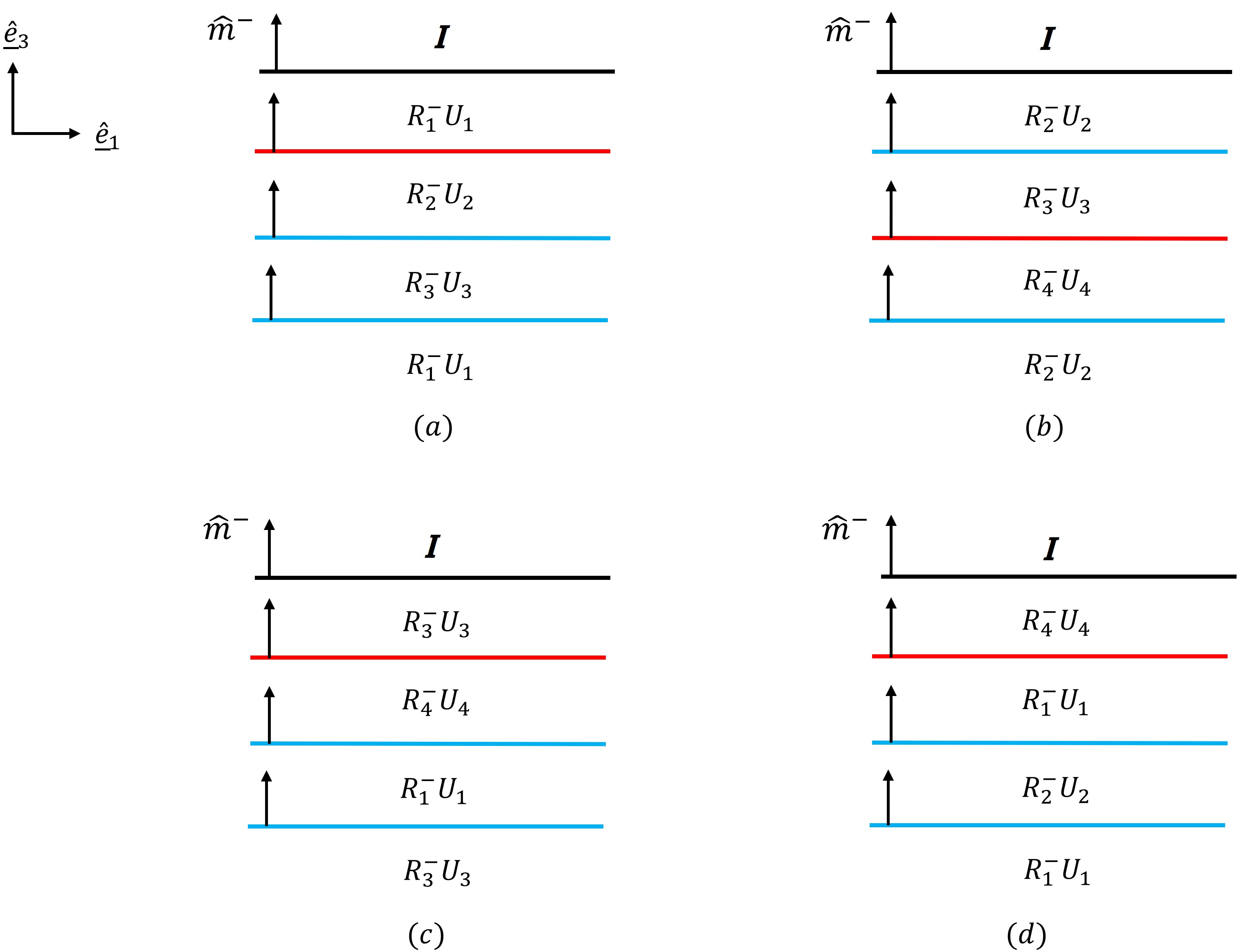}
    \caption{Microstructures possible when $d=1$ and \eqref{eq:38} is satisfied. Blue lines represent Type-II interfaces, red lines represent compound interfaces and black lines represent austenite-martensite interfaces.}
    \label{fig:triplets_38}
\end{figure}

The microstructures shown in \Cref{fig:spearhead,fig:inclusion1,fig:inclusion2,fig:triplets_27,fig:more_triplets} were constructed by assuming $d=1$ and substituting \eqref{eq:27} into \eqref{eq:92}. If, instead, we substitute \eqref{eq:38} and repeat the analysis, we can construct additional microstructures, as illustrated in \autoref{fig:triplets_38}. For brevity, the complete calculations are omitted, but the construction method follows the same steps as in the previous case. In this scenario, equation (E) for martensite–martensite compatibility is satisfied, along with Type-II triple clusters for Type I/II twin pairs and triple clusters of the second kind for compound twin pairs. Additionally, \eqref{eq:alltype2} is satisfied, enabling the formation of special crossing twins involving only Type-II interfaces. As in the previous case, the enhancement of austenite-martensite compatibility as well as the martensite-martensite compatibility improve the overall functional behavior of the material.

\section{Relevance to Experiments and Conclusion}
\label{sec:conclusion}
In this paper, we have extended the theory of zero elastic-energy phase interfaces—originally established through the cofactor conditions for laminates of Type-I/II domains—to include laminates of compound domains. By introducing the notion of extreme compatibility, we have identified conditions under which both Type-I/II and compound laminates can simultaneously form fully compatible, stress-free interfaces with austenite across all volume fractions, without the need for any transition layer. These results optimize the framework of geometric phase compatibility, illustrating the potential of extreme compatibility conditions to promote zero-elastic energy configurations that persist despite defects and heterogeneities, leading to highly reversible phase transformations.\\[5 pt]
A central result of our work is the identification of a connection between the commutation property of martensitic variants and their underlying symmetry. We have shown that compatible variants that commute necessarily form compound domains. Moreover, under specific conditions, such commuting pairs can each give rise to two distinct triple clusters with austenite, in contrast to Type-I/II systems, which allow at most one such cluster per pair, thereby offering a significant advantage over Type-I/II systems in terms of microstructural flexibility. A representative example of this situation is the case of compound twins in cubic to orthorhombic transformations. This commutation framework also offers a natural classification of non-conventional twins, such as those arising in cubic to monoclinic-II transformations, as compound domains, and enables the construction of stress-free interfaces between their laminates and austenite.\\[5 pt]
We have also analyzed cases involving compound domains that do not commute, such as the compound twins in cubic-to-monoclinic transformations. To address such cases, we have derived a general set of conditions under which triple clusters between a pair of compound domains and austenite are possible. Specifically, for any pair of compound domains—commuting or not—to form triple clusters with austenite, at least one of \cref{eq:58,eq:59,eq:61,eq:63} must be satisfied. The commutation-enabled triple clusters are recovered as a special and particularly important case of the general conditions. In all cases, \eqref{eq:CC3} provides the sufficient condition for compatibility between the resulting laminate and austenite. Interestingly, not all compound domains exhibit the same tendency to form triple clusters with austenite. This tendency depends on the symmetry of the martensitic phase. Taken together, these general conditions, along with the well-established cofactor conditions, offer a complete characterization of stress-free interfaces for laminates formed from all types of twin systems. \\[5 pt]
\begin{figure}[hbt!]
    \centering
    \includegraphics[scale=0.53]{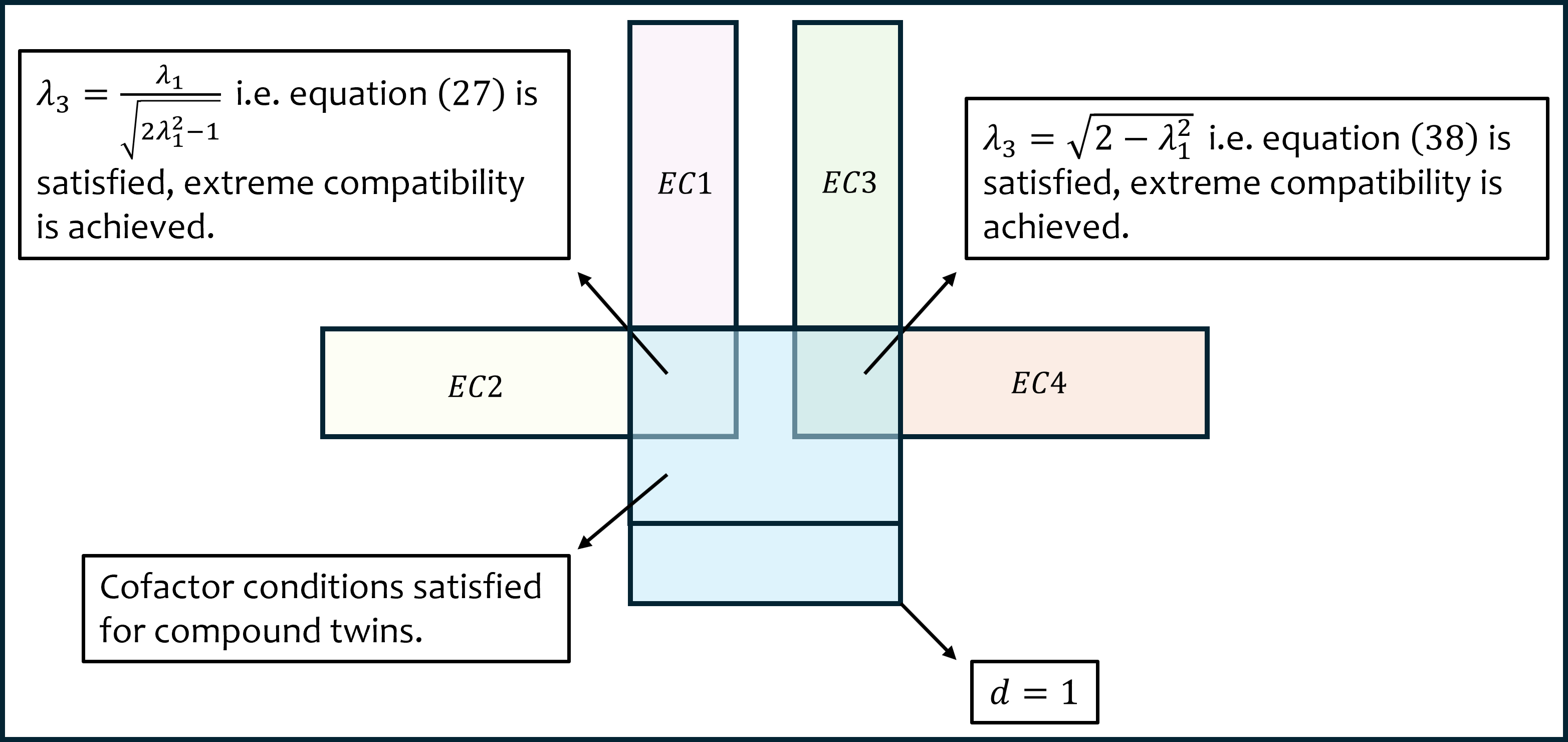}
   \caption{Summary of extreme compatibility conditions for cubic to orthorhombic transformations and relationship with cofactor conditions.}
       \label{fig:5ortho}
\end{figure}

\begin{figure}[hbt!]
    \centering
    \includegraphics[scale=0.55]{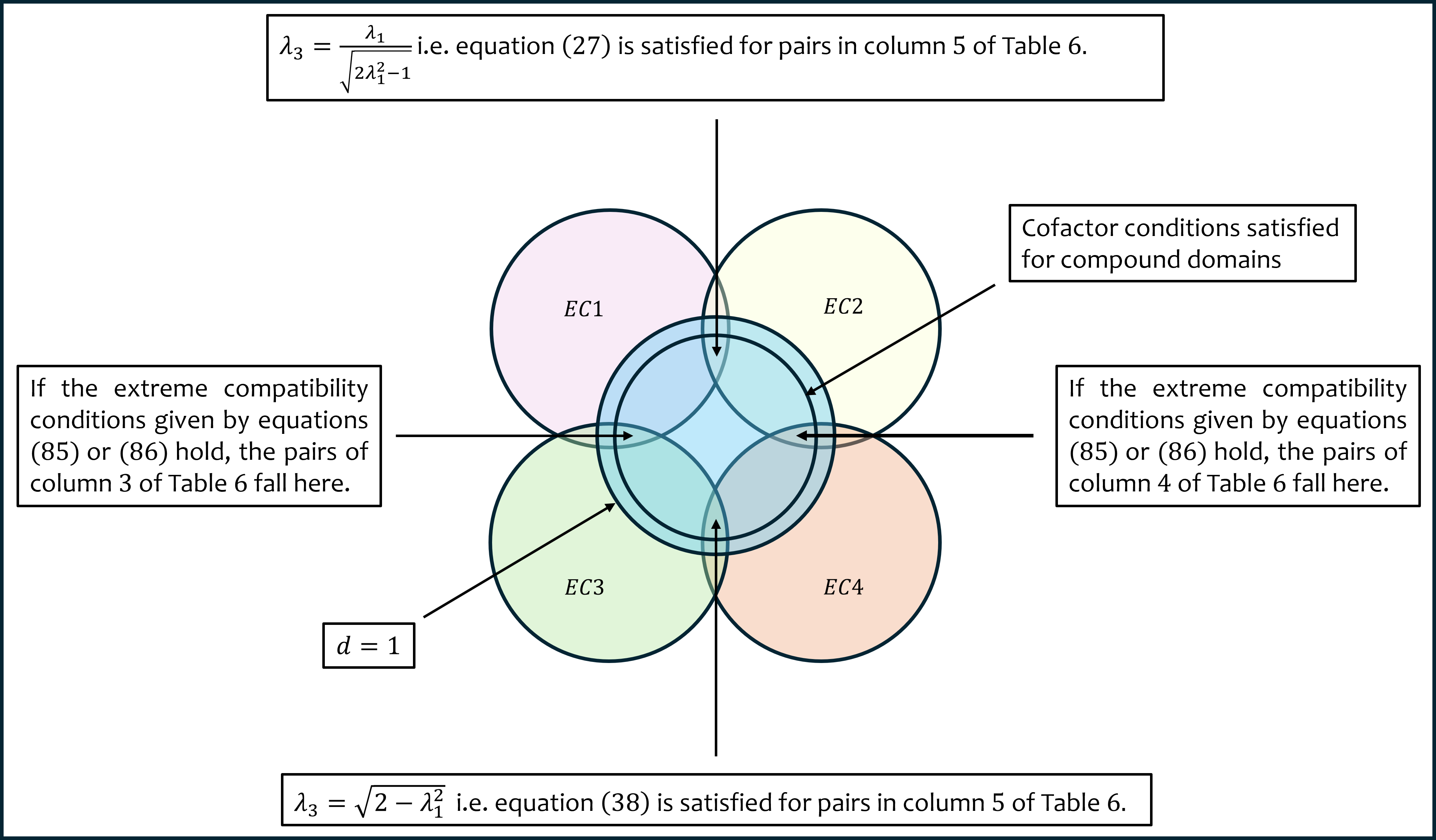}
   \caption{Summary of extreme compatibility conditions for cubic to monoclinic-II transformations and relationship with cofactor conditions.}
       \label{fig:5mono}
\end{figure}
We have studied the extreme compatibility conditions for two cases, \textit{(i)} cubic to orthorhombic transformations and \textit{(ii)} cubic to monoclinic-II transformations. For cubic to orthorhombic transformations, the extreme compatibility conditions are attained when $d=1$ and either \eqref{eq:27} or \eqref{eq:38} is satisfied. These conditions are both necessary and sufficient for laminates of Type-I/II and compound twins to be compatible with austenite across all volume fractions, without requiring any transition layer between the phases. For cubic-to-monoclinic-II transformations, extreme compatibility is achieved when laminates of all conventional twins—that is, the pairs in columns 1 to 4 of \autoref{tab:CM}—are simultaneously compatible with austenite across all volume fractions, again without any need for a transition layer. This is accomplished when the stretch tensor takes the form of either \eqref{eq:89} or \eqref{eq:90}, which imposes strict constraints on the lattice parameters and determines them uniquely. However, this restriction can be slightly relaxed by compromising compatibility in one or more of the columns 1 to 4 of \autoref{tab:CM}. An example of such a case will be discussed in our forthcoming work.\\[5pt]
\Cref{fig:5ortho,fig:5mono} provide a summary of the extreme compatibility conditions for cubic to orthorhombic and cubic to monoclinic-II transformations, respectively. The relationship between extreme compatibility and the cofactor conditions is also highlighted in these figures. For both the transformations, \eqref{eq:CC1} and \eqref{eq:CC2} are satisfied for compound domains whenever $d=1$ holds. Within the set defined by $d=1$ there exists a subset for which \eqref{eq:CC3} is also satisfied, implying complete satisfaction of cofactor conditions. However, in general, the elimination of the transition layer between compound domains and austenite is not guaranteed throughout this region. It is only in the intersection of this region with the extreme compatibility equations that the transition layer is eliminated.
\\[5 pt]
In cubic to orthorhombic transformations, the extreme compatibility conditions give rise to novel microstructures. This includes a flexible stress-free spearhead martensitic nucleus, stress-free inclusions of austenite within martensite, and a special four-fold compatible martensitic cluster. These microstructures not only illustrate the geometric richness made possible by extreme compatibility but also point toward new transformation pathways and improved reversibility in shape memory alloys.\\[5 pt]
We present an example illustrating how the microstructures predicted by our framework correspond to experimental observations reported in the earlier literature. In their investigation on CuAu alloys, Smith and Bowles reported a relevant micrograph in 1960 \cite{Smith1960}. The micrograph shows a diamond-shaped martensitic nucleus embedded within the austenite matrix, see \autoref{fig:AuCu_a}. This nucleus, formed by four twinned laminates of orthorhombic martensite arranged in the form of a pyramid, provides the transformation mechanism in this material.
\begin{figure}[hbt!]
    \centering
    \begin{minipage}[b]{0.5\textwidth}
    \centering
    \includegraphics[scale=0.65]{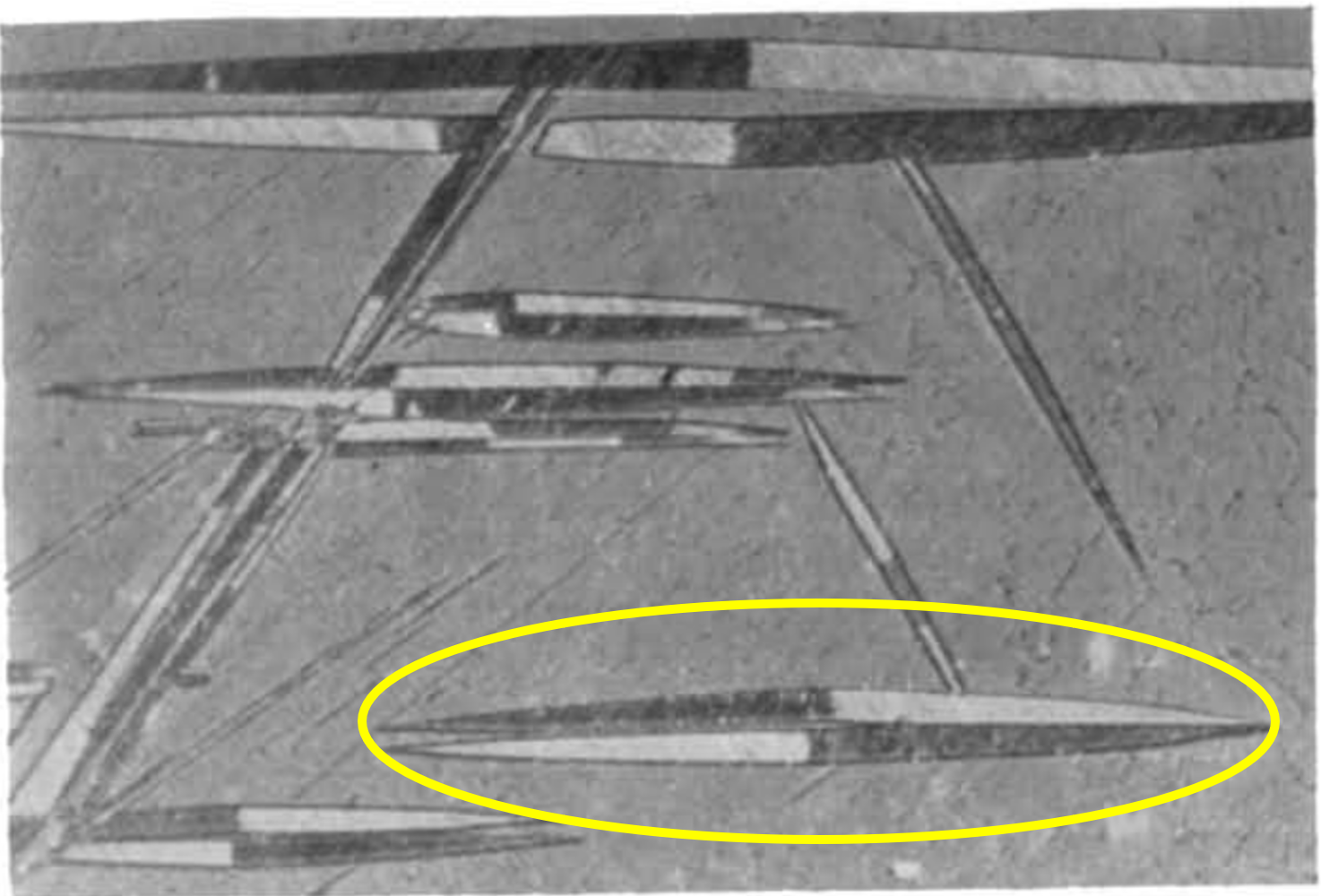}
    \subcaption{}
        \label{fig:AuCu_a}
    \end{minipage}%
    \begin{minipage}[b]{0.5\textwidth}
    \centering
    \includegraphics[scale=0.65]{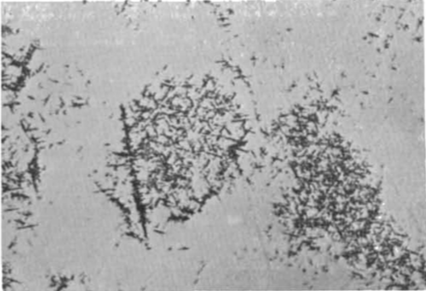}
    \subcaption{}
        \label{fig:AuCu_b}
    \end{minipage}%
   \caption{Micrograph reported in \cite{Smith1960}. (a) A diamond-shaped nucleus composed of martensitic laminates embedded within the austenite matrix, formed during cooling from the austenite phase. (b) A wider view of the specimen showing the repeated formation of similar nuclei throughout the sample.}
       \label{fig:AuCu}
\end{figure}
From the data provided in the original paper, we compute the shape strain associated with the laminate adjacent to the habit plane. Upon extracting the stretch part of the average deformation gradient via polar decomposition given in \eqref{eq:stretch}, we observe that the eigenvalues of these stretch tensor approximately satisfy two key compatibility conditions of our framework: the middle eigenvalue is unity and the other two eigenvalues obey the relationship in \eqref{eq:27}. However, the stretch tensor itself is close to but not strictly orthorhombic, suggesting that the underlying structure does not fall exactly within the traditional classification of martensitic symmetry classes.\\[5pt]
\begin{equation}
\label{eq:stretch}
    U=\left(
\begin{array}{ccc}
 0.9999 & 3.0689\times10^{-6} & -0.0013 \\
 3.0689\times10^{-6} & 1.0221 & 4.1782\times10^{-6} \\
 -0.0013 & 4.1782\times10^{-6} & 0.9734 \\
\end{array}
\right)
\end{equation}

This micrograph is intriguing because it suggests that the microstructure depicted in this micrograph may be, in some sense, an approximate realization of spearhead nucleus discussed in \autoref{sec:new_micro} with a critical distinction that our spearhead nucleus is constructed entirely from single-variant deformations without any transition layers. Further insight is gained by considering the role of this diamond-shaped structure in the transformation process. As shown in \autoref{fig:AuCu_b}, similar nuclei appear repeatedly throughout the microstructure and serve as initiation points for martensitic growth. This promotes a uniform transformation behavior across the sample, potentially enhancing the functional performance of the material. By allowing the formation of such nuclei entirely within a stress-free interface setting, without the need for transition layers, extreme compatibility offers avenues for designing microstructures that promote uniform, reversible and energetically favorable transformation.\\[5pt]
Looking ahead, the experimental realization of extreme compatibility conditions remains a challenge. This requires identifying material systems, both metallic and ceramic, that can meet these stringent geometric criteria. For materials undergoing cubic to orthorhombic transformation, the conditions $d=1$ along with either \eqref{eq:27} or \eqref{eq:38} define the criteria for extreme compatibility. In the case of cubic to monoclinic-II transformations, extreme compatibility is achieved when the stretch tensor takes the form given in \eqref{eq:89} or \eqref{eq:90}. As established in our analysis, $d = 1$ acts as an initial screening criterion for extreme compatibility, though it can be difficult to achieve. We point to two materials from the literature that approximately meet this condition and may provide preliminary directions for crystallographic tuning. \\[5pt]
The first is vanadium dioxide (VO\textsubscript{2}), which undergoes a metal-to-insulator transition at 340 K, accompanied by a structural change from tetragonal to monoclinic phase. In their study, Liang et al. \cite{VO2} demonstrated that substituting tungsten for vanadium in this material allows the condition $d=1$ to be approximately satisfied at a composition of (V\textsubscript{$1–x$}W\textsubscript{$x$}O\textsubscript{2}) with $x=0.025$. The authors also reported a reduced thermal hysteresis for this composition. The second example is a compound system of barium titanate (BaTiO\textsubscript{3}) and barium calcium titanate (Ba\textsubscript{$1–x$}Ca\textsubscript{$x$}TiO\textsubscript{3}), studied by Chen et al. \cite{Chen2023}. This system undergoes two successive phase transformations: a high temperature cubic to tetragonal transition at 388 K, which is ferroelectric to paraelectric and marks the Curie point, followed by a low temperature tetragonal to orthorhombic transformation at 273 K. In the latter transition, the condition $d=1$ is approximately satisfied, as one of the basal axes of the tetragonal unit cell transforms into an orthorhombic axis with minimal dilatation. These examples offer useful reference points for future efforts aimed at tuning lattice parameters to satisfy extreme compatibility conditions.\\[5pt]
In a forthcoming work \cite{forthcoming}, we aim to advance the understanding of martensitic phase transformations by drawing analogies with deformation modes commonly encountered in physical systems. These analogies may offer intuitive frameworks to
understand complex transformation mechanisms and the resulting microstructures. In particular, they facilitate the extension of extreme compatibility to transformations such as cubic to monoclinic-II and even cubic to triclinic, without relying on cumbersome algebraic treatments. They also point toward a potential self-accommodation mechanism and help to interpret certain microstructures that are frequently observed in martensitic transformations.

\section*{CRediT authorship contribution statement}
\textbf{Mohd Tahseen:} Conceptualization, Methodology, Investigation, Formal analysis, Writing – original draft, Writing – review $\&$ editing. \textbf{Vivekanand Dabade:} Conceptualization, Formal analysis, Writing – review $\&$ editing, Validation, Funding acquisition, Supervision.

\section*{Declaration of competing interest}

The authors declare that they have no known competing financial interests or personal relationships that could have appeared to influence the work reported in this paper.

\section*{Acknowledgments}
We sincerely thank Prof. John M Ball for taking an interest in this work during ESOMAT 2024 and offering encouraging remarks. We are also grateful to Prof. Kaushik Bhattacharya for generously sharing a scanned copy of the manuscript Kinematics of Crossing Twins \cite{Bhattacharya1997}, which contributed to the development of our analysis presented in section \ref{sec:new_micro}. Furthermore, we would like to acknowledge the Indian Institute of Science Startup Grant and the Prime Minister’s Research Fellowship (PMRF ID: 0203695) for providing financial support for this research.

%\printbibliography
\bibliography{references}

\begin{thebibliography}{39}
\providecommand{\natexlab}[1]{#1}
\providecommand{\url}[1]{\texttt{#1}}
\expandafter\ifx\csname urlstyle\endcsname\relax
  \providecommand{\doi}[1]{doi: #1}\else
  \providecommand{\doi}{doi: \begingroup \urlstyle{rm}\Url}\fi

\bibitem[Seiner et~al.(2019)Seiner, Chulist, Maziarz, Sozinov, Heczko, and Straka]{hanus}
Hanuš Seiner, Robert Chulist, Wojciech Maziarz, Alexei Sozinov, Oleg Heczko, and Ladislav Straka.
\newblock Non-conventional twins in five-layer modulated ni-mn-ga martensite.
\newblock \emph{Scripta Materialia}, 162:\penalty0 497--502, 2019.
\newblock ISSN 1359-6462.
\newblock \doi{https://doi.org/10.1016/j.scriptamat.2018.12.020}.
\newblock URL \url{https://www.sciencedirect.com/science/article/pii/S1359646218307553}.

\bibitem[Velvaluri et~al.(2021)Velvaluri, Soor, Plucinsky, de~Miranda, James, and Quandt]{origami_inspired}
Prasanth Velvaluri, Arun Soor, Paul Plucinsky, Rodrigo~Lima de~Miranda, Richard~D. James, and Eckhard Quandt.
\newblock Origami-inspired thin-film shape memory alloy devices.
\newblock \emph{Scientific Reports}, 11\penalty0 (1):\penalty0 10988, 2021.
\newblock \doi{10.1038/s41598-021-90291-6}.

\bibitem[Kumar et~al.(2019)Kumar, Kishore, Maurya, Stewart, Mirzaeifar, Quandt, and Priya]{kumar2019shape}
Prashant Kumar, Ravi~Anant Kishore, Deepam Maurya, Colin~J. Stewart, Reza Mirzaeifar, Eckhard Quandt, and Shashank Priya.
\newblock Shape memory alloy engine for high efficiency low-temperature gradient thermal to electrical conversion.
\newblock \emph{Applied Energy}, 251:\penalty0 113277, October 2019.
\newblock \doi{10.1016/j.apenergy.2019.113277}.

\bibitem[Srivastava et~al.(2011)Srivastava, Song, Bhatti, and James]{srivastava2010direct}
Vijay Srivastava, Yintao Song, Kanwal Bhatti, and R.~D. James.
\newblock The direct conversion of heat to electricity using multiferroic alloys.
\newblock \emph{Advanced Energy Materials}, 1\penalty0 (1):\penalty0 97--104, 2011.
\newblock \doi{10.1002/aenm.201000048}.
\newblock First published: 15 December 2010.

\bibitem[Haga et~al.(2005)Haga, Umeda, Saitoh, and Kawashima]{haga2005medical}
Y.~Haga, T.~Umeda, T.~Saitoh, and K.~Kawashima.
\newblock Medical and welfare applications of shape memory alloy microcoil actuators.
\newblock \emph{Smart Materials and Structures}, 14\penalty0 (5):\penalty0 S266--S271, 2005.
\newblock \doi{10.1088/0964-1726/14/5/015}.

\bibitem[Mohd~Jani et~al.(2014)Mohd~Jani, Leary, Subic, and Gibson]{mohdjani2014review}
Jaronie Mohd~Jani, Martin Leary, Aleksandar Subic, and Mark~A. Gibson.
\newblock A review of shape memory alloy research, applications and opportunities.
\newblock \emph{Materials \& Design}, 56:\penalty0 1078--1113, April 2014.
\newblock \doi{10.1016/j.matdes.2013.11.084}.

\bibitem[Ball(2004)]{BALL200461}
J.M. Ball.
\newblock Mathematical models of martensitic microstructure.
\newblock \emph{Materials Science and Engineering: A}, 378\penalty0 (1):\penalty0 61--69, 2004.
\newblock ISSN 0921-5093.
\newblock \doi{https://doi.org/10.1016/j.msea.2003.11.055}.
\newblock URL \url{https://www.sciencedirect.com/science/article/pii/S0921509303015193}.
\newblock European Symposium on Martensitic Transformation and Shape-Memory.

\bibitem[James and Hane(2000)]{JAMES2000197}
R.D. James and K.F. Hane.
\newblock Martensitic transformations and shape-memory materials.
\newblock \emph{Acta Materialia}, 48\penalty0 (1):\penalty0 197--222, 2000.
\newblock ISSN 1359-6454.
\newblock \doi{https://doi.org/10.1016/S1359-6454(99)00295-5}.
\newblock URL \url{https://www.sciencedirect.com/science/article/pii/S1359645499002955}.

\bibitem[James(2019)]{James2019}
Richard~D. James.
\newblock Materials from mathematics.
\newblock \emph{Bulletin of the American Mathematical Society (New Series)}, 56\penalty0 (1):\penalty0 1--28, 2019.
\newblock \doi{10.1090/bull/1644}.
\newblock Article electronically published on August 30, 2018.

\bibitem[Gu et~al.(2018)Gu, Bumke, Chluba, Quandt, and James]{gu2018phaseengineering}
Hanlin Gu, Lars Bumke, Christoph Chluba, Eckhard Quandt, and Richard~D. James.
\newblock Phase engineering and supercompatibility of shape memory alloys.
\newblock \emph{Materials Today}, 21\penalty0 (3):\penalty0 265--277, April 2018.
\newblock \doi{10.1016/j.mattod.2017.10.002}.

\bibitem[Cui et~al.(2006)Cui, Chu, Famodu, Furuya, Hattrick-Simpers, James, Ludwig, Thienhaus, Wuttig, Zhang, and Takeuchi]{Cui2006}
Jun Cui, Yong~S. Chu, Olugbenga~O. Famodu, Yasubumi Furuya, Jae Hattrick-Simpers, Richard~D. James, Alfred Ludwig, Sigurd Thienhaus, Manfred Wuttig, Zhiyong Zhang, and Ichiro Takeuchi.
\newblock Combinatorial search of thermoelastic shape-memory alloys with extremely small hysteresis width.
\newblock \emph{Nature Materials}, 5:\penalty0 286--290, 2006.
\newblock \doi{10.1038/nmat1602}.

\bibitem[Zhang et~al.(2009)Zhang, James, and Müller]{Zhang2009}
Zhiyong Zhang, Richard~D. James, and Stefan Müller.
\newblock Energy barriers and hysteresis in martensitic phase transformations.
\newblock \emph{Acta Materialia}, 57\penalty0 (15):\penalty0 4332--4352, 2009.
\newblock \doi{10.1016/j.actamat.2009.05.034}.

\bibitem[Zarnetta et~al.(2010)Zarnetta, Takahashi, Young, Savan, Furuya, Thienhaus, Maaß, Rahim, Frenzel, Brunken, Chu, Srivastava, James, Takeuchi, Eggeler, and Ludwig]{Zarnetta2010}
Robert Zarnetta, Ryota Takahashi, Marcus~L. Young, Alan Savan, Yasubumi Furuya, Sigurd Thienhaus, Burkhard Maaß, Mustafa Rahim, Jan Frenzel, Hayo Brunken, Yong~S. Chu, Vijay Srivastava, Richard~D. James, Ichiro Takeuchi, Gunther Eggeler, and Alfred Ludwig.
\newblock Identification of quaternary shape memory alloys with near-zero thermal hysteresis and unprecedented functional stability.
\newblock \emph{Advanced Functional Materials}, 20\penalty0 (12):\penalty0 1917--1923, 2010.
\newblock \doi{10.1002/adfm.200902336}.

\bibitem[James and Zhang(2005)]{JamesZhang2005}
R.~D. James and Z.~Zhang.
\newblock A way to search for multiferroic materials with unlikely combinations of physical properties.
\newblock In A.~Planes, L.~Mañosa, and A.~Saxena, editors, \emph{Magnetism and Structure in Functional Materials}, volume~9 of \emph{Springer Series in Materials Science}, pages 159--175. Springer-Verlag, Berlin, 2005.

\bibitem[Chen et~al.(2013)Chen, Srivastava, Dabade, and James]{CHEN20132566}
Xian Chen, Vijay Srivastava, Vivekanand Dabade, and Richard~D. James.
\newblock Study of the cofactor conditions: Conditions of supercompatibility between phases.
\newblock \emph{Journal of the Mechanics and Physics of Solids}, 61\penalty0 (12):\penalty0 2566--2587, 2013.
\newblock ISSN 0022-5096.
\newblock \doi{https://doi.org/10.1016/j.jmps.2013.08.004}.
\newblock URL \url{https://www.sciencedirect.com/science/article/pii/S002250961300149X}.

\bibitem[Song et~al.(2013)Song, Chen, Dabade, Shield, and James]{Song2013}
Yintao Song, Xian Chen, Vivekanand Dabade, Thomas~W Shield, and Richard~D James.
\newblock Enhanced reversibility and unusual microstructure of a phase-transforming material.
\newblock \emph{Nature}, 502\penalty0 (7469):\penalty0 85--88, October 2013.

\bibitem[Chluba et~al.(2015)Chluba, Ge, de~Miranda, Strobel, Kienle, Quandt, and Wuttig]{chluba}
Christoph Chluba, Wenwei Ge, Rodrigo~Lima de~Miranda, Julian Strobel, Lorenz Kienle, Eckhard Quandt, and Manfred Wuttig.
\newblock Ultralow-fatigue shape memory alloy films.
\newblock \emph{Science}, 348\penalty0 (6238):\penalty0 1004--1007, 2015.
\newblock \doi{10.1126/science.1261164}.
\newblock URL \url{https://www.science.org/doi/abs/10.1126/science.1261164}.

\bibitem[{Della Porta}(2019)]{DELLAPORTA201927}
Francesco {Della Porta}.
\newblock On the cofactor conditions and further conditions of supercompatibility between phases.
\newblock \emph{Journal of the Mechanics and Physics of Solids}, 122:\penalty0 27--53, 2019.
\newblock ISSN 0022-5096.
\newblock \doi{https://doi.org/10.1016/j.jmps.2018.08.012}.
\newblock URL \url{https://www.sciencedirect.com/science/article/pii/S0022509618304836}.

\bibitem[Della~Porta et~al.(2022)Della~Porta, Heima, Shinohara, Akamine, Nishida, and Inamura]{DellaPorta2022Triplet}
Francesco Della~Porta, Akira Heima, Yuri Shinohara, Hiroshi Akamine, Minoru Nishida, and Tomonari Inamura.
\newblock Triplet condition: A new condition of supercompatibility between martensitic phases.
\newblock \emph{Journal of the Mechanics and Physics of Solids}, 169:\penalty0 105050, 2022.
\newblock \doi{10.1016/j.jmps.2022.105050}.
\newblock URL \url{https://www.sciencedirect.com/science/article/pii/S0022509622002289}.

\bibitem[Bhattacharya(1997)]{Bhattacharya1997}
Kaushik Bhattacharya.
\newblock Kinematics of crossing twins.
\newblock In V.~M.~A. Le\'on, R.~D. Bustamante, J.~H.~H. Pieper, and M.~A.~G. Hernaez, editors, \emph{Contemporary Research in the Mechanics and Mathematics of Materials}, pages 251--262. CIMNE, Barcelona, Spain, 1997.
\newblock ISBN 9788487867743.
\newblock URL \url{https://resolver.caltech.edu/CaltechAUTHORS:20160127-081030087}.

\bibitem[Bhattacharya(2003)]{bhattacharya2003microstructure}
K.~Bhattacharya.
\newblock \emph{Microstructure of Martensite: Why it Forms and how it Gives Rise to the Shape-memory Effect}.
\newblock Oxford Series on Materials Modelling. OUP Oxford, 2003.
\newblock ISBN 9780198509349.
\newblock URL \url{https://books.google.co.in/books?id=gas0P67KijcC}.

\bibitem[Ericksen(1991)]{Ericksen1991}
J.~L. Ericksen.
\newblock \emph{Weak Martensitic Transformations in Bravais Lattices}, pages 145--158.
\newblock Springer Berlin Heidelberg, Berlin, Heidelberg, 1991.
\newblock ISBN 978-3-642-75975-8.
\newblock \doi{10.1007/978-3-642-75975-8_8}.
\newblock URL \url{https://doi.org/10.1007/978-3-642-75975-8_8}.

\bibitem[Pitteri and Zanzotto(2002)]{pitteri_zanzotto}
M~Pitteri and G~Zanzotto.
\newblock \emph{Continuum Models for Phase Transitions and Twinning in Crystals (1st ed.)}.
\newblock Chapman and Hall/CRC, 2002.
\newblock URL \url{https://doi.org/10.1201/9781420036145}.

\bibitem[Ericksen(1984)]{ericksen1984cauchy}
JL~Ericksen.
\newblock The cauchy and born hypothesis for crystals.
\newblock \emph{Phase Transformations and Material Instabilities in Solids}, pages 61--77, 1984.

\bibitem[Ball and James(1987)]{Ball1987}
J.~M. Ball and R.~D. James.
\newblock Fine phase mixtures as minimizers of energy.
\newblock \emph{Archive for Rational Mechanics and Analysis}, 100\penalty0 (1):\penalty0 13--52, Mar 1987.
\newblock ISSN 1432-0673.
\newblock \doi{10.1007/BF00281246}.
\newblock URL \url{https://doi.org/10.1007/BF00281246}.

\bibitem[Gurtin(1983)]{Gurtin1983}
Morton~E. Gurtin.
\newblock Two-phase deformations of elastic solids.
\newblock \emph{Archive for Rational Mechanics and Analysis}, 84\penalty0 (1):\penalty0 1--29, Mar 1983.
\newblock ISSN 1432-0673.
\newblock \doi{10.1007/BF00251547}.
\newblock URL \url{https://doi.org/10.1007/BF00251547}.

\bibitem[Pitteri and Zanzotto(1998)]{PITTERI1998225}
M.~Pitteri and G.~Zanzotto.
\newblock Generic and non-generic cubic-to-monoclinic transitions and their twins11dedicated to prof. i. muller on his 60th birthday.
\newblock \emph{Acta Materialia}, 46\penalty0 (1):\penalty0 225--237, 1998.
\newblock ISSN 1359-6454.
\newblock \doi{https://doi.org/10.1016/S1359-6454(97)00203-6}.
\newblock URL \url{https://www.sciencedirect.com/science/article/pii/S1359645497002036}.

\bibitem[Horn and Johnson(1985)]{Horn_Johnson_1985}
Roger~A. Horn and Charles~R. Johnson.
\newblock \emph{Matrix Analysis}.
\newblock Cambridge University Press, 1985.

\bibitem[Grekas et~al.(2023)Grekas, Pop-Ghe, Quandt, and James]{intermediatetwinningspontaneous}
Georgios Grekas, Patricia-Lia Pop-Ghe, Eckhard Quandt, and Richard~D. James.
\newblock Theory of intermediate twinning and spontaneous polarization in the phase transformations of ferroelectric potassium sodium niobate, 2023.
\newblock URL \url{https://arxiv.org/abs/2302.01797}.

\bibitem[Jetter et~al.(2019)Jetter, Gu, Zhang, Wuttig, Chen, Greer, James, and Quandt]{zirconia_monoclinic}
Justin Jetter, Hanlin Gu, Haolu Zhang, Manfred Wuttig, Xian Chen, Julia~R. Greer, Richard~D. James, and Eckhard Quandt.
\newblock Tuning crystallographic compatibility to enhance shape memory in ceramics.
\newblock \emph{Phys. Rev. Mater.}, 3:\penalty0 093603, Sep 2019.
\newblock \doi{10.1103/PhysRevMaterials.3.093603}.
\newblock URL \url{https://link.aps.org/doi/10.1103/PhysRevMaterials.3.093603}.

\bibitem[Luskin(1996)]{Luskin_1996}
Mitchell Luskin.
\newblock On the computation of crystalline microstructure.
\newblock \emph{Acta Numerica}, 5:\penalty0 191–257, 1996.
\newblock \doi{10.1017/S0962492900002658}.

\bibitem[Ruddock(1994)]{Ruddock1994-yq}
Guy Ruddock.
\newblock A microstructure of martensite which is not a minimiser of energy: the x-interface.
\newblock \emph{Archive for Rational Mechanics and Analysis}, 127\penalty0 (1):\penalty0 1--39, September 1994.

\bibitem[Tong et~al.(2019)Tong, Gu, James, Qi, Shuitcev, and Li]{Tong2019}
Yunxiang Tong, Hanling Gu, Richard~D. James, Wenyin Qi, Alexander~V. Shuitcev, and Li~Li.
\newblock Novel tinicunb shape memory alloys with excellent thermal cycling stability.
\newblock \emph{Journal of Alloys and Compounds}, 782:\penalty0 343--347, 2019.
\newblock \doi{10.1016/j.jallcom.2018.12.219}.

\bibitem[Liang et~al.(2020)Liang, Lee, Yu, Zhang, Liang, Zavalij, Chen, James, Bendersky, Davydov, Zhang, and Takeuchi]{VO2}
Y~G Liang, S~Lee, H~S Yu, H~R Zhang, Y~J Liang, P~Y Zavalij, X~Chen, R~D James, L~A Bendersky, A~V Davydov, X~H Zhang, and I~Takeuchi.
\newblock Tuning the hysteresis of a metal-insulator transition via lattice compatibility.
\newblock \emph{Nature Communications}, 11\penalty0 (1):\penalty0 35--39, July 2020.

\bibitem[Bhattacharya(1992)]{Bhattacharya1992}
Kaushik Bhattacharya.
\newblock Self-accommodation in martensite.
\newblock \emph{Archive for Rational Mechanics and Analysis}, 120\penalty0 (3):\penalty0 201--244, Sep 1992.
\newblock ISSN 1432-0673.
\newblock \doi{10.1007/BF00375026}.
\newblock URL \url{https://doi.org/10.1007/BF00375026}.

\bibitem[Tahseen and Dabade()]{forthcoming}
Mohd Tahseen and Vivekanand Dabade.
\newblock Extreme compatibility and nucleation of martensite.
\newblock \emph{(forthcoming)}.

\bibitem[De~Graef and McHenry(2012)]{graef012structure}
Marc De~Graef and Michael~E. McHenry.
\newblock \emph{Structure of Materials: An Introduction to Crystallography, Diffraction and Symmetry}.
\newblock Cambridge University Press, Cambridge, 2 edition, 2012.

\bibitem[Smith and Bowles(1960)]{Smith1960}
R.~I. Smith and J.~S. Bowles.
\newblock The crystallography of the cubic to orthorhombic transformation in the alloy aucu.
\newblock \emph{Acta Metallurgica}, 8\penalty0 (7):\penalty0 405--415, 1960.
\newblock \doi{10.1016/0001-6160(60)90026-2}.
\newblock URL \url{https://doi.org/10.1016/0001-6160(60)90026-2}.

\bibitem[Zhang et~al.(2023)Zhang, Zhu, Chan, Huang, and Chen]{Chen2023}
Chenbo Zhang, Zeyuan Zhu, Ka~Hung Chan, Ruhao Huang, and Xian Chen.
\newblock Enhanced functional reversibility in lead-free ferroelectric material over long cycle pyroelectric energy conversion.
\newblock \emph{Physical Review Materials}, 7\penalty0 (6):\penalty0 064408, 2023.
\newblock \doi{10.1103/PhysRevMaterials.7.064408}.

\end{thebibliography}
\end{document}